\newtheorem{theorem}{Theorem}[section]
\newtheorem{lemma}[theorem]{Lemma}
\newtheorem{observation}[theorem]{Observation}
\newtheorem{proposition}[theorem]{Proposition}
\newtheorem{corollary}[theorem]{Corollary}
\theoremstyle{definition}
\newtheorem{definition}[theorem]{Definition}
\newenvironment{reminder}[1]{\smallskip
	\noindent {\scshape \textbf{Reminder of #1.}}\em}{
}
\DeclareMathOperator*{\polylog}{polylog}
\newcommand{\eps}{\epsilon}
\newcommand{\R}{\mathbb R}
\renewcommand{\P}{\mathbf{P}}
\newcommand{\B}{\mathbf{B}}
\newcommand{\F}{\mathbf{F}}
\newcommand{\ComBkAlig}{$k$-Shift  Distance}
\newcommand{\CBKA}{k\text{-}SD}
\newcommand{\CBtA}{2\text{-}SD}
\newcommand{\klcs}[1][]{\ifthenelse{\equal{#1}{}}{$k$-LCS}{${#1}$-LCS}}
\newcommand{\kwlcs}[1][]{\ifthenelse{\equal{#1}{}}{$k$-WLCS}{${#1}$-WLCS}}
\newcommand{\kNLstC}[1][]{\ifthenelse{\equal{#1}{}}{$k$-NLstC}{${#1}$-NLstC}}
\newcommand{\kELstC}[1][]{\ifthenelse{\equal{#1}{}}{$k$-ELstC}{${#1}$-ELstC}}
\newcommand{\czkc}[1][]{\ifthenelse{\equal{#1}{}}{FZ$k$C}{FZ${#1}$C}}
\newcommand{\czkch}[1][]{\ifthenelse{\equal{#1}{}}{FZ$k$CH}{FZ${#1}$CH}}
\newcommand{\cfkc}[1][]{\ifthenelse{\equal{#1}{}}{F$\mathfrak{f}k$C}{F$\mathfrak{f}{#1}$C}}
\newcommand{\cfkch}[1][]{\ifthenelse{\equal{#1}{}}{F$\mathfrak{f}k$CH}{F$\mathfrak{f}{#1}$CH}}
\newcommand{\ckov}[1][]{\ifthenelse{\equal{#1}{}}{F$k$-OV}{F${#1}$-OV}}
\newcommand{\ckovh}[1][]{\ifthenelse{\equal{#1}{}}{F$k$-OVH}{F${#1}$-OVH}}
\newcommand{\cksum}[1][]{
\ifthenelse{\equal{#1}{}}
{F$k$-SUM}
{F${#1}$-SUM}
}
\newcommand{\cksumh}[1][]{\ifthenelse{\equal{#1}{}}{F$k$-SUMH}{F${#1}$-SUMH}}
\newcommand{\fzkc}[1][]{\ifthenelse{\equal{#1}{}}{FZ$k$C}{FZ${#1}$C}}
\newcommand{\ckxor}[1][]{\ifthenelse{\equal{#1}{}}{F$k$-XOR}{F${#1}$-XOR}}
\newcommand{\ckxorh}[1][]{\ifthenelse{\equal{#1}{}}{F$k$-XORH}{F${#1}$-XORH}}
\newcommand{\ckfunc}[1][]{\ifthenelse{\equal{#1}{}}{F$k$-$\mathfrak{f}$}{F${#1}$-$\mathfrak{f}$}}
\newcommand{\ckfunch}[1][]{\ifthenelse{\equal{#1}{}}{F$k$-$\mathfrak{f}$H}{F${#1}$-$\mathfrak{f}$H}}
\newcommand{\ksum}[1][]{\ifthenelse{\equal{#1}{}}{$k$-SUM}{${#1}$-SUM}}
\newcommand{\Flist}{\mathbf{List}}
\newcommand{\Oh}{O}
\newcommand*{\Ohtilde}{
  \setbox0=\hbox{$\Oh$}
  \smash{\Oh}\mathllap{\widetilde{\phantom{\rule{6pt}{\ht0+{-0.5pt}}}}}%
}
\newcommand{\dd}{\mathinner{.\,.}}
\newcommand{\DIST}{DIST}
\newcommand{\ID}{\delta_{D}}
\newcommand{\ED}{\delta_E}
\newcommand{\HD}{\delta_H}
\newcommand{\CED}{\delta_{CE}}
\newcommand{\LCS}{\ensuremath{\mathrm{LCS}}}
\newcommand{\lcsf}{\LCS}
\newcommand{\flcs}{\lcsf}
\newcommand{\inv}{\mathrm{in}}
\newcommand{\outv}{\mathrm{out}}
\newcommand{\emptystring}{\gamma}
\newcommand{\conv}{\circledast}
\newcommand{\floor}[1]{\lfloor #1 \rfloor}
\DeclareMathOperator*{\argmin}{arg\,min}
\newcommand{\G}{\mathbf{G}}
\renewcommand{\S}{\mathbf{S}}
\newcommand{\N}{\mathbf{N}}
\newcommand{\rhs}{\mathsf{rhs}}
\newcommand{\sub}{\subseteq}
\newcommand{\sm}{\setminus}
\begin{document}

\title{\Large How Compression and Approximation Affect Efficiency\\ in String Distance Measures}
\author{Arun Ganesh\thanks{Supported in part by an NSF 1816861 grant.}}
\author{Tomasz Kociumaka\thanks{Supported in part by NSF 1652303, 1909046, and HDR TRIPODS 1934846 grants, and an Alfred P. Sloan Fellowship.}}
\author{Andrea Lincoln\protect\footnotemark[2]\;\thanks{Supported in part by a Simons NTT Research Fellowship.}}
\author{Barna Saha\protect\footnotemark[2]}

\affil{University of California, Berkeley}
\date{}

\maketitle

\begin{abstract}
Real-world data often comes in compressed form.
Analyzing compressed data directly (without first decompressing it) can save space and time by orders of magnitude. 
In this work, we focus on fundamental sequence comparison problems and try to quantify the gain in time complexity when the underlying data is highly compressible. 
We consider grammar compression, which unifies many practically relevant compression schemes such as the Lempel--Ziv family, dictionary methods, and others.
For two strings of total length $N$ and total compressed size $n$, it is known that the edit distance and a longest common subsequence (LCS) can be computed exactly in time $\Ohtilde(nN)$, as opposed to $O(N^2)$ for the uncompressed setting.
Many real-world applications need to align multiple sequences simultaneously, and the fastest known exact algorithms for median edit distance and LCS of $k$ strings run in $\Oh(N^k)$ time, whereas the one for center edit distance has a time complexity of $\Oh(N^{2k})$. This naturally raises the question if compression can help to reduce the running time significantly for $k \geq 3$, perhaps to $O(N^{k/2}n^{k/2})$ or, more optimistically, to $O(Nn^{k-1})$.\footnote{In this paper, we assume that $k$ is a constant; thus, the $\Oh(\cdot)$ and $\Omega(\cdot)$ notation may hide factors with exponential dependence on $k$.}

Unfortunately, we show new lower bounds that rule out any improvement beyond $\Omega(N^{k-1}n)$ time for any of these problems assuming the Strong Exponential Time Hypothesis (SETH), where again $N$ and $n$ represent the total length and the total compressed size, respectively. This answers an open question of Abboud, Backurs, Bringmann, and K{\"{u}nnemann} (FOCS'17).

In presence of such negative results, we ask if allowing approximation can help, and we show that approximation and compression together can be surprisingly effective for both multiple and two strings.

We develop an $\Ohtilde(N^{k/2}n^{k/2})$-time FPTAS for the median edit distance of $k$ sequences, leading to a saving of nearly half the dimensions for highly-compressible sequences.  
In comparison, no $O(N^{k-\Omega(1)})$-time PTAS is known for the median edit distance problem in the uncompressed setting. We obtain an improvement from $\Ohtilde(N^{2k})$ to $\Ohtilde(N^{k/2+o(k)}n^{k/2})$ for the center edit distance problem.
For two strings, we get an $\Ohtilde(N^{2/3}n^{4/3})$-time FPTAS for both edit distance and LCS; note that this running time is $o(N)$ whenever $n \ll N^{1/4}$. In contrast, for uncompressed strings, there is not even a subquadratic algorithm for LCS that has less than polynomial gap in the approximation factor. Building on the insight from our approximation algorithms, we also obtain several new and improved results for many fundamental distance measures including the edit, Hamming, and shift distances.
\end{abstract}

\section{Introduction}\label{sec:intro}

With the information explosion, almost all real-world data comes in a compressed form.
While compression is primarily intended to save storage space and transmission bandwidth,
processing compressed data directly often provides an opportunity to reduce computation time and energy by several orders of magnitude.
In this work, we focus on sequential data such as natural-language texts, biological sequences (nucleic acid sequences, including DNA, and amino acid sequences, including proteins), and computer codes.
Sequential data often contains highly repetitive pattern. Modern technology (e.g., high-throughput sequencing) has led to an astonishingly rapid accumulation of such data, so much so that without proper data compression and algorithms over compressed data, it is not possible to utilize the wealth of information in them~\cite{berger2013computational, berger2016computational, pavlichin2018desperate,greenfield2019, hernaez2019genomic}. 

Grammar compression represents strings as \emph{straight-line programs} (SLPs), and provides a mathematically elegant way to unify algorithm design principles for processing compressed data~\cite{lohrey2012}. It is equivalent to many well-known compression schemes up to logarithmic factors and moderate constants~\cite{Rytter2003,KP18,KK20} such as the celebrated LZ77~\cite{ziv1977} and RLBWT~\cite{BWT} schemes, and at least as strong as
byte-pair encoding~\cite{gage1994}, Re-Pair~\cite{Larsson_2000}, Sequitur~\cite{nevill1997}, further members of the Lempel--Ziv family~\cite{lz78,welch1984}, and many more popular schemes (the list keeps growing).
Therefore, following the lead of a large body of previous work (including~\cite{Tiskin15,Jez15,compressedLCSSETH,bringmann2019,CKW20}), we work with grammar-compressed data.

In this work, we ask whether fundamental sequence similarity measures can be computed faster for compressed data.
This research is motivated in part by the success of computing edit distance and longest common subsequence (LCS) of two strings~\cite{Gawrychowski12,HermelinLLW13,Tiskin15} much faster than the ``decompress-and-solve'' approach: 
If we let $N$ denote the total length and $n$ denote the total compressed size of the input strings, then the edit distance and the LCS length can be computed exactly in time $\Ohtilde(nN)$ in contrast to $O(N^2)$ time for the uncompressed setting.
Therefore, for highly compressible sequences where, say, $n=\polylog N$, the running time reduces to $\Ohtilde(N)$. 
Abboud, Backurs, Bringmann, and K\"{u}nnemann~\cite{compressedLCSSETH} asked whether it is possible to improve upon $\Ohtilde(nN)$, noting that: ``For example, an $O(n^2 N^{0.1})$ bound could lead to major real-world improvements.''
In general, any sublinear dependency on $N$ would be preferable;
unfortunately,~\cite{compressedLCSSETH} shows that $\Ohtilde(Nn)$ is essentially optimal under the Strong Exponential Time Hypothesis (SETH). 

There are many real-world applications which deal with multiple sequences. 
A survey by Nature~\cite{VMN:14} reports \emph{multiple sequence alignment} as one of the most used modeling methods in biology, with~\cite{THG:94} among the top-10 papers cited of all time (citation count 63105).
Some of the basic measures for multiple sequence similarity include the LCS length and the cost of the median and center strings under edit distance.
Abboud, Backurs, and V.-Williams~\cite{LCSisHard} showed that exact computation of $k$-LCS requires $\Omega(N^{k-o(1)})$ time (under SETH), and a similar result has been recently shown for both median and center $k$-edit distance~\cite{editDistLBSeth}.
A simple extension of the basic dynamic programming for two strings
solves the median $k$-edit distance problems in $\Oh(N^k)$ time whereas the best bound known for the center $k$-edit distance is
$O(N^{2k})$~\cite{NicolasRivals05}.
The two-string lower bound in the compressed setting leaves open the possibility of reducing the running times of the $k$-LCS and the median $k$-edit distance problems for compressed strings: 
It might be feasible to achieve runtimes of $O(N^{k/2}n^{k/2})$ or even $O(Nn^{k-1})$, and a substantial reduction of the exponent at $N$ could lead to significant savings. 
This raises~the~following~questions:%dirty squeezing tricks
\begin{enumerate}
\item\emph{Does compression allow for significantly reducing the running time for multi-sequence similarity problems?}
\item\emph{For the case of two highly compressible strings, what relaxations of the LCS and the edit distance problems could allow circumventing the lower bounds and achieving sublinear dependency on $N$?}
\end{enumerate}

\subsubsection*{Lower Bounds: Compression does not help with exact bounds much!} Unfortunately, we show that computing the $k$-LCS, median $k$-edit distance, and center $k$-edit distance all require $\Omega((N^{k-1}n)^{1-o(1)})$ time under SETH. Therefore, the potential gain from compression becomes insignificant as $k$ grows. 
Intuitively, SETH states that CNF-satisfiability requires $2^{n-o(n)}$ time~\cite{cseth}. Even more specifically, we use the $k$-Orthogonal Vectors problem ($k$-OV)~\cite{virgiSurvey}. 
At a high level, $k$-OV takes as input a list $L$ with $n$ zero-one vectors of dimension $d$. 
We must return YES if there exist $k$ vectors that, when multiplied element-wise, form the all zeros vector.
The $k$-OV conjecture, which is implied by SETH, states that $k$-OV cannot be solved in $O(n^{k-\Omega(1)})$ time.

\begin{theorem}\label{thm:compkLCSHardFromseth}

If the $k'$-OV hypothesis is true for all constants $k'$, then for any constant $\epsilon \in (0,1]$ grammar-compressed $k$-LCS requires $\left(N^{k-1}n\right)^{1-o(1)}$ time when the alphabet size is $|\Sigma|=\Theta(k)$ and $n=N^{\epsilon \pm o(1)}$. Here, $N$ denotes the total length of the $k$ input strings and $n$ is their total compressed size.
\end{theorem}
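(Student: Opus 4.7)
The plan is to reduce from $k'$-OV for a suitably chosen constant $k' \geq k$, mirroring the classical $k'$-OV $\to$ $k$-LCS construction but tuning the partition of the $k'$ lists asymmetrically so that one string becomes massively long yet grammar-compressible, while the remaining $k-1$ strings stay short and are given uncompressed.

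Fix $\epsilon \in (0, 1]$ and choose positive integers $a \leq b$ with $a/b$ a rational approximation of $\epsilon$; set $k' := a + b(k-1)$. Given a $k'$-OV instance with $m$ vectors per list and dimension $d = \polylog m$, I would partition the $k'$ lists into $k-1$ ``heavy'' groups of $a$ lists each and one ``light'' group of $b$ lists. For every group, I enumerate all tuples consisting of one vector per list in that group, and compose a standard Abboud-Backurs-V.-Williams orthogonality gadget per tuple, separating consecutive tuples by dedicated symbols. This yields strings $T_1, \ldots, T_{k-1}$ of length $\Theta(m^a d)$ and $T_k$ of length $\Theta(m^b d)$, over an alphabet of total size $\Theta(k)$.

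I would feed $T_1, \ldots, T_{k-1}$ to the $k$-LCS instance uncompressed and feed $T_k$ as an SLP. The key compression step exploits a ``product'' enumeration order inside $T_k$, which lets us write
\[
T_k = \prod_{v_1 \in L_1} \sigma\, V(v_1)\, W_1, \qquad W_1 = \prod_{v_2 \in L_2} \tau\, V(v_2)\, W_2, \qquad \ldots,
\]
where each $W_j$ is \emph{literally identical} across different outer choices and therefore shareable via a single grammar nonterminal. Unrolling all $b$ levels yields an SLP of size $\tilde O(b m d) = \tilde O(m)$ whose expansion has length $\tilde \Theta(m^b)$. Summing across the $k$ strings, the total uncompressed length is $N = \tilde \Theta(m^b)$ (dominated by $T_k$) and the total compressed size is $n = \tilde \Theta(m^a)$ (dominated by the $k-1$ heavy strings), so $n = N^{a/b \pm o(1)} = N^{\epsilon \pm o(1)}$ and $N^{k-1} n = \tilde \Theta(m^{a + b(k-1)}) = \tilde \Theta(m^{k'})$. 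A standard gadget analysis would then show that the optimal $k$-LCS aligns exactly one tuple from each $T_j$ and that its length crosses a fixed threshold iff the chosen $k$ tuples together yield $k'$ jointly componentwise orthogonal vectors, so that any $(N^{k-1} n)^{1 - \delta}$-time algorithm for grammar-compressed $k$-LCS would solve $k'$-OV in $O(m^{k'(1 - \delta)})$ time, contradicting the $k'$-OV hypothesis.

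The main obstacle is designing the gadgets and enumeration orders so that all three requirements hold simultaneously: (i) the SLP for $T_k$ achieves size $\tilde O(m)$, which forces the recursive sub-grammars $W_j$ to be literally identical across outer choices (so no ``tuple index'' may be embedded in a way that breaks this sharing); (ii) the LCS analysis must survive this sharing, with no ``cheating'' alignments that exploit the structural repetition to pick more than one vector per slot or to cross slot boundaries; and (iii) the alphabet must stay of size $\Theta(k)$, which is delicate because the $b$ nested levels inside $T_k$ must be distinguished using only $O(k)$ symbols reused in a pattern that is nonetheless unambiguous for LCS. A secondary obstacle is extending from rational $\epsilon = a/b$ to arbitrary $\epsilon \in (0, 1]$: this is handled by taking $a/b \to \epsilon$ with $a, b \to \infty$ and invoking the $k'$-OV hypothesis for the growing constant $k' = a + b(k-1)$, absorbing the $o(1)$ losses into the slack in the $(\cdot)^{1 - o(1)}$ bound.
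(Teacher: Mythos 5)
Your plan has a decisive arithmetic mismatch, and even after fixing it, the construction only proves the \emph{weaker} bound that the paper's overview explicitly identifies as insufficient. You set $k' := a + b(k-1)$, but partitioning $k'$ lists into $k-1$ groups of $a$ and one group of $b$ uses $a(k-1)+b$ lists; these agree only when $a=b$ or $k=2$. Taking the partition at face value, the reduction solves $(a(k-1)+b)$-OV, so it proves an $m^{a(k-1)+b-o(1)}$ lower bound. Since $N=\tilde\Theta(m^b)$ and $n=\tilde\Theta(m^a)$ in your setup, $m^{a(k-1)+b}=(m^a)^{k-1}\cdot m^b = n^{k-1}N$, which for $a<b$ and $k\geq 3$ is much smaller than the target $N^{k-1}n = m^{b(k-1)+a}$. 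This is exactly the ``easy generalization'' with one compressible string and $k-1$ cheap ones that the paper's overview dismisses with the phrase ``this only leads to a lower bound of $(m^{k-1}M)^{1-o(1)}$.''

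The paper achieves the stronger bound by inverting the architecture: it builds $k-1$ \emph{long, compressible} strings (via the interleaved representation $\mathbf{VecS_I}_a$), and \emph{one} zero-padded anchor string, with \emph{all} $k$ strings having the same length $\Theta(\nu^{a+b})$ and compression $\Theta(\nu^b)$. Each of the $k-1$ interleaved strings carries $a+b$ lists (the $a$-fold interleaved inner part plus a $b$-fold outer index), while the anchor carries $b$, for a total of $(k-1)a+kb$ lists, matching $N^{k-1}n=\nu^{(a+b)(k-1)+b}$. Your nested product encoding $T_k = \prod_{v_1}\sigma V(v_1)W_1$, $W_1=\prod_{v_2}\tau V(v_2)W_2$, $\ldots$ is a genuinely different object from $\mathbf{String_I}$: in the interleaved representation the $d$ bits of a fixed inner tuple appear at equally spaced positions $j, j+\nu^a, \ldots, j+(d-1)\nu^a$, and the zero-padded anchor's live bits sit at $0, \nu^a, \ldots, (d-1)\nu^a$, so a single shift $\Delta$ selects one inner tuple and the ``perfect alignment'' gadget locks this down. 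In your nested scheme the representation of a $b$-tuple $(v_1,\ldots,v_b)$ is scattered across exponentially graded nested positions, and there is no analogous sliding-window alignment. You correctly flag this compression-versus-alignment tension as your main obstacle, but a resolution would require a different decomposition (as in the paper), not just a delicate choice of separators.
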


We prove similar lower bounds for median and center $k$-edit distance (Theorem~\ref{thm:editDistanceLowerBoundFromSETH} and Theorem~\ref{thm:centerEditDistanceAllSameLength}). Sections~\ref{sec:kLCSLB},~\ref{sec:EditDistLB}, and~\ref{sec:SETHEditDistLB} contain our lower bound results.

Abboud, Backurs, Bringmann, and K\"{u}nnemann~\cite{compressedLCSSETH} left an open question whether their $\Omega((Nn)^{1-o(1)})$ lower bound for LCS also holds for computing the edit distance of two strings.
We answer this question affirmatively and extend the argument to the $k$-string setting. Moreover, we note that for a seemingly simpler problem of computing the shift distance~\cite{AndoniIK07,AndoniIKH13,andoni2013,golan2020}, we show that compression does not help to reduce even a single dimension (Section~\ref{sec:BestAlign}).~\\

\subsubsection*{Algorithms: Effectiveness of Approximation \& Compression.} In presence of such negative results, relaxing the median and center $k$-edit distance to circumvent the $\Omega((N^{k-1}n)^{1-o(1)})$ lower bound becomes even more important. 

\begin{center}
  \emph{Can we use compression and approximation together to achieve much better approximation guarantees and, simultaneously, circumvent the exact computation lower bounds?} 
\end{center}

To the best of our knowledge, even for two strings, there is no previous work on approximating
the edit distance of grammar-compressed strings. On the other hand, even after a long line of research in developing fast algorithms for approximate edit distance for the uncompressed setting (see e.g.~\cite{BEKMRRS03,BJKK04,BES06,AKO10,AO12,CDGKS18,GRS:20,BR20,KS20,AN:20}),
the best approximation ratio achievable in truly subquadratic time is currently $3+\eps$~\cite{GRS:20},
and the fastest constant-factor approximation algorithm runs in $\Oh(n^{1+\eps})$ time~\cite{AN:20} with an approximation factor that has doubly-exponential dependence on $\frac{1}{\epsilon}$.
%For two strings case, despite significant progress in developing fast algorithms for approximate edit distance~\cite{BEKMRRS03,BJKK04,BES06,AKO10,AO12,BEGHS18,CDGKS18,GRS:20,KS20,BR20,AN:20}, so far there does not exist an $(1+\epsilon)$ approximation in truly subquadratic time where $\epsilon >0$ is any arbitrary constant. The best known bound stands at $3+\epsilon$ for subquadratic algorithms~\cite{GRS:20}, and is doubly-exponential in $\frac{1}{\epsilon}$ when running time improves to  $O(N^{1+\epsilon})$~\cite{AN:20}. 
The situation is even worse for LCS approximation, where we do not know how to design a subquadratic algorithm with sub-polynomial approximation gap~\cite{HSSS:19,RSSS:19}.
We are also unaware of any previous research on approximating $\LCS$ of two compressed strings.
In the case of multiple strings, there is a classic $\Oh(N^2)$-time $(2-2/k)$-approximation for median edit distance and an $\Oh(N^2)$-time $2$-approximation for center edit distance~\cite{Gusfield1997}. Combined with the results of~\cite{AN:20}, this yields an $\Oh(N^{1+\eps})$-time constant-factor approximation for both versions.
Nevertheless, a PTAS, that is, a $(1+\epsilon)$-approximation algorithm for every constant $\epsilon >0$, would be much more desirable for practical applications.
%In general, the following question arises:

Surprisingly, we show that already when an $(1+\epsilon)$-approximation is allowed for an arbitrary constant $\epsilon >0$, the median $k$-edit distance computation time reduces to $\Ohtilde(N^{k/2}n^{k/2})$ compared to the $\Omega((N^{k-1}n)^{1-o(1)})$ lower bound for exact algorithms.
In other words, we can save $k/2$ dimensions by allowing approximation and compression. 
For $\eps = o(1)$, the running time of our algorithm increases by an $\eps^{-\Oh(k)}$ factor,
so we even obtain an FPTAS whereas no prior work in the uncompressed setting gives a $(1+\epsilon)$-approximation in $O(N^{k-\Omega(1)})$ time. The reduction in time for center $k$-edit distance is even more dramatic (and technically more complex) from exact $O(N^{2k})$ to $\Ohtilde(N^{k/2+o(k)}n^{k/2})$ for a $(1+\epsilon)$ approximation.

For edit distance between two strings, we develop a more efficient FPTAS whose running time is $\Ohtilde(N^{2/3}n^{4/3}\eps^{-1/3})$, which is  sublinear in $N$ as long as $n \ll N^{1/4}$. A slightly more sophisticated $\Ohtilde(N^{2/3}n^{4/3}\eps^{-1/3})$-time algorithm also provides a $(1+\eps)$-approximation of the LCS length.
In contrast, a comparable result for the uncompressed setting is an $O(N^{1.95})$-time algorithm of~\cite{RSSS:19},
which returns a common subsequence of length $\Omega(N/\lambda^4)$, providing an $\Oh(\lambda^3)$-factor approximation. Even when the alphabet size is $2$, so far, there does not exist any $(1+\eps)$ approximation in subquadratic time~\cite{RS:20}.

%As we show in the following theorem, the compressed setting allows achieving a significantly
%better approximation ratio with a polynomial-factor speed-up compared to computing $\LCS(X,Y)$ exactly.
%\tknote{End: Moved from Overview}

\subsubsection*{Improved Exact Algorithms in Compressed Setting.} Interestingly, the insights behind our approximation algorithms also lead to new \emph{exact algorithms}.
In particular, we show that the edit distance can be computed in time $\Ohtilde(n\sqrt{ND})$, where $D$ is an upper bound on the edit distance. 
This improves upon the state-of-the-art bound of $\Ohtilde(\min(nN, n+D^2))$~\cite{Tiskin15,LV88,Mehlhorn1997} whenever $D \gg N^{1/3}n^{2/3}$.

For this problem, the first improvements compared to the uncompressed settings were given in~\cite{Tiskin09,HermelinLLW09}. Then, Tiskin~\cite{Tiskin15} obtained an $\Oh(nN\log N)$-time algorithm and
subsequent works~\cite{HermelinLLW13,Gawrychowski12} reduced the $\Oh(\log N)$ factor.
However, when the distance $D$ is small, the edit distance can be computed in $\Oh(N+D^2)$ time~\cite{LV88} in the uncompressed setting.
The $\Oh(N)$ term in the running time of the Landau--Vishkin algorithm~\cite{LV88} is solely needed to construct a data structure efficiently answering the Longest Common Extension (LCE) queries. 
However, already the results of Mehlhorn, Sundar, and Uhrig~\cite{Mehlhorn1997} yield $\Ohtilde(1)$-time LCE queries after $\Ohtilde(n)$-time preprocessing of the grammars representing $X$ and $Y$.
This gives rise to an $\Ohtilde(n+D^2)$-time algorithm computing the edit distance. With a more modern implementation of LCE queries in compressed strings by I~\cite{I17},
the factor hidden within the $\Ohtilde(\cdot)$ notation can be reduced to $\Oh(\log N)$. 

While the $\Ohtilde(n+D^2)$-time algorithm is very fast if $D$ is small, its efficiency quickly
degrades with increasing $D$ and  the $\Ohtilde(nN)$-time algorithm becomes
more suitable already for $D \gg \sqrt{nN}$.
With a time complexity of $\Ohtilde(n\sqrt{ND})$, our algorithm improves upon the previous algorithms
whenever $\sqrt[3]{n^2N} \ll \ED(X,Y) \ll N$.
%\tknote{End: Moved from Overview}
Nevertheless, the current lower bounds allow for a hypothetical holy-grail algorithm achieving the running time of $\Ohtilde(\min{(nD, n+D^2)})$ which we leave as an interesting open question.

We also get improved results for the Hamming distance, which is a more basic measure trivially computable in $\Oh(N)$ time.
Here, we present an $\Oh(n\sqrt{N})$-time algorithm which improves upon the $O(n^{1.41}N^{0.593})$ bound of~\cite{compressedLCSSETH}.
Additionally, we note that natural~generalizations to multiple strings (including the median Hamming distance) can be computed in $O(n N^{1-1/k})$ time.

\section{Preliminaries}\label{sec:prelims}

For two integers $i\le j$, we write $[i\dd j]$ to denote the set $\{i,\ldots,j\}$
and $[i\dd j)$ to denote $\{i,\ldots,j-1\}$. The notions $(i\dd j]$ and $(i\dd j)$
are defined analogously.

A \emph{string} is a sequence of characters from a fixed alphabet $\Sigma$.
We write $\Sigma^*$ to denote the set of all strings over $\Sigma$,
and we define $\Sigma^+=\Sigma^* \setminus \{\emptystring\}$, where $\emptystring$ denotes the empty string.
The length of a string $X$ is denoted by $|X|$ and, for a position $i\in [1\dd |X|]$ in $X$,
the character of $X$ at position $i$ is denoted by $X[i]$.
For an integer $N\ge 0$, the set of length-$N$ strings over $\Sigma$ is denoted by $\Sigma^N$.

For two positions $i\le j$ in $X$, we write $X[i\dd j]$ to denote the fragment of $X$
starting at positions $i$ and ending at position $j$; this fragment is an occurrence
of $X[i]\cdots X[j]$ as a substring of $X$. 
The fragments $X[i\dd j)$, $X(i\dd j]$, and $X(i\dd j)$ are defined similarly.

A morphism is a function $f: \Sigma_1^* \to \Sigma_2^*$ such that $f(X)=\bigcirc_{i=1}^{|X|}f(X[i])$,
where $\bigcirc$ denotes the concatenation operator.
Note that every function mapping $\Sigma_1$ to $\Sigma_2^*$ can be uniquely extended to a morphism.

\subsection{Straight-Line Programs}

A straight-line program is a tuple $\G=(\S, \Sigma, \rhs, S)$,
where $\S$ is a finite sequence of \emph{symbols}, $\Sigma \sub \S$ is a set of \emph{terminal symbols},
$\rhs : (\S\sm \Sigma) \to \S^*$ is the \emph{production} (or \emph{right-hand side}) function, and $S\in\S$ 
is the start symbol, and the symbols in $\S$ are ordered so that $B$ precedes $A$ if $B$ occurs in $\rhs(A)$. 
We also write $A\to B_1\cdots B_k$ instead of $\rhs(A)=B_1\cdots B_k$.

The set $\S \sm \Sigma$ of \emph{non-terminals} is denoted by $\N$. 
The size of $\G$ is $|\G|:=|\S|+\sum_{A\in \N}|\rhs(A)|$: the number of symbols plus the total length of productions.
The \emph{expansion} function $\exp:\S \to \Sigma^+$ is defined recursively:
\[\exp(A) = \begin{cases}
    A & \text{if }A\in \Sigma,\\
    \bigcirc_{i=1}^k \exp(B_i) & \text{if }A\to \bigcirc_{i=1}^k B_i.
\end{cases}
\]
We say that $\G$ is a \emph{grammar-compressed representation} of $\exp(S)$. 
The $\exp$ function naturally extends to a morphism $\exp:\S^* \to \Sigma^*$
with $\exp(\bigcirc_{i=1}^m A_i) = \bigcirc_{i=1}^m \exp(A_i)$.

For a symbol $A\in \S$, we denote $|A|=|\exp(A)|$. 
In this work, we assume a word RAM machine with machine words of $\Omega(\log |S|)$ bits.
In this setting, one can compute $|A|$ for all $A\in \S$ in $\Oh(|\G|)$ time.
Consequently, we assume that $|A|$ is stored along with $A$
in the straight-line programs given to our algorithms.

A straight-line program $\G$ is in \emph{Chomsky normal form} if $|\rhs(A)|=2$ for all $A\in \N$. Given an arbitrary straight-line program $\G$, an equivalent straight-line program $\G'$ in Chomsky normal form can be constructed in $\Oh(|\G|)$ time; moreover, $|\G|=\Oh(|\G'|)$.
Thus, without loss of generality, we assume that all straight-line programs given to our algorithms
are already in the Chomsky normal form.

\section{FPTAS for Compressed Edit Distance of Two Strings}

The edit distance $\ED(X,Y)$ of two strings $X,Y\in \Sigma^*$ is defined as the minimum number of character insertions, deletions, and substitutions needed to transform $X$ into $Y$.

In this section, we prove the following result.

\begin{theorem}\label{thm:edap}
    Given a straight-line program $\G_X$ of size $n$ generating a string $X$ of length $N>0$,
    a straight-line program $\G_Y$ of size $m$ generating a string $Y$ of length $M>0$,
    and a parameter $\eps\in (0,1]$,
    an integer between $\ED(X,Y)$ and $(1+\eps)\ED(X,Y)$ can be computed in $\Ohtilde\left((nm(N+M))^{2/3}\eps^{-1/3}\right)$ time.
\end{theorem}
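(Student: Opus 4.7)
The plan is to combine the exact $\Ohtilde((n+m)\sqrt{(N+M)D})$-time algorithm (the natural asymmetric extension of the improved exact algorithm of the paper mentioned earlier in the introduction, where $D$ is an upper bound on the edit distance) with a complementary $(1+\eps)$-approximation subroutine, and to choose between them via a doubling estimate of $D:=\ED(X,Y)$. The two budgets will be balanced at a threshold $D^\star:=\Theta((nm(N+M))^{1/3}\eps^{-2/3})$.

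First, I would run the exact algorithm with guesses $D_0 = 1, 2, 4, \ldots$, aborting each run once its budget $\Ohtilde((n+m)\sqrt{(N+M)D_0})$ is exceeded. If some guess $D_0 \le D^\star$ succeeds, then $D_0 \in [\ED(X,Y), 2\ED(X,Y)]$ and I return the exact distance; the total work is geometrically dominated by the final attempt, costing $\Ohtilde((n+m)\sqrt{(N+M)D^\star}) = \Ohtilde((nm(N+M))^{2/3}\eps^{-1/3})$, as required. Otherwise $\ED(X,Y) > D^\star$, and I would switch to an approximation algorithm whose running time $\Ohtilde(nm(N+M)/(\eps D^\star))$ again matches the claimed budget by the choice of $D^\star$.

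To obtain this approximation subroutine I would coarsen the Tiskin-style DIST-matrix composition along the combined grammar of $X$ and $Y$. DIST matrices for edit distance are unit-Monge and admit a succinct seaweed / permutation representation of size $\Oh(N+M)$. I would round each intermediate DIST matrix to a grid of spacing $\Theta(\eps D^\star/(n+m))$; the rounded matrices retain total monotonicity but keep only $\Ohtilde((n+m)(N+M)/(\eps D^\star))$ nontrivial breakpoints, so their $(\min,+)$ products can be computed in proportional time via the standard SMAWK / Monge-matrix machinery. Summed across the $\Theta(n+m)$ grammar productions, this yields the targeted $\Ohtilde(nm(N+M)/(\eps D^\star)) = \Ohtilde((nm(N+M))^{2/3}\eps^{-1/3})$ running time.

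The main obstacle will be controlling the error in the coarsened composition: the cumulative additive error over $\Theta(n+m)$ grammar combinations must stay within $\eps\,\ED(X,Y)$, even though $(\min,+)$ products of approximated matrices could in principle amplify errors. I plan to handle this with a level-wise error budget on a balanced grammar (obtainable in $\Ohtilde(n+m)$ preprocessing by rebalancing into logarithmic depth), allocating at most $\eps D^\star/\operatorname{depth}$ of the error per level and exploiting the $1$-Lipschitz behavior of $(\min,+)$ composition in each argument to prevent error amplification. A final rescaling $\eps\leftarrow\eps/c$ for a sufficiently large constant $c$ absorbs the constant losses from the doubling step and the level-wise accounting, yielding the desired integer between $\ED(X,Y)$ and $(1+\eps)\ED(X,Y)$.
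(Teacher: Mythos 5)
Your proposal takes a genuinely different route from the paper, but it contains concrete gaps that would need to be closed before it could be accepted.

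\paragraph{Where the proposal diverges from the paper.} The paper does not split into an exact phase for small distances and a separate approximate phase for large ones. Instead, it works with a single scheme: it builds a \emph{box decomposition} of the alignment graph $G_{X,Y}$ with $\Oh((N+M)/\tau)$ boxes of size $\Oh(\tau)\times\Oh(\tau)$ (Corollary~\ref{cor:scheme}), designates a sparse set of \emph{portals} on box boundaries (all grid vertices plus boundary vertices whose anti-diagonal distance $|x-y|$ is a rounded power of $1+\alpha$, Lemma~\ref{lem:apportals}), and runs a Monge/SMAWK-based DP restricted to portal-respecting walks (Lemma~\ref{cor:boxdp}). The error accrues multiplicatively across the $\Oh((N+M)/\tau)$ box columns and rows, and is controlled by choosing $\alpha=\Theta(\eps\tau/(N+M))$. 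Your proposal, in contrast, wants to round grammar-level DIST matrices and compose them via $(\min,+)$ products. That is a structurally different object: the paper's portals live on the box-boundary graph, not inside grammar-symbol DIST matrices.

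\paragraph{Gap 1: the exact-phase budget does not balance.} You quote the exact algorithm's running time as $\Ohtilde((n+m)\sqrt{(N+M)D})$, but the paper's Proposition~\ref{prp:edexact}/Theorem~\ref{thm:edexact} gives $\Ohtilde(\sqrt{nmD(N+M)})$. The distinction matters: plugging $D^\star=\Theta((nm(N+M))^{1/3}\eps^{-2/3})$ into your form yields $(n+m)\sqrt{(N+M)}\,(nm(N+M))^{1/6}\eps^{-1/3}$, which matches the target $(nm(N+M))^{2/3}\eps^{-1/3}$ only if $n+m\le\sqrt{nm}$, contradicting AM--GM. With the correct $\sqrt{nm}$ factor the balancing does go through, so this is fixable, but as written it fails.

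\paragraph{Gap 2: the ``coarsened DIST composition'' is not a defined algorithm.} The proposal's central step---round each unit-Monge DIST matrix to a grid of spacing $\Theta(\eps D^\star/(n+m))$, retain ``$\Ohtilde((n+m)(N+M)/(\eps D^\star))$ nontrivial breakpoints,'' and compose via $(\min,+)$ in proportional time---has no supporting argument. Unit-Monge matrices are represented by permutations, and the seaweed-product runs in time near-linear in the matrix dimension, not in the range of entry values; rounding the \emph{entries} does not obviously reduce the permutation representation or speed up composition. Moreover the arithmetic is inconsistent: $\Theta(n+m)$ products at $\Ohtilde((n+m)(N+M)/(\eps D^\star))$ breakpoints each is $\Ohtilde((n+m)^2(N+M)/(\eps D^\star))$, not the claimed $\Ohtilde(nm(N+M)/(\eps D^\star))$. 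In the two-grammar setting there are also $\Oh(nm)$ relevant symbol pairs, not $\Oh(n+m)$, so the bookkeeping would need to be redone from scratch. Finally, the level-wise error-budget argument you gesture at would need an explicit lemma---analogous to Lemma~\ref{lem:apportals}---showing that each composition adds only a bounded multiplicative slack; this is not automatic from $1$-Lipschitzness alone because you are rounding on \emph{both} sides of every product.

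In short, the high-level two-phase plan is reasonable and the choice of threshold $D^\star$ is the right one, but the exact-phase bound is stated incorrectly and, more importantly, the entire approximation subroutine is missing rather than merely unproved. The paper's route---box decomposition plus geometrically spaced portals with a per-box multiplicative error guarantee---is what actually delivers the $\Ohtilde((nm(N+M))^{2/3}\eps^{-1/3})$ bound.
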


Let $\$\notin \Sigma$ and let $\cdot^\$ : \Sigma^* \to (\Sigma \cup \{\$\})^*$ be a morphism defined with $a^\$ = a\$$
for $a\in \Sigma$.
Then, $\ED(X,Y)=\frac12 \ID(X^\$, Y^\$)$~\cite{Tiskin15}.
Moreover, if $X$ is represented by a straight line program $\G$,
then $X^\$$ can be represented using a straight-line program of size $2|\G|+1$.
This reduction allows computing $\ID$ instead of $\ED$.

\begin{definition}[Alignment graph]
For two strings $X$ and $Y$, the alignment graph $\G_{X,Y}$ is a weighted undirected graph
with vertex set $\{v_{x,y}: x\in [0\dd |X|], y\in [0\dd |Y|]\}$ and edges:
\begin{itemize}
    \item $v_{x,y-1} \leftrightarrow v_{x,y}$ of length 1, for $x\in [0\dd |X|]$ and $y\in [1\dd |Y|]$;
    \item $v_{x-1,y} \leftrightarrow v_{x,y}$ of length 1, for $x \in [1\dd |X|]$ and $y\in [0\dd |Y|]$;
    \item $v_{x-1,y-1} \leftrightarrow v_{x,y}$ of length 0, for $x\in [1\dd |X|]$
    and $y\in [1\dd |Y|]$ such that $X[x]=Y[y]$.
\end{itemize}
\end{definition}
\begin{observation}\label{obs:dir}
Let $d$ be the metric induced by $G_{X,Y}$. All $x,x'\in [0\dd |X|]$ and $y,y'\in [0\dd |Y|]$,
satisfy
    \[d(v_{x,y},v_{x',y'}) = 
    \begin{cases}
        \ID(X(x\dd x'],Y(y\dd y']) &\text{if }x\le x'\text{ and }y\le y',\\
        \ID(X(x'\dd x],Y(y'\dd y]) &\text{if }x'\le x\text{ and }y'\le y,\\
        |x-x'|+|y-y'| & \text{otherwise.}
    \end{cases}
    \]
\end{observation}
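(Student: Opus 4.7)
My plan is to verify the formula by case analysis. The case $x \le x'$ and $y \le y'$ is the main one; the case $x' \le x$ and $y' \le y$ follows by swapping the two endpoints, since $G_{X,Y}$ is undirected and the two formula branches are related by exactly this swap. The ``otherwise'' case is handled by a direct counting argument.

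For the monotone case, the upper bound $d(v_{x,y}, v_{x',y'}) \le \ID(X(x\dd x'], Y(y\dd y'])$ follows by converting an optimal indel-alignment of $X(x\dd x']$ with $Y(y\dd y']$ into a monotone walk in $G_{X,Y}$: each match becomes a weight-$0$ diagonal edge (enabled by the alignment's match condition on characters), each deletion becomes a D-edge (increment of $x$), and each insertion becomes an R-edge (increment of $y$). Concatenating these traces a walk from $v_{x,y}$ to $v_{x',y'}$ of total weight equal to the alignment's cost.

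The main obstacle is the reverse inequality $d(v_{x,y}, v_{x',y'}) \ge \ID(X(x\dd x'], Y(y\dd y'])$, which must rule out that walks using backward or off-rectangle edges become shorter. I plan to prove, by induction on walk length, the following combined statement covering all endpoint configurations: any walk from $v_{a,b}$ to $v_{c,d}$ has weight at least $\ID(X(\min(a,c)\dd \max(a,c)], Y(\min(b,d)\dd \max(b,d)])$ when $(a - c)(b - d) \ge 0$, and at least $|a - c| + |b - d|$ otherwise. The inductive step strips the last edge of the walk, applies the hypothesis to the prefix, and then uses two elementary identities about $\ID$: (i) $|\ID(A \cdot \sigma, B) - \ID(A, B)| \le 1$ (and its $B$-analog) for unit-weight R and D edges, and (ii) $\ID(A \cdot \sigma, B \cdot \sigma) = \ID(A, B)$ for weight-$0$ diagonal edges, where the match condition $X[c] = Y[d]$ supplies the common trailing character. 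A short case analysis on the direction of the last edge and the sign pattern of $(a - c, b - d)$ handles every sub-case, including those where the prefix endpoint falls in a different quadrant than the full-walk endpoint (there one transitions from the $\ID$-based bound to the $L^1$-based bound via the triangle-like inequality $\ID(U, V) + |U| + |V| \ge |U| + |V|$, which is trivial).

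For the ``otherwise'' case, WLOG assume $x < x'$ and $y > y'$ (the other sign combination is symmetric). The upper bound follows from a direct walk using $x' - x$ D-edges and $y - y'$ L-edges, of total weight $|x - x'| + |y - y'|$. For the lower bound, let $r_D, r_U, r_R, r_L, r_{DR}, r_{UL}$ denote the counts of the six edge types in any walk from $v_{x,y}$ to $v_{x',y'}$. Subtracting the coordinate-balance equations $r_D - r_U + r_{DR} - r_{UL} = x' - x$ and $r_R - r_L + r_{DR} - r_{UL} = y' - y$ eliminates the diagonal terms, giving $(r_D + r_L) - (r_U + r_R) = (x' - x) - (y' - y) = |x - x'| + |y - y'|$; hence the walk's weight $r_D + r_U + r_R + r_L \ge |(r_D + r_L) - (r_U + r_R)| = |x - x'| + |y - y'|$, matching the upper bound.
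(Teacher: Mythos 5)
The paper offers no proof of this observation at all---it is stated as a folklore fact about the alignment graph (implicit in \cite{Tiskin15})---so your write-up is a genuine addition rather than an alternative to an existing argument, and it is essentially correct. The upper bounds in both branches are routine and right; the counting argument for the ``otherwise'' branch is the cleanest part: subtracting the two coordinate-balance equations cancels the weight-$0$ diagonal traversals exactly, and $r_D+r_U+r_R+r_L\ge |(r_D+r_L)-(r_U+r_R)|=|x-x'|+|y-y'|$ closes that case with no induction. The inductive lower bound for the monotone case is also sound: stripping the last edge and using that $\ID$ changes by exactly $1$ under appending/removing a character from one string and is invariant under appending a common character to both (which the match condition $X[x]=Y[y]$ on diagonal edges supplies) handles every edge direction. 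One slip: the ``triangle-like inequality'' you invoke for quadrant transitions, $\ID(U,V)+|U|+|V|\ge |U|+|V|$, is just $\ID\ge 0$ and does no work. What the transitions actually need is that crossing a quadrant boundary happens precisely when one coordinate difference passes through $0$, at which point the relevant substring is empty and $\ID(U,\emptystring)=|U|$ (equivalently, $\ID(U,V)\ge\lvert\,|U|-|V|\,\rvert$ in general); with that substitution every sub-case you describe goes through. Fix that one line and the proof is complete.
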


For two ranges $[x\dd x']\sub [0\dd |X|]$ and $[y\dd y']\sub [0\dd |Y|]$,
the subgraph of $G_{X,Y}$ induced by $\{v_{\bar{x},\bar{y}} : \bar{x}\in [x\dd x'],\, \allowbreak \bar{y}\in [y\dd y']\}$
is denoted $G_{X,Y}^{[x\dd x'], [y\dd y']}$ and called a \emph{block} in $G_{X,Y}$.
For a block $B$, we distinguish the \emph{input vertices} $\inv^B = (v_{x',y},v_{x'-1,y},\ldots,v_{x,y},v_{x,y+1},\ldots,v_{x,y'})$ and the \emph{output vertices} $\outv^B = (v_{x',y},v_{x',y+1},\ldots,v_{x',y'},\allowbreak v_{x'-1,y'},\ldots,v_{x,y'})$; both sequences consist of $|B|:=(x'-x)+(y'-y)+1$ vertices.
The $\DIST_B$ table is a $|B|\times |B|$ matrix with entries $\DIST_B[i,j]=d(\inv^B_i,\outv^B_j)$
for $i,j\in [1\dd |B|]$.
The $\DIST_{B}$ table satisfies the Monge property~\cite{Tiskin15}:
$\DIST_{B}[i,j] + \DIST_{B}[i',j']\le \DIST_{B}[i,j']+\DIST_{B}[i,j']$ holds
for all $i,i',j,j'\in[1\dd |B|]$ such that $i\le i'$ and $j\le j'$.
For two strings $X,Y\in \Sigma^*$, we also define $\DIST_{X,Y}$ to be $\DIST_B$ for $B=G_{X,Y}^{[0\dd |X|],[0\dd |Y|]}$.
By Observation~\ref{obs:dir}, if $B=G_{X,Y}^{[x\dd x'],[y\dd y']}$,
then $\DIST_B = \DIST_{X(x\dd x'],Y(y\dd y']}$.

\paragraph{Box decomposition}
For two strings $X,Y\in \Sigma^*$, the \emph{box decomposition} $\B$ of the graph $G_{X,Y}$ is defined
based on decompositions $X = X_1\circ\cdots \circ X_{p_X}$ and $Y=Y_1\circ \cdots \circ Y_{p_Y}$ into 
non-empty fragments, called \emph{phrases}.

Let us define sets $\{b^X_0,\ldots,b^X_{p_X}\}$ and $\{b^Y_0,\ldots,b^Y_{p_Y}\}$ 
of \emph{phrase boundaries} in $X$ and $Y$, respectively, so that the phrases are $X_i = X(b^X_{i-1}\dd b^X_i]$ for $i\in [1\dd p_X]$ and $Y_j = Y(b^Y_{j-1}\dd b^Y_j]$ for $j\in [1\dd p_Y]$. A vertex $v_{x,y}$ is a \emph{boundary vertex} if  $x$ is a phrase boundary in $X$ or $y$ is a phrase boundary in $Y$, and a \emph{grid vertex} if both $x$ is a phrase boundary in $X$ and $y$ is a phrase boundary in $Y$.
The box decomposition $\B$ is an indexed family $(B_{i,j})_{i\in [1\dd p_X],j\in [1\dd p_Y]}$
of \emph{boxes} $B_{i,j}:=G_{X,Y}^{[b_{i-1}^X\dd b_{i}^X],[b_{j-1}^Y\dd b_{j}^Y]}$.

\subsection{Portal-Respecting Walks}
Hermelin et al.~\cite{HermelinLLW13} applied a box decomposition obtained via an analogue of \cref{cor:scheme}
to determine $\ID(X,Y)$ using a dynamic-programming procedure computing $\ID(X[1\dd x],Y[1\dd y])$ for all boundary vertices $v_{x,y}$.
We reduce the number of DP states by considering only a selection $\P$ of boundary vertices, called \emph{portals}.
This allows improving the running time from $\Ohtilde(\frac{NM}{\tau})$ to $\Ohtilde(|\P|)$,
but reduces the search space from the family of all walks $v_{0,0}\leadsto v_{x,y}$
to walks that cross box boundaries only at portals.
Below, we formally define such portal-respecting walks and provide a construction suitable for approximating $\ID(X,Y)$.
\begin{definition}\label{def:portalwalk}
    Let $\B$ be a box decomposition of $G_{X,Y}$ and let $\P$ be a set \emph{portals} (selected boundary vertices).
    We say that a walk $W$ is a \emph{portal-respecting $(i,j)$-walk} if $W$ is a concatenation of walks $W'$ and $W''$  such that: 
    \begin{itemize}
        \item $W''$ starts at an input portal of $B_{i,j}$ and is entirely contained within $B_{i,j}$, and 
        \item $W'$ is the empty walk at $v_{0,0}$, a portal-respecting $(i-1,j)$-walk, or a portal-respecting $(i,j-1)$-walk.
    \end{itemize}
\end{definition}

Let us fix a box decomposition $\B$ of $G_{X,Y}$, and a set of portals $\P$.
For a box $B_{i,j}\in \B$, let $\P_{i,j} = \P \cap \outv^{B_{i,j}}$ denote the output portals of $B_{i,j}$.
Moreover, for a vertex $v_{x,y}\in B_{i,j}$, we denote $d_{x,y}=d(v_{0,0},v_{x,y})=\ID(X[1\dd x],Y[1\dd y])$
and let $D^{i,j}_{x,y}$ be the minimum length of a portal-preserving $(i,j)$-walk ending at $v_{x,y}$.

\begin{lemma}\label{cor:boxdp}
Given a set of portals $\P$ for a box decomposition $\B$ of $G_{X,Y}$,
the
the length of the shortest portal-respecting $(p_X,p_Y)$-walk ending at $v_{|X|,|Y|}$
can be computed in $\Ohtilde(|\P|)$ time provided $\Ohtilde(1)$-time random access to the $\DIST$ matrices of all the boxes of $\B$.
\end{lemma}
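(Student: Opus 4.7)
The plan is to process the boxes of $\B$ in row-major order (any linear extension of the coordinatewise order will do) and maintain, for every box $B_{i,j}$ that has already been visited, the value $D^{i,j}_v$ at each portal $v$ lying on the boundary of $B_{i,j}$. For each box, the update splits into two phases. In the \emph{aggregation phase}, every input portal $v'$ of $B_{i,j}$ inherits its distance from the output portals of the two predecessor boxes by
\[D^{i,j}_{v'} \gets \min\!\bigl(D^{i-1,j}_{v'},\, D^{i,j-1}_{v'}\bigr),\]
where a term is dropped when the corresponding predecessor does not exist, and the base case $v' = v_{0,0}$ is initialized to $0$. In the \emph{propagation phase}, the output portal values are obtained by
\[D^{i,j}_v = \min_{v'}\bigl(D^{i,j}_{v'} + \DIST_{B_{i,j}}[v',v]\bigr),\]
where $v'$ ranges over the input portals of $B_{i,j}$ and $v$ over the output portals.

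The point is that the propagation step is a column-minimum computation on the matrix $N[v',v] = D^{i,j}_{v'} + \DIST_{B_{i,j}}[v',v]$, which is Monge: $\DIST_{B_{i,j}}$ satisfies the Monge property (cited right after the definition of $\DIST_B$); extracting the rows and columns corresponding to $\P \cap \inv^{B_{i,j}}$ and $\P \cap \outv^{B_{i,j}}$ in their inherited boundary order preserves the Monge inequality; and adding a row-dependent constant (here $D^{i,j}_{v'}$) preserves it as well. Hence SMAWK applied to $N$ returns all column minima in time linear in the matrix dimensions, and with $\Ohtilde(1)$-time random access to $\DIST_{B_{i,j}}$ it runs in $\Ohtilde(|\P\cap\inv^{B_{i,j}}| + |\P\cap\outv^{B_{i,j}}|)$ time per box.

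Since each portal lies on the boundary of at most two boxes (as an output of one and as an input of the next in each coordinate direction), summing the per-box cost telescopes to $\Ohtilde(|\P|)$. The target value is $D^{p_X,p_Y}_{v_{|X|,|Y|}}$: because $v_{|X|,|Y|}$ is the bottom-right corner of $B_{p_X,p_Y}$, it is an output vertex of that box and is therefore indexed by $\DIST_{B_{p_X,p_Y}}$, so it participates in (or can be added to) the propagation phase of the final box at no asymptotic cost.

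The only genuine subtlety, and the step to execute with care, is verifying that restricting a Monge matrix to the portal rows and columns still gives a matrix that SMAWK can digest; this reduces to checking that the subsequence $\P\cap\inv^{B_{i,j}}$ of $\inv^{B_{i,j}}$ and the subsequence $\P\cap\outv^{B_{i,j}}$ of $\outv^{B_{i,j}}$ retain the canonical linear order along which the Monge inequality holds, which is immediate. Everything else is bookkeeping: ordering the boxes, locating each portal in the boundary sequences of its two incident boxes, and invoking SMAWK as a black box on each $N$.
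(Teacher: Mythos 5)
Your proposal is correct and follows essentially the same route as the paper: process the boxes in topological order and run SMAWK on Monge submatrices of the $\DIST$ tables (with row/column offsets added) to propagate portal distances; the paper does two SMAWK calls per box (one per predecessor) whereas you merge the predecessors in an aggregation step and do one call, which is an equivalent reorganization. One small slip: a grid-vertex portal can lie on the boundary of up to \emph{four} boxes (it is a corner of $B_{i,j}$, $B_{i-1,j}$, $B_{i,j-1}$, and $B_{i-1,j-1}$), not two, but since that is still $O(1)$ incidences per portal the $\Ohtilde(|\P|)$ total is unaffected.
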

\begin{proof}
For each box $B_{i,j}\in \B$, our algorithm computes $D^{i,j}_{x,y}$ for all vertices $v_{x,y}\in \P_{i,j}$. For this, the boxes $B_{i,j}$ containing any output portal are processed in the order of non-decreasing
values $i+j$. 

If $(i,j)=(1,1)$, then \cref{def:portalwalk} and Observation~\ref{obs:dir} yield $D^{1,1}_{x,y}=d(v_{0,0},v_{x,y})$,
    and this value can be retrieved from the $\DIST_{B_{1,1}}$ matrix in $\Ohtilde(1)$ time. 
    Thus, we henceforth assume $(i,j)\ne (1,1)$.
    
    Consider a portal-respecting $(i,j)$-walk $W$ to a vertex $v_{x,y}\in \outv^{B_{i,j}}$.
    By \cref{def:portalwalk}, $W$ is a concatenation of two walks $W'$ and $W''$
    such that $W''$ starts at a vertex $v_{x',y'}\in \P\cap \inv^{B_{i,j}}$ and is entirely contained within $B_{i,j}$,
    whereas $W'$ is a portal-respecting $(i,j-1)$-walk to $v_{x',y'}$ or a portal respecting $(i-1,j)$-walk to $v_{x',y'}$. Observe that, for a fixed portal $v_{x',y'}\in  \P\cap \inv^{B_{i,j}}$, the lengths of $W'$ and $W''$ can be optimized independently.
    Consequently, by Observation~\ref{obs:dir}, \[D^{i,j}_{x,y} = \max\left(\max_{v_{x',y'}\in \P_{i-1,j} \cap \inv^{B_{i,j}}}\left\{ D^{i-1,j}_{x',y'}+ d(v_{x',y'},v_{x,y})\right\},\max_{v_{x',y'}\in \P_{i,j-1} \cap \inv^{B_{i,j}}}\left\{ D^{i,j-1}_{x',y'}+ d(v_{x',y'},v_{x,y})\right\}\right).\]
    
    A matrix (indexed by $v_{x,y}\in \P_{i,j}$ and all vertices $\P_{i-1,j} \cap \inv^{B_{i,j}}$) containing the values $D^{i-1,j}_{x',y'}+d(v_{x',y'},v_{x,y})$ can be obtained from a submatrix of the $\DIST_{B_{i,j}}$ matrix
    by adding $D^{i-1,j}_{x',y'}$ to all entries in the column of $v_{x',y'}$.
    These modifications preserve the Monge property, so the resulting matrix is a Monge matrix with $\Ohtilde(1)$-time random access.
    Consequently, the SMAWK algorithm~\cite{AggarwalKMSW87} allows computing row-minima,
    i.e., the values $\max_{v_{x',y'}\in\P_{i-1,j} \cap \inv^{B_{i,j}}}\big\{ D^{i-1,j}_{x',y'}+ d(v_{x',y'},v_{x,y})\big\}$.
    A symmetric procedure allows computing the values $\max_{v_{x',y'}\in \P_{i,j-1} \cap \inv^{B_{i,j}}}\big\{ D^{i,j-1}_{x',y'}+ d(v_{x',y'},v_{x,y})\big\}$,
    which lets us derive the costs $D^{i,j}_{x,y}$ for all the vertices $v_{x,y}\in \P_{i,j}$.
    The SMAWK algorithm takes nearly linear time with respect to the sum of matrix dimensions, 
    so the overall time complexity is $\Ohtilde(|\P\cap B_{i,j}|)$.

Each vertex belongs to at most four boxes, so the overall running time is $\Ohtilde(|\P|)$.
\end{proof}

\begin{lemma}\label{lem:apportals}
    Let $\B$ be a box decomposition of the graph $G_{X,Y}$ for $X,Y\in \Sigma^+$
    and let $\alpha > 0$ be a real number.
    Suppose that $\P$ consists of all the grid vertices and all the boundary vertices
    $v_{x,y}$ of $\B$ satisfying $|x-y| = \lfloor(1+\alpha)^r\rfloor$ for some integer $r$.
    Then, every vertex $v_{x,y}\in B_{i,j}$ satisfies $D^{i,j}_{x,y}\le (1+2\alpha)^{i+j}d_{x,y}$.
\end{lemma}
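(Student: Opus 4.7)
My plan is a strong induction on $i+j$. For the base case $(i,j)=(1,1)$, \cref{def:portalwalk} forces $W'$ to be the empty walk at the grid vertex $v_{0,0}$, so the only portal-respecting $(1,1)$-walks are walks in $B_{1,1}$ starting at $v_{0,0}$; hence $D^{1,1}_{x,y}=d(v_{0,0},v_{x,y})=d_{x,y}$ by \cref{obs:dir}, which satisfies the claimed bound.

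For the inductive step with $i+j\ge 3$, I fix a shortest monotone walk $\pi$ from $v_{0,0}$ to $v_{x,y}$ and let $v_{x',y'}$ be the unique vertex at which $\pi$ first enters $B_{i,j}$. Monotonicity forces $v_{x',y'}\in\inv^{B_{i,j}}$, and the same vertex simultaneously lies in $\outv^{B_{i-1,j}}$ (if it is on the left edge, $x'=b^X_{i-1}$) or in $\outv^{B_{i,j-1}}$ (if it is on the top edge, $y'=b^Y_{j-1}$). If $v_{x',y'}$ is already a portal, then the inductive hypothesis applied to the appropriate $(i',j')\in\{(i-1,j),(i,j-1)\}$ gives $D^{i',j'}_{x',y'}\le(1+2\alpha)^{i+j-1}d_{x',y'}$, and extending by the suffix of $\pi$ inside $B_{i,j}$ yields $D^{i,j}_{x,y}\le(1+2\alpha)^{i+j-1}d_{x,y}$.

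When $v_{x',y'}$ is not a portal, the crux is rounding it to a nearby portal $v_{x'',y''}$ on the same boundary edge of $\inv^{B_{i,j}}$ with
\[d(v_{x',y'},v_{x'',y''})\le\alpha|x'-y'|\le\alpha d_{x,y}.\]
Treating WLOG the case $y'=b^Y_{j-1}$ and $x'>y'$, set $\delta:=x'-y'$. Since $v_{x',y'}$ is a non-portal boundary vertex, $\delta\notin\{\lfloor(1+\alpha)^r\rfloor:r\in\Z\}$; on the other hand, whenever $k\le 1/\alpha$, the interval $[\log_{1+\alpha}k,\log_{1+\alpha}(k+1))$ has length at least $1$ and therefore contains an integer $r$ with $\lfloor(1+\alpha)^r\rfloor=k$. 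This forces $\delta>1/\alpha$, so $\alpha\delta>1$. Let $r:=\max\{s:\lfloor(1+\alpha)^s\rfloor<\delta\}$ and $x^\star:=y'+\lfloor(1+\alpha)^r\rfloor$; the inequality $\lfloor(1+\alpha)^{r+1}\rfloor\ge\delta+1$ gives $x'-x^\star\le\alpha(\delta+1)/(1+\alpha)\le\alpha\delta$. If $x^\star\ge b^X_{i-1}$ I take $(x'',y''):=(x^\star,y')$; otherwise I fall back to the grid-corner portal $(x'',y''):=(b^X_{i-1},y')$, where an analogous calculation using that $b^X_{i-1}-y'$ must itself be non-portal in this subcase (and hence $>1/\alpha$) again delivers $x'-b^X_{i-1}\le\alpha\delta$.

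The induction then closes by a triangle-inequality calculation. Since $v_{x'',y''}\in\outv^{B_{i',j'}}$ for some $(i',j')\in\{(i-1,j),(i,j-1)\}$, the inductive hypothesis and \cref{def:portalwalk} give
\[D^{i,j}_{x,y}\le D^{i',j'}_{x'',y''}+d(v_{x'',y''},v_{x,y})\le(1+2\alpha)^{i+j-1}\bigl(d(v_{0,0},v_{x'',y''})+d(v_{x'',y''},v_{x,y})\bigr),\]
and the bracketed sum is at most $d_{x,y}+2d(v_{x',y'},v_{x'',y''})\le(1+2\alpha)d_{x,y}$ by the triangle inequality and the rounding bound, finishing the induction. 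The main obstacle I expect is the careful geometric analysis of portal rounding, in particular verifying the $\alpha\delta$ bound in the grid-corner fallback; the symmetric configurations ($v_{x',y'}$ on the left edge, or $x'<y'$) are handled identically.
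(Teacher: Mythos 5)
Your proof is correct and follows essentially the same structure as the paper's: induction on $i+j$, entering $B_{i,j}$ at the first vertex $v_{x',y'}$ of a shortest monotone path, rounding that entry point to a nearby portal on the shared boundary, and closing with the triangle inequality. The paper rounds along the output column of $B_{i-1,j}$ (fixed $x'=b^X_{i-1}$) in its WLOG case while you round along the output row of $B_{i,j-1}$ (fixed $y'=b^Y_{j-1}$); these are the two symmetric cases and the calculations are interchangeable. You also arrive at the exponent $i+j$ as stated in the lemma, whereas the paper's chain of inequalities actually yields the slightly sharper $(1+2\alpha)^{i+j-2}$; both satisfy the claim.

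The one piece you add that the paper elides is the explicit verification of the rounding bound, namely that $\delta\le 1/\alpha$ forces $\delta=\lfloor(1+\alpha)^r\rfloor$ for some integer $r$ (so a non-portal $\delta$ has $\alpha\delta>1$), which makes the inequality $\delta-\lfloor(1+\alpha)^r\rfloor\le\alpha(\delta+1)/(1+\alpha)\le\alpha\delta$ airtight. That gap-filling is welcome. Your justification of the grid-corner fallback is needlessly roundabout, though: once $x^\star<b^X_{i-1}$, you already have $x'-b^X_{i-1}<x'-x^\star\le\alpha\delta$ directly, so there is no need to argue separately that $b^X_{i-1}-y'$ is also a non-power value. (Your phrasing "must itself be non-portal" is confusing since $(b^X_{i-1},y')$ \emph{is} a portal, being a grid vertex — you mean $b^X_{i-1}-y'$ is not of the form $\lfloor(1+\alpha)^s\rfloor$.) None of this affects correctness.
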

\begin{proof}
    We proceed by induction on $i+j$. The base case is trivially satisfied due to $D_{x,y}^{1,1} = d_{x,y}$ for $v_{x,y}\in B_{1,1}$. We henceforth fix $v_{x,y}\in B_{i,j}$ with $(i,j)\ne (1,1)$. By Observation~\ref{obs:dir},
    there is a shortest path from $v_{0,0}$ to $v_{x,y}$ contained within $G_{X[1\dd x],Y[1\dd y]}$. Let $v_{x',y'}$ be the first vertex of $B_{i,j}$ on this path. Observe that $v_{x',y'}\in \inv^{B_{i,j}}$
    and $d_{x,y}=d_{x',y'}+d(v_{x',y'},v_{x,y})$. 
    By symmetry, we may assume without loss of generality that $v_{x',y'}\in \outv^{B_{i-1,j}}$.

    Let us choose $v_{x',y''}\in \P_{i-1,j}\cap \outv^{B_{i-1,j}}$ as close as possible to $v_{x',y'}$.
    Since grid vertices are portals, such $v_{x',y''}$ exists.
    Moreover, by the choice of the remaining portals, $d(v_{x',y'},v_{x',y''})\le \alpha|x'-y'| \le \alpha d_{x',y'}$. 
    Let us construct a portal-respecting $(i,j)$-walk to $v_{x,y}$ by concatenating a shortest portal-respecting $(i-1,j)$-walk to $v_{x',y''}$ and a shortest path from $v_{x',y''}$ to $v_{x,y}$ (by Observation~\ref{obs:dir}, we may assume that this path is contained in $B_{i,j}$). This proves $D^{i,j}_{x,y} \le D^{i-1,j}_{x,y} + d(v_{x',y''},v_{x,y})$. 
    The inductive assumption further yields $D^{i-1,j}_{x,y}\le (1+2\alpha)^{i+j-3}d_{x',y''}$,
    and thus $D^{i,j}_{x,y}  \le (1+2\alpha)^{i+j-3}d_{x',y''} +  d(v_{x',y''},v_{x,y}) 
    \le (1+2\alpha)^{i+j-3}(d_{x',y'}+d(v_{x',y'},v_{x',y''}))+d(v_{x',y'},v_{x',y''})+d_{x,y}-d_{x',y'}
    \le (1+2\alpha)^{i+j-3}(d_{x',y'}+\alpha d_{x',y'})+\alpha d_{x',y'}+d_{x,y}-d_{x',y'} \le (1+2\alpha)^{i+j-2} d_{x,y}$.
\end{proof}

\subsection{A Grammar-Based Box Decomposition}
Hermelin et al.~\cite{HermelinLLW13} presented an algorithm that,
given two grammar-compressed strings $X,Y\in \Sigma^+$ and an integer parameter $\tau$,
constructs a box decomposition $\B$ of $G_{X,Y}$ with $p_X = \Oh(\lceil\frac1\tau|X|\rceil)$
and $p_Y = \Oh(\lceil\frac1\tau |Y|\rceil)$, along with an oracle providing random
access to the $\DIST_{B_{i,j}}$ matrices of all the boxes $B_{i,j}$.
However, their construction costs $\Omega(|X|+|Y|)$ time, which is prohibitive in most of our applications.
In this section, we achieve the same goal avoiding the linear dependency on the lengths of $X$ and $Y$.
The bottleneck of~\cite{HermelinLLW13} is constructing appropriate decompositions of $X$ and $Y$ into phrases.
In the following lemma,  we implement an analogous step more efficiently by 
building a grammar-compressed representation of the sequence of phrases,
with each phrase represented by a symbol in an auxiliary grammar.

\begin{restatable}{lemma}{lempart}\label{lem:part}
    Given a straight-line program $\G$ generating a string $X$
    and an integer $\tau \ge 1$, in $\Oh(|\G|)$ time one can construct straight-line programs $\G^+$
    and $\G^P$ of size $\Oh(|\G|)$ such that:
    \begin{itemize}
        \item the terminal symbols of $\G^P$ are the symbols $A$ of $\G^+$ satisfying $|A|\le \tau$,
        \item $\G^P$ generates a string $P$ such that $\exp_{\G^+}(P)=X$ and $|P| \le \left\lceil\frac{3}{\tau}|X|\right\rceil$.
    \end{itemize}
\end{restatable}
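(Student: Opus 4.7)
The plan is to process the symbols of $\G$ in bottom-up order and, for each symbol $A$, construct a ``phrase sequence'' $P(A)$ whose entries are symbols of $\G^+$ of length at most $\tau$ and whose expansions concatenate to $\exp(A)$; this sequence will be represented by a non-terminal $A^P$ of $\G^P$. For a small symbol $A$ (with $|A|\le \tau$) I declare $A$ itself to be a terminal of $\G^P$ and set $P(A)=(A)$. For a big symbol $A\to BC$ I form $P(A)$ from the already-constructed sequences $P(B)=(b_1,\dots,b_p)$ and $P(C)=(c_1,\dots,c_q)$ as follows: if $|b_p|+|c_1|\le\tau$, I introduce a fresh symbol $E\to b_pc_1$ in $\G^+$ (note $|E|\le \tau$, so $E$ qualifies as a terminal of $\G^P$) and set $P(A)=(b_1,\dots,b_{p-1},E,c_2,\dots,c_q)$; otherwise I simply set $P(A)=(b_1,\dots,b_p,c_1,\dots,c_q)$.

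The bound $|P|\le \lceil 3|X|/\tau\rceil$ will follow from the invariant that any two consecutive phrases $p_i,p_{i+1}$ in $P(A)$ satisfy $|p_i|+|p_{i+1}|>\tau$. This is trivial for small $A$; inductively, internal adjacencies inside $P(B)$ and $P(C)$ are preserved, the no-merge boundary has $|b_p|+|c_1|>\tau$ by the merge threshold, and in the merge case the new neighbours of $E$ satisfy, e.g., $|b_{p-1}|+|E|=|b_{p-1}|+|b_p|+|c_1|>\tau$ using the inductive invariant applied to $P(B)$ (with a symmetric estimate on the right and no obligation when $p=1$ or $q=1$). Summing adjacency inequalities and using $\sum_i|p_i|=|A|$ yields $|P(A)|<2|A|/\tau+1$, which for $|A|>\tau$ gives $|P(A)|\le \lceil 3|A|/\tau\rceil$; the case $|A|\le\tau$ is immediate since $|P(A)|=1$.

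The main technical obstacle is realising this sequence manipulation so that $\G^P$ still has size $\Oh(|\G|)$: the splice removes the last symbol of $P(B)$ and the first symbol of $P(C)$, and I cannot afford to recompute them. My fix is to maintain, for every big $A$, up to three auxiliary $\G^P$ non-terminals $A^{P\sm F}$, $A^{P\sm L}$, $A^{P\sm FL}$ that generate $P(A)$ with the first phrase, the last phrase, or both endpoint phrases removed (omitting any variant whose expansion would be empty), together with pointers to $\text{first}(A)$, $\text{last}(A)$, and the count $|P(A)|$. Each of these four symbols then admits a production of length at most three built from the analogous symbols for $B$ and $C$ and the (possibly fresh) symbol $E$, after a case split on whether merging occurs and on whether $|P(B)|$ or $|P(C)|$ equals $1$. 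Processing one non-terminal of $\G$ therefore costs $\Oh(1)$ time and contributes $\Oh(1)$ symbols and production characters to each of $\G^+$ and $\G^P$, yielding total running time $\Oh(|\G|)$ and $|\G^+|,|\G^P|=\Oh(|\G|)$. The fiddliest part will be the case analysis for the auxiliary symbols when $|P(B)|$ or $|P(C)|$ equals $1$, in order never to introduce an empty production.
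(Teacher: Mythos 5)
Your proposal is correct and takes a genuinely different route from the paper's proof, though the two share the basic bottom-up flavor. The paper maintains, for each symbol $A$ with $|A|>\tau$, a three-piece ``sandwich'' $L(A)\in\G^+$, $M(A)\in\G^P$, $R(A)\in\G^+$ with $\exp_\G(A)=\exp_{\G^+}(L(A)\circ\exp_{\G^P}(M(A))\circ R(A))$, merges boundary phrases only in the two mixed cases where exactly one child is small (cases 3 and 4), and bounds $|P|$ via a potential-function invariant $|L(A)|+|R(A)|+\tau(|M(A)|+2)<3|A|$ that is tailored to tolerate adjacent short phrases at big-big joints, where no merging is done. You instead merge greedily whenever two adjacent boundary phrases sum to at most $\tau$, which buys you the cleaner structural invariant that every consecutive pair of phrases exceeds $\tau$, and hence the size bound falls out of a direct pairing argument rather than a potential function. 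The cost of your cleaner invariant is heavier bookkeeping: you carry four $\G^P$ variants ($A^P$, $A^{P\setminus F}$, $A^{P\setminus L}$, $A^{P\setminus FL}$) plus endpoint pointers per symbol, with a case split on whether each child's phrase count is $1$ or $2$ to avoid empty productions, whereas the paper gets away with the three pieces $L/M/R$ (and in fact allows $M(A)\to\emptystring$). Both approaches give $\Oh(|\G|)$ size and time with $\Oh(1)$ work per original production. One small point worth stating explicitly: your pairing bound $|P(A)|<2|A|/\tau+1$ needs $|A|>\tau$ to beat $\lceil 3|A|/\tau\rceil$ (since $3|A|/\tau>2|A|/\tau+1$ iff $|A|>\tau$), and for $|A|\le\tau$ you rely on $|P(A)|=1$; you gesture at this but it is the one place the argument would break if one forgot the two regimes.
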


\begin{proof}
If $\tau \ge |X|$, then we simply set $\G^+=\G$ and set $\G^P$ to be a grammar with no non-terminals
whose starting symbol is the starting symbol of $\G$; this construction clearly satisfies the required conditions.

We henceforth assume that $\tau < |X|$. The grammar $\G^+$ is constructed by adding new non-terminals to $\G$.
As for $\G^P$, we include as terminals all symbols $A$ of $\G^+$ with $|A|\le \tau$,
and we add further symbols as non-terminals.
For every symbol $A$ of $\G$ with $|A|>\tau$, we introduce three new non-terminals:
\begin{itemize}
    \item $L(A)$ and $R(A)$ to $\G^+$, satisfying $|L(A)|\le \tau$ and $|R(A)|\le \tau$,
    \item $M(A)$ to $\G^P$.
\end{itemize}
The productions for $L(A)$, $R(A)$, and $M(A)$ are determined based on the production $A\to B_LB_R$:
\begin{enumerate}
\item If $|B_L|\le \tau$ and $|B_R|\le \tau$, then  $M(A)\to \emptystring$, $L(A)\to B_L$, and $R(A)\to B_R$.
\item If $|B_L| > \tau$ and $|B_R|> \tau$, then $M(A)\to M(B_L)R(B_L)L(B_R)M(B_R)$, $L(A)\to L(B_L)$, and $R(A)\to R(B_R)$.
\item If $|B_L| > \tau$ and $|B_R| \le \tau$, then $L(A)\to L(B_L)$ and:
\begin{enumerate}
    \item $R(A)\to R(B_L)B_R$ and $M(A)\to M(B_L)$ if $|R(B_L)|+|B_R|\le \tau$,
    \item $R(A)\to B_R$ and $M(A)\to M(B_L)R(B_L)$ otherwise.
\end{enumerate}
\item If $|B_L| \le \tau$ and $|B_R| > \tau$, then $R(A)\to R(B_R)$ and:
\begin{enumerate}
    \item $L(A)\to B_LL(B_R)$ and $M(A)\to M(B_R)$ if $|B_L|+|L(B_R)|\le \tau$, 
    \item $L(A)\to B_L$ and $M(A)\to L(B_R)M(B_R)$ otherwise.
\end{enumerate}
\end{enumerate}
Additionally, for the starting symbol $S$ of $\G$, we add a starting symbol $S^P\to L(S)M(S)R(S)$ to $\G^P$.

A simple inductive argument shows that every symbol $A$ of $\G$ with $|A|>\tau$ satisfies
\[\exp_{\G}(A)=\exp_{\G^+}(L(A)\circ \exp_{\G^P}(M(A))\circ R(A)).\]
In particular, $P = \exp_{\G^P}(S^P)$ satisfies $\exp_{\G^+}(P)=\exp_{\G}(S)=X$.

It remains to prove that $|P|< \frac{3}{\tau}|X|$.
For this, we inductively show that every symbol $A$ of $\G$ with $|A|>\tau$ satisfies
$|L(A)|+|R(A)|+\tau(|M(A)|+2) < 3|A|$. To prove this claim, we analyze the cases based on the production $A\to B_LB_R$.
\begin{enumerate}
    \item If $|B_L|\le \tau$ and $|B_R|\le \tau$, then 
    \[|L(A)|+|R(A)|+\tau(|M(A)|+2)  = |A|+2\tau < 3|A|.\]
    \item If $|B_L|>\tau$ and $|B_R|>\tau$, then 
    \begin{multline*}|L(A)|+|R(A)|+\tau(|M(A)|+2) = |L(B_L)|+|R(B_R)|+\tau(|M(B_L)|+2+|M(B_R)|+2) <\\
    |L(B_L)|+|R(B_L)|+\tau(|M(B_L)|+2) + |L(B_R)|+|R(B_R)|+\tau(|M(B_R)|+2) < 3|B_L|+3|B_R| = 3|A|.\end{multline*}
    \item If $|B_L|>\tau$, $|B_R|\le \tau$, then
    \begin{itemize}
        \item If $|R(B_L)|+|B_R|\le \tau$, then
         \begin{multline*}
            |L(A)|+|R(A)|+\tau(|M(A)|+2) = |L(B_L)| + |R(B_L)|+|B_R| + \tau(|M(B_L)|+2) <\\ 3|B_L|+|B_R| < 3|A|.
         \end{multline*}
         \item Otherwise, 
         \begin{multline*}|L(A)|+|R(A)|+\tau(|M(A)|+2) = |L(B_L)| + |B_R| + \tau(|M(B_L)|+3)  < \\
            |L(B_L)| + |B_R| + \tau(|M(B_L)|+2)+|R(B_L)|+|B_R| < 3|B_L|+2|B_R| < 3|A|.\end{multline*}
    \end{itemize}
    \item The case involving $|B_L|\le \tau$ and  $|B_R|> \tau$ is symmetric to the previous one.
\end{enumerate}
In particular, this claim holds for $A=S$, so $|S^P| = |M(S)|+2 < \frac{1}{\tau}(3|S|-|L(S)|-|R(S)|) < \frac{3}{\tau}|S| = \frac{3}{\tau}|X|$.
\end{proof}
As for constructing the $\DIST$ matrices, we use the original implementation from~\cite{HermelinLLW13}.
\begin{lemma}[{\cite[Section 5]{HermelinLLW13}}]\label{lem:dist}
    Given straight-line programs $\G_X$ and $\G_Y$ and an integer $\tau \ge 1$,
    in $\Ohtilde(|\G_X||\G_Y|\tau)$ time
    one can construct a data structure that provides $\Ohtilde(1)$-time random access
    to the $\DIST_{\exp(A_X),\exp(A_Y)}$ matrices
    for all symbols $A_X$ of $\G_X$ and $A_Y$ of $\G_Y$ 
    satisfying $|A_X|\le \tau$ and $|A_Y|\le \tau$.
\end{lemma}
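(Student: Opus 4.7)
The plan is to maintain each $\DIST$ matrix in Tiskin's implicit \emph{seaweed braid} representation, which encodes an $\ell \times \ell$ Monge $\DIST$ matrix using $\Ohtilde(\ell)$ space while supporting arbitrary-entry queries in $\Ohtilde(1)$ time. The central algebraic ingredient is that if $X = X_L X_R$, then $\DIST_{X,Z}$ is the $(\min,+)$-product of $\DIST_{X_L,Z}$ and $\DIST_{X_R,Z}$ under the natural embedding of the combined box's input and output perimeters, and moreover two seaweed braids of dimension $\ell$ can be composed in $\Ohtilde(\ell)$ time. An analogous statement holds when $Z$, rather than $X$, is split.

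Armed with this primitive, I would construct the data structure bottom-up along both grammars simultaneously. Because $\G_X$ and $\G_Y$ are in Chomsky normal form, every production $A \to B_L B_R$ satisfies $|B_L|, |B_R| < |A|$, so the set of symbols of size at most $\tau$ is closed under taking children. For each pair $(A_X, A_Y)$ with $|A_X|, |A_Y| \le \tau$, processed in order of non-decreasing $|A_X|+|A_Y|$, I build $\DIST_{\exp(A_X),\exp(A_Y)}$ as follows: if both $A_X$ and $A_Y$ are terminals, the matrix has constant dimension and is written out explicitly; otherwise, say $A_X \to B_L B_R$, I combine $\DIST_{\exp(B_L),\exp(A_Y)}$ and $\DIST_{\exp(B_R),\exp(A_Y)}$ via one seaweed composition (the case where it is $A_Y$ that is split is symmetric). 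There are at most $|\G_X|\cdot|\G_Y|$ such pairs, each merge costs $\Ohtilde(\tau)$ time, and storing all braids occupies $\Ohtilde(|\G_X||\G_Y|\tau)$ space in total, matching the claimed construction time. Any subsequent query $\DIST_{\exp(A_X),\exp(A_Y)}[i,j]$ is then answered in $\Ohtilde(1)$ time from the appropriate stored braid.

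The main obstacle is the correctness of the composition step: one must verify that gluing two vertically (or horizontally) adjacent boxes and taking the $(\min,+)$-product of their $\DIST$ matrices, with the perimeter orderings from the definitions of $\inv^B$ and $\outv^B$, indeed produces the $\DIST$ matrix of the merged box, and that the resulting matrix is again Monge so that the seaweed encoding remains well-defined. This is precisely the compressed alignment-graph decomposition developed by Tiskin and adapted to grammars by Hermelin et al.; once that composition lemma is in hand, the bottom-up sweep outlined above is essentially a routine grammar traversal with memoization on the pairs $(A_X,A_Y)$.
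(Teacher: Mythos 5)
Your reconstruction is correct and is essentially the argument underlying the cited result: build DIST representations bottom-up over pairs of small symbols using Tiskin's implicit Monge-matrix (seaweed) encoding with $\Ohtilde(\tau)$-time merges and $\Ohtilde(1)$-time entry queries. This is precisely how Hermelin et al.\ obtain the data structure, so there is nothing to compare beyond noting that you have matched their approach.
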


Combining \cref{lem:part,lem:dist}, we complete our construction.
\begin{corollary}\label{cor:scheme}
Given a straight-line program $\G_X$ of size $n$ generating a string $X$ of length $N>0$,
a straight-line program $\G_Y$ of size $m$ generating a string $Y$ of length $M>0$,
and an integer $\tau\in [1\dd N+M]$, one can in $\Ohtilde(\frac{N+M}{\tau}+nm\tau)$ time
construct a box decomposition $\B = (B_{i,j})_{i\in [1\dd p_X],j\in [1\dd p_Y]}$
of $G_{X,Y}$ with $p_X = \Oh(\lceil\frac{N}{\tau}\rceil)$ and $p_Y = \Oh(\lceil\frac{M}{\tau}\rceil)$, along with an oracle providing $\Ohtilde(1)$-time random access to the $\DIST_{B_{i,j}}$ matrices.
\end{corollary}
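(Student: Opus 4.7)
The plan is to assemble the box decomposition from the two building blocks already at hand: \cref{lem:part}, used to carve $X$ and $Y$ into phrases of length at most $\tau$ that are each generated by a single non-terminal of an auxiliary grammar, and \cref{lem:dist}, used to serve random access to the $\DIST$ matrices on the resulting phrase-generating symbols.

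First, I would apply \cref{lem:part} to $\G_X$ with parameter $\tau$ to obtain, in $\Oh(n)$ time, two grammars $\G_X^+$ and $\G_X^P$ of size $\Oh(n)$ such that the terminals of $\G_X^P$ are exactly those symbols $A$ of $\G_X^+$ with $|A|\le\tau$, and $\G_X^P$ generates a string $P_X$ of length $p_X \le \lceil 3N/\tau\rceil$ with $\exp_{\G_X^+}(P_X)=X$. The analogous call on $\G_Y$ yields $\G_Y^+$, $\G_Y^P$, and $P_Y$ of length $p_Y\le \lceil 3M/\tau\rceil$. I would then decompress $P_X$ and $P_Y$ into flat arrays (via a single DFS on each parse tree), taking $\Oh(p_X+p_Y+n+m) = \Ohtilde(\tfrac{N+M}{\tau}+n+m)$ time; this fixes the phrase decompositions $X = X_1\circ\cdots\circ X_{p_X}$ and $Y = Y_1\circ\cdots\circ Y_{p_Y}$ with $X_i := \exp_{\G_X^+}(P_X[i])$ and $Y_j := \exp_{\G_Y^+}(P_Y[j])$. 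The phrase boundaries $b_i^X$ and $b_j^Y$ follow as prefix sums of $|P_X[i]|$ and $|P_Y[j]|$, values that are stored with the symbols of $\G_X^+$ and $\G_Y^+$. The required bounds $p_X = \Oh(\lceil N/\tau\rceil)$ and $p_Y = \Oh(\lceil M/\tau\rceil)$ are immediate from \cref{lem:part}.

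Next, I would invoke \cref{lem:dist} on $\G_X^+$ and $\G_Y^+$ with the same parameter $\tau$; this costs $\Ohtilde(nm\tau)$ time and provides $\Ohtilde(1)$-time random access to $\DIST_{\exp(A_X),\exp(A_Y)}$ whenever $|A_X|,|A_Y|\le\tau$. Because every phrase $X_i$ is the expansion of $P_X[i]$, which is a terminal of $\G_X^P$ and therefore a symbol of $\G_X^+$ of length at most $\tau$, and symmetrically for $Y_j$, Observation~\ref{obs:dir} (in the form explicitly noted in the text) gives $\DIST_{B_{i,j}} = \DIST_{X_i,Y_j} = \DIST_{\exp(P_X[i]),\exp(P_Y[j])}$. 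Thus a query $\DIST_{B_{i,j}}[k,\ell]$ is answered in $\Ohtilde(1)$ time by reading $P_X[i]$ and $P_Y[j]$ from the flat arrays and forwarding the query to the oracle of \cref{lem:dist}.

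Summing contributions, the construction takes $\Oh(n+m) + \Ohtilde(\tfrac{N+M}{\tau}+n+m) + \Ohtilde(nm\tau) = \Ohtilde(\tfrac{N+M}{\tau}+nm\tau)$, using $n+m = \Oh(nm\tau)$ for $n,m,\tau\ge 1$. The main (mild) subtlety will be ensuring that random access to $P_X$ and $P_Y$ costs $\Ohtilde(1)$ per oracle query rather than the $\Oh(\log N)$ that a generic grammar-navigation would incur; decompressing $P_X$ and $P_Y$ into explicit arrays handles this cleanly, and their combined length fits inside the $\Ohtilde(\tfrac{N+M}{\tau})$ term that is already present in the target running time.
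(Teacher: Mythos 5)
Your proof is correct and follows essentially the same route as the paper's: apply \cref{lem:part} to each grammar to get phrase decompositions whose phrases are (expansions of) symbols of length at most $\tau$, observe via \cref{obs:dir} that $\DIST_{B_{i,j}}=\DIST_{\exp_{\G^+_X}(P_X[i]),\exp_{\G^+_Y}(P_Y[j])}$, and invoke \cref{lem:dist} on $\G^+_X,\G^+_Y$ for the oracle, with $P_X,P_Y$ stored explicitly to dispatch queries. The only stylistic difference is that you spell out the decompression of $P_X,P_Y$ into flat arrays; the paper compresses this into ``Storing $P_X$ and $P_Y$,'' and note that even grammar-based random access to $P_X$ would already be $\Ohtilde(1)$, so the remark about $\Oh(\log N)$ navigation versus flat arrays is a distinction without a difference here.
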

\begin{proof}
    First, we use \cref{lem:part} to obtain grammars $\G^+_X$ and $\G^P_X$.
    The string $P_X$ represented by $\G^P_X$ satisfies $X = \exp_{\G^+_X}(P_X)$,
    so it can be interpreted as a decomposition of $X$ into $p_X := |P_X|$ phrases,
    with the $i$th phrase $X_i$ being an occurrence of $\exp_{\G^+_X}(P_X[i])$.
    The decomposition of $Y$ is obtained in the same way
    based on grammars $\G^+_Y$ and $\G^P_Y$ constructed for $Y$.

    The box decomposition $\B$ is based on these decompositions of $X$ and $Y$.
    Note that each box $B_{i,j}$ satisfies \[\DIST_{B_{i,j}}=\DIST_{\exp_{\G^+_X}(P_X[i]),\exp_{\G^+_Y}(P_Y[j])}.\]
    Due to $|P_X[i]|\le \tau$ and $|P_Y[j]|\le \tau$, \cref{lem:dist} applied to
    $\G^+_X$ and $\G^+_Y$ provides $\Ohtilde(1)$-time oracle access to all these matrices.
    Storing $P_X$ and $P_Y$, we can point to $\DIST_{B_{i,j}}$ in $\Oh(1)$ time given $i,j$.
\end{proof}

\subsection{Algorithm}

\begin{proposition}\label{prp:edap}
   Given a straight-line program $\G_X$ of size $n$ generating a string $X$ of length $N>0$,
   a straight-line program $\G_Y$ of size $m$ generating a string $Y$ of length $M>0$,
   and a parameter $\eps\in (0,1]$,
   a $(1+\eps)$-approximation of $\ID(X,Y)$ can be computed in $\Ohtilde\left((nm(N+M))^{2/3}\eps^{-1/3}\right)$ time.
\end{proposition}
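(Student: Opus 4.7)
The plan is to combine the grammar-based box decomposition of Corollary~\ref{cor:scheme} with the portal construction of Lemma~\ref{lem:apportals}, run the DP of Lemma~\ref{cor:boxdp} on the resulting portal set, and then balance the two parameters $\tau$ (phrase length) and $\alpha$ (portal spacing) to minimize the final running time.

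First, for parameters $\tau \in [1 \dd N+M]$ and $\alpha \in (0,1]$ to be fixed later, I would invoke Corollary~\ref{cor:scheme} on $\G_X$ and $\G_Y$ to build, in $\tOmg((N+M)/\tau + nm\tau)$ time, a box decomposition $\B$ with $p_X = \Oh(\lceil N/\tau\rceil)$ and $p_Y = \Oh(\lceil M/\tau\rceil)$, together with $\tO(1)$-time random access to every $\DIST_{B_{i,j}}$. I would then take the portal set $\P$ prescribed by Lemma~\ref{lem:apportals}: all grid vertices of $\B$, together with every boundary vertex $v_{x,y}$ with $|x-y|=\lfloor(1+\alpha)^r\rfloor$ for some integer $r$. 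Since $v_{|X|,|Y|}$ is a grid vertex, hence a portal, running Lemma~\ref{cor:boxdp} computes in $\tO(|\P|)$ time the length $D^{p_X,p_Y}_{|X|,|Y|}$ of a shortest portal-respecting $(p_X,p_Y)$-walk to $v_{|X|,|Y|}$.

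The central quantitative step is a count of portals. The grid contributes $p_Xp_Y = \Oh(NM/\tau^2)$ vertices. For the diagonal portals, note that $\lfloor(1+\alpha)^r\rfloor$ takes $\tO(1/\alpha)$ distinct integer values in $[0\dd N+M]$, and along each of the $p_X+p_Y$ axis-aligned phrase-boundary lines (e.g.\ $\{v_{b^X_i,y} : y \in [0\dd M]\}$ for fixed $i$) each admissible value of $|x-y|$ is attained by at most two vertices. Hence
\[|\P| = \tO(NM/\tau^2 + (N+M)/(\tau\alpha)).\]
By Lemma~\ref{lem:apportals}, the value returned by the DP is at most $(1+2\alpha)^{p_X+p_Y}\ID(X,Y)$; it is also trivially at least $\ID(X,Y)$. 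So to achieve a $(1+\eps)$-approximation it suffices to set $\alpha := c\eps\tau/(N+M)$ for a sufficiently small absolute constant $c$, which guarantees $(1+2\alpha)^{p_X+p_Y}\le \exp(2\alpha(p_X+p_Y)) \le 1+\eps$ whenever $\eps \in (0,1]$.

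Plugging this $\alpha$ back, the total running time becomes $\tO(nm\tau + (N+M)^2/(\eps\tau^2))$; the construction's additive $(N+M)/\tau$ term is dominated. Balancing the two remaining terms at $\tau := \lceil((N+M)^2/(\eps nm))^{1/3}\rceil$ makes both equal to $\tO((nm(N+M))^{2/3}\eps^{-1/3})$, which is the claimed bound. Corner cases in which the optimal $\tau$ would fall outside $[1 \dd N+M]$ are handled directly: if $\tau \le 1$, the bound $\tO(nm + (N+M)^2/\eps)$ already exceeds $N+M$ and the problem admits a trivial exact solution within the target budget (via Hermelin et al.), while $\tau \ge N+M$ would require $nm(N+M) \ge 1/\eps$ to fail, which is impossible for nonempty inputs. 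The main technical wrinkle is the portal bound, specifically verifying that $\lfloor(1+\alpha)^r\rfloor$ really contributes only $\tO(1/\alpha)$ distinct thresholds in $[1\dd N+M]$ and that each phrase boundary incurs at most $\Oh(1)$ portals per threshold; beyond that, the proof is an orchestration of the three lemmas assembled in this section.
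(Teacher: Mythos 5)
Your proof follows essentially the same route as the paper: build the box decomposition via Corollary~\ref{cor:scheme}, choose the approximation portals of Lemma~\ref{lem:apportals}, run the Monge DP of Lemma~\ref{cor:boxdp}, and then optimize $\tau$ and $\alpha$. The portal count, the choice $\alpha = \Theta(\eps\tau/(N+M))$, and the balancing $\tau = \Theta\bigl(((N+M)^2/(\eps nm))^{1/3}\bigr)$ all match the paper's argument.

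The one place you slip is the corner case $\tau \ge N+M$. You assert that this ``would require $nm(N+M)\ge 1/\eps$ to fail, which is impossible for nonempty inputs,'' but that inequality can indeed fail: the proposition allows $\eps$ to be subconstant (this is an FPTAS, and $\eps$ need not be a fixed constant), so $\eps^{-1} > nm(N+M)$ is perfectly possible for very small $\eps$. When this happens the budget in the optimization, $\Ohtilde(nm+\eps^{-1}+(nm(N+M))^{2/3}\eps^{-1/3})$, is dominated by $\eps^{-1}$, which exceeds the claimed bound. The paper handles this by observing that $\eps^{-1}\ge nm(N+M)$ implies the target budget $(nm(N+M))^{2/3}\eps^{-1/3}$ is itself at least $nm(N+M)$, and then falling back on the exact algorithm of Proposition~\ref{prp:edexact} with $D=N+M$, which runs in $\Ohtilde(\sqrt{nm}(N+M)) \le \Ohtilde(nm(N+M))$ time. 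Note that the trivial $O(NM)$ dynamic program does not directly help here, since $nm(N+M)$ need not exceed $NM$; one genuinely needs the compressed exact algorithm. So the claim of impossibility should be replaced by a fallback argument. The rest of your proof is sound.
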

\begin{proof}
    The algorithm uses \cref{cor:scheme,cor:boxdp}
    with the set of portals $\P$ defined as in \cref{lem:apportals},
    where $\alpha=\Omega(\frac{\eps}{p_X+p_Y-2})=\Omega(\frac{\eps \tau}{N+M})$ is chosen so that $(1+2\alpha)^{p_X+p_Y-2}=1+\eps$.
    \cref{lem:apportals} guarantees that the resulting value is a $(1+\eps)$-approximation of $\ID(X,Y)$.
    The number of portals is $
        \Oh\left(\tfrac{NM}{\tau^2} + \tfrac{N}{\tau}\log_{1+\alpha}M + \tfrac{M}{\tau}\log_{1+\alpha}N\right)
        = \Ohtilde\big(\tfrac{(N+M)^2}{\eps \tau^2}\big) $,
    so the overall running time is $\Ohtilde\big(nm\tau + \tfrac{(N+M)^2}{\eps \tau^2}\big)$.
    Optimizing $\tau\in [1\dd N+M]$, we get $\Ohtilde(nm+\eps^{-1}+(nm(N+M))^{2/3}\eps^{-1/3})$ time.
    If the first term dominates, then $nm \ge (nm(N+M))^{2/3}\eps^{-1/3}  \ge (N+M)^2 \eps^{-1}$.
    However, $\Oh(NM) = \Oh((N+M)^2 \eps^{-1})$ time is enough to compute $\ID(X,Y)$ exactly without compression.
    If the second term dominates, then $\eps^{-1} \ge (nm(N+M))^{2/3}\eps^{-1/3}  \ge nm(N+M)$.
    However, $\Ohtilde(\sqrt{nm}(N+M))=\Ohtilde(nm(N+M))$ time is enough to compute $\ID(X,Y)$ exactly using \cref{prp:edexact} with $D=N+M$.
\end{proof}
Theorem~\ref{thm:edap} follows through the reduction from $\ED$ to $\ID$.

\subsection{Exact Output-Sensitive Algorithm}

In this section we prove Theorem~\ref{thm:edexact}:

\begin{theorem}\label{thm:edexact}
    Given a straight-line program $\G_X$ of size $n$ generating a string $X$ of length $N>0$ and
    a straight-line program $\G_Y$ of size $m$ generating a string $Y$ of length $M> 0$,
    the edit distance $\ED(X,Y)$ can be computed in $\Ohtilde\left(\sqrt{(1+\ED(X,Y))nm(N+M)}\right)$ time.
\end{theorem}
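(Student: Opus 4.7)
The plan is to first establish a fixed-threshold subroutine that, given a parameter $D\ge 1$, computes $\ID(X,Y)$ in $\Ohtilde(\sqrt{nm(N+M)D})$ time whenever $\ID(X,Y) \le D$ and otherwise certifies $\ID(X,Y) > D$; the theorem will then follow by exponential doubling on $D$. As in the proof of \cref{prp:edap}, the reduction $\ED(X,Y) = \frac{1}{2}\ID(X^\$,Y^\$)$ increases grammar sizes by an additive constant and string lengths by a factor of two, so it suffices to compute $\ID$ within the stated budget.

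For the fixed-threshold subroutine, I would apply \cref{cor:scheme} with a parameter $\tau$ to be chosen later, obtaining a box decomposition of $G_{X,Y}$ with $p_X = \Oh(N/\tau)$ and $p_Y = \Oh(M/\tau)$ together with $\Ohtilde(1)$-time $\DIST$ access. In contrast to the approximation algorithm of \cref{prp:edap}, which distributes portals sparsely along the antidiagonal, here the portal set $\P$ consists of all boundary vertices lying in the diagonal band
\[\mathcal{D}(D) = \{v_{x,y} : |x-y| \le D \text{ and } |(N-x)-(M-y)| \le D\}.\]
The crucial observation is that every walk from $v_{0,0}$ to $v_{|X|,|Y|}$ of length at most $D$ passes only through $\mathcal{D}(D)$: a monotone sub-walk reaching $v_{x,y}$ has length $x+y-2d \ge |x-y|$, where $d\le\min(x,y)$ counts the length-$0$ match edges, and a symmetric bound applies to the suffix sub-walk. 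Hence, if $\ID(X,Y) \le D$, some optimal walk is portal-respecting and \cref{cor:boxdp} returns the exact value; otherwise the returned value, being the length of some walk, still exceeds $D$, which is enough to conclude and move on.

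The band meets the column of boundary vertices above each phrase boundary $b^X_i$ in an interval of $\Oh(D)$ vertices (and symmetrically for each $b^Y_j$), so $|\P| = \Oh((p_X+p_Y)D) = \Oh((N+M)D/\tau)$. Moreover, the portals inside any single box form contiguous subsequences of its $\inv$ and $\outv$ lists, so the induced submatrices of $\DIST_{B_{i,j}}$ remain Monge and the SMAWK routine of \cref{cor:boxdp} runs in $\Ohtilde(|\P|)$ time. Together with the $\Ohtilde((N+M)/\tau+nm\tau)$ construction cost, the total is $\Ohtilde((N+M)D/\tau + nm\tau)$, which balances at $\tau = \Theta(\sqrt{(N+M)D/(nm)})$ (clipped to $[1,N+M]$) to $\Ohtilde(\sqrt{nm(N+M)D})$; when the clipping activates, either the trivial $\Oh(NM)$ bound or the known $\Ohtilde(nm(N+M))$ bound already dominates the claim.

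Finally, to derive \cref{thm:edexact}, run the subroutine for $D = 1,2,4,\ldots$, halting at the first $D$ whose output is at most $D$; that output then equals $\ID(X,Y)$. The geometric series $\sum_{i \le \log(1+\ID)}\sqrt{2^i} = \Oh(\sqrt{1+\ID(X,Y)})$ telescopes the cost to $\Ohtilde(\sqrt{nm(N+M)(1+\ED(X,Y))})$ after undoing the padding; for very small $D$, fall back to the $\Ohtilde(n+m+D^2)$ Landau--Vishkin-style algorithm powered by the compressed LCE oracle, whose runtime is easily verified to stay within the claimed bound. The main subtle point will be justifying that this band-restricted portal set is compatible with \cref{cor:boxdp} off-the-shelf: one must argue that its trace on each box boundary is a contiguous interval, so SMAWK is applied to Monge submatrices rather than to arbitrary subsets of $\inv/\outv$. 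This holds because $\mathcal{D}(D)$ is the intersection of two half-planes whose boundaries run parallel to the diagonals of each box.
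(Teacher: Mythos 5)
Your proposal is correct and follows essentially the same approach as the paper: the paper's \cref{prp:edexact} uses the band $\{v_{x,y}:|x-y|\le D\}$ as the portal set (\cref{lem:distportals}), balances $\tau$ in $\Ohtilde(nm\tau+D(N+M)/\tau)$ to get $\Ohtilde(\sqrt{nmD(N+M)})$, and closes via exponential search and the $\ED$-to-$\ID$ reduction. Two small remarks: your tighter two-sided band $\mathcal{D}(D)$ is a harmless constant-factor refinement over the paper's one-sided condition $|x-y|\le D$; and your worry about contiguity of portals on box boundaries is unnecessary, since any submatrix of a Monge matrix obtained by selecting arbitrary rows and columns (in order) is again Monge, so SMAWK applies regardless.
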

The algorithm behind Theorem~\ref{thm:edexact} reduces the problem to a decision version (asking whether $\ED(X,Y)\le D$
for a threshold $D$) and then uses the same scheme with all boundary vertices $(x,y)$ satisfying $|x-y|\le D$ selected as portals.

\begin{lemma}\label{lem:distportals}
Let $\B$ be a box decomposition of the graph $G_{X,Y}$ for $X,Y\in \Sigma^+$
and let $D\ge 0$ be an integer.
Suppose that $\P$ consists of all the boundary vertices $v_{x,y}$ of $\B$ satisfying $|x-y|\le D$.
Then, every vertex $v_{x,y}\in B_{i,j}$ with $d_{x,y}\le D$
satisfies $D^{i,j}_{x,y}=d_{x,y}$.
\end{lemma}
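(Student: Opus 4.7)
The plan is to prove the two inequalities $D^{i,j}_{x,y} \ge d_{x,y}$ and $D^{i,j}_{x,y} \le d_{x,y}$ separately. The first is immediate: unrolling \cref{def:portalwalk} shows that every portal-respecting $(i,j)$-walk begins at $v_{0,0}$, so its length is at least $d_{x,y} = d(v_{0,0},v_{x,y})$. The substantive direction is to exhibit a portal-respecting $(i,j)$-walk ending at $v_{x,y}$ of length exactly $d_{x,y}$.

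By \cref{obs:dir}, we can choose a shortest path $W$ from $v_{0,0}$ to $v_{x,y}$ lying inside $G_{X,Y}^{[0\dd x],[0\dd y]}$ and traversing vertices monotonically in both coordinates. The key geometric claim is that every vertex $v_{x',y'}$ on $W$ satisfies $|x'-y'| \le D$. Indeed, the prefix of $W$ up to $v_{x',y'}$ is itself a shortest path, so $d_{x',y'} \le d_{x,y} \le D$; moreover, diagonal edges have weight $0$ and preserve $x-y$ while horizontal and vertical edges cost $1$ and shift $x-y$ by $\pm 1$, so any walk from $v_{0,0}$ (where $x-y=0$) to $v_{x',y'}$ has length at least $|x'-y'|$, yielding $|x'-y'| \le d_{x',y'} \le D$.

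Because $W$ is monotone, the sequence of boxes it visits can be written as a staircase $B_{1,1} = B_{i_0,j_0}, B_{i_1,j_1}, \ldots, B_{i_\ell,j_\ell} = B_{i,j}$ with $(i_{t+1},j_{t+1}) \in \{(i_t+1,j_t),(i_t,j_t+1)\}$, where at a grid vertex we pick the next box by following $W$'s next edge. Each box transition occurs at a boundary vertex that is simultaneously an output vertex of one box and an input vertex of the next; by the previous paragraph, every such crossing vertex is a portal. A straightforward induction on $\ell$ now shows that $W$ conforms to \cref{def:portalwalk}: the final segment of $W$ inside $B_{i,j}$ starts at an input portal of $B_{i,j}$ and plays the role of $W''$, while the preceding part is, by induction, a portal-respecting $(i_{\ell-1},j_{\ell-1})$-walk. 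The base case $\ell = 0$ uses the empty walk at $v_{0,0}$ as $W'$ (valid since $v_{0,0}$ is a portal, as $|0-0|\le D$) and $W$ itself as $W''$ inside $B_{1,1}$. This yields $D^{i,j}_{x,y} \le |W| = d_{x,y}$, completing the proof; the only mild subtlety is unambiguously assigning grid vertices visited by $W$ to a single box of the staircase, which is resolved by the convention above.
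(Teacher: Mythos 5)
Your proposal is correct in its overall structure and arrives at the same key inequality as the paper, namely that any vertex $v_{x',y'}$ on a shortest path to $v_{x,y}$ satisfies $|x'-y'|\le d_{x',y'}\le d_{x,y}\le D$ and is therefore a portal whenever it is a boundary vertex. The route you take is, however, slightly different from the paper's. The paper's proof is an induction on $i+j$: for a fixed target vertex, it identifies only the first vertex $v_{x',y'}$ of the shortest path that enters $B_{i,j}$, invokes the inductive hypothesis to get $D^{i-1,j}_{x',y'}=d_{x',y'}$, and then pastes a shortest path from $v_{x',y'}$ to $v_{x,y}$ onto a shortest \emph{portal-respecting} walk to $v_{x',y'}$. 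You instead argue directly that the chosen shortest path $W$ is itself a portal-respecting walk, by showing that every boundary crossing of $W$ occurs at a portal and then decomposing $W$ along its staircase of boxes. This is a valid and slightly more transparent strategy; it avoids having to re-combine a new walk at each inductive step.

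There is one genuine (if minor) gap, precisely in the place you flag as a ``mild subtlety.'' Your stated convention --- ``at a grid vertex we pick the next box by following $W$'s next edge'' --- does not in general produce a staircase. If $W$ traverses a weight-$0$ diagonal edge out of a grid vertex (which happens whenever that character pair matches), the ``next box'' obtained by following the edge is $B_{i_t+1,j_t+1}$, not $B_{i_t+1,j_t}$ or $B_{i_t,j_t+1}$. With that convention the induction breaks, since \cref{def:portalwalk} only allows the prefix $W'$ to be a portal-respecting $(i-1,j)$- or $(i,j-1)$-walk, not an $(i-1,j-1)$-walk. The fix is to artificially interpose $B_{i_t+1,j_t}$ (or, symmetrically, $B_{i_t,j_t+1}$) into the staircase with a single-vertex segment at the grid vertex: that vertex is simultaneously an input vertex of $B_{i_t+1,j_t}$ and of $B_{i_t+1,j_t+1}$, and it is a portal by your inequality, so both the single-vertex segment and the recursion go through. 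The paper avoids this wrinkle entirely because it only looks at the \emph{first} vertex of the shortest path lying in $B_{i,j}$, and for the diagonal-corner case that vertex already belongs to $\P_{i-1,j}$ and $\P_{i,j-1}$, so the ``by symmetry'' reduction applies without modification.
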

\begin{proof}
We proceed by induction on $i+j$. The base case is trivially satisfied due to $D^{1,1}_{x,y}=d_{x,y}$ for $v_{x,y}\in B_{1,1}$.
We henceforth fix $v_{x,y}\in B_{i,j}$ with $(i,j)\ne (1,1)$ and $d_{x,y}\le D$.
By Observation~\ref{obs:dir}, there is a shortest path from $v_{0,0}$ to $v_{x,y}$ contained within $G_{X[1\dd x],Y[1\dd y]}$.
Let $v_{x',y'}$ be the first vertex on this path that belongs to $B_{i,j}$.
Observe that $v_{x',y'}\in \inv^{B_{i,j}}$ and $d_{x,y}=d_{x',y'}+d(v_{x',y'},v_{x,y})$. 
Consequently, $|x'-y'|\le d_{x',y'}\le d_{x,y}\le D$,
so $v_{x',y'}\in \P_{i-1,j}\cup \P_{i,j-1}$.
By symmetry, we may assume without loss of generality that $v_{x',y'}\in \P_{i-1,j}$.

Let us construct a portal-respecting $(i,j)$-walk to $v_{x,y}$ by concatenating a shortest portal-respecting $(i-1,j)$-walk to $v_{x',y'}$ and a shortest path from $v_{x',y'}$ to $v_{x,y}$ (by Observation~\ref{obs:dir} applied to $G_{X(x'\dd x],Y(y'\dd y]}$, we may assume that this path is contained in $B_{i,j}$).
This proves $D^{i,j}_{x,y} \le D^{i-1,j}_{x',y'}+d(v_{x',y'},v_{x,y})=D^{i,j}_{x,y} \le D^{i-1,j}_{x',y'}+d_{x,y}-d_{x',y'}$. The inductive assumption yields $D^{i-1,j}_{x',y'}=d_{x',y'}$, and thus $D^{i,j}_{x,y}\le d_{x,y}$ holds as claimed.
\end{proof}

\begin{proposition}\label{prp:edexact}
    Given a straight-line program $\G_X$ of size $n$ generating a string $X$ of length $N>0$,
    a straight-line program $\G_Y$ of size $m$ generating a string $Y$ of length $M>0$,
    and an integer $D\in [1\dd N+M]$, one can in $\Ohtilde\left(\sqrt{nmD(N+M)}\right)$ time compute
    $\ID(X,Y)$ or certify that $\ID(X,Y)>D$.
\end{proposition}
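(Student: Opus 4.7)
The plan is to plug the threshold portal set of \cref{lem:distportals} into the portal-respecting walk framework of \cref{cor:boxdp}, built on top of the grammar-based box decomposition from \cref{cor:scheme}. Fix a balancing parameter $\tau\in[1\dd N+M]$ to be chosen later, and invoke \cref{cor:scheme} to obtain, in $\Ohtilde(\tfrac{N+M}{\tau}+nm\tau)$ time, a box decomposition $\B$ of $G_{X,Y}$ with $p_X=\Oh(\lceil N/\tau\rceil)$ and $p_Y=\Oh(\lceil M/\tau\rceil)$, together with $\Ohtilde(1)$-time random access to every $\DIST_{B_{i,j}}$ matrix. As the portal set I would take precisely the collection prescribed by \cref{lem:distportals}: all boundary vertices $v_{x,y}$ of $\B$ with $|x-y|\le D$. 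Since each of the $\Oh(\tfrac{N+M}{\tau})$ phrase-boundary rows and columns contributes only $\Oh(D)$ such vertices, we obtain $|\P|=\Oh(\tfrac{(N+M)D}{\tau})$.

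Next, I would apply \cref{cor:boxdp} to $\P$ together with the $\DIST$ oracle to compute, in an additional $\Ohtilde(|\P|)=\Ohtilde(\tfrac{(N+M)D}{\tau})$ time, the minimum length $R$ of a portal-respecting $(p_X,p_Y)$-walk ending at $v_{N,M}$. For correctness, note that every portal-respecting walk is a walk in $G_{X,Y}$, so $R\ge d_{N,M}=\ID(X,Y)$; conversely, \cref{lem:distportals} guarantees $R=d_{N,M}=\ID(X,Y)$ whenever $\ID(X,Y)\le D$. Thus $R\le D$ is equivalent to $\ID(X,Y)\le D$, and the algorithm safely outputs $R$ when $R\le D$ and otherwise certifies $\ID(X,Y)>D$.

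The aggregate time bound $\Ohtilde\bigl(nm\tau+\tfrac{(N+M)D}{\tau}\bigr)$ is balanced by choosing $\tau=\bigl\lceil\sqrt{(N+M)D/(nm)}\bigr\rceil$, which yields the claimed $\Ohtilde(\sqrt{nm(N+M)D})$ running time. I anticipate no conceptual obstacle: the main pieces are already in place, and the substantive work is to verify that each portal incident to a given box $B_{i,j}$ is addressable in $\Ohtilde(1)$ amortized time, which follows directly from the symbol-length information stored alongside the compressed phrase sequences $P_X$ and $P_Y$ that \cref{cor:scheme} produces. The degenerate regimes in which the optimal $\tau$ falls outside $[1\dd N+M]$ are handled by truncation: the resulting cost $nm\tau+\tfrac{(N+M)D}{\tau}$ is absorbed by the claimed bound, possibly after invoking the trivial uncompressed $\Oh(NM)$-time dynamic program or the single-box base case.
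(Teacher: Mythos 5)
Your main algorithm is exactly the paper's: \cref{cor:scheme} to build the box decomposition with a $\DIST$ oracle, the threshold portal set of \cref{lem:distportals}, and \cref{cor:boxdp} to run the portal-respecting DP, followed by balancing $\tau$. The portal count, the $\Ohtilde(nm\tau + D(N+M)/\tau)$ time bound, and the correctness argument (the returned value equals $\ID(X,Y)$ when $\ID(X,Y)\le D$, and otherwise certifies $\ID(X,Y)>D$) all match the paper's reasoning.

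There is, however, a genuine gap in your handling of the degenerate regime where the unconstrained optimizer $\tau^*=\sqrt{D(N+M)/(nm)}$ falls below $1$, i.e.\ when $nm > D(N+M)$. Truncating to $\tau=1$ yields cost $\Ohtilde(nm + D(N+M)) = \Ohtilde(nm)$, and since $nm > D(N+M)$ implies $nm > \sqrt{nmD(N+M)}$, this is \emph{not} absorbed by the claimed bound. Neither of your suggested fallbacks fixes it: the uncompressed $\Oh(NM)$-time DP can be far larger than $\sqrt{nmD(N+M)}$ when $D$ is small (take $N=M$, $D=1$: then $NM=N^2$ while $\sqrt{nmD(N+M)} = \Theta(\sqrt{nmN})\le \Theta(N^{3/2})$), and the ``single-box base case'' still pays $\Ohtilde(nm\tau)$ to build the $\DIST$ matrix. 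The paper closes this case by switching to the Landau--Vishkin banded DP after decompression: it runs in $\Ohtilde(N+M+D^2)=\Ohtilde(D(N+M))$ time, and since $nm>D(N+M)$ gives $\sqrt{nmD(N+M)}>D(N+M)$, this falls within the claimed bound. (The other truncation, $\tau^*>N+M$, cannot occur because $D\le N+M\le nm(N+M)$.) Replace your fallback with this Landau--Vishkin argument and the proof is complete.
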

\begin{proof}
    The algorithm uses \cref{cor:scheme,cor:boxdp}
    with the set of portals $\P$ defined as in \cref{lem:distportals}.
    The latter lemma guarantees that the resulting value is $\ID(X,Y)$ provided that $\ID(X,Y)\le D$.
    Otherwise, the resulting value exceeds $D$, certifying that $\ID(X,Y)>D$.

    The number of portals is $\Oh(D\cdot \frac{N+M}{\tau})$, 
    so the overall running time is $\Ohtilde(nm\tau + D\cdot \frac{N+M}{\tau})$.
    Optimizing $\tau\in [1\dd N+M]$, we get $\Ohtilde(nm+D+\sqrt{nmD(N+M)})$ time. 
    Since $D \le N+M$, the second term is dominated by the third one.
    If the first term dominates, then $nm > D(N+M)$,
    and thus $\sqrt{nmD(N+M)} \ge D(N+M)$.
    However, $\Ohtilde(N+M+D^2) =\Ohtilde(D(N+M))$ time suffices solve the problem for uncompressed
    strings~\cite{LV88}.
\end{proof}
Theorem~\ref{thm:edexact} follows through exponential search and the reduction from $\ED$ to $\ID$. 

\subsection{LCS Approximation}

In this section we prove Theorem~\ref{thm:lcs}:

\begin{restatable}{theorem}{lcsthm}\label{thm:lcs}
Given a straight-line program $\G_X$ of size $n$ generating a string $X$ of length $N>0$,
   a straight-line program $\G_Y$ of size $m$ generating a string $Y$ of length $M>0$,
   and a parameter $\eps\in (0,1]$,
   a $(1+\epsilon)$-approximation of $\LCS(X,Y)$ can be computed in $\Ohtilde\left((nm(N+M))^{2/3}\eps^{-1/3}\right)$ time.
\end{restatable}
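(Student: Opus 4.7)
My plan is to mirror the proof of \cref{prp:edap}, using the identity $\LCS(X,Y) = \tfrac12(|X|+|Y|-\ID(X,Y))$ to convert an approximation of $\ID$ into one of $\LCS$.

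Concretely, I would (i) construct a grammar-based box decomposition of $G_{X,Y}$ via \cref{cor:scheme}, (ii) place portals according to \cref{lem:apportals} with density parameter $\alpha=\Theta(\epsilon\tau/(N+M))$ chosen so that the cumulative multiplicative error on $\ID$ is $(1+\delta)$ for some $\delta=\Theta(\epsilon)$, (iii) compute the shortest portal-respecting walk length $\tilde\ID$ via the SMAWK-based DP of \cref{cor:boxdp}, and (iv) output $\tilde\LCS := \tfrac12(|X|+|Y|-\tilde\ID)$. Optimizing the trade-off between $\tau$ (box size) and $\alpha$ (portal density) exactly as in \cref{prp:edap} would give a runtime of $\Ohtilde((nm(N+M))^{2/3}\delta^{-1/3})$, matching the claimed bound when $\delta=\Theta(\epsilon)$.

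The main subtlety is the conversion from an $\ID$-approximation to an $\LCS$-approximation: from $\tilde\ID\le(1+\delta)\ID$ we obtain only $\LCS-\tilde\LCS\le\tfrac{\delta}{2}\ID$, so a multiplicative $(1+\epsilon)$-approximation of $\LCS$ requires $\delta\le 2\epsilon\LCS/\ID$. When $\LCS=\Theta(N+M)$ (equivalently $\ID\le c\,\LCS$), this is satisfied by $\delta=\Theta(\epsilon)$ and the bound follows directly.

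For the residual regime of small $\LCS$ (large $\ID$ relative to $\LCS$), I would either (a) first compute a coarse estimate of $\ID$ and, when it turns out to be large compared to $\LCS$, dispatch to the exact algorithm of \cref{prp:edexact} with threshold $D$ chosen so that $\Ohtilde(\sqrt{nm(N+M)D})$ fits in the target budget; or (b) exponentially search over a guess $L\in\{2^0,2^1,\dots\}$ for $\LCS$, run the $\ID$-approximation with $\delta=\Theta(\epsilon L/(N+M))$ for each guess, and return the first answer consistent with the interval containing $L$. The main obstacle I expect is verifying that the running-time budget $\Ohtilde((nm(N+M))^{2/3}\epsilon^{-1/3})$ is respected uniformly across all regimes of $\LCS$; this is the step that requires the most care and may benefit from combining both strategies, using the exact algorithm precisely in the parameter range where the approximation-based approach would otherwise be too slow.
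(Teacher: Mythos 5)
Your plan — obtain a $(1+\delta)$-approximation of $\ID$ and convert to $\LCS$ via $\LCS=\tfrac12(|X|+|Y|-\ID)$ — breaks down precisely in the regime you flag, and neither of your proposed fixes closes the gap. When $\LCS$ is small, $\ID$ is close to $N+M$, so as you note you need $\delta\lesssim \eps\LCS/\ID$. Carrying that through the optimization in \cref{prp:edap} with $\alpha=\Theta(\delta\tau/(N+M))$, the portal count becomes $\Ohtilde\bigl(\tfrac{(N+M)^2\,\ID}{\eps\LCS\,\tau^2}\bigr)$ and the optimized running time is $\Ohtilde\bigl((nm)^{2/3}(N+M)^{2/3}\eps^{-1/3}\cdot(\ID/\LCS)^{1/3}\bigr)$; for $\LCS=O(1)$ this exceeds the target $\Ohtilde((nm(N+M))^{2/3}\eps^{-1/3})$ by roughly a factor $(N+M)^{1/3}$. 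Your option (a) fails on the very same inputs: \cref{prp:edexact} requires a threshold $D\ge\ID$, and when $\LCS$ is small $\ID\approx N+M$, so the exact algorithm costs $\Ohtilde(\sqrt{nm}(N+M))$, again generally above budget. There is no regime in which one strategy rescues the other — both degrade together as $\LCS$ shrinks — so combining them does not help.

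The paper's proof does not go through $\ID$ at all. It approximates $\LCS$ directly by choosing the portal set \emph{adaptively}: instead of placing portals at boundary vertices $v_{x,y}$ where $|x-y|$ is near a power of $(1+\alpha)$ (which controls multiplicative error in $D$, as in \cref{lem:apportals}), it places a portal on the output boundary of $B_{i,j}$ whenever the already-computed DP value $L^{i,j}_{x,y}$ crosses a power of $(1+\alpha)$ (\cref{lem:lcsportals}). This is legitimate because $D^{i,j}_{x,y}$ depends only on portals of boxes with smaller $i+j$, so $\P_{i,j}$ may be fixed \emph{after} processing $B_{i,j}$. The resulting guarantee $L^{i,j}_{x,y}\ge(1+\alpha)^{2-i-j}\ell_{x,y}$ is multiplicative in $\LCS$, not in $\ID$, so it is insensitive to how small $\LCS$ is. The portal count is then bounded by a telescoping argument over $\log_{1+\alpha}$-scales using the monotonicity of $L^{i,j}$ along each box boundary (\cref{lem:lcsdp}), giving $\Ohtilde(\eps^{-1}(p_X+p_Y)^2)$. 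This adaptivity is the idea missing from your proposal, and it is exactly why the paper treats $\LCS$ as a separate theorem rather than a corollary of \cref{thm:edap}.
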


The algorithm behind Theorem~\ref{thm:lcs} is essentially the same as that of Theorem~\ref{thm:edap}, and this is why the running times coincide.
The main difference is that the output portals of the box $B_{i,j}$ are chosen \emph{adaptively} while the dynamic-programming algorithm processes $B_{i,j}$.

As for LCS approximation, our choice of portals is \emph{adaptive}.
For a box $B_{i,j}\in \B$, let $\P_{i,j} = \P \cap \outv^{B_{i,j}}$.
Observe that the value $D^{i,j}_{x,y}$ for $v_{x,y}\in B_{i,j}$ depends only on $\P_{i',j'}$ with $i'+j'< i+j$. 
Hence, except for the grid vertices (all included in $\P$),
we may select the portals $\P_{i,j}$ based on the values $D^{i,j}_{x,y}$ for $v_{x,y}\in \outv^{B_{i,j}}$.

For $v_{x,y}\in B_{i.j}$, let $\ell_{x,y} = \frac12(|X|+|Y|-d_{x,y})=\LCS(X[1\dd x],Y[1\dd y])$ and $L^{i,j}_{x,y}=\frac12(|X|+|Y|-D^{i,j}_{x,y})$.

\begin{lemma}\label{lem:lcsportals}
    Let $\B$ be a box decomposition of the graph $G_{X,Y}$ for $X,Y\in \Sigma^+$
    and let $\alpha > 0$ be a real number.
    Suppose that $\P$ consists of all the grid vertices and, for each box $B_{i,j}\in \B$,
    all vertices $v_{x,y}\in \outv^{B_{i,j}}$ such that:
    \begin{itemize}
        \item $v_{x-1,y} \in \outv^{B_{i,j}}$ and $\floor{\log_{1+\alpha} L^{i,j}_{x,y}} > \floor{\log_{1+\alpha} L^{i,j}_{x-1,y}}$, or
        \item $v_{x,y-1} \in \outv^{B_{i,j}}$ and $\floor{\log_{1+\alpha} L^{i,j}_{x,y}} > \floor{\log_{1+\alpha} L^{i,j}_{x,y-1}}$.
    \end{itemize}
    Then, for each vertex $v_{x,y}\in B_{i,j}$, we have $L^{i,j}_{x,y} \ge (1+\alpha)^{2-i-j}\ell_{x,y}$.
\end{lemma}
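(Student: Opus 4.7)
The plan is to proceed by induction on $i+j$, mirroring the scheme of \cref{lem:apportals} but adapting every step to the data-dependent portal set $\P$. The base case $(i,j)=(1,1)$ is immediate because a portal-respecting $(1,1)$-walk is simply any walk inside $B_{1,1}$ starting at the grid vertex $v_{0,0}$ (a portal), so $D^{1,1}_{x,y}=d_{x,y}$ and hence $L^{1,1}_{x,y}=\ell_{x,y}$. For the inductive step I would fix $v_{x,y}\in B_{i,j}$ with $(i,j)\ne(1,1)$, use Observation~\ref{obs:dir} to take a shortest $v_{0,0}\leadsto v_{x,y}$ walk inside $G_{X[1\dd x],Y[1\dd y]}$, let $v_{x',y'}$ be its first vertex inside $B_{i,j}$, and assume WLOG that $v_{x',y'}\in\outv^{B_{i-1,j}}$. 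Then $x'=b^X_{i-1}$ and, by monotonicity of the path, $x'\le x$ and $y'\le y$; the same monotonicity splits the matches as $\ell_{x,y}=\ell_{x',y'}+\LCS(X(x'\dd x],Y(y'\dd y])$.

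The new ingredient is an adaptive portal choice: I would set $v''=v_{x',y''}$, where $y''$ is the largest portal position with $y''\le y'$ on the right side of $B_{i-1,j}$; such a $y''$ exists because the grid vertex $v_{x',b^Y_{j-1}}$ is always a portal. The structural claim supporting the analysis is that, on the right side of $B_{i-1,j}$ (where the first coordinate is fixed at $b^X_{i-1}$), the lemma's portal condition reduces to requiring $a_y:=\floor{\log_{1+\alpha}L^{i-1,j}_{x',y}}$ to strictly exceed $a_{y-1}$. In particular, any strict increase of $a$ at an interior vertex of the right side is itself a portal, so between two consecutive portals the sequence $a_y$ is non-increasing. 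Since $y'$ lies in the interval starting at $y''$ and ending just before the next portal, this gives $a_{y''}\ge a_{y'}$, and hence $L^{i-1,j}_{x',y''}\ge(1+\alpha)^{a_{y'}}\ge L^{i-1,j}_{x',y'}/(1+\alpha)$.

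To close the induction, I would concatenate a shortest portal-respecting $(i-1,j)$-walk ending at $v''$ with a shortest $v''\leadsto v_{x,y}$ path inside $B_{i,j}$ (available by Observation~\ref{obs:dir} because $x'\le x$ and $y''\le y$); this is a valid portal-respecting $(i,j)$-walk, and counting matches along it yields $L^{i,j}_{x,y}\ge L^{i-1,j}_{x',y''}+\LCS(X(x'\dd x],Y(y''\dd y])$. The inductive hypothesis applied at $v_{x',y'}\in B_{i-1,j}$ supplies $L^{i-1,j}_{x',y'}\ge(1+\alpha)^{3-i-j}\ell_{x',y'}$, which combined with the portal bound produces $L^{i-1,j}_{x',y''}\ge(1+\alpha)^{2-i-j}\ell_{x',y'}$. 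Since $y''\le y'$, the string $Y(y'\dd y]$ is a suffix of $Y(y''\dd y]$, so $\LCS(X(x'\dd x],Y(y''\dd y])\ge\LCS(X(x'\dd x],Y(y'\dd y])$, and because $(1+\alpha)^{2-i-j}\le 1$ this is also at least $(1+\alpha)^{2-i-j}\LCS(X(x'\dd x],Y(y'\dd y])$. Adding the two lower bounds gives $L^{i,j}_{x,y}\ge(1+\alpha)^{2-i-j}\ell_{x,y}$, as required.

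The step I expect to require the most care is the structural claim that $a_y$ is non-increasing between consecutive portals on the right side of $B_{i-1,j}$: the portal condition in the lemma compares each vertex to only one specific $\outv^B$-neighbour, and the ``direction'' of that neighbour (smaller $y$ versus smaller $x$) differs between the right and top portions of $\outv^{B_{i-1,j}}$. To deploy the condition cleanly I must restrict attention to a single side at a time and separately handle the endpoint cases where $v''$ is one of the two grid vertices of the right side (which are portals by fiat rather than by the $a$-condition).
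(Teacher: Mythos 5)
Your proof is correct and follows essentially the same route as the paper's: induction on $i+j$, take the first vertex $v_{x',y'}$ of the shortest path inside $B_{i,j}$, snap to the nearest portal $v_{x',y''}$ on the right side of $B_{i-1,j}$, bound the loss by a factor $(1+\alpha)$ using the adaptive portal rule, and close with the inductive hypothesis. The only cosmetic difference is in the final accounting: the paper applies the triangle inequality to reach $L^{i,j}_{x,y}\ge L^{i-1,j}_{x',y''}+\LCS(X(x'\dd x],Y(y'\dd y])$ and then compensates the $\ell_{x',y'}$ term, whereas you keep $\LCS(X(x'\dd x],Y(y''\dd y])$ and multiply the whole second summand by $(1+\alpha)^{2-i-j}\le 1$; both manipulations yield the same final bound. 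Your unpacking of the portal rule into the non-decrease of $a_y:=\floor{\log_{1+\alpha}L^{i-1,j}_{x',y}}$ between consecutive right-side portals is precisely what the paper summarizes as "by the choice of the remaining portals, $L^{i-1,j}_{x',y'}\le(1+\alpha)L^{i-1,j}_{x',y''}$," and your note that this argument must be confined to one side of $\outv^{B_{i-1,j}}$ at a time (with the grid corners handled separately) is the right caution to take.
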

\begin{proof}
    We proceed by induction on $i+j$.
    The base case is trivially satisfied due to $L^{1,1}_{x,y}=\ell_{x,y}$ for $v_{x,y}\in B_{1,1}$.
    Thus, we henceforth fix a vertex $v_{x,y}\in B_{i,j}$ with $(i,j)\ne (1,1)$.
    By Observation~\ref{obs:dir}, there is a shortest path from $v_{0,0}$ to $v_{x,y}$ contained within $G_{X[1\dd x],Y[1\dd y]}$. Let $v_{x',y'}$ be the first vertex on this path that belong to $B_{i,j}$. Observe that $v_{x',y'}\in \inv^{B_{i,j}}$ and $\ell_{x,y}=\ell_{x',y'}+\LCS(X(x'\dd x],Y(y'\dd y])$. By symmetry, we may assume without loss of generality that $v_{x',y'}\in \outv^{B_{i-1,j}}$.
    
    Consider the largest value $y''\in [1\dd y']$ such that $v_{x',y''}\in \P_{i-1,j}$.
    Since grid vertices are portals, such $v_{x',y''}$ exits. 
    Moreover, by the choice of the remaining portals, $L^{i-1,j}_{x',y'}\le (1+\alpha) L^{i-1,j}_{x',y''}$.
    Let us construct a portal-respecting $(i,j)$ walk to $v_{x,y}$ by concatenating a shortest portal-respecting $(i-1,j)$-walk to $v_{x',y''}$ and a shortest path from $v_{x',y''}$ to $v_{x,y}$ (by Observation~\ref{obs:dir}, we may assume that this path is contained $B_{i,j}$).
    This proves that $D^{i,j}_{x,y} \le D^{i-1,j}_{x',y''} + d(v_{x',y''},v_{x,y}) \le D^{i-1,j}_{x',y''} + y'-y'' +  d(v_{x',y'},v_{x,y})$,
    i.e., $L^{i,j}_{x,y}\ge L^{i-1,j}_{x',y''}+\LCS(X(x'\dd x],Y(y'\dd y]) \ge (1+\alpha)^{-1} L^{i-1,j}_{x',y'}+\ell_{x,y}-\ell_{x',y'}$.
    The inductive assumption further yields $L^{i-1,j}_{x',y'} \ge (1+\alpha)^{3-i-j}\ell_{x',y'}$,
    and thus $L^{i,j}_{x,y} \ge (1+\alpha)^{2-i-j}\ell_{x',y'} + \ell_{x,y}-\ell_{x',y'} \ge (1+\alpha)^{2-i-j}\ell_{x,y}$
    holds as claimed.
\end{proof}

\newcommand{\Q}{\mathbf{Q}}
\begin{lemma}\label{lem:boxdp}
Given $\Ohtilde(1)$-time random access to the $\DIST_{B_{i,j}}$ matrix,
the values $D^{i-1,j}_{x',y'}$ for all vertices $v_{x',y'}\in \P_{i-1,j}$ (if $i>1$), 
and the values $D^{i,j-1}_{x',y'}$ for all vertices $v_{x',y'}\in \P_{i,j-1}$ (if $j>1$),
the values $D^{i,j}_{x,y}$ for any $q$ \emph{query vertices} $v_{x,y}\in \outv^{B_{i,j}}$ can be computed in $\Ohtilde(q+|\P\cap \inv^{B_{i,j}}|)$ time.
\end{lemma}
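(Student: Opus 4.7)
The plan is to mirror the SMAWK-based computation from the proof of \cref{cor:boxdp}, but to restrict the row set of the implicit Monge matrix to the $q$ query vertices so that the complexity depends on $q$ rather than on $|\outv^{B_{i,j}}|$. The handling of the base case $(i,j)=(1,1)$ is immediate: we have $D^{1,1}_{x,y}=d(v_{0,0},v_{x,y})$, which can be read off from $\DIST_{B_{1,1}}$ in $\Ohtilde(1)$ time per query, for a total of $\Ohtilde(q)$.

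For the inductive case, I would handle each of the (at most two) source sets $\Q \in \{\P_{i-1,j},\P_{i,j-1}\}$ separately. For each $\Q$ with $\Q\cap \inv^{B_{i,j}}\ne\emptyset$, define implicitly a matrix $M^\Q$ of size $q\times |\Q\cap\inv^{B_{i,j}}|$, with rows indexed by the query output vertices $v_{x,y}$ and columns indexed by the input portals $v_{x',y'}\in\Q\cap\inv^{B_{i,j}}$, whose entries are
\[
M^\Q[v_{x,y},v_{x',y'}] \;=\; D^{*}_{x',y'} + d(v_{x',y'},v_{x,y}),
\]
where $D^{*}$ is $D^{i-1,j}$ or $D^{i,j-1}$ according to the choice of $\Q$. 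Exactly as in \cref{cor:boxdp}, this is obtained from a submatrix of $\DIST_{B_{i,j}}$ by adding the constant $D^{*}_{x',y'}$ to every entry in the column of $v_{x',y'}$. Since row/column restriction and column-wise constant addition preserve the Monge property, $M^\Q$ is Monge, and each of its entries can be evaluated in $\Ohtilde(1)$ time via the random-access oracle to $\DIST_{B_{i,j}}$ and a single lookup of $D^{*}_{x',y'}$.

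Now I would run the SMAWK algorithm on $M^\Q$ to compute all $q$ row-minima in time nearly linear in the sum of its dimensions, i.e., $\Ohtilde(q+|\Q\cap\inv^{B_{i,j}}|)$. Taking, for each query vertex, the minimum of the two row-minima coming from $\P_{i-1,j}$ and $\P_{i,j-1}$ yields $D^{i,j}_{x,y}$ by the recurrence established in the proof of \cref{cor:boxdp}. Summing the two SMAWK runs gives the claimed $\Ohtilde(q+|\P\cap\inv^{B_{i,j}}|)$ bound.

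The only subtlety, and the nearest thing to a real obstacle, is justifying that the restricted matrix $M^\Q$ is totally monotone under an adversarial choice of $q$ query rows (the algorithm is oblivious to which rows are selected). This is automatic: Monge is preserved under arbitrary row and column selection, and Monge implies total monotonicity, so SMAWK applies unchanged. Everything else — composing the entry-access oracle with the column-wise shift by $D^{*}_{x',y'}$, and feeding SMAWK an implicitly defined matrix — is standard and introduces only polylogarithmic overhead.
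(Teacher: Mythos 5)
Your proof is correct and essentially identical to the paper's own argument: both handle the $(1,1)$ base case by direct $\DIST$ lookup, then build the implicit column-shifted Monge matrix from $\DIST_{B_{i,j}}$ for each of the two predecessor boxes and run SMAWK on it, restricting to the $q$ query rows. (The paper's displayed recurrence has a typo writing $\max$ where $\min$ is meant — you correctly interpret it as a minimum, matching the SMAWK row-minima step.)
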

\begin{proof}
    If $(i,j)=(1,1)$, then \cref{def:portalwalk} and Observation~\ref{obs:dir} yield $D^{1,1}_{x,y}=d(v_{0,0},v_{x,y})$,
    and this value can be retrieved from the $\DIST_{B_{1,1}}$ matrix in $\Ohtilde(1)$ time. 
    Thus, we henceforth assume $(i,j)\ne (1,1)$.
    
    Consider a portal-respecting $(i,j)$-walk $W$ to a vertex $v_{x,y}\in \outv^{B_{i,j}}$.
    By \cref{def:portalwalk}, $W$ is a concatenation of two walks $W'$ and $W''$
    such that $W''$ starts at a vertex $v_{x',y'}\in \P\cap \inv^{B_{i,j}}$ and is entirely contained within $B_{i,j}$,
    whereas $W'$ is a portal-respecting $(i,j-1)$-walk to $v_{x',y'}$ or a portal respecting $(i-1,j)$-walk to $v_{x',y'}$. Observe that, for a fixed portal $v_{x',y'}\in  \P\cap \inv^{B_{i,j}}$, the lengths of $W'$ and $W''$ can be optimized independently.
    Consequently, by Observation~\ref{obs:dir}, \[D^{i,j}_{x,y} = \max\left(\max_{v_{x',y'}\in \P_{i-1,j} \cap \inv^{B_{i,j}}}\left\{ D^{i-1,j}_{x',y'}+ d(v_{x',y'},v_{x,y})\right\},\max_{v_{x',y'}\in \P_{i,j-1} \cap \inv^{B_{i,j}}}\left\{ D^{i,j-1}_{x',y'}+ d(v_{x',y'},v_{x,y})\right\}\right).\]
    
    A matrix (indexed by the query vertices $v_{x,y}\in \outv^{B_{i,j}}$ and all vertices $\P_{i-1,j} \cap \inv^{B_{i,j}}$) containing the values $D^{i-1,j}_{x',y'}+d(v_{x',y'},v_{x,y})$ can be obtained from a submatrix of the $\DIST_{B_{i,j}}$ matrix
    by adding $D^{i-1,j}_{x',y'}$ to all entries in the column of $v_{x',y'}$.
    These modifications preserve the Monge property, so the resulting matrix is a Monge matrix with $\Ohtilde(1)$-time random access.
    Consequently, the SMAWK algorithm~\cite{AggarwalKMSW87} allows computing row-minima,
    i.e., the values $\max_{v_{x',y'}\in\P_{i-1,j} \cap \inv^{B_{i,j}}}\big\{ D^{i-1,j}_{x',y'}+ d(v_{x',y'},v_{x,y})\big\}$.
    A symmetric procedure allows computing the values $\max_{v_{x',y'}\in \P_{i,j-1} \cap \inv^{B_{i,j}}}\big\{ D^{i,j-1}_{x',y'}+ d(v_{x',y'},v_{x,y})\big\}$,
    which lets us derive the costs $D^{i,j}_{x,y}$ for all the query vertices $v_{x,y}\in \outv^{B_{i,j}}$.
    The SMAWK algorithm takes nearly linear time with respect to the sum of matrix dimensions, 
    so the overall time complexity is $\Ohtilde(q+|\P\cap \inv^{B_{i,j}}|)$.
\end{proof}

\begin{lemma}\label{lem:lcsdp}
Given a box decomposition $\B$ of $G_{X,Y}$, a parameter $\eps \in (0,1]$,
and $\Ohtilde(1)$-time random access to the $\DIST$ matrices of all the boxes of $\B$,
a $(1+\eps)$-approximation of $\LCS(X,Y)$ can be computed in $\Ohtilde(\eps^{-1} (p_X + p_Y)^2)$ time.
\end{lemma}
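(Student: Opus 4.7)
The plan is to execute the dynamic program of Lemma~\ref{lem:boxdp} while adaptively determining the portal set $\P$ box by box, in accordance with Lemma~\ref{lem:lcsportals}. I choose $\alpha = \Theta(\eps/(p_X+p_Y))$ so that $(1+\alpha)^{p_X+p_Y-2}\le 1+\eps$ and process the boxes $B_{i,j}$ in nondecreasing order of $i+j$. After $B_{i,j}$ is handled, the values $D^{i,j}_{x,y}$ at its output portals $\P_{i,j} = \P\cap\outv^{B_{i,j}}$ are cached so they are available when the successor boxes $B_{i+1,j}$ and $B_{i,j+1}$ are reached. Because the input portals of $B_{i,j}$ are contained in $\P_{i-1,j}\cup\P_{i,j-1}$ (augmented by the already-known grid vertices), the hypotheses of Lemma~\ref{lem:boxdp} are satisfied when $B_{i,j}$ is processed.

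The delicate step is choosing $\P_{i,j}$ itself. Lemma~\ref{lem:lcsportals} dictates keeping every grid vertex on $\outv^{B_{i,j}}$ and every vertex at which $\lfloor\log_{1+\alpha}L^{i,j}_{x,y}\rfloor$ strictly exceeds the value at the preceding output-boundary neighbor, where $L^{i,j}_{x,y}=\frac12(|X|+|Y|-D^{i,j}_{x,y})$. Since querying $D^{i,j}$ at every one of the $\Theta(|B_{i,j}|)$ output vertices would be prohibitive, I plan to use Monge-based binary/exponential search: the column-shifted $\DIST_{B_{i,j}}$ matrix underlying Lemma~\ref{lem:boxdp} remains Monge, and a single output-vertex query costs $\Ohtilde(|\P\cap\inv^{B_{i,j}}|)$ time via SMAWK. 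Iterating an integer threshold $r$ upward from $0$, I binary-search along $\outv^{B_{i,j}}$ for the first vertex where $L^{i,j}$ crosses $(1+\alpha)^r$, discovering the adaptive portals one at a time; together with the grid vertices on $\outv^{B_{i,j}}$ they constitute $\P_{i,j}$, and Lemma~\ref{lem:boxdp} then delivers their $D^{i,j}$ values within the $\Ohtilde(|\P_{i,j}|+|\P\cap\inv^{B_{i,j}}|)$ budget.

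Correctness and complexity are then immediate. The algorithm outputs $\max(L^{p_X,p_Y}_{|X|,|Y|},0)$. By Lemma~\ref{lem:lcsportals}, $L^{p_X,p_Y}_{|X|,|Y|}\ge (1+\alpha)^{2-p_X-p_Y}\LCS(X,Y)\ge (1+\eps)^{-1}\LCS(X,Y)$, while the trivial inequality $L^{p_X,p_Y}_{|X|,|Y|}\le\LCS(X,Y)$ (portal-respecting walks are walks) gives the matching upper bound, so the output is a $(1+\eps)$-approximation. For the size of $\P$, the grid vertices contribute $O((p_X+1)(p_Y+1))=O((p_X+p_Y)^2)$, and the adaptive rule contributes $\Ohtilde(1/\alpha)$ portals per box-boundary segment; aggregating across shared boundaries and plugging in $\alpha=\Theta(\eps/(p_X+p_Y))$ yields $|\P|=\Ohtilde(\eps^{-1}(p_X+p_Y)^2)$. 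Summing the per-box cost of Lemma~\ref{lem:boxdp}, while noting that each vertex is incident to $O(1)$ boxes, produces the claimed $\Ohtilde(\eps^{-1}(p_X+p_Y)^2)$ total running time.

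The hard part will be the circularity in the adaptive portal selection: the set $\P_{i,j}$ is defined in terms of $L^{i,j}$, yet $L^{i,j}$ is extracted through the very DP that requires the portals as input. The circularity is broken box by box---by the time $B_{i,j}$ is processed, all $D$-values on its input portals are already recorded---but implementing portal discovery so that each adaptive portal costs only polylogarithmic overhead requires combining the Monge structure of $\DIST_{B_{i,j}}$ with an exponential search over the floor-log threshold, and amortizing carefully so that oscillations of $L^{i,j}$ along $\outv^{B_{i,j}}$ do not inflate the per-box portal count beyond $\Ohtilde(1/\alpha)$.
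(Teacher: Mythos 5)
Your proposal follows the same overall approach as the paper: process boxes in nondecreasing $i+j$ order, choose portals adaptively per Lemma~\ref{lem:lcsportals}, compute $L^{i,j}$-values via the Monge-matrix machinery of Lemma~\ref{lem:boxdp}, and discover portals by binary search over the floor-log thresholds. The $(1+\eps)$-approximation argument is also the same. However, there is a gap in the portal-count accounting that you flag as ``the hard part'' but do not actually close.

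The issue is this: a per-box bound of $\Ohtilde(1/\alpha)$ adaptive portals is valid but too weak --- summed over $\Theta(p_Xp_Y)$ boxes and plugging in $\alpha = \Theta(\eps/(p_X+p_Y))$, it gives $\Ohtilde(p_Xp_Y(p_X+p_Y)/\eps)$, which exceeds the claimed $\Ohtilde(\eps^{-1}(p_X+p_Y)^2)$. What rescues the bound in the paper is a telescoping argument enabled by \emph{cross-box} monotonicity of $L$: at the grid vertex $v_{b_i^X,b_{j-1}^Y}$ shared between $\outv^{B_{i,j-1}}$ and $\inv^{B_{i,j}}$, one has $L^{i,j}_{b_i^X,b_{j-1}^Y} \ge L^{i,j-1}_{b_i^X,b_{j-1}^Y}$ (a portal-respecting $(i,j-1)$-walk to a grid vertex is also a portal-respecting $(i,j)$-walk). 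Combined with the within-box monotonicity (from $D^{i,j}_{x,y}\le D^{i,j}_{x,y-1}+1$), the adaptive portal count on the right boundaries of row $i$, namely $\sum_j \bigl(\lfloor\log_{1+\alpha}L^{i,j}_{b_i^X,b_j^Y}\rfloor - \lfloor\log_{1+\alpha}L^{i,j}_{b_i^X,b_{j-1}^Y}\rfloor\bigr)$, telescopes to $\Ohtilde(\log_{1+\alpha}|Y|) = \Ohtilde(\eps^{-1}(p_X+p_Y))$ \emph{per row}, not per box. Summing over $p_X$ rows (and symmetrically over $p_Y$ columns) gives the claimed $\Ohtilde(\eps^{-1}(p_X+p_Y)^2)$. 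You should state this monotonicity inequality explicitly and write out the telescoping sum; ``aggregating across shared boundaries'' asserts the conclusion without this justification.
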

\begin{proof}
Let us choose $\alpha = \Omega(\frac{\epsilon}{p_X+p_Y-2})$ so that $(1+\alpha)^{p_X+p_Y-2}=1+\eps$.
We process boxes $B_{i,j}\in \B$ in the order of non-decreasing values $i+j$, constructing the output portals $P_{i,j}$ according to \cref{lem:lcsportals} and computing the values $L^{i,j}_{x,y}$
for all $v_{x,y}\in \P_{i,j}$. 
By \cref{lem:lcsportals}, the value $L^{p_X,p_Y}_{|X|,|Y|}$ is guaranteed to be a $(1+\eps)$-approximation of $\LCS(X,Y)$.

The ordering of boxes lets us compute the values $L^{i,j}_{x,y}$ for any $q$ vertices $v\in \outv^{B_{i,j}}$ in $\Ohtilde(q+|\P\cap \inv^{B_{i,j}}|)$ time. 
By symmetry, we may focus without loss of generality on the right boundary of $B_{i,j}$,
i.e., vertices $v_{x,y}$ with $x=b_{i}^X$ and $y\in [b_{j-1}^Y\dd b_j^Y]$.
Note that the corresponding values $L^{i,j}_{x,y}$ are non-decreasing:
$D^{i,j}_{x,y} \le D^{i,j}_{x,y-1}+1$ implies $L^{i,j}_{x,y}\ge L^{i,j}_{x,y-1}$.
First, we apply \cref{lem:boxdp} to derive $L^{i,j}_{x,y}$ for the two extreme values $y\in \{b_{j-1}^Y,b_j^Y\}$.
Next, for each value $r\in [\floor{\log_{1+\alpha}L^{i,j}_{x,b_{j-1}^Y}}\dd \floor{\log_{1+\alpha}L^{i,j}_{x,b_{j}^Y}}]$, we binary search for the smallest $y\in [b_{j-1}^Y\dd b_j^Y]$ such that $\log_{1+\alpha}L^{i,j}_{x,y} \ge r$,
and include $v_{x,y}$ in $\P_{i,j}$.
The binary searches are executed in parallel, with \cref{lem:boxdp} applied to determine $L^{i,j}_{x,y}$ for all the current pivots. This way, the algorithm is implemented in $\Ohtilde(1+\log_{1+\alpha}L^{i,j}_{x,b_{j}^Y} - \log_{1+\alpha} L^{i,j}_{y,b_{j-1}^Y,y}+|\P\cap \inv^{B_{i,j}}|)$ time.
Due to $L^{i,j}_{y,b_{j-1}^Y} \ge L^{i,j-1}_{y,b_{j-1}^Y}$, the first term sums up to $\Ohtilde(\log_{1+\alpha} |Y|)=\Ohtilde(\eps^{-1}(p_X+p_Y))$ across $j\in [1\dd p_Y]$, and to $\Ohtilde(\eps^{-1}(p_X+p_Y)^2)$ across $B_{i,j}\in \B$. This also bounds the number of portals created, so the second term, which sums up to $|\P|$ across all boxes,
is also $\Ohtilde(\eps^{-1}(p_X+p_Y)^2)$.
\end{proof}

\begin{proof}[Proof of Theorem~\ref{thm:lcs}]
    The algorithm uses \cref{cor:scheme,lem:lcsdp}.
    Due to $p_X+p_Y = \frac{N+M}{\tau}$, the overall running time is  $\Ohtilde\big(nm\tau + \tfrac{(N+M)^2}{\eps \tau^2}\big)$.
    Optimizing $\tau$, we get the running time of $\Ohtilde(nm+\eps^{-1}+(nm(N+M))^{2/3}\eps^{-1/3})$.

    If the first term dominates, then $nm \ge (nm(N+M))^{2/3}\eps^{-1/3} \ge (N+M)^2 \eps^{-1}$.
    However, $\Oh(NM) = \Oh((N+M)^2 \eps^{-1})$ time is enough to compute $\LCS(X,Y)$ exactly without compression.
    If the second term dominates, then $\eps^{-1} \ge (nm(N+M))^{2/3}\eps^{-1/3} \ge nm(N+M)$.
    However, $\Ohtilde(\sqrt{nm}(N+M))=\Ohtilde(nm(N+M))$ time is enough to compute $\LCS(X,Y)$ exactly using \cref{prp:edexact} with $D=N+M$.
\end{proof}

\newcommand{\subseq}[2]{[ #1 \dd #2 ]}

\section{FPTAS For Compressed Median k-Edit Distance}\label{sec:k-ed}
The median $k$-edit distance is defined as below.
\begin{definition}
The (median) edit distance $\ED(X_1,\ldots,X_k)$ of $k$ strings $X_1,\ldots,X_k$ is the minimum total number of edits (insertions, deletions, and substitutions) needed to make all strings $X_i$ equal some string $X^*$. That is, $\ED(X_1,\ldots,X_k) = \min_{X^*} \sum_{i=1}^k \ED(X_i,X^*)$.
\end{definition}

For the (median) edit distance between $k$ strings, we show that allowing $(1+\epsilon)$-approximation gives an algorithm circumventing the bound in Theorem~\ref{thm:editDistanceLowerBoundFromSETH}:

\begin{theorem}\label{thm:k-ed}
    Given $k=\Oh(1)$ straight-line programs $\G_{X_i}$ of total size $n$ generating strings $X_i$
    of total length $N>0$ and a parameter $\eps \in (0,1]$,
    an integer between $\ED(X_1,\ldots,X_k)$ and $(1+\eps)\ED(X_1,\ldots,X_k)$
    can be computed in $\Ohtilde\left(\eps^{-\Oh(k)}n^{k/2}N^{k/2}\right)$ time.
\end{theorem}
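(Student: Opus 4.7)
The plan is to lift the machinery from \cref{thm:edap} to the $k$-dimensional setting. I would first define a $k$-dimensional alignment graph $G_{X_1,\ldots,X_k}$ on vertices $(x_1,\ldots,x_k)\in \prod_{i=1}^{k}[0\dd |X_i|]$ whose edges encode the standard median edit-distance DP (at each step, pick a character of $X^*$ and a subset $S\sub [k]$ of strings that advance). As in the two-string case, I would reduce median $\ED$ to an indel-only variant via the $\$$-padding morphism, so that every edge becomes symmetric and an analog of Observation~\ref{obs:dir} holds. Then, applying \cref{lem:part} independently to each grammar $\G_{X_i}$ with a common parameter $\tau$ produces phrase decompositions with $p_{X_i}=\Oh(\lceil N_i/\tau\rceil)$ phrases, yielding a $k$-dimensional box decomposition $\B=(B_{i_1,\ldots,i_k})$ with $\prod_i p_{X_i}=\Oh((N/\tau)^k)$ boxes.

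Next, I would generalize \cref{lem:dist}: for every $k$-tuple $(A_1,\ldots,A_k)$ of grammar symbols of the auxiliary programs $\G^+_{X_i}$ with each $|A_i|\le \tau$, construct a $\DIST$ tensor providing $\Ohtilde(1)$-time random access to distances between boundary vertices of the corresponding block of $G_{\exp(A_1),\ldots,\exp(A_k)}$. The target is a total preprocessing cost of $\Ohtilde(n^k \tau^{k-1})$. This is the main technical obstacle: the two-string construction leans on the Monge property and Tiskin-style sticky-braid multiplication, and neither admits an obvious $k$-dimensional analog. A workable substitute is to exploit the approximation slack and preprocess only the portions of each tensor that will ever be queried by portals, so that the tensor's effective dimension along each axis shrinks to polylog in $1/\eps$ and combination can be done coordinate by coordinate.

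With $\DIST$ tensors in hand, portal-respecting walks (\cref{def:portalwalk}) extend naturally: a walk is a concatenation of sub-walks, each contained in a single box and each starting at an input portal of that box. Portals are selected analogously to \cref{lem:apportals}: every grid vertex of $\B$ (all $k$ coordinates on phrase boundaries) plus a logarithmic sampling along each axis inside each box face, parameterized by $\alpha=\Theta(\eps\tau/N)$ so that $(1+\alpha)^{\Oh(\sum_i p_{X_i})}\le 1+\eps$. An induction on $i_1+\cdots+i_k$, mimicking the proof of \cref{lem:apportals}, shows that the resulting portal-respecting distance at each $(p_{X_1},\ldots,p_{X_k})$-box vertex is at most a $(1+\eps)$ factor above $d_{x_1,\ldots,x_k}$.

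Finally, I would run a DP over the boxes in topological order on $i_1+\cdots+i_k$ that propagates distances from input portals to output portals using the preprocessed $\DIST$ tensors. After the logarithmic sparsification, the total number of portals is $\Ohtilde(\eps^{-\Oh(k)} N^k/\tau^{k-1})$, and each portal is processed in amortized $\Ohtilde(1)$ time via the tensor access. Balancing preprocessing $\Ohtilde(n^k \tau^{k-1})$ against DP $\Ohtilde(\eps^{-\Oh(k)} N^k/\tau^{k-1})$ by setting $\tau=\Theta((N/n)^{k/(2(k-1))})$ gives the claimed $\Ohtilde(\eps^{-\Oh(k)} n^{k/2} N^{k/2})$ running time. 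The hard part is the $\DIST$-tensor preprocessing: any slippage there (e.g., needing $\tau^k$ rather than $\tau^{k-1}$ per tuple) degrades the exponent, so the whole argument rests on successfully extending—or, via approximation, side-stepping—the $\tau$-per-pair sticky-braid combination that underpins the two-string case.
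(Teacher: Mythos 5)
Your proposed route—generalizing the box decomposition and \(\DIST\) matrix machinery to \(\DIST\) tensors—is not the paper's approach, and more importantly, the paper explicitly argues that it cannot work. Section~\ref{sec:disttensor} gives concrete counterexamples showing that the Monge property and several natural weakenings of it (the unit-Monge/permutation-difference property, the total monotonicity used by SMAWK, and even a weak ``directional elimination'' property) all fail already for three strings of length around 10. Without Monge-like structure there is no analog of sticky-braid composition nor any SMAWK-style acceleration, so the \(\Ohtilde(n^k\tau^{k-1})\)-time tensor preprocessing that your balance relies on has no path to existence; the naive cost per symbol \(k\)-tuple is \(\Omega(\tau^k)\) (or worse), which wrecks the exponent. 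You flag this step yourself as ``the main technical obstacle,'' but your proposed escape—``preprocess only the portions of each tensor that will ever be queried by portals, so that the tensor's effective dimension along each axis shrinks to polylog''—is not justified: per box face, the number of portals in the two-string construction is \(\Theta(\log_{1+\alpha}N)=\Theta((N\log N)/(\eps\tau))\), not polylogarithmic, and you would still need to answer arbitrary input-to-output portal distances inside each box, which in the absence of structure is about as hard as computing the tensor.

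The paper instead sidesteps tensors entirely via the \emph{window-respecting alignment} framework (Section~\ref{section:wra}), imported from uncompressed edit-distance approximation. It fixes a guess \(D\) for the distance, partitions \(X_1\) into disjoint windows of length \(\tau\), defines a sparse family of candidate windows in \(X_2,\dots,X_k\) (indexed by a rounded length offset \(\Delta\) and a start position that is a multiple of \(\sigma\approx\eps D\tau/N\)), and proves (\cref{lemma:lowskew,lemma:ked-window}) that window-respecting alignments achieve a \((1+\Oh(\eps k))\)-approximation. Compression enters not through box decompositions but through \cref{lemma:lz77-repetition}: an SLP of size \(n\) has \(\Oh(n\tau/\delta)\) ``representative'' length-\(\tau\) substrings up to edit distance \(\delta\), so one precomputes approximate median distances only for \(k\)-tuples of representatives (using \cref{lemma:k-lms}), and then runs a coarse-grained DP over window endpoints. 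Balancing \(\tilde\Oh(n^k\tau^k/\eps^{3k})\) (Step 3) against \(\tilde\Oh(N^k/(\eps^{2k}\tau^k))\) (Step 4) at \(\tau=(N/(n\eps))^{1/2}\) yields the claimed bound. So the high-level ideas you need are a structural approximation lemma for windowed alignments and a compression-aware ``few distinct substrings'' argument, not a tensor analog of \cref{lem:dist}.
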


To prove the above theorem, we use a different set of techniques than in the two-string case. Most approaches for speeding up the textbook DP algorithm for two (compressible) strings, including the aforementioned results in this paper, rely on the ability to perform computations involving DIST matrices efficiently. These computations crucially depend on the fact that DIST matrices satisfy the Monge property. However, for the natural high-dimensional generalization of DIST matrices, we do not know of any analog of the Monge property they satisfy that allows us to perform similar computations even for three-string similarity problems. Indeed, most natural generalizations of the Monge property seem to not hold even in the three-string setting (see~\cref{sec:disttensor} for more details). 
Thus, it appears unlikely that, for example, an algorithm that partitions the DP table into boxes and computes the DP values on the boundary of each box using computations involving DIST matrices would be substantially more efficient than the textbook edit-distance algorithm, even in the three-string setting.

This motivates us to instead use the window-respecting alignment scheme that has appeared in approximation algorithms for edit distance (e.g.,~\cite{CDGKS18, GRS:20}). 

\subsection{Window-Respecting Alignments}\label{section:wra}
We will assume that $\ED(X_1,\ldots,X_k)$ lies between $D$ and $2D$ for some known $D$ at the loss of a $\log N$ factor in the runtime. We partition $X_1$ into $\lceil |X_1|/\tau \rceil$ disjoint windows $W_{1, 1}$ to $W_{1, N/\tau}$ each of length $\tau$ (without loss of generality; we could always e.g. pad each string with an equal amount of a new dummy character to ensure $|X_1|$ is a multiple of $\tau$, without asymptotically affecting their size or compression size). That is, $W_{1, j} = X_1[(j-1)\tau + 1, j\tau]$. 

We define for $X_2, \ldots, X_k$ possibly overlapping windows indexed by (i) $\Delta$, a guess for the (signed) difference between the length of $W$ and the corresponding window in $X_1$ and (ii) the starting position $p$ of the window. More formally, the windows are indexed by $W_{i, \Delta, p}$. Throughout the section, let $\sigma := \max\{\lfloor \epsilon D \tau / |X_1| \rfloor, 1\}$ and $R_d(x)$ denote $x$ rounded down to the nearest multiple of $\sigma$. Then $W_{i, \Delta, p} = X_i[p\sigma + 1\dd R_d(p\sigma + \tau + \Delta)]$ (or is the empty string ``starting'' at position $p\sigma + 1$ if $R_d(\min\{p\sigma + \tau + \Delta, |X_i|\}) < p\sigma + 1$). If $R_d(p\sigma + \tau + \Delta) > |X_1|$, $W_{i, \Delta, p}$ is not included in our set of windows. We will define this window for:

\begin{itemize}
    \item All $\Delta$ in $\{0, 1, -1, \lfloor(1+\epsilon)\rfloor, -\lfloor(1+\epsilon)\rfloor, \lfloor(1+\epsilon)^2\rfloor, \ldots \lfloor(1+\epsilon)^{\lceil \log_{1+\epsilon} 2\tau/\epsilon^2 \rceil}\rfloor\} \cup \{-\tau\}$ for which $\tau + \Delta \geq 0$,
    \item All $p$ from $0$ to $\lfloor|X_i|/\sigma\rfloor$.
\end{itemize}

It suffices to consider windows of size at most $2\tau/\epsilon^2$ by the following lemma:

\begin{lemma}\label{lemma:lowskew}
Given $X_1, X_2, \ldots, X_k$ and a parameter $\tau$, for $J = |X_1|/\tau$, let $X^*$ be the string such that $\ED(X_1, X_2, \ldots X_n) = \sum_i \ED(X_i, X^*)$. There exists a partition of each $X_1$ into substrings $\{X_{1, j}\}_{j \in [J]}$, disjoint substrings of the other $X_i$, $\{X_{i, j}\}_{j \in [J]}$, and a partition of $X^*$ into substrings $\{X_j^*\}_{j \in [J]}$ such that:
\begin{itemize}
    \item $|X_{1,j}| = \tau$ for all $j$.
    \item For any $j$ and $j < j'$, $X_{i, j}$ appears before $X_{i, j'}$ in $X_i$.
    \item $\max_{i, j} |X_{i, j}| \leq 2\tau / \epsilon^2$.
    \item \[\sum_{j \in [J]} \ED(X_{ij}, X^*_j) + |X_i| - \sum_{j \in [J]} |X_{i,j}| \leq (1+3\epsilon)\ED(X_i, X^*), \]
    Which implies:
    \[\sum_{j \in [J]} \ED(X_{1, j}, X_{2, j}, \ldots, X_{k, j}) + \sum_{i > 1} (|X_i| - \sum_{j \in [J]} |X_{i, j}|) \leq (1 + 3\epsilon) \ED(X_1, X_2, \ldots X_k).\]
    
    That is, the cost of the alignment that aligns $X_{1,j}$ with each $X_{i,j}$, and then deletes all characters in $X_2$ to $X_k$ that are unaligned with some $X_{1,j}$ is at most $(1 + 3\epsilon) \ED(X_1, X_2, \ldots X_k)$.
\end{itemize}
\end{lemma}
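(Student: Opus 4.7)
The plan is to build the required partition from an optimal median alignment and then apply two successive cost‑neutral modifications followed by a rounding step. Fix $X^*$ with $\ED(X_1,\ldots,X_k) = \sum_i \ED(X_i,X^*)$, and choose optimal pairwise alignments $A_i$ of $X_i$ with $X^*$. Since $|X_{1,j}|=\tau$ is forced, set $X_{1,j} := X_1[(j-1)\tau+1 \dd j\tau]$, and propagate this partition through $A_1$ to obtain an initial candidate partition $\{X^*_j\}$ with $X^*_j$ being the substring of $X^*$ aligned to $X_{1,j}$ under $A_1$. The $i=1$ inequality then holds with equality, since $\sum_j \ED(X_{1,j},X^*_j)=\ED(X_1,X^*)$ and $\sum_j|X_{1,j}|=|X_1|$ makes the deletion-penalty term vanish.

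For each $i>1$, any partition of $X^*$ together with $A_i$ induces a partition $X_i=Y_{i,1}\cdots Y_{i,J}$ with $\sum_j \ED(Y_{i,j},X^*_j)=\ED(X_i,X^*)$. I define $X_{i,j}$ by removing from $Y_{i,j}$ exactly those characters that $A_i$ marks as \emph{insertions} (unpaired with any character of $X^*_j$). The key identity is that removing an inserted character decreases $\ED(Y_{i,j},X^*_j)$ by exactly $1$ and raises the deletion penalty $|Y_{i,j}|-|X_{i,j}|$ by exactly $1$, so the step is cost-neutral. After it, $|X_{i,j}|$ equals the number of paired characters of $Y_{i,j}$, which is at most $|X^*_j|$. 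Hence the bound $|X_{i,j}|\le 2\tau/\epsilon^2$ reduces to ensuring $|X^*_j|\le 2\tau/\epsilon^2$.

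The main obstacle is enforcing this size cap. The natural partition induced by $A_1$ may contain oversized segments, so I refine $\{X^*_j\}$ as follows: whenever $|X^*_j|=L>2\tau/\epsilon^2$, observe that $A_1$ can match at most $|X_{1,j}|=\tau$ characters of $X^*_j$, so at least $L-\tau>L-2\tau/\epsilon^2$ of its characters are deletions in $A_1$. Reorder $A_1$ locally inside the window (which leaves $\ED(X_{1,j},X^*_j)$ unchanged) so that the last $L-2\tau/\epsilon^2$ characters of $X^*_j$ are all such deletions, and then shift the window boundary leftward by that amount, pushing this deletion‑only suffix into $X^*_{j+1}$. This drops $\ED(X_{1,j},X^*_j)$ by $L-2\tau/\epsilon^2$ and raises $\ED(X_{1,j+1},X^*_{j+1})$ by at most $L-2\tau/\epsilon^2$ (the transferred characters can be deleted at unit cost), so $\sum_j\ED(X_{1,j},X^*_j)$ is non-increasing. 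Crucially, for every $i>1$ the sum $\sum_j\ED(Y_{i,j},X^*_j)=\ED(X_i,X^*)$ depends only on $A_i$, not on how $X^*$ is cut, so this refinement is also cost-free for $i>1$.

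Finally, I round all window endpoints to the required grid of multiples of $\sigma=\max(\lfloor\epsilon D\tau/|X_1|\rfloor,1)$. Each boundary shifts by at most $\sigma$, inflating each affected window's contribution by $O(\sigma)$; aggregated over the $J=\Theta(|X_1|/\tau)$ windows this costs $O(J\sigma)=O(\epsilon D)=O(\epsilon\ED(X_1,\ldots,X_k))$. Charging this overhead to each string proportionally to $\ED(X_i,X^*)$, and combining with the two earlier cost‑neutral steps, yields the claimed $(1+3\epsilon)$ bound for each $i$, with the global inequality following by summation and the fact that $\ED(X_{1,j},\ldots,X_{k,j})\le\sum_i\ED(X_{i,j},X^*_j)$ since $X^*_j$ is a valid candidate median for the $j$-th windows. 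The sharpest step in verifying correctness will be justifying the local reorder of $A_1$ used in the refinement and tracking the per-$i$ rounding overhead; the rest is bookkeeping against the three additive error terms.
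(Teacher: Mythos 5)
Your proposal diverges substantially from the paper's argument, and the central step does not hold up.

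The key gap is the claim that you can ``reorder $A_1$ locally inside the window (which leaves $\ED(X_{1,j},X^*_j)$ unchanged) so that the last $L-2\tau/\epsilon^2$ characters of $X^*_j$ are all such deletions.'' Alignments are order-preserving, and you cannot choose which positions of $X^*_j$ are deletions without paying for it. Concretely, take $\tau=4$, $\epsilon=1/2$ (so $2\tau/\epsilon^2=32$), $X_{1,j}=a^4$, and $X^*_j=b^{36}a^4$, with $L=40$. The optimal alignment deletes the $36$ $b$'s and matches the $a$'s for cost $36$. If you force the last $8$ characters of $X^*_j$ ($b^4 a^4$) to be deletions, the prefix $X^*_j[1\dd 32]=b^{32}$ must now absorb all of $X_{1,j}$, and $\ED(a^4,b^{32})=32$; the supposed drop is $L-2\tau/\epsilon^2=8$, but the actual drop is only $4$, while the transferred characters can cost up to $8$ more in the next window. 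So each such shift can incur $\Theta(\tau)$ additive error, and since the oversized tail may just propagate rightward window after window, these losses can accumulate with no controlled bound. This is exactly what the paper avoids by working with entire maximal \emph{runs} of skewed windows: over a run $\subseq{j}{j'+1}$ the total $X_1$-length is at most an $\epsilon$-fraction of the total $X^*$-length, so $\ED$ across the run is already within a $(1\pm\epsilon)$-factor of $|{\bigcirc}_m \tilde X^*_m|$, and an arbitrary repartition of that concatenation (into pieces of size $\le 2\tau/\epsilon$) changes the cost only by this factor. Your per-window, one-step-at-a-time surgery does not have this slack.

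Two smaller problems. First, your construction of $X_{i,j}$ for $i>1$ by ``removing inserted characters'' of $Y_{i,j}$ produces a subsequence of $X_i$, not necessarily a contiguous substring, which the lemma requires; in the paper, when $X'_{i,j}$ is oversized, $X_{i,j}$ is taken to be an empty substring of $X'_{i,j}$ and the full deletion cost is absorbed using the inequality $\ED(\gamma,X^*_j)+|X'_{i,j}|\le (1+3\epsilon)\ED(X'_{i,j},X^*_j)$. Second, the rounding to the grid of multiples of $\sigma$ is not part of this lemma at all: \cref{lemma:lowskew} is a purely combinatorial statement about partitions and makes no reference to $\sigma$; the $\sigma$-snapping appears later, in \cref{lemma:ked-window}, when the $X_{i,j}$ of this lemma are matched to actual algorithmic windows $W_{i,\Delta,p}$. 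Bundling it here adds a source of error that the lemma is not supposed to absorb.
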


Effectively, Lemma~\ref{lemma:lowskew} says that there is a near-optimal alignment that aligns the windows of $X_1$ to substrings of the other strings that are not more than $1/\epsilon^2$ times larger. 

\begin{proof}

We partition $X_1$ into substrings of length $\tau$, $\{X_{1,j}\}_{j \in [J]}$. $X^*$ can be partitioned into substrings  $\{\tilde{X}^*_j\}_{j \in [J]}$ such that $\ED(X_1, X^*) = \sum_{j \in [J]} \ED(X_{1, j}, \tilde{X}^*_j)$.

First, we will ``realign'' $X_1$ and $X^*$ to ensure no $\tilde{X}^*_j$ is much larger than $X_{1, j}$. Call a contiguous subsequence of $[J]$, $\subseq{j}{j'}:= \{j, j+1, \ldots j'\}$, ``skewed'' if $\sum_{m \in \subseq{j}{j'}} |X_{1, m}| < \frac{\epsilon}{2} \sum_{m \in \subseq{j}{j'}} |\tilde{X}^*_m|$. Let us take a ``maximal'' set $S$ of disjoint skewed contiguous subsequences, i.e. a set $S$ such that (i) all the subsequences in $S$ are disjoint (ii) for every contiguous subsequence $s$ in $S$, there is no skewed contiguous subsequence $s'$ such that $s \subset s'$ and (iii) there is no skewed contiguous subsequence that is not in $S$ but also is completely disjoint from every element of $S$.

For each skewed contiguous subsequence $\subseq{j}{j'}$ in $S$, note that $j'+1$ does not appear in any element of $S$ (otherwise, $\subseq{j}{j'}$ and this element can be combined to form a longer skewed contiguous subsequence, violating (ii)), and $\subseq{j}{j'+1}$ is not skewed (again, $\subseq{j}{j'+1}$ being skewed would violate (ii) since $\subseq{j}{j'} \subset \subseq{j}{j'+1}$). Take $S$ and replace each $\subseq{j}{j'}$ with $\subseq{j}{j'+1}$ to get $S'$. For each $\subseq{j}{j'+1} \in S'$, we have:

\begin{equation}\label{eq:lengthbound}
\frac{\epsilon}{2} \sum_{m \in \subseq{j}{j'+1}} |\tilde{X}^*_m| \leq \sum_{m \in \subseq{j}{j'+1}} |X_{1, m}| \leq 2 \sum_{m \in \subseq{j}{j'}} |X_{1, m}|< \epsilon \sum_{m \in \subseq{j}{j'+1}} |\tilde{X}^*_m|.
\end{equation}

The right-hand side of~\eqref{eq:lengthbound} implies:

\[\ED(\bigcirc_{m  \in \subseq{j}{j'+1}} X_{1, m}, \bigcirc_{m  \in \subseq{j}{j'+1}} \tilde{X}^*_m) \geq (1 - \epsilon) |\bigcirc_{m  \in \subseq{j}{j'+1}} \tilde{X}^*_m|.\]

It also implies that for any partition of $\bigcirc_{m  \in \subseq{j}{j'+1}} \tilde{X}^*_m$ into substrings $\{X^*_m\}_{m  \in \subseq{j}{j'+1}}$, we have:

\[\sum_{m  \in \subseq{j}{j'+1}} \ED(X_{1, m}, X^*_m) \leq (1+\epsilon) |\bigcirc_{m  \in \subseq{j}{j'+1}} \tilde{X}^*_m|.\]

And so if $\epsilon$ is sufficiently small:

\[\sum_{m\in \cup_{e \in S'} e} \ED(X_{1,m}, X^*_m) \leq \frac{1+\epsilon}{1-\epsilon}\sum_{m\in \cup_{e \in S'} e} \ED(X_{1,m}, \tilde{X}^*_m) \leq (1+3\epsilon)\sum_{m\in \cup_{e \in S'} e} \ED(X_{1,m}, \tilde{X}^*_m)\]

In particular, because of the left-hand side of~\eqref{eq:lengthbound}, we can choose the partition of $\bigcirc_{m  \in \subseq{j}{j'+1}} \tilde{X}^*_m$ that splits it into substrings $\{X^*_m\}_{m  \in \subseq{j}{j'+1}}$, each of length at most $2\tau/\epsilon$. 

Now if we set $X^*_m = \tilde{X}^*_m$ for any $m$ not in a subsequence in $S'$, we trivially have:

\[\sum_{m\notin \cup_{e \in S'} e} \ED(X_{1,m}, X^*_m) \leq \sum_{m\notin \cup_{e \in S'} e} \ED(X_{1,m}, \tilde{X}^*_m)\]

And also $X^*_{m} \leq 2\tau/\epsilon$ for all such $m$ (otherwise, $m$ should appear in some subsequence in $S'$ by condition (iii)). So we've found a partition of $X^*$ into substrings $\{X^*_j\}_{j \in [J]}$ such that $|X^*_j| \leq \frac{1}{\epsilon}|X_{1,j}|$ for all $j$, and:

\[\sum_{j \in [J]} \ED(X_{1,j}, X^*_j) \leq (1+3\epsilon)\sum_{j \in [J]} \ED(X_{1,j}, \tilde{X}^*_j).\]

Now, we will use this partition to determine $\{X_{i,j}\}_{i > 1, j \in [J]}$. For each $i$, $X_i$ can be partitioned into substrings $X'_{i, j}$ such that $\ED(X_i, X^*) = \sum_{j \in [J]} \ED(X'_{i,j}, X^*_j)$. If $|X'_{i,j}| \leq 2\tau/\epsilon^2$, we set $X_{i,j} = X'_{i,j}$. If any $X'_{i,j}$ has length larger than $2\tau/\epsilon^2 > |X^*_j| / \epsilon$, then $\ED(X'_{i,j}, X^*_j) \geq (1-\epsilon)|X'_{i,j}|$. On the other hand:

\[\ED(\gamma, X^*_j) + |X'_{i,j}| \leq (1+\epsilon)|X'_{i,j}| \leq \frac{1+\epsilon}{1-\epsilon} \ED(X'_{i,j}, X^*_j) \leq (1+3\epsilon) \ED(X'_{i,j}, X^*_j) \]
for the empty string $\gamma$. So we can now choose $X_{i,j}$ to be any empty substring of $X'_{i,j}$. These choices of $X_{i,j}$ give the properties of the lemma, completing the proof.
\end{proof}

\newcommand{\start}{\texttt{s}}
\newcommand{\nd}{\texttt{e}}

Let $\mathcal{W}_1$ be the set of all windows we partition $X_1$ into, and $\mathcal{W}_i$ be the set of windows we define for $X_i$. Let $\start(W)$ denote the index of the first character in $W$, and $\nd(W)$ denote the index of the last character. For $k$ strings, we define a window-respecting alignment as follows:

\begin{definition}\label{def:wra}
A window respecting alignment is a function $f: \mathcal{W}_1 \rightarrow \mathcal{W}_1 \times \mathcal{W}_2 \times\cdots \times \mathcal{W}_k$ with the following properties:

\begin{itemize}
    \item For all $W \in \mathcal{W}_1$, $f(W)_1 = W$.
    \item For any $j < j'$ and any $i$, $\nd(f(W_{1, j})_i) < \start(f(W_{1, j'})_i)$.
\end{itemize}

Let $r_i(f)$ denote the number of characters in $X_i$ that are not contained in $f(W)_i$ for any $W \in \mathcal{W}_1$. The cost of a window-respecting alignment is defined as follows:

\[\ED(f) := \sum_{j\in [J]} \ED(f(W_{1, j})) + \sum_i r_i (f).\]
\end{definition}

Let $\mathcal{F}$ be the set of all window-respecting alignments. The following lemma shows that window-respecting alignments approximate normal alignments:
\begin{lemma}\label{lemma:ked-window}
\[\ED(X_1, X_2, \ldots X_k) \leq \min_{f \in \mathcal{F}} \ED(f) \leq (1+13\epsilon k) \ED(X_1, X_2, \ldots X_k) .\]
\end{lemma}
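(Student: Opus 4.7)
The statement consists of two inequalities, which I treat separately.

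For the easy direction $\ED(X_1,\ldots,X_k)\leq\min_{f\in\mathcal{F}}\ED(f)$, fix any $f\in\mathcal{F}$. For each $j\in[J]$, let $X^*_j$ be an optimal median of the tuple $f(W_{1,j})$, so that $\sum_i\ED(f(W_{1,j})_i,X^*_j)=\ED(f(W_{1,j}))$, and set $X^*:=X^*_1\circ\cdots\circ X^*_J$. Since the substrings $\{f(W_{1,j})_i\}_j$ of $X_i$ are disjoint and appear in order, one can transform $X_i$ into $X^*$ by editing each $f(W_{1,j})_i$ into $X^*_j$ and deleting the $r_i(f)$ uncovered characters, which yields $\ED(X_i,X^*)\leq\sum_j\ED(f(W_{1,j})_i,X^*_j)+r_i(f)$. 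Summing over $i$ and invoking the definition of median edit distance gives $\ED(X_1,\ldots,X_k)\leq\sum_i\ED(X_i,X^*)\leq\ED(f)$.

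For the harder direction, I would invoke \cref{lemma:lowskew} with the chosen $\tau$ to obtain partitions $\{X_{1,j}\}$, $\{X_{i,j}\}_{i>1}$, $\{X^*_j\}$ with $|X_{1,j}|=\tau$ (so $X_{1,j}=W_{1,j}$), $|X_{i,j}|\leq 2\tau/\epsilon^2$, and $\sum_j\ED(X_{1,j},\ldots,X_{k,j})+\sum_{i>1}U_i\leq(1+3\epsilon)\ED$, where $U_i:=|X_i|-\sum_j|X_{i,j}|$. I then build $f$ by selecting, for each $i>1$ and $j$, the \emph{largest allowed window that is a substring of $X_{i,j}$}: set $p_{i,j}=\lceil(s_{i,j}-1)/\sigma\rceil$ (where $s_{i,j},e_{i,j}$ denote the endpoints of $X_{i,j}$ in $X_i$), and pick $\Delta_{i,j}$ so that $\tau+\Delta_{i,j}$ is the largest value in the allowed discrete set with $R_d(p_{i,j}\sigma+\tau+\Delta_{i,j})\leq e_{i,j}$, falling back to $\Delta_{i,j}=-\tau$ if no positive choice works. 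Because each chosen window lies inside the disjoint substring $X_{i,j}$, the non-overlap condition of \cref{def:wra} holds automatically, so $f\in\mathcal{F}$.

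The crux is bounding the discretization loss $V_i:=\sum_j\bigl(|X_{i,j}|-|W_{i,\Delta_{i,j},p_{i,j}}|\bigr)$. The start-position rounding contributes at most $\sigma$ per window and $R_d$ another $\sigma$. The length discretization is delicate: because the allowed values $\tau+\Delta$ are geometrically spaced around $\tau$ with ratio $1+O(\epsilon)$ (densified by the explicit inclusion of $\tau,\tau\pm1,0$), the largest allowed value $\tau+\Delta_{i,j}\leq\ell^*:=e_{i,j}-p_{i,j}\sigma$ satisfies $\ell^*-(\tau+\Delta_{i,j})\leq O(\epsilon)\cdot|\ell^*-\tau|+O(1)$. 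Combining the three contributions gives the key estimate
\[
|X_{i,j}|-|W_{i,\Delta_{i,j},p_{i,j}}|\ \leq\ O(\epsilon)\cdot\bigl||X_{i,j}|-\tau\bigr|+O(\sigma).
\]
Since $\bigl||X_{i,j}|-\tau\bigr|=\bigl||X_{i,j}|-|W_{1,j}|\bigr|\leq\ED(W_{1,j},X_{i,j})\leq\ED(X_{1,j},\ldots,X_{k,j})$ (the final inequality going through the median $X^*_j$), and $\sigma J=\sigma|X_1|/\tau\leq\epsilon D\leq\epsilon\ED$ by the $\ED\in[D,2D]$ assumption, summing over all $j$ and $i>1$ produces $\sum_{i>1}V_i\leq O(\epsilon k)\ED$.

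To close the loop, I reuse the optimal median $X^*_j$ of $(X_{1,j},\ldots,X_{k,j})$ as a (suboptimal) median of the window tuple to obtain $\ED(f(W_{1,j}))\leq\ED(X_{1,j},\ldots,X_{k,j})+\sum_{i>1}\bigl(|X_{i,j}|-|W_{i,\Delta_{i,j},p_{i,j}}|\bigr)$. Summing over $j$ and adding $\sum_{i>1}r_i(f)=\sum_{i>1}(U_i+V_i)$ yields $\ED(f)\leq(1+3\epsilon)\ED+2\sum_{i>1}V_i\leq(1+O(\epsilon k))\ED$, which tightens to the claimed $(1+13\epsilon k)\ED$ after tracking constants. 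The main obstacle is precisely the length-discretization bound: without the geometric spacing of the $\Delta$ values, the loss would scale with $|X_{i,j}|$, yielding only an $O(\epsilon N)$ additive term (useless when $N\gg\ED$); the geometric structure is what makes it scale with $\bigl||X_{i,j}|-\tau\bigr|\leq\ED(W_{1,j},X_{i,j})$, i.e., with the target cost itself.
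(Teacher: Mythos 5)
Your argument is essentially the paper's: both directions match, and the construction of $f$ (take the substrings from Lemma~\ref{lemma:lowskew}, then snap each $X_{i,j}$ to the largest allowed window inside it, bounding the snapping loss by $\epsilon\,\bigl||X_{i,j}|-\tau\bigr| + O(\sigma-1)$ and observing $\bigl||X_{i,j}|-\tau\bigr|\le\ED(X_{1,j},\ldots,X_{k,j})$ via the median) is the same idea, merely presented without the paper's explicit case split between ``$|X_{i,j}|$ small'' (empty window) and ``$|X_{i,j}|$ not small'' (longest window). Your uniform treatment is fine; the geometric gap estimate you state — that the largest allowed $\tau+\Delta\le\ell^*$ misses $\ell^*$ by at most $\epsilon|\ell^*-\tau|+O(1)$ — is exactly what the paper implicitly uses, and your remark about why the $\Delta$'s must be geometrically spaced \emph{around $\tau$} rather than around $0$ is the right intuition. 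One small slip: you write $\sigma J\le\epsilon D$, which fails when $\sigma=1$ (then $\sigma J=J$, possibly $\gg\epsilon D$); what you actually need — and what the paper uses — is that the rounding error per window is at most $\sigma-1$, so it is $(\sigma-1)J\le\lfloor\epsilon D\tau/|X_1|\rfloor\cdot|X_1|/\tau\le\epsilon D$ that controls the additive term, and this vanishes when $\sigma=1$.
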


\begin{proof}
The first inequality follows because for any $f$, there is an alignment that for all $j$ exactly aligns $W_{1,j}$ with the windows in $f(W_{i,j})$ at cost at most $\ED(f(W_{i,j}))$, and uses $\sum_i r_i(f)$ deletions to handle the remaining characters in each string. 

Next, we show that there exists $f \in \mathcal{F}$ such that $\ED(f) \leq (1+\epsilon k) \ED(X_1, X_2, \ldots X_k)$. Let us take the substrings $X_{i, j}$ given by Lemma~\ref{lemma:lowskew}. Note that $W_{1, j} = X_{1, j}$.

If $|X_{i, j}| \leq \epsilon \ED(W_{1, j}, X_{2, j} \ldots X_{k , j}) + 3(\sigma-1)$, let $W_{i,j}$ be the empty window ``starting'' at index $\nd(W_{i, j-1})+1$, or if $j = 1$, at index 1. Then we have $|X_{i, j}| - |W_{i, j}| \leq \epsilon \ED(W_{1, j}, X_{2, j} \ldots X_{k , j}) + 3(\sigma-1)$.

Otherwise, let $W_{i, j}$ be the longest window $W_{i, \Delta, p}$ that is a substring of $X_{i, j}$. Note that $|X_{i, j}|$ and $|W_{1, j}|$ differ by at most $\ED(W_{1, j}, X_{2, j} \ldots X_{k , j})$ and $|W_{i, j}| \leq 2\tau/\epsilon^2$ for all $i, j$. If $\epsilon$ is a sufficiently small constant, this implies there is a choice of $W_{i, j}$ such that $|X_{i, j}| - |W_{i, j}| \leq \epsilon \ED(W_{1, j}, X_{2, j} \ldots X_{k , j}) + 3(\sigma-1)$. We can identify $W_{i,j}$ as follows: Take $X_{i,j}$ and delete at most $\sigma - 1$ characters from the beginning until it starts at $p\sigma + 1$ for some integer $p$ to get $\tilde{X}_{i,j}$. We have $|X_{i, j}| - |\tilde{X}_{i, j}| \leq \sigma - 1$, and so $|\tilde{X}_{i, j}|$ and $|W_{1, j}|$ differ by at most $\ED(W_{1, j}, X_{2, j} \ldots X_{k , j}) + (\sigma - 1)$ characters. Choose $\Delta$ such that $\frac{1}{1+\epsilon} (\ED(W_{1, j}, X_{2, j} \ldots X_{k , j}) + (\sigma - 1)) \leq \Delta \leq \ED(W_{1, j}, X_{2, j} \ldots X_{k , j}) + (\sigma - 1)$. $W_{i, \Delta, p}$ is a prefix of $\tilde{X}_{i,j}$ containing all but at most the last $\epsilon (\ED(W_{1, j}, X_{2, j} \ldots X_{k , j}) + (\sigma - 1)) + (\sigma - 1)$ characters of $\tilde{X}_{i,j}$. In turn, if $\epsilon$ is sufficiently small we have $|X_{i, j}| - |W_{i, j}| \leq \epsilon \ED(W_{1, j}, X_{2, j} \ldots X_{k , j}) + 3(\sigma-1)$.

In turn, by triangle inequality and since $\sigma - 1 \leq \epsilon k D \tau / |X_1|$:

\[\ED(W_{1, j}, W_{2, j} \ldots W_{k, j}) \leq (1 + \epsilon k) \ED(W_{1, j}, X_{2, j} \ldots X_{k, j}) + 3\epsilon k D \tau/|X_1|.\]

We now choose $f(W_{1, j}) = (W_{1, j}, W_{2, j} \ldots W_{k, j})$. We also have that the number of characters $f$ does not align within $X_{2, j}, X_{3, j} \ldots X_{k, j}$ is at most $\epsilon k\ED(W_{1, j}, X_{2, j} \ldots X_{k , j}) + 3\epsilon k D \tau/|X_1|$.

Putting it all together and using Lemma~\ref{lemma:lowskew} we get:
\begin{align*}
\ED(f) := &\sum_{j\in [J]} \ED(f(W_{1, j})) + \sum_i r_i (f)\\
\leq &
(1 + \epsilon k) \sum_j \ED(W_{1, j}, X_{2, j} \ldots X_{k , j}) +\frac{|X_1|}{\tau} \cdot \frac{3 \epsilon k D \tau}{|X_1|} \\
&\qquad+ \epsilon k \sum_j \ED(W_{1, j}, X_{2, j} \ldots X_{k , j})+\frac{|X_1|}{\tau} \cdot \frac{3 \epsilon k D \tau}{|X_1|} + \sum_i (|X_i| - \sum_{j \in [J]} |X_{i, j}|)\\
\leq &
(1+2 \epsilon k) \sum_j \ED(X_{1,j}, X_{2,j}, \ldots X_{k,j}) + 6 \epsilon k \ED(X_1, X_2, \ldots, X_k) + \sum_i (|X_i| - \sum_{j \in [J]} |X_{i, j}|)\\
\leq &
 (1+13\epsilon k) \ED(X_1, X_2, \ldots X_k).
\end{align*}
\end{proof}

\subsection{An Efficient Algorithm for Window-Respecting Alignments}

Our algorithm, denoted \textsc{$k$-ED-Alg}, is as follows:

\begin{mdframed}
\begin{enumerate}[leftmargin=*]
    \item Let $\mathcal{D} :=\{1, 2, 4, \ldots, 2k\tau/\epsilon^2\} $. For $X_1$ and each $d$ in $\mathcal{D}$, identify a set $\tilde{\mathcal{W}}_{1, d}$ of ``representative''  strings such that (i) $|\tilde{\mathcal{W}}_{1,d}| = O(n \tau/\epsilon d)$ and (ii) for every window $W_{1, i}$, there is some string $\texttt{shift}_d(W_{1, i}) \in \tilde{\mathcal{W}}_{1, d}$ in such that $\ED(W_{i, \Delta, p}, \texttt{shift}_d(W_{i, \Delta, p})) \leq \epsilon d$.
    \item For each other string $X_i$,each value of $d$ in $\mathcal{D}$, and each value of $\Delta$, identify a set of ``representative'' length $\tau+\Delta$ strings $\tilde{\mathcal{W}}_{i, d, \Delta}$ such that (i) $|\tilde{\mathcal{W}}_{i, d, \Delta}|$ = $O(n (\tau + \Delta)/ \epsilon d)$, and (ii) for every window $W_{i, \Delta, p}$, there is some string $\texttt{shift}_d(W_{i, \Delta, p}) \in \tilde{\mathcal{W}}_{i, d, \Delta}$ such that $\ED(W_{i, \Delta, p}, \texttt{shift}_d(W_{i, \Delta, p})) \leq \epsilon d$.

    \item Let $\tilde{\mathcal{W}}_{i, d} = \cup_\Delta \tilde{\mathcal{W}}_{i, d, \Delta}$. For each $d \in \mathcal{D} $ and every $k$-tuple of strings $\tilde{W}_{1, d}, \tilde{W}_{2, d} \ldots \tilde{W}_{k, d}$ in $\tilde{\mathcal{W}}_{1, d} \times \tilde{\mathcal{W}}_{2, d} \times \cdots \times \tilde{\mathcal{W}}_{k, d}$,  compute the median distance of this $k$-tuple if it is less than $d$. Store this as $\tilde{\ED}(\tilde{W}_{1, d}, \tilde{W}_{2, d} \ldots \tilde{W}_{k, d}) + \epsilon kd$. If the true median distance of these windows is greater than $d$, store $\tilde{\ED}(\tilde{W}_{1, d}, \tilde{W}_{2, d} \ldots \tilde{W}_{k, d}) = \infty$ instead. 
    \item Our algorithm solves the following dynamic program:
    \[c(x_1, x_2, \ldots x_k) = \min
    \begin{cases} 
    \min_{i \neq 1}  c(x_1, x_2, \ldots, x_i - \sigma, \ldots, x_k) + \sigma\\
    \min_{W_1, \ldots W_k \in \mathcal{W}_{1} \times \cdots \times \mathcal{W}_{k}: \forall_i \nd(W_i) = x_i, W_1 = \cdots = W_k} [c(\start(W_1)-1, \ldots, \start(W_k)-1)] \\
    \min_{W_1,\ldots W_k \in \mathcal{W}_{1} \times \cdots \times \mathcal{W}_{k}: \forall_i \nd(W_i) = x_i} [c(\start(W_1)-1, \ldots,
    \start(W_k)-1)\\ \qquad + \min_{d \in \mathcal{D} } \tilde{\ED}(\texttt{shift}_d(W_1), \ldots, \texttt{shift}_d(W_k))] 
    \end{cases} 
    \]
    For every $k$-tuple such that $x_1$ is a multiple of $\tau$, $x_2, \ldots, x_k$ are all multiples of $\sigma$, and such that $|x_i - x_1| \leq 4D$ for all $i$.
    
    The base case is $c(0, 0, \ldots, 0) = 0$, and our final output is $c(|X_1|, R_u(N_2), \ldots R_u(N_k))$, where $R_u(x)$ denotes $x$ rounded up to the nearest multiple of $\sigma$.
\end{enumerate}
\end{mdframed}

At a high-level, in steps 1 and 2 of \textsc{$k$-ED-Alg} we exploit the compression of the input strings to identify a small set of ``representative'' strings for each $X_i$, such that for each window in $X_i$ there is a representative string within small edit distance of that window. In step 3, we then compute the median distance between $k$-tuples of representative strings (instead of between all $k$-tuples of windows). Since all windows are within a small distance of some representative string, this also gives for all $k$-tuples of windows a reasonable approximation of their median distance. Step 4 of \textsc{$k$-ED-Alg} uses these approximations to solve a natural DP for finding an optimal window-respecting alignment. This DP is the same as the standard DP for edit distance, but instead of matching characters we are only allowed to match windows, at cost equal to (the approximation of) their median distance. 

We first bound the runtime of \textsc{$k$-ED-Alg}. The following lemmas show that Steps 1 and 2 of \textsc{$k$-ED-Alg} can be performed efficiently (as well as their correctness):

\begin{lemma}\label{lemma:lz77-repetition}
Given a straight-line program $\G$ of size $n$ that generates a string $X$ of size $n$, a length parameter $\tau$, and a parameter $\delta_{\max} \leq \tau$, there exists an algorithm that in time $O(|X|)$ finds (an implicit representation of) a set $S$ of $O(n\tau/\delta_{\max})$ substrings of length at most $\tau$ such that for every length $\tau$ substring of $X$, $x$, there is a string $\texttt{shift}(x)$ in $S$ such that $\ED(x, \texttt{shift}(x)) \leq \delta_{\max}$. We can also construct a data structure that identifies $\texttt{shift}(x)$ given the starting location of $x$ in $X$ using $O(|X|)$ preprocessing time and $O(1)$ query time.
\end{lemma}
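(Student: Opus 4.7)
The plan is to exploit the LZ77 structure implicit in an SLP. First, decompress $X$ and compute its LZ77 factorization in $O(|X|)$ time via a suffix array; let $B \subseteq [1\dd |X|]$ be the set of phrase-start positions, with $|B|=O(n)$ since an SLP of size $n$ yields LZ77 of size $O(n)$. The structural observation I will use is: if the window $X[i\dd i+\tau-1]$ contains no phrase-start in $(i\dd i+\tau-1]$, then it sits inside a single phrase and therefore equals some earlier occurrence $X[i'\dd i'+\tau-1]$ with $i'<i$ (obtained by following the LZ back-reference). Iterating gives every length-$\tau$ window a canonical occurrence $X[\phi(i)\dd \phi(i)+\tau-1]$ whose range contains at least one phrase-start.

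Set $s := \max(1, \lfloor \delta_{\max}/2\rfloor)$ and define the sampled set of starting positions
\[ J := \{ b - ks : b \in B,\ k \in [0\dd \lceil \tau/s\rceil],\ b-ks \in [1\dd |X|-\tau+1]\},\]
with $S := \{X[j\dd j+\tau-1] : j \in J\}$ stored implicitly as the list $J$. Each $b\in B$ contributes $O(\tau/\delta_{\max})$ samples, so $|S| = O(n\tau/\delta_{\max})$. For correctness, given $x = X[i\dd i+\tau-1]$, let $i' := \phi(i)$, pick a phrase-start $b \in (i'\dd i'+\tau-1]$, and choose $j = b-ks \in J$ minimizing $|j-i'|$, so that $|j-i'| \leq s \leq \lceil \delta_{\max}/2\rceil$. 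The two length-$\tau$ windows at positions $i'$ and $j$ overlap in $\tau - |j-i'|$ characters, so deleting $|j-i'|$ symbols at one end and inserting $|j-i'|$ at the other yields $\ED(X[i'\dd i'+\tau-1], X[j\dd j+\tau-1]) \leq 2|j-i'| \leq \delta_{\max}$. Combined with $X[i\dd i+\tau-1]=X[i'\dd i'+\tau-1]$, this gives the desired $\texttt{shift}(x) := X[j\dd j+\tau-1]$.

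Algorithmically, precompute $\phi[1\dd|X|]$ in a single left-to-right pass using a linear-time ``next phrase-start'' array on $B$: for each $i$, either a phrase-start lies in $(i\dd i+\tau-1]$ and $\phi(i):=i$, or $[i\dd i+\tau-1]$ is contained in a single phrase $P$ with source $s_p$, in which case set $\phi(i) := \phi(s_p + (i-P_{\text{start}}))$. A query then returns $\phi(i)$, locates a phrase-start $b$ in $(\phi(i)\dd \phi(i)+\tau-1]$ via the precomputed predecessor array, and outputs the identifier $b - ks$ for the $k$ nearest to $(b-\phi(i))/s$ in $O(1)$ arithmetic. The main subtlety to verify is that the recursion defining $\phi$ costs $O(1)$ per index; this follows because LZ back-references point strictly leftward, so the right-hand-side $\phi(s_p + (i-P_{\text{start}}))$ is always at a strictly smaller index and hence already tabulated when processing $i$, yielding the overall $O(|X|)$ construction bound.
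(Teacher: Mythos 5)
Your proof is correct and takes essentially the same approach as the paper: both reduce every length-$\tau$ window, via the LZ77 back-reference structure, to a canonical occurrence that touches a phrase boundary, and both then sample $O(\tau/\delta_{\max})$ candidate start positions at spacing $\Theta(\delta_{\max})$ around each of the $O(n)$ phrase boundaries, using the fact that shifting a window by $\delta$ positions costs at most $2\delta$ edits. The only real difference is presentational: the paper argues by induction on the number of LZ77 phrases (peeling off the last copy-plus-literal phrase and patching the windows that straddle the new boundary), whereas you define the canonicalization map $\phi$ directly and observe the recursion terminates because back-references strictly decrease indices; the two are the same argument in different wrappers. One small gap you should close is the base case of $\phi$: if the window contains no phrase start in $(i\dd i+\tau-1]$ yet the phrase containing it is a length-one literal (which can only happen for $\tau = 1$, where the lemma is anyway trivial since $\delta_{\max}\le\tau=1$ allows any single-character representative), the recursion as written has no source to follow, so either exclude $\tau=1$ up front or add that as an explicit base case. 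Otherwise the argument, the $O(n\tau/\delta_{\max})$ size bound, and the $O(|X|)$ construction/$O(1)$ query bounds all go through.
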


\begin{proof}
If $\delta_{\max} \geq \tau$, we can trivially choose $S$ that only contains the empty substring, and the data structure just returns the empty substring for any query. So assume $\delta_{\max} < \tau$.

Given that $\G$ has size $n$, the optimal LZ77 factorization of $X$ has size at most $n$~\cite{Rytter2003}. We will first show the existence of $S$ for any $X$ that has an LZ77 factorization of size at most $n$. For brevity, we will not go into the details of LZ77 factorization here. The key property we need is that a string $X$ that has a LZ77 factorization of size $n$ can be written as $X_1 \circ X_2 \circ X_3$, where $X_1$ is a string with LZ77 factorization of size $n-1$, $X_2$ is a substring of $X_1$, and $X_3$ is a single character. Moreover, the factorization gives the location of $X_2$ in $X_1$. 

Inductively, suppose we have constructed $S$, a set of at most $3 (n-1) \tau / \delta_{\max}$ substrings that has the desired properties for $X_1$. For all ``good'' indices $i <  |X_2| - \tau$, the length $\tau$ substring starting at the $i$th character in $X_2$ is fully contained in $X_2$, and thus is a substring in $X_1$. This leaves at most $\tau + 1$ ``bad'' indices where the length $\tau$ substring starting at these indices may not have a nearby string in $S$: those starting at indices $|X_1| - \tau + 1$ to $|X_1|$ of $X_1$, and the substring starting at index $|X_2| - \tau + 2$ of $X_2$. Consider the length $\tau$ substring starting at every $(\delta_{\max}/2)$-th position in indices $|X_1| = \tau + 1$ to $|X_1|$ of $X_1$, as well as the length $\tau$ substring starting at index $|X_2| - \tau + 2$ of $X_2$. This set of strings has size at most $2 \tau/\delta_{\max} + 1 \leq 3 \tau/\delta_{\max}$, and every length $\tau$ substring starting at a bad index is within edit distance $\delta_{\max}$ of some string in this set. So adding these strings to $S$ gives that $S$ now has size at most $3 n \tau / \delta_{\max}$ and has the desired properties.

For an efficient implementation of this procedure, we can compute the optimal LZ77 factorization in $O(|X|)$ time~\cite{RodehPE81}. Given the LZ77 factorization, we decompose $X$ into $X_1 \circ X_2 \circ X_3$ as before, and recursively compute an array $A$ for indices in $\subseq{1}{|X_1|-\tau+1}$ and set $B$ with the following property: the length $\tau$ substrings starting at indices $i$ and $A[i]$ are within edit distance $\delta_{\max}$, and $A$ has at most $3(n-1)\tau/\delta_{\max}$ distinct values, which are exactly the values in $B$. 

Since the LZ77 factorization gives us the position of $X_2$ in $X_1$, we can fill in $A$ for the ``good'' indices in $X_2$ in time linear in the number of good indices. We can also fill in the values of $A$ for the bad indices, in time linear in the number of bad indices, and add these values to $B$. Overall, the algorithm takes linear time to compute $A, B$. $A$ now serves as the desired efficient data structure, and $B$ as our implicit representation of $S$.
\end{proof}

There are $O(\log N)$ values of $d$ and $\tilde{O}(\log N / \epsilon)$ values of $\Delta$, so we can do Steps 1 and 2 in time $\tilde{O}(N / \epsilon)$ time.
We also show that Step 3 can be performed efficiently:

\begin{lemma}\label{lemma:k-lms}
Given strings $X_1, X_2, \ldots X_k$, there exists a data structure that can be computed in $O(\sum_i |X_i|)$ time that can answer queries of the following form in $O(d^k)$ time: Given indices $s_1, s_2, \ldots, s_k$ and $e_1, e_2, \ldots, e_k$, if $\ED(X_1[s_1\dd e_1], X_2[s_2\dd e_2], \ldots,\allowbreak X_k[s_k\dd e_k]) \leq d$, output $\ED(X_1[s_1\dd e_1], X_2[s_2\dd e_2], \ldots,\allowbreak X_k[s_k\dd e_k])$, otherwise output $\infty$. 
\end{lemma}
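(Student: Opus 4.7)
The plan is to generalize the classical Landau--Vishkin algorithm to the $k$-string median edit distance. The classical scheme iterates over the number of edits $e = 0, 1, \ldots, d$ and, for each diagonal, records the furthest point reachable with exactly $e$ edits, using longest-common-extension (LCE) queries to ``fast-forward'' along matching characters. Here I would maintain, for each budget $e \le d$ and each hyperdiagonal $\vec{\delta} = (x_2 - x_1, \ldots, x_k - x_1)$, the largest $x_1$ such that the state $(s_1 + x_1, s_2 + x_1 + \delta_2, \ldots, s_k + x_1 + \delta_k)$ is reachable from $(s_1, \ldots, s_k)$ with median edit cost $e$.

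First, I would build a generalized suffix tree (or an equivalent suffix-array based structure) for $X_1 \# \cdots \# X_k$ in $O(\sum_i |X_i|)$ time, augmented with an $O(1)$-query LCA data structure. This supports the following queries in $O(1)$ time (for constant $k$): given starting positions $p_1, \ldots, p_k$ in the respective $X_i$, return the largest $t$ such that $X_1[p_1 \dd p_1 + t - 1] = \cdots = X_k[p_k \dd p_k + t - 1]$; this is the string-depth of the LCA of the $k$ corresponding suffix leaves. This ``multi-way LCE'' is the right fast-forward primitive for median edit distance: as long as all $k$ strings agree on the next character, one can extend $X^*$ by that character at zero cost.

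Second, for the query phase, I would run the following DP. For each $e$ from $0$ to $d$ and each diagonal $\vec{\delta}$ with $\|\vec{\delta}\|_\infty \le e$, compute $f_e[\vec{\delta}]$ by (i) taking the maximum over the $O(2^k)$ non-empty subsets $S \subseteq [k]$ encoding which strings consume their next character while $X^*$ is extended by one symbol, and over the $O(k)$ candidate symbols $\{X_i[\cdot] : i \in S\}$ (any other choice of symbol is suboptimal), each transition drawing from an appropriate $f_{e-1}[\vec{\delta}']$ and paying the correct number of substitutions and insertions; and (ii) extending the resulting position by the multi-way LCE from the newly reached positions, capped by the remaining window lengths $e_i - (s_i + \text{current offset})$. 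The algorithm halts as soon as some $f_e$ reaches the corner $(e_1 - s_1, \ldots, e_k - s_k)$ and returns that $e$; if no such $e \le d$ is found, it outputs $\infty$.

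The complexity bound comes from the standard diagonal observation: a single edit changes any coordinate difference $x_i - x_j$ by at most one, so after cost $e$ every diagonal satisfies $\|\vec{\delta}\|_\infty \le e$, contributing $(2e+1)^{k-1} = O(e^{k-1})$ diagonals at level $e$, hence $\sum_{e=0}^{d} O(e^{k-1}) = O(d^k)$ DP cells overall, each processed in $O(1)$ time for constant $k$. The main obstacle I anticipate is justifying that the multi-way LCE extension faithfully models the ``free'' simultaneous advance in the median DP (i.e.\ that it never underestimates how far one can progress along a single extension of $X^*$, and never overcounts cost by double-charging substitutions), together with the bookkeeping that translates the median-edit-distance recurrence $c[\vec{x}] = \min_{S, c} c[\vec{x} - \mathbb{1}_S] + |\{i \in S : X_i[x_i] \ne c\}| + |[k] \setminus S|$ into the diagonal formulation above; the rest of the argument mirrors the classical two-string analysis.
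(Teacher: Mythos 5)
Your overall plan --- a Landau--Vishkin-style DP over hyperdiagonals, fast-forwarded by a multi-way LCE primitive, with the $\sum_{e=0}^d O(e^{k-1}) = O(d^k)$ cell count --- is the same approach the paper takes. The LCE implementation is immaterial (you use a generalized suffix tree with LCA; the paper instead takes the minimum of $k-1$ pairwise LCE queries à la \cite{LMS98}; both are $O(1)$ per query after linear preprocessing).

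The gap is in your transition set. You only enumerate moves that extend $X^*$ by one symbol: a non-empty $S \subseteq [k]$ of strings that consume a character, paying substitutions inside $S$ and insertions outside $S$. You never allow a pure \emph{deletion}, i.e.\ advancing a single $x_i$ without extending $X^*$. These two moves are not interchangeable: in your recurrence, advancing $x_i$ alone is $S = \{i\}$ at cost $\ge k-1$ (an insertion into every other string), whereas the genuine median-DP move ``delete $X_i[x_i]$'' costs exactly $1$. Concretely, on $X_1 = aa$, $X_2 = X_3 = a$ (so $k=3$), the median edit distance is $1$: take $X^* = a$ and delete the extra character of $X_1$. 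Under your DP, the initial LCE fast-forward reaches $(1,1,1)$ at cost $0$, and the only transition to $(2,1,1)$ is $S=\{1\}$, $c=a$, at cost $2$, so you would output $2$. Indeed the recurrence you record in your ``obstacle'' paragraph, $c[\vec{x}] = \min_{S,c}\, c[\vec{x} - \mathbb{1}_S] + |\{i \in S : X_i[x_i] \ne c\}| + |[k]\setminus S|$, has exactly this omission. The paper's recurrence for $L^h(d_2,\dots,d_k)$ has three families of moves: delete from $X_1$ (shifting all $d_i$ up by $1$, budget down by $1$), delete from some $X_i$ with $i>1$ (shifting $d_i$ down by $1$, budget down by $1$), and the insertion/substitution move you describe (indexed by $e\in\{0,1\}^k$ and a variable cost $w$). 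You need to add the first two. Once deletions are included --- and once ``drawing from $f_{e-1}$'' is replaced by drawing from $f_{e-w}$ where $w$ is the actual cost of the step --- the diagonal bound and LCE fast-forward arguments go through as you describe.
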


\begin{proof}
Given $X_1, X_2, \ldots X_k$, let $\texttt{Slide}_{d_2, d_3, \ldots d_k}(j) = \max\{q: X_1[j\dd q] = X_2[j+d_2\dd q+d_2] = X_3[j+d_3\dd q+d_3]= \cdots =X_k[j+d_k\dd q+d_k]\}$. We can rewrite $\texttt{Slide}_{d_2, d_3, \ldots d_k}(j)$ as $\min_{i \in \{2, 3, \ldots, k\}} \max\{q : X_1[j\dd j+q] = X_i[j+d_i\dd q+d_i]\}$. Section 2.3 of~\cite{LMS98} shows that we can compute $\max\{q : X_1[j\dd j+q] = X_i[j+d_i;q+d_i]\}$ for any $i, j, d_i$ in $O(1)$ time after $O(|X_1|+|X_i|)$ preprocessing time. So we can compute $\texttt{Slide}_{d_2, d_3, \ldots d_k}(i)$ in $O(1)$ time after $O(\sum_i |X_i|)$ preprocessing time (recall that $k = O(1)$).

Let $L^{h}(d_2, d_3, \ldots d_k)$ be the largest value of $j$ such that $\ED(X_1[1\dd j], X_2[1\dd j+d_2], X_3[1\dd j+d_3], \ldots, X_k[1\dd j+d_k]) \leq h$. We have the following recurrence relation:

\[L^{h}(d_2, d_3 \ldots, d_k) = \texttt{Slide}\left(\max \left.
\begin{cases}
L^{h-1}(d_2+1, d_3+1 \ldots, d_k+1)\\
\max_i L^{h-1}(d_2, d_3, \ldots, d_i - 1, \ldots, d_k)\\
\max_{e \in \{0, 1\}^k} L^{h-w}(d_2 - e_2 + e_1, d_3 - e_2 + e_1, \ldots, d_k - e_k + e_1)\\
\qquad \text{ for }w = \min_{i:e_i = 1} |\{j \neq i: X_i[x_i] \neq X_j[x_j] \lor e_j = 0\}|
\end{cases}
\right\}
\right)
\]

The first case considers deleting from $X_1$, the second case considers deleting a character from any of $X_2, X_3, \ldots X_k$, and the third case considers inserting characters into some subset of the strings (for which $e_i = 0$), and then matching the inserted characters with a character in the remaining strings (for which $e_i = 1$), such that we use at most $w$ insertions or substitutions.

Each $L^h(\cdot)$ only depends on $O(1)$ other values, and so we can compute each value in $O(1)$ time. In turn, we can compute the values $L^h(d_2, d_3, \ldots d_k)$ for all $0 \leq h \leq d, 0 \leq d_2 + d_3 + \cdots + d_k \leq d$ in $O(d^k)$ time. Our output for the edit distance is the smallest $h$ such that $L^h(|X_2|-|X_1|, |X_3|-|X_1|, \ldots, |X_k|-|X_1|) \geq |X_1|$, or $\infty$ if $L^d(|X_2|-|X_1|, |X_3|-|X_1|, \ldots, |X_k|-|X_1|) < |X_1|$.
\end{proof}
The total number of strings in any $\tilde{\mathcal{W}}_{i, d}$ is $\sum_{\Delta \in \mathcal{D}} O(n(\tau + \Delta)/\epsilon d) = O(n\tau/\epsilon^3 d)$. In turn, combined with~\cref{lemma:k-lms}, the total time needed to compute $\tilde{\ED}$ for all $k$-tuples in $\tilde{\mathcal{W}}_{1, d} \times \tilde{\mathcal{W}}_{2, d} \times \cdots \times \tilde{\mathcal{W}}_{k, d}$ is $O(n^k \tau^k / \epsilon^{3k})$. There are $O(\log N)$ choices of $d$, so in total this step takes time $\tilde{O}(n^k \tau^k / \epsilon^{3k})$.

For Step 4, it takes $O(1)$ time to process the first case in the recurrence relation. For the second and third case, there are $O(\log N)$ values of $d$, $O(\log N / \epsilon)$ values of $\Delta$, and for each $i, x_i, \Delta$ there is 1 window $W_{i, \Delta, p}$ such that $\tilde{e}(W_i) = x_i$. 
\cref{lemma:k-lms} gives an $O(1)$-time method to determine if $W_1 = W_2 = \cdots = W_k$ in the second case, and we have precomputed all the necessary values in the third case. So, the time to compute each $c(x_1, x_2, \ldots x_k)$ is $O((\log^2 N/\epsilon)^k)$. 

The number of tuples $x_1, x_2, \ldots x_k$ such that $\sum_{i \neq 1} |x_i - x_1| \leq D$ is $O(ND^{k-1})$. Of these, fraction $O(\frac{1}{\lceil \epsilon D \tau/N \rceil^{k-1} \tau})$ satisfy that $x_1$ is a multiple of $\tau$ and $x_2 \ldots x_k$ are multiples of $\lceil \epsilon D \tau/N \rceil$. So the number of entries we need to compute is $O(N^k/\epsilon^{k} \tau^k)$, and the total time to compute all these entries is $\tilde{O}(N^k/\epsilon^{2k} \tau^k)$.

Putting it all together, Steps 3 and 4 dominate the runtime with total runtime $\tilde{O}(N^k/\epsilon^{2k} \tau^k + n^k \tau^k / \epsilon^{3k})$. Setting $\tau = (N/n\epsilon)^{1/2}$, we get an overall runtime of $\tilde{O}(N^{k/2}n^{k/2} \epsilon^{-5k/2})$.

We complete our analysis by showing that the final value computed by \textsc{$k$-ED-Alg} is close to $\ED(X_1, X_2, \ldots X_k)$.

\begin{lemma}\label{lemma:k-ed-approx}
\textsc{$k$-ED-Alg} outputs $\tilde{D}$ such that
\[\ED(X_1, X_2, \ldots X_k) \leq \tilde{D} \leq (1+19\epsilon k) \ED(X_1, X_2, \ldots X_k).\]
\end{lemma}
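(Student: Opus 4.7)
The plan is to establish the two inequalities in Lemma~\ref{lemma:k-ed-approx} separately. For the lower bound $\ED(X_1,\ldots,X_k)\le\tilde D$, I would argue that $\tilde D$ is at least $\ED(f)$ for some window-respecting alignment $f\in\mathcal F$ and then invoke Lemma~\ref{lemma:ked-window}. The key observation is that the per-window cost stored in Step~3, namely $\tilde{\ED}(\text{shift}_d(W_1),\ldots,\text{shift}_d(W_k))+\epsilon kd$, overestimates $\ED(W_1,\ldots,W_k)$: if $X^*$ realizes the median edit distance of the shifted tuple, then by the triangle inequality $\sum_i\ED(W_i,X^*)\le\sum_i\ED(W_i,\text{shift}_d(W_i))+\sum_i\ED(\text{shift}_d(W_i),X^*)\le k\epsilon d+\tilde{\ED}(\text{shift}_d(W_1),\ldots,\text{shift}_d(W_k))$. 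Combined with the fact that deletion transitions charge exactly the true cost and that the ``$W_1=\cdots=W_k$'' branch is only taken when the windows match literally, every DP path encodes a genuine window-respecting alignment of cost at most $\tilde D$, so $\tilde D\ge\min_{f\in\mathcal F}\ED(f)\ge\ED(X_1,\ldots,X_k)$.

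For the upper bound, I would fix the near-optimal alignment $f^*$ guaranteed by Lemma~\ref{lemma:ked-window}, for which $\ED(f^*)\le(1+13\epsilon k)\ED(X_1,\ldots,X_k)$, and show that the DP recovers a solution of cost at most $(1+O(k\epsilon))\ED(f^*)$. For each $j$, set $d_j^*:=\ED(f^*(W_{1,j}))$ and choose $d_j\in\mathcal D$ as the smallest power of two exceeding $d_j^*/(1-k\epsilon)$, so $d_j\le 2d_j^*/(1-k\epsilon)$ whenever $d_j^*\ge 1$. For $\epsilon$ small enough that $k\epsilon\le\tfrac14$, the triangle inequality yields $\ED(\text{shift}_{d_j}(f^*(W_{1,j})))\le d_j^*+k\epsilon d_j\le d_j$, so $\tilde{\ED}$ is finite for this tuple and the stored value is at most $d_j^*+2k\epsilon d_j\le(1+O(k\epsilon))d_j^*$. (The bound $d_j^*\le 2k\tau/\epsilon^2$ from the window-size restriction in Lemma~\ref{lemma:lowskew} ensures $d_j$ lies in $\mathcal D$.) Summing the contributions and the leftover terms $r_i(f^*)$, the DP certifies value at most $(1+O(k\epsilon))\ED(f^*)\le(1+O(k\epsilon))(1+13\epsilon k)\ED(X_1,\ldots,X_k)$.

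The main obstacle, and the place where I would spend the most care, is bookkeeping the constants so that the inflation from the triangle-inequality arguments plus the $13$ from Lemma~\ref{lemma:ked-window} combines to exactly $19$. A direct expansion gives $(1+5k\epsilon)(1+13k\epsilon)\le 1+19k\epsilon$ for small $k\epsilon$, so the choice of the multiplicative slack $1/(1-k\epsilon)$ when picking $d_j$ is tight, and I would tune it carefully. Two secondary technicalities I would verify are: first, that the $\sigma$-granular rounding of the state space in Step~4 contributes at most $O(\sigma\cdot|X_1|/\tau)=O(\epsilon D)=O(\epsilon\,\ED(X_1,\ldots,X_k))$ to the final cost, which is absorbed into the $(1+O(k\epsilon))$ factor; and second, that the starting positions of the windows realizing $f^*$ can be nudged onto the $\sigma$-grid used by the DP at an additive cost already accounted for by the $3(\sigma-1)$ slack built into the proof of Lemma~\ref{lemma:ked-window}. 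With these checks, summing gives the claimed bound $\tilde D\le(1+19\epsilon k)\ED(X_1,\ldots,X_k)$.
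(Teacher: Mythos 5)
Your proposal follows the paper's proof essentially verbatim: you sandwich the stored per-window cost between $\ED(W_{1,j},\ldots,W_{k,j})$ and $(1+O(\epsilon k))\ED(W_{1,j},\ldots,W_{k,j})$ via the triangle inequality through the shifted representatives, use the lower bound to argue every DP path certifies a genuine window-respecting alignment, and use the upper bound (with $d_j$ chosen as the smallest element of $\mathcal D$ satisfying $d_j(1-\epsilon k)\ge\ED(f^*(W_{1,j}))$, which is exactly the paper's condition $d_j\ge\ED(\cdot)+\epsilon k d_j$) to bound the DP path tracing $f^*$, then combine with Lemma~\ref{lemma:ked-window}. The reasoning and the choice of $d_j$ match the paper's; the constant bookkeeping you flag is precisely the paper's $(1+5\epsilon k)(1+13\epsilon k)\le 1+19\epsilon k$ step.
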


\begin{proof}
Consider any window-respecting alignment for which $f(W_{1,j}) = (W_{1,j}, W_{2,j}, \ldots, W_{k,j})$. If $\\ \ED(W_{1,j}, W_{2,j}, \ldots, W_{k,j}) > 0$, let $d_j$ be the smallest value in $\mathcal{D}$ such that $d_j \geq \ED(W_{1,j}, W_{2,j}, \ldots, W_{k,j}) + \epsilon k d_j$. By \cref{lemma:lz77-repetition}, for every $j$ and if $\epsilon$ is sufficiently small, by triangle inequality we have:

\begin{align*}
\ED(W_{1,j},W_{2,j}, \ldots, W_{k,j}) &\leq [\tilde{\ED}(\texttt{shift}_{d_j}(W_{1,j}),\texttt{shift}_{d_j}(W_{2,j}), \ldots, \texttt{shift}_{d_j}(W_{k,j}))]  + \epsilon k d_j \\
&\leq \ED(W_{1,j},W_{2,j}, \ldots, W_{k,j}) + 2 \epsilon k d_j\\
&\leq (1 + 5 \epsilon k)\ED(W_{1,j},W_{2,j}, \ldots, W_{k,j}).\\
\end{align*}

The first inequality implies that any path through the DP table for $c$ has total cost at least that of some window-respecting alignment, which by Lemma~\ref{lemma:ked-window} gives the first inequality in the lemma statement. The second inequality implies that for the best window-respecting alignment, there is a path through the DP table such that the cost of the path through the DP table is no more than $(1+5\epsilon)$ times the cost of the window-respecting alignment. Furthermore, this path only goes through points in the DP table such that $|x_i - x_1| \leq 4D$ for all $i$, i.e., is considered by \textsc{$k$-ED-Alg}. Combined with Lemma~\ref{lemma:ked-window} this gives the second inequality in the lemma statement if $\epsilon$ is sufficiently small.
\end{proof}

We can now compute a $(1+\epsilon)$-approximation of the edit distance by rescaling $\epsilon$ appropriately and running $\textsc{$k$-ED-Alg}$ for all $D$ that are powers of 2, giving Theorem~\ref{thm:k-ed}. One could also extract the alignment achieving this edit distance by using standard techniques to retrieve a path through the DP table, and applying these same techniques to the DP tables used in invocations of Lemma~\ref{lemma:k-lms} as a subroutine; we omit the details here.

%\section{Technical Contributions for k-Strings}
%Now that we have shown one of our $k$-strings results we will summarize the rest of our $k$-strings results. 
%\input{SummaryCompressedApproximate}
%\input{TechnicalSummaryHammingDistance}

\section{FPTAS For Center Distance}~\label{sec:k-ced}
The center distance problem is defined as follows: 
\begin{definition}
The center (edit) distance $\CED(X_1,\ldots,X_k)$ of $k$ strings $X_1,\ldots,X_k$ is defined as $\CED(X_1,\ldots,X_k) = \min_{X^*} \max_i \ED(X_i,X^*)$. That is, it is the smallest value $D$ such that by making at most $D$ edits to each $X_i$, we can transform them all into the same string $X^*$.
\end{definition}

In this section we prove Theorem~\ref{thm:k-center}:

\begin{theorem}\label{thm:k-center}
    Given $k=\Oh(1)$ straight-line programs $\G_{X_i}$ of total size $n$ generating strings $X_i$
    of total length $N>0$ and a parameter $\eps \in (0,1]$,
    an integer between $\CED(X_1,\ldots,X_k)$ and $(1+\eps)\CED(X_1,\ldots,X_k)$
    can be computed in $O\left(\eps^{-\Oh(k)}n^{k/2}N^{k/2+o(1)}\right)$ time.
\end{theorem}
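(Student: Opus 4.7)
The plan is to combine the window-respecting framework from \cref{sec:k-ed} with a Pareto-front DP over per-string edit-cost vectors. First, exponential search on the center distance $D$ (a $\log N$ overhead) reduces the problem to testing $\CED(X_1,\ldots,X_k) \le D$, which fixes the window size $\tau$ and granularity $\sigma$ as in \cref{section:wra}. The window-respecting approximation from \cref{lemma:ked-window} carries over almost verbatim: since its proof argues per-string bounds that are not coupled across the $X_i$'s, one obtains a window-respecting alignment $f$ whose center cost $\max_i\bigl(r_i(f)+\sum_j\ED(X^*_{i,j},f(W_{1,j})_i)\bigr)$ is within $(1+O(\eps k))$ of $\CED(X_1,\ldots,X_k)$, where $X^*_{i,j}$ denotes the substrings of some near-optimal center string induced by the alignment.

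The main conceptual obstacle is that the center cost is a $\max$ of sums, so it does not decompose additively across the $j$-index; a scalar-valued DP is therefore insufficient. Instead, the DP state $(x_1,\ldots,x_k)$ must store a set of Pareto-optimal $k$-vectors of cumulative per-string costs. This forces a strengthening of \cref{lemma:k-lms}: for each $k$-tuple of representative windows produced by \cref{lemma:lz77-repetition}, run essentially the same $O(d^k)$-time DP but, at every cell, maintain the full Pareto front of per-string cost vectors $(d_1,\ldots,d_k)$ reachable by some common substring $X^*_j$ within the feasibility budget $D$. Transitions in the main DP become Minkowski sums of Pareto fronts, followed by a dominance-pruning sweep.

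To keep each Pareto front of size $N^{o(1)}\eps^{-O(k)}$, we apply the standard FPTAS bucketing: round each coordinate to a power of $1+\alpha$, with $\alpha$ chosen small enough that the compounded multiplicative error across the $O(N/\tau)$ window-merges stays within $(1+\eps)$ after absorbing the loss by rescaling $\eps$. Granting this, the main DP has $O(N^k/\tau^k)\cdot N^{o(1)}\eps^{-O(k)}$ states, and the precomputation takes $n^k\tau^k\cdot N^{o(1)}\eps^{-O(k)}$ time by the analog of \cref{lemma:k-lms}. Balancing $\tau=(N/n)^{1/2}$ exactly as in \cref{thm:k-ed} then yields the claimed $O(\eps^{-O(k)}n^{k/2}N^{k/2+o(1)})$ bound. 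The hardest part will be the bookkeeping of the bucketed Pareto fronts across $O(\sqrt{Nn})$ window-merging stages: this is a $k$-dimensional analog of the classical knapsack FPTAS argument, and its careful accounting is precisely what produces the subpolynomial $N^{o(1)}$ slack (rather than a clean $N^{k/2}$) in the final runtime.
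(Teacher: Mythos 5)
You have the right high-level ingredients — window-respecting alignments, sets of per-string cost vectors (the paper calls them \emph{edit tuples}, essentially a Pareto front), Minkowski-sum transitions, and balancing $\tau\approx\sqrt{N/n}$ — but there is a genuine gap in how you keep the Pareto fronts small, and it is exactly the place where the paper's approach diverges from yours.

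You claim that a single round of FPTAS bucketing (rounding each coordinate to a power of $1+\alpha$, with $\alpha$ chosen so that $(1+\alpha)^{O(N/\tau)}\le 1+\eps$) keeps each Pareto front of size $N^{o(1)}\eps^{-O(k)}$. This is false. With $\alpha = \Theta(\eps\tau/N)$, a coordinate ranging over $[1,D]$ falls into $\Theta\bigl(\alpha^{-1}\log D\bigr)=\Theta\bigl(\tfrac{N}{\eps\tau}\log N\bigr)$ buckets, and a Pareto front in a $k$-dimensional grid with $L$ values per axis can have size $L^{k-1}$. After setting $\tau=\sqrt{N/n}$ this is $(\sqrt{Nn}\log N/\eps)^{k-1}$ — a polynomial, not $N^{o(1)}$. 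The same calculation kills additive bucketing: rounding to multiples of $\sigma=\Theta(\eps D\tau/N)$ gives $D/\sigma=\Theta(N/(\eps\tau))$ buckets per coordinate. Since the main DP's state count is already at the $N^{k/2}n^{k/2}$ budget, you cannot afford a polynomial factor for the Pareto-front size; the whole bound collapses.

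The paper avoids this by replacing your single-level window scheme with a recursion of depth $\ell=O(\log\log N)$, using window sizes $\tau_0>\tau_1>\cdots>\tau_\ell$ with $\tau_m/\tau_{m+1}=\Theta(N^{1/\log\log N})$ and a geometrically shrinking error budget $\eps_{m+1}=\eps_m/(16k)$. At level $m$, the granularity $\sigma_m(d)$ has a ``local'' term $\propto\eps_{m+1}d\tau_{m+1}/\tau_m$ and a ``global'' term $\propto\eps_{m+1}D\tau_{m+1}/|X_1|$, so each level's edit-tuple sets are confined to a grid of side $O(\tau_m/(\eps_{m+1}\tau_{m+1}))=N^{o(1)}\eps^{-O(1)}$; the base case (windows of length $N^{1/\log\log N}$) is handled exactly via the $O(L^{2k})$-time algorithm of Nicolas--Rivals. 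The geometric-mean balancing of ``number of distinct substrings'' against ``number of DP cells'' is then applied at \emph{every} level, not just once. This recursive telescoping is what actually produces the $N^{o(k)}$ slack you correctly anticipated but cannot obtain with one level of bucketing. Without it, your plan gives a correct algorithm but with a runtime like $\Ohtilde(n^{k/2}N^{k/2}\cdot(Nn)^{\Theta(k)})$, far from the claimed bound.

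One smaller caveat: you also assert that \cref{lemma:ked-window} ``carries over almost verbatim'' because its proof ``argues per-string bounds that are not coupled across the $X_i$'s.'' That is too quick. Its statement and proof are about the \emph{sum} $\sum_j\ED(W_{1,j},X_{2,j},\ldots,X_{k,j})$; adapting it to give simultaneous per-string (vector) guarantees is precisely the content of the paper's \cref{lemma:kced-window}, whose proof does track each coordinate separately and pays for that with a more delicate choice of windows (the factor $2\max_{i'}\ED(X_{i',j},X^*_j)$ etc.). It can be done, but it is not verbatim.
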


Prior to our work, the best known algorithm result for the center distance problem was the exact $O(N^{2k})$-time algorithm of~\cite{NicolasRivals05}. Our framework for the algorithm is similar to the framework from the previous section which uses window-respecting alignments. 

\newcommand{\tup}[1]{\texttt{tup}_{#1}}
\newcommand{\tupd}{\tup{D}(X_1, X_2, \ldots X_k)}
\newcommand{\norm}[1]{\left|\left|#1\right|\right|}

Our algorithm will actually solve a more general problem of computing an approximation of a set of values which we call the \textit{edit tuples}. We again assume $\CED(X_1, \ldots, X_k)$ lies between $D$ and $2D$ for some known (power of 2) $D$. 

\begin{definition}
Given strings $X_1, X_2, \ldots, X_k$, an edit tuple of these strings is a vector $v \in \mathbb{Z}_{\geq 0}^k$ such that there exists $X^*$ for which $\ED(X_i, X^*) \leq v_i$ for all $i$. We denote the set of all edit tuples in $\{0, 1, \ldots, D\}^k$ of $X_1, X_2, \ldots X_n$ by $\tupd$.

We say that $S$ is a $\Delta$-approximation of $\tupd$ if for each $v \in S$, there is a vector $v' \in \tupd$ such that $v' \leq v$, and for each $v \in \tupd$, there is a vector $v' \in S$ such that  $v' \leq v + \Delta \cdot \textbf{1}$. Here $a \leq b$ denotes $a_i \leq b_i$ for all $i$ and $\textbf{1}$ denotes the all ones vector.
\end{definition}

We will use again use the window-respecting alignment framework. However, our algorithm is now recursive, and thus we need to be careful about choosing the windows to operate with in each level of recursion. Let $\ell = O(\log \log N)$ be a parameter and $\tau_0 = N > \tau_1 > \cdots > \tau_\ell = N^{1/\log \log N}$ be a sequence such that for all $i < \ell$, $\tau_m / \tau_{m-1} = \Theta(N^{1/\log \log N})$ and is integer (that is, these ratios are not necessarily the same but are all within a constant factor of $N^{1/\log \log N}$). We will also eventually choose a sequence of error parameters for each level $\epsilon_0, \epsilon_1, \ldots, \epsilon_\ell$. Let $\mathcal{D} = \{1, 2, 4, \ldots, N\}$, and for each $d \in \mathcal{D}$ let $\sigma_m(d):= \max\{\frac{\epsilon_{m+1} d \tau_{m+1}}{\tau_{m}} + \frac{\epsilon_{m+1} D \tau_{m+1}}{|X_1|}, 1\}$ rounded down to the nearest power of 2. For each $i > 0$, for $\tau_m$, each $d \in \mathcal{D}$, the corresponding $\epsilon_m$, $\sigma_m(d)$, and $R_d$ defined as the function rounding down to the nearest multiple of $\sigma_m(d)$, we define windows in each string just as in Section~\ref{sec:k-ed}. In particular, for $X_1$ we have windows $W_{1, m, p}$ that are again just a partition of $X_1$ into substrings of length $\tau_m$, and for $X_2, \ldots, X_k$ we have windows $W_{i, m, d, \Delta, p} = X_i[p\sigma_m(d) + 1\dd R_d(p\sigma_m(d) + \tau_m + \Delta)]$, where the set of possible $\Delta$ is defined by $\epsilon_m$ and $\tau_m$. We will refer to these as the windows at level $m$. Note that we are using the same guess $D$ to define windows at all levels of recursion, even though at lower levels of recursion the center distance between the substrings we consider is likely to be much smaller even if our guess is accurate at the first level.

We note some properties of our recursion that motivate this choice of windows: In the $i$th level of recursion, if our subproblems' input is $X_1', \ldots, X_k'$, then we will have the guarantee that $X_1'$ is a window in $X_1$ of length $\tau_m$ and $X_2', \ldots, X_k'$ are one of the windows in $X_2, \ldots, X_k$ corresponding to $\tau_m$. When we are solving a subproblem involving a length $\tau_m$ substring of $X_1$, we will use the windows defined by $\tau = \tau_{m+1}$. In addition, when we are solving this subproblem, by our requirement that all $\sigma_m(d)$ be a power of 2, we have the following property: the windows defined on the full strings $X_1, \ldots X_k$ for $\tau_{m+1}$ that are contained within $X_1', \ldots, X_k'$, are equivalent to the windows we would define within $X_1', \ldots, X_k'$ if we used the same choice of parameters $\tau_{m+1}, \sigma_{m+1}$. We will refer to this set of windows as the windows at level $m+1$ restricted to $X_1', \ldots, X_k'$. 

To give some intuition behind the choice of $\sigma_m(d)$, which is crucial for our analysis: The term with $d$ is a ``local'' term. It contributes to the approximation error locally, only adding error proportional to our center distance estimate for the current tuple of windows, and also helps us keep the number of entries in the DP table within one call small. The term with $D$ is a ``global'' term. It contributes to the approximation error globally; across all recursive calls, the final approximation error accumulated at the top level due to this term will be something like $\epsilon D$. It also keeps the number of windows across all recursive calls small.

Now, for a fixed level $m$ and the corresponding windows, we can define window respecting alignments of $X_1, \ldots, X_k$ identically to~\cref{def:wra}. If we are considering a window-respecting alignment of substrings $X_1', \ldots, X_k'$ instead of the full strings, we simply restrict to the windows contained within these substrings, and then define window-respecting alignments of $X_1', \ldots, X_k'$ as before using these sets of windows. We define the edit tuples of a window-respecting alignment $f$, $\tup{D}(f)$, to be:

\[ [\conv_{j \in [J]} \tup{\max_i |f(W_{1,j})_i|}(f(W_{1,j})) \conv \{ r(f)\}] \cap \{0, 1, \ldots D\}^k\]

Where $\conv$ is the convolution of sets of vectors, i.e. $\conv_i S_i = \{\sum_i v_i | v_i \in S_i \forall i\}$, and $r(f)$ is the vector whose $i$th entry is $r_i(f) = |X_i| - \sum_j |f(W_{1,j})_i|$, i.e., the number of characters in $X_i$ not in any window. Similarly to~\cref{lemma:ked-window}, we can show window-respecting alignments approximate the best standard alignment. 

\begin{lemma}\label{lemma:kced-window}
Let $d$ be any value in $\mathcal{D}$. Let $X_1', \ldots, X_k'$ be windows in $X_1, \ldots, X_k$ at the same level $m$. Let $\mathcal{F}$ be the set of window-respecting alignments of $X_1', \ldots, X_k'$, using the windows at level $m+1$ parametrized by $d$, restricted to $X_1', \ldots, X_k'$. Then we have that $\cup_{f \in \mathcal{F}} \tup{3d}(f)$ is a $(13 \epsilon_{m+1} k d + 6 \epsilon_{m+1} D \tau_{m} / |X_1|)$-approximation of $\tup{2d}(X_1', X_2', \ldots X_k')$.
\end{lemma}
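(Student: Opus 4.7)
The plan is to mirror the proof of \cref{lemma:ked-window}, lifted from scalars to edit tuples, and to verify the two directions of the $\Delta$-approximation separately.

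For the forward direction (every tuple in $\bigcup_{f \in \mathcal{F}} \tup{3d}(f)$ dominates one in $\tup{2d}(X_1',\ldots,X_k')$), I would unpack $v \in \tup{3d}(f)$ via the convolution definition: there are window-level edit tuples $v_j$ witnessed by center strings $X_j^*$ with $\ED(f(W_{1,j})_i, X_j^*) \le (v_j)_i$, such that $v = \sum_j v_j + r(f)$ (capped coordinate-wise at $3d$). Concatenating the witnesses to $X^* = \bigcirc_j X_j^*$ and applying the triangle inequality gives $\ED(X_i', X^*) \le \sum_j (v_j)_i + r_i(f) = v_i$ for every $i$; capping each coordinate at $2d$ (which only tightens $v' \le v$) places the resulting tuple in $\tup{2d}(X_1',\ldots,X_k')$.

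For the reverse direction (every true tuple is approximated by a WRA-induced tuple), fix $v \in \tup{2d}(X_1',\ldots,X_k')$ with center witness $X^*$, so $\ED(X_i',X^*) \le v_i \le 2d$. Since the per-string guarantee of \cref{lemma:lowskew} applies to each $i$ simultaneously under a single partition of $X^*$, invoking it with $\tau_{m+1}$ and $\epsilon_{m+1}$ produces substring partitions $\{X_{i,j}\}$ of $X_i'$ with $|X_{1,j}| = \tau_{m+1}$, $|X_{i,j}| \le 2\tau_{m+1}/\epsilon_{m+1}^2$, and
\[
\sum_j \ED(X_{i,j}, X_j^*) + \Bigl(|X_i'| - \sum_j |X_{i,j}|\Bigr) \le (1+3\epsilon_{m+1})\, \ED(X_i', X^*).
\]
Following the second half of the proof of \cref{lemma:ked-window}, I would snap each $X_{i,j}$ to the closest admissible level-$(m+1)$ window $W_{i,m+1,d,\Delta_j,p_j}$ (or to the empty window if $X_{i,j}$ is short relative to the local center distance). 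The bounded skew of $X_{i,j}$ ensures that a suitable $\Delta_j$ lies in the admissible set, while the start-position rounding to multiples of $\sigma_m(d)$ costs at most $\sigma_m(d)-1$ characters per string per window. Defining $f(W_{1,j}) := (W_{1,j}, W_{2,j}, \ldots, W_{k,j})$ and reading off the edit tuple $v'$ from the concatenation of window-level centers, per-coordinate triangle inequality yields
\[
v'_i \le (1 + O(\epsilon_{m+1} k))\, \ED(X_i', X^*) + O\!\left(k \cdot \tfrac{\tau_m}{\tau_{m+1}} \cdot (\sigma_m(d)-1)\right).
\]
Substituting the two-part definition of $\sigma_m(d)$ and using $\ED(X_i',X^*) \le 2d$, the local term is absorbed by $13 \epsilon_{m+1} k d$ and the global rounding term by $6 \epsilon_{m+1} D \tau_m/|X_1|$, giving the claimed bound.

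The main obstacle is the bookkeeping required to cleanly separate the \emph{local} error (scaling with the current subproblem distance $d$) from the \emph{global} error (scaling with $D \tau_m/|X_1|$). This separation is essential for the recursive algorithm: if the rounding error scaled with $d$ at every one of the $O(\log\log N)$ recursion levels, it would compound multiplicatively and destroy the final approximation guarantee. The two-summand definition of $\sigma_m(d)$ is engineered precisely so that its contribution decomposes into a $d$-dependent part absorbed locally and a $D$-dependent part that telescopes across recursion levels; verifying this clean decomposition, together with chasing the precise constants $13$ and $6$, constitutes the most technical portion of the argument.
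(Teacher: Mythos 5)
Your proposal follows essentially the same route as the paper's proof: forward direction by concatenating window-level center witnesses, reverse direction by invoking the partitioning procedure from \cref{lemma:lowskew} to build a near-optimal window-respecting alignment, then accounting for snapping error via the two-summand definition of $\sigma_m(d)$. You correctly identify the crux (the local/global error split and why $\sigma_m(d)$ is engineered that way), and your error accounting $J\cdot\sigma_m(d) = \epsilon_{m+1}d + \epsilon_{m+1}D\tau_m/|X_1|$ is the right decomposition.

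Two small points worth tightening. First, in the forward direction, ``capping each coordinate at $2d$ places the resulting tuple in $\tup{2d}$'' is not literally valid: a coordinate-wise cap of an edit tuple is not itself an edit tuple unless the witness center already achieves $\ED(X_i', X^*)\le 2d$ for every $i$. The paper sidesteps this by only claiming that $v$ is \emph{an} edit tuple of $X_1',\ldots,X_k'$; that implicit assumption about $2d$ being large enough is inherited from the calling context rather than proved inside the lemma, so your phrasing should match that and not introduce the capping step as if it were automatic. Second, \cref{lemma:lowskew} as stated is about the \emph{median} string, whereas here $X^*$ is a \emph{center} witness; the paper is careful to say ``using the same procedure as in \cref{lemma:lowskew}'' rather than citing the lemma directly, and it applies it in two stages (first realigning $X^*$ against $X_1'$ to fix $\{X^*_j\}$, then choosing $X_{i,j}$ for $i>1$ against that fixed partition). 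Your ``single simultaneous invocation'' is morally the same but a reader could object that the lemma statement does not apply verbatim; being explicit that you reuse the proof, not the lemma, matches the paper and avoids the objection.
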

\begin{proof}
First, we will show that for any $f$ and $v \in \tup{3d}(f)$, $v$ is also an edit tuple of $X_1', X_2', \ldots, X_k'$. Let $J = \tau_m/\tau_{m+1}$. For $v \in \tup{3d}(f(W_{1,j}))$, it can be decomposed as $\sum_{j \in [J]} v_j + r(f)$, where $v_j$ is an edit tuple of $f(W_{1,j})$. By deleting the $r_i(f)$ characters in each $X_i$ that are not in any $W_{i,j}$, we get the string $\bigcirc_j W_{i,j}$ for each $i$, and $\sum_{j \in [J]} v_j$ is clearly a valid edit tuple for these strings. So $v$ is an edit tuple of $X_1', X_2', \ldots, X_k'$. 

It now suffices to show that for any edit tuple $v$ of $X_1', \ldots X_k'$ in $\{0, 1, \ldots, 2d\}^k$, there exists $f$ and  $v'$ in $\tup{3d}(f)$ such that $v' \leq v + (9 \epsilon_{m+1} k d + 6 \epsilon_{m+1} D \tau_{m} / |X_1|) \cdot \textbf{1}$. Fix any such $v$. We partition $X_1'$ into substrings of length $\tau_{m+1}$, $\{X_{1,j}\}_{j \in [J]}$. Let $X^*$ be the string such that $\ED(X_i', X^*) \leq v[i]$ for all $i$. Using the same procedure as in \cref{lemma:lowskew}, we can find a partition of $X^*$ into substrings $\{X^*_j\}$ such that each $X^*_j$ has length at most $2\tau_{m+1}/\epsilon_{m+1}$ and:

\[ \sum_{j \in [J]} \ED(X_{1, j}, X^*_j) \leq (1+3\epsilon_{m+1}) v[1] \leq v[1] + 6\epsilon_{m+1} d. \]

Given this partition, again using the same procedure as in \cref{lemma:lowskew}, we can find disjoint substrings of $X_i$, $X_{i,j}$, for all $i > 1$ such that each $X_{i,j}$ has length at most $2\tau/\epsilon_{m+1}^2$ and

\[ \sum_{j \in [J]} \ED(X_{i, j}, X^*_j) + |X_i| - \sum_{j \in [J]} |X_{i,j}| \leq (1+3\epsilon_{m+1}) v[i] \leq v[i] + 6\epsilon_{m+1} d. \]

Now, let $W_{1,j} = X_{1,j}$ for all $j$. Similarly to \cref{lemma:ked-window}, for each $i$,
if $|X_{i, j}| \leq 2 \epsilon_{m+1} \max_{i'} \ED(X_{i',j}, X^*_j) + 6 \epsilon_{m+1}  D \tau_{m+1} / |X_1|$, let $W_{i,j}$ be the empty window ``starting'' at index $\nd(W_{i, j-1})+1$ (or index $1$ if $j = 1$). Otherwise, let $W_{i,j}$ be the longest window $W_{i, m+1, \Delta, p}$ that is a substring of $X_{i,j}$. Note that $|X_{i,j}|$ and $|W_{i,j}|$ differ by at most $2\max_{i'} \ED(X_{i',j}, X^*_j)$. If $\epsilon_{m+1}$ is a sufficiently small constant, similarly to the proof of Lemma~\ref{lemma:ked-window}, this implies there is a choice of $W_{i,j}$ such that $|X_{i,j}| - |W_{i,j}| \leq \epsilon_{m+1} \max_{i'} \ED(X_{i',j}, X^*_j) + 3(\sigma_m(d)-1) \leq \epsilon_{m+1} \max_{i'} \ED(X_{i',j}, X^*_j) + 6[\frac{\epsilon_{m+1} d \tau_{m+1}}{\tau_{m}} + \frac{\epsilon_{m+1} D \tau_{m+1}}{|X_1|}]$. Note that $\sum_{j \in [J]} \max_i \ED(X_{i,j}, X^*_j) \leq k\norm{v}_\infty \leq 2kd$. This implies $r_i(f) - (|X_i| - \sum_{j \in [J]} |X_{i,j}|)$ is at most $5 \epsilon_{m+1} k d$.
We also have by triangle inequality that:

\[\ED(W_{i,j}, X^*_j) \leq \ED(X_{i,j}, X^*_j) +  \epsilon \max_{i'} \ED(X_{i',j}, X^*_j) + 6 [\frac{\epsilon_{m+1} d \tau_{m+1}}{\tau_{m}} + \frac{\epsilon_{m+1} D \tau_{m+1}}{|X_1|}].\]

Now consider the alignment that chooses $f(W_{1,j}) = (W_{1,j}, W_{2,j}, \ldots, W_{k,j})$. For each $j$, by the above inequalities, one edit tuple for $f(W_{1,j}) = (W_{1,j}, W_{2,j}, \ldots, W_{k,j})$ arising from a window-respecting alignment is element-wise at most:

\begin{align*}
(&\ED(X_{1,j}, X^*_j),\\
&\ED(X_{2,j}, X^*_j) +  \epsilon_{m+1} \max_i \ED(X_{i,j}, X^*_j) + 6  [\frac{\epsilon_{m+1} d \tau_{m+1}}{\tau_{m}} + \frac{\epsilon_{m+1} D \tau_{m+1}}{|X_1|}],\\
&\ldots,\\
&\ED(X_{k,j}, X^*_j) + \epsilon_{m+1} \max_i \ED(X_{i,j}, X^*_j) + 6  [\frac{\epsilon_{m+1} d \tau_{m+1}}{\tau_{m}} + \frac{\epsilon_{m+1} D \tau_{m+1}}{|X_1|}])
\end{align*}

 So summing up these edit tuples, and adding $r(f)$, we get a vector arising from a window-respecting alignment that is at element-wise at most $v + (\epsilon_{m+1} k \norm{v}_\infty + 11 \epsilon_{m+1} k d + 6 \epsilon_{m+1}  D \tau_{m} / |X_1|) \cdot \textbf{1} \leq v + (13 \epsilon_{m+1} k d + 6 \epsilon_{m+1} D \tau_{m} / |X_1|) \cdot \textbf{1}$.
\end{proof}

We are now ready to state our algorithm. Our recursive algorithm for computing a sparse approximation of $\tupd$, denoted \textsc{$k$-CED-Alg}, is defined as follows:

\begin{mdframed}
\textsc{$k$-CED-Alg}($X_1', X_2', \ldots, X_k'$, $d$, $m$):

Let $\mathcal{W}_i$ denote the windows at level $m+1$ parametrized by $d$ restricted to $X_1', X_2', \ldots, X_k'$, and $\start, \nd$ be the functions that take a window and gives its starting/ending index in the corresponding $X_i'$. We solve the following dynamic program:
    \begin{align*}
    c(x_1, x_2, \ldots x_k) =  & \left(\cup_{i > 1} c(x_1, x_2, \ldots, x_i - \sigma_{m+1}, \ldots, x_k) \conv \{(0, 0, \ldots, \sigma_{m+1}, \ldots, 0)\}\right) \cup\\
     &(\cup_{W_1, W_2, \ldots W_k \in \mathcal{W}_{1} \times \mathcal{W}_{2} \times \cdots \times \mathcal{W}_{k}: \forall i, \nd(W_i) = x_i} [c(\start(W_1)-1, \start(W_2)-1, \ldots, \start(W_k)-1)\\
    &\conv \cup_{d' \in \mathcal{D}: d' \leq 2d} k\textsc{-CED-Alg}(W_1, W_2, \ldots, W_k, d', m+1)])
    \end{align*}
    For every $k$-tuple such that $x_1$ is a multiple of $\tau_{m+1}$, $x_2, \ldots, x_k$ are all multiples of $\sigma_{m+1}$, and such that $|x_i - x_1| \leq 3d + \frac{\epsilon_m D \tau_m}{|X_1|}$. The base case for the dynamic program is $c(0, 0, \ldots, 0) = \{(0, 0, \ldots, 0)\}$. 
    
    After computing each entry $c(x_1, x_2, \ldots x_k)$, we remove all elements of $c(x_1, x_2, \ldots x_k)$ not in $\{0, 1, \ldots 3d + \lfloor \frac{\epsilon_m D \tau_m}{|X_1|} \rfloor \}^k$. After getting a set of edit tuples from a call to $k$-CED-Alg, we round each coordinate of each vector up to the nearest multiple of $\sigma_m(d)$ before taking the convolution.
    
    Our final output is $c(|X_1'|, |X_2'|, \ldots, |X_k'|)$, and then return this set of vectors.
\end{mdframed}

Our base case will be when $m = \ell$, and we have that $|X_1'| = N^{1 / \log \log N}$ and all $|X_i'|$ are at most $2N^{1 / \log \log N}/\epsilon_\ell^2$. To handle the base case, we will enumerate all substrings of length at most $2N^{1 / \log \log N}/\epsilon_\ell^2$ of each of $X_1, \ldots, X_k$, and compute their edit tuples using, e.g., the exact algorithm of~\cite{NicolasRivals05}. Our top-level recursive call is to $k$\textsc{-CED-Alg}($X_1, \ldots, X_k, D, 0$). 

To keep the algorithm's description consistent across levels, in addition to assuming $X_1$'s length is a multiple of $\tau_1$, we will assume that $X_2, \ldots, X_k$ are multiples of $\sigma_1(D)$; we can enforce this assumption by padding each of $X_2, \ldots, X_k$ with at most $\sigma_1(D)$ copies of a new dummy character. This cannot decrease the center distance and the total increase in center distance due to this padding is at most $\sigma_1(D)$, which contributes an additive $o(\epsilon_0 D)$ to our approximation factor, only at the top level of recursion. By construction, at lower levels of the recursion each $X_i'$ will have length that is a multiple of $\sigma_m(d)$ for $i > 1$, so by this assumption we no longer need to worry about rounding the indices of the value in DP table we output at any level.

\subsection{Approximation Guarantee}
We first prove the approximation guarantee of \textsc{$k$-CED-Alg}, as it will be necessary for our runtime analysis to specify what choice of $\epsilon_0$ to $\epsilon_\ell$ is needed for the desired approximation guarantee.

\begin{lemma}\label{lemma:k-ced-apx}
Let the sequence $\epsilon_0, \ldots, \epsilon_\ell$ satisfy $\epsilon_0 \leq 1$ and $\epsilon_{m+1} = \frac{1}{16k} \epsilon_{m}$ for all $m$. Then at level $m$ of the recursion, each invocation of \textsc{$k$-CED-Alg}($X_1', X_2', \ldots, X_k'$, $d$, $m$) returns a set of edit tuples that is a 
$(\epsilon _ m d + \frac{\epsilon_m D \tau_m}{|X_1|})$-approximation of $\tup{2d}(X_1', \ldots, X_k')$.
\end{lemma}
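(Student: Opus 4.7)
I would prove the lemma by downward induction on $m$. The base case $m = \ell$ handles a constant-sized problem (the input strings have length $O(N^{1/\log\log N}/\eps_\ell^2)$) and computes edit tuples exactly via \cite{NicolasRivals05}, so the returned set equals $\tup{2d}(X_1', \ldots, X_k')$ and gives a $0$-approximation, well within the stated bound. The bulk of the work is the inductive step.

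Assuming the claim at level $m+1$, I would establish both directions of the approximation definition at level $m$. For the completeness direction, I fix $v \in \tup{2d}(X_1',\ldots,X_k')$ and invoke \cref{lemma:kced-window} to obtain a window-respecting alignment $f$ together with $v' \in \tup{3d}(f)$ satisfying $v' \le v + \Delta_0\cdot\mathbf{1}$, where $\Delta_0 = 13\eps_{m+1}kd + 6\eps_{m+1}D\tau_m/|X_1|$. Decomposing $v' = \sum_{j=1}^J v_j + r(f)$ with $J = \tau_m/\tau_{m+1}$, I choose $d_j' \in \mathcal{D}$ to be the smallest power of $2$ with $2d_j' \ge \max_i v_j[i]$. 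Because $v_j[i] \le v'[i] \le 3d$, a short check shows $d_j' \le 2d$, so $d_j'$ is a legal parameter in the algorithm's recursion. The inductive hypothesis then supplies, for each subproblem, an output vector $\tilde v_j$ with $\tilde v_j \le v_j + (\eps_{m+1}d_j' + \eps_{m+1}D\tau_{m+1}/|X_1|)\cdot\mathbf{1}$.

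The crux is summing all error sources carefully. The total inductive error across subproblems is at most $\eps_{m+1}\sum_j d_j' + J\cdot \eps_{m+1}D\tau_{m+1}/|X_1|$, which I would bound using the coordinate-swap
\[
\sum_j d_j' \;\le\; \sum_j \max_i v_j[i] \;\le\; \sum_i\sum_j v_j[i] \;\le\; \sum_i v'[i] \;\le\; 3kd,
\]
yielding at most $3\eps_{m+1}kd + \eps_{m+1}D\tau_m/|X_1|$. The rounding of each recursive output to multiples of $\sigma_m(d)$ before convolving contributes an additional $J\sigma_m(d) \le \eps_{m+1}d + \eps_{m+1}D\tau_m/|X_1|$ per coordinate, directly from the definition of $\sigma_m$ and $J$. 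Combining these with $\Delta_0$ and substituting $\eps_{m+1} = \eps_m/(16k)$ gives a total additive error of at most $\eps_m d + \eps_m D\tau_m/|X_1|$, with room to spare for the universal constants. For the soundness direction, every $v$ in the returned set is a convolution of (rounded-up) recursive outputs together with deletion vectors $(0,\ldots,\sigma_{m+1},\ldots,0)$; by the inductive hypothesis each recursive output dominates a genuine edit tuple of its window-tuple, so $v$ dominates a valid edit tuple of $(X_1',\ldots,X_k')$, and the output filter certifies that it lies in the required coordinate range.

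The main obstacle is exactly this error bookkeeping. Three distinct sources (window-respecting approximation, inductive error across $J$ subproblems, and $\sigma_m(d)$-rounding) each contribute a term of order $\eps_{m+1}kd$ or $\eps_{m+1}D\tau_m/|X_1|$, and verifying that a single factor-$16k$ shrinkage from $\eps_m$ to $\eps_{m+1}$ absorbs all the cumulative $O(k)$-factor losses at every level is delicate. The key combinatorial observation enabling this is that the per-window edit-tuple coordinates sum coordinate-wise to at most $v'[i] \le 3d$, which is why $\sum_j d_j'$ is bounded by $3kd$ rather than something like $2dJ$ (which would blow up catastrophically through the recursion). Ensuring that the adaptively chosen $d_j'$ remains within the algorithm's permitted range $d' \le 2d$ is likewise a small but essential piece of the argument.
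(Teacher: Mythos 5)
Your proof is essentially the paper's: downward induction from the exact base case at $m=\ell$; invoke Lemma~\ref{lemma:kced-window} to pass from an edit tuple of the input strings to one of a window-respecting alignment; decompose it as $\sum_j v_j + r(f)$; apply the inductive hypothesis to each window-tuple with an adaptively chosen $d_j$; and close the recursion via the coordinate-swap bound $\sum_j d_j' \le \sum_i \sum_j v_j[i] = O(kd)$, adding the $\sigma_m(d)$-rounding contribution and the soundness direction at the end. The only cosmetic deviations are the order in which you account for the window-respecting error (you apply \cref{lemma:kced-window} up front, the paper defers it to the end) and your choice of $d_j'$ via $2d_j' \ge \max_i v_j[i]$ rather than the paper's $d_j \ge \norm{v_j}_\infty$; both choices keep the recursive parameter within the permitted range $\le 2d$ and give the same $O(\eps_{m+1}kd)$ aggregate. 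Your observation that the constants have some slack matches the paper's own looseness in the final accounting.
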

\begin{proof}

We proceed by induction. Clearly the guarantee holds for the base case $m = \ell$, since we solve the base cases using exact algorithms.

Inductively, assuming at level $m+1$, any edit tuple generated returned by \textsc{$k$-CED-Alg} is element-wise greater than some edit tuple of the corresponding windows, by an argument similar to the first part of the proof of Lemma~\ref{lemma:kced-window} the same property holds at level $m$. So we just need to show that each edit tuple returned by \textsc{$k$-CED-Alg} is not too large an overestimate of some edit tuple of its input strings.

Take any edit tuple $v$ for any window-respecting alignment $f$. Assume the approximation guarantee holds for calls made at level $m+1$. We show that for the corresponding path through the DP table for $c$, there is a vector close to $v$ in the edit tuples generated by this path. $v$ can be decomposed as $\sum_{j \in [J]} v_j + r(f)$ where $v_j$ is an edit tuple of $f(W_{1,j})$. Let $d_j$ be the smallest value in $\mathcal{D}$ such that $d_j \geq \norm{v_j}_\infty$. By our inductive hypothesis, for each $j$ we get a $(\epsilon_{m+1} d_j + \frac{\epsilon_{m+1} D \tau_{m+1}}{|X_1})$-approximation of the edit tuples of $\texttt{tup}_{d_j}(f(W_{1,j}))$ from the call to $k$-\textsc{CED-Alg}$(f(W_{1,j}),  d_j, m+1)$, which includes a vector $v'_j$ that is element-wise at most $v_j + (\epsilon_{m+1} d_j + \frac{\epsilon_{m+1} D \tau_{m+1}}{|X_1}) \cdot \textbf{1} \leq v_j + (2 \epsilon_{m+1} \norm{v_j}_\infty + \frac{\epsilon_{m+1} D \tau_{m+1}}{|X_1})$ if $\epsilon$ is sufficiently small. In addition, the sum of the vectors contributed by the first case in the recurrence relation for $c$ is $r(f)$. So there is an edit tuple computed by our algorithm for this path that is element-wise less than:

\[\sum_j [v_j +  (2 \epsilon_{m+1} \norm{v_j}_\infty  + \frac{\epsilon_{m+1} D \tau_{m+1}}{|X_1|})\cdot \textbf{1}] + r(f) \leq v + \left (4 \epsilon_{m+1} k d +\frac{\epsilon_{m+1} D \tau_{m}}{|X_1|} \right) \cdot \textbf{1}. \]

After accounting for the approximation error of window-respecting alignments due to  Lemma~\ref{lemma:kced-window} and the rounding step, the additive error is increased to at most $\left (16 \epsilon_{m+1} k d +8 \frac{\epsilon_{m+1} D \tau_{m}}{|X_1|} \right) \cdot \textbf{1} \leq (\epsilon _ m d + \frac{\epsilon_m D \tau_m}{|X_1|}) \cdot \textbf{1}$ as desired.

Finally, note that since we only remove vectors with values larger than $3d+\frac{\epsilon_m D \tau_m}{|X_1|}$ and assume $\epsilon_0$ (and thus all $\epsilon_m$) is at most 1, we do not remove any vector that would be in a $(\epsilon_m d + \frac{\epsilon_m D \tau_m}{|X_1|})$-approximation of $\tup{2d}(X_1', \ldots, X_k')$.
\end{proof}

If we set $\epsilon_0 = \epsilon - o(1)$, then after accounting for the $o(\epsilon_0 D)$ error introduced by padding $X_2$ to $X_k$, the smallest $\ell_\infty$-norm of any vector in the output of $k$\textsc{-CED-Alg}$(X_1, \ldots, X_k, D, 0)$ gives a $(1+\epsilon)$-multiplicative approximation of the center distance as desired.

\subsection{Runtime Analysis}

We now bound the runtime of \textsc{$k$-CED-Alg}, completing the proof of Theorem~\ref{thm:k-center}.
\begin{lemma}
For the choice of $\epsilon_0, \ldots, \epsilon_\ell$ given in~\cref{lemma:k-ced-apx}, we can compute the output of $k$\textsc{-CED-Alg}($X_1, \ldots, X_k, D, 0$) in time $O(n^{k/2} \cdot N^{k/2+o(k)} / \epsilon^{O(k)})$.
\end{lemma}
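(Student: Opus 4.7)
The plan is to analyze the total work across all $\ell = O(\log\log N)$ levels of recursion by: (1) bounding the number of DP entries per call; (2) bounding the branching from each entry; (3) using memoization together with grammar compression to control the total number of distinct recursive calls; and (4) choosing $\tau_1$ (and correspondingly $\tau_2, \ldots, \tau_\ell$) to balance the top-level DP size against the cost of the deeper recursion.

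First I would bound the per-call cost (excluding sub-calls). At level $m$ with parameter $d$, the DP table entries are indexed by $(x_1,\ldots,x_k)$ with $x_1$ a multiple of $\tau_{m+1}$, each $x_i$ ($i>1$) a multiple of $\sigma_{m+1}(d)$, and $|x_i - x_1| \le 3d + \epsilon_m D\tau_m/|X_1|$. A direct count gives
\[
E_m(d) = O\Bigl(\tfrac{\tau_m}{\tau_{m+1}}\Bigr)\cdot O\Bigl(\tfrac{d + \epsilon_m D\tau_m/|X_1|}{\sigma_{m+1}(d)}\Bigr)^{k-1}.
\]
Per entry, we iterate over $O((\log N/\epsilon_m)^{k-1})$ window-tuple choices (the $W_i$ are parametrized by $\Delta$ since $x_i$ fixes the end position) and over $O(\log N)$ values of $d' \le 2d$. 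Substituting $\sigma_{m+1}(d)\ge \epsilon_{m+2} d \tau_{m+2}/\tau_{m+1}$ makes $d$ cancel, so $E_m(d)$ is essentially $(\tau_m/\tau_{m+1})(\tau_{m+1}/(\epsilon_{m+2}\tau_{m+2}))^{k-1}$, polylogarithmic in $D$ up to $\epsilon$-factors.

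Second, I would apply memoization on (window tuple, $d'$) to avoid recomputation. Grammar compression enters here exactly as in Lemma~\ref{lemma:lz77-repetition}: at every level, the number of distinct length-$\tau_m$ substrings of each $X_i$ that ever need to be processed is $\tilde O(n)$ up to edit-distance slack absorbed into $\epsilon_m$. Thus the total number of distinct recursive calls at level $m+1$ is bounded by the combined count over level-$m$ calls of (entries) $\times$ (window combos), and this telescopes into a global bound of $\tilde O(n^k \cdot (\tau_1/\tau_\ell)^{\text{poly}(k)} \cdot \epsilon^{-O(k)})$ distinct calls at the base level. At the base level $m=\ell$, each call is resolved by the $O(N^{2k/\log\log N}) = N^{o(k)}$-time exact algorithm of~\cite{NicolasRivals05}.

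Third, I would write the runtime as the sum, over all levels, of (number of distinct level-$m$ calls) $\times$ (per-call non-recursive work). The top-level contribution is
\[
E_0(D)\cdot \tilde O((\log N/\epsilon_0)^{k-1}) = \tilde O\Bigl(\tfrac{N}{\tau_1}\cdot \Bigl(\tfrac{\tau_1}{\epsilon\,\tau_2}\Bigr)^{k-1}\Bigr),
\]
and the base-level contribution is $\tilde O(n^k \tau_1^{\text{poly}(k)}\cdot N^{o(k)}\cdot \epsilon^{-O(k)})$. Choosing the ratios $\tau_m/\tau_{m+1} = \Theta(N^{1/\log\log N})$ makes every internal level contribute a uniform factor; the binding constraint is the balance between the top DP cost (scaling like $N/\tau_1$ raised to appropriate powers) and the base cost (scaling like $n\tau_1$ raised to appropriate powers). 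Setting $\tau_1 \approx (N/n)^{1/2}\epsilon^{-c}$ yields $n^{k/2}N^{k/2}$ as the dominant term.

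The main obstacle is accounting for the compounded $\epsilon$-loss and polylog overhead across $\ell = O(\log\log N)$ levels. Since $\epsilon_m = \epsilon/(16k)^m$, each level multiplies the count by up to $(\log N / \epsilon_m)^{O(k)}$. Multiplying these across all $\ell$ levels gives a cumulative factor $(\log N)^{O(k\log\log N)} \cdot (16k)^{O(k(\log\log N)^2)} = N^{o(k)}\cdot \epsilon^{-O(k)}$, which combined with the $N^{o(k)}$ base-case cost produces the claimed $N^{k/2+o(k)}/\epsilon^{O(k)}$ factor. The delicate part of the write-up is verifying that the $\epsilon_m D\tau_m/|X_1|$ ``global'' term in $\sigma_m(d)$ never blows up the DP-entry count in a low-$d$ regime, which uses that $\sigma_{m+1}(d)$ is at least the analogous term with $D$, so the ratio $(\epsilon_m D\tau_m/|X_1|)/\sigma_{m+1}(d) \le \tau_m/(\epsilon_{m+2}\tau_{m+1})$ remains independent of $d$.
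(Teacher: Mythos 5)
Your proposal misses the central idea of the paper's runtime analysis and substitutes an argument that does not go through. You propose to "balance the top-level DP size against the cost of the deeper recursion" by picking $\tau_1\approx (N/n)^{1/2}$. But $\tau_1$ is not a free parameter here: the recursion fixes $\tau_0=N>\tau_1>\cdots>\tau_\ell=N^{1/\log\log N}$ with all successive ratios $\Theta(N^{1/\log\log N})$, so that every base case has length $N^{o(1)}$ and can be solved exactly in $N^{o(k)}$ time. If you instead set $\tau_1\approx(N/n)^{1/2}$ while keeping base cases small, you would need $\tau_0/\tau_1=\Theta((Nn)^{1/2})\gg N^{1/\log\log N}$, making the per-call DP at the top level alone have $\Omega((Nn)^{1/2})$ entries just in the first coordinate, and the bookkeeping for the remaining levels then no longer "contributes a uniform factor" as you assert. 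More fundamentally, balancing only the top level against the base level never controls the work done at intermediate levels $1\le m\le \ell-1$, which is where the bound can blow up.

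The paper's key move, which your write-up lacks entirely, is to bound the number of \emph{distinct} recursive calls at \emph{each fixed level $m$} by two independent arguments and then take their geometric mean. The compression-based count gives $O(n^k\tau_m^k N^{o(k)}/\epsilon^{O(k)})$ distinct calls (each input string is one of $O(n\tau_m)$ substrings, up to length/parameter choices). The DP-index-based count gives $O(N^{k+o(k)}/\tau_m^k\epsilon^{O(k)})$ distinct calls (the ending indices live in a sparse grid of DP coordinates near the diagonal). The first is increasing in $\tau_m$, the second decreasing; neither alone is good enough for all $m$, but their geometric mean $O(n^{k/2}N^{k/2+o(k)}/\epsilon^{O(k)})$ dominates the smaller of the two for \emph{every} $m$, without any need to optimize the $\tau_m$'s. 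This uniform-in-$m$ bound, multiplied by the $N^{o(k)}/\epsilon^{O(k)}$ per-call work (convolutions via FFT, rounding amortized), and summed over $O(\log N\cdot\log\log N)$ choices of $(m,d)$, is what yields the stated runtime. Your telescoping claim "(number of level-$(m+1)$ calls) $\le$ (level-$m$ entries) $\times$ (window combos), telescoping to $\tilde O(n^k(\tau_1/\tau_\ell)^{\mathrm{poly}(k)}\epsilon^{-O(k)})$" both overcounts (it ignores the index-based cap) and is anyway too large, since $\tau_1/\tau_\ell\approx N$. Without the dual-bound-plus-geometric-mean argument, you cannot reach $n^{k/2}N^{k/2+o(k)}$.
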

\begin{proof}
Throughout the analysis, we will use the fact that for all $m$, $1/\epsilon_m \leq \log^{O(\log k)} N / \epsilon$.

We first bound the time spent on base cases. Since each $X_j'$ at the bottom level of recursion has size at most $2N^{1/ \log \log N}/\epsilon_\ell^2 = O(N^{o(1)}/\epsilon^2)$ by construction, we can compute each base case's edit tuples and round them in $N^{o(k)}$ time. There are $O(\log d)$ choices of $d$ and $O(\log_{1+\epsilon_\ell} (2 \tau_\ell / \epsilon_\ell^2)) = O(\log^{O(1)} (N) / \epsilon^2) $ possible sizes for each choice of $d$, so there are $O(\log^{O(k)}(N) / \epsilon^{2k})$ different tuples of possible window sizes to consider at this level. The proof of Lemma~\ref{lemma:lz77-repetition} implies that for any string $X$ generated by an SLP of size $n$, the number of distinct substrings of length $\tau$ is $O(n\tau)$ (in particular, in that proof when $\delta = 1$ we are simply taking every substring into $S$). Combining these facts, we conclude there are $O(n^k N^{o(k)} / \epsilon^{O(k)})$ distinct base cases, and thus by amortizing the work for base cases, the total time spent on base cases is $O(n^k N^{o(k)} / \epsilon^{O(k)})$.

Besides base-cases, the only work our algorithm does is rounding and convolutions. We can perform the recursion in an amortized fashion. That is, we never make multiple calls to $k$\textsc{-CED-Alg} on the same $k$-tuple of strings with the same choice of $d$. Similarly, for each $d$ and each level $m+1$ call to $k$\textsc{-CED-Alg}, we only round that call's output's coordinates to the nearest multiple of $\sigma_m(d)$ once. The time spent rounding a set of vectors is proportional to its size, and the final set of vectors that we round was produced by a convolution that took time at least the size of the set of vectors. For this convolution, with amortization we only need to round its output at most $\log N$ times, once per value of $d$ in $\mathcal{D}$. Thus, the time spent on rounding is bounded by the time spent on convolutions times $O(\log N)$.

We now just need to bound the time spent on convolutions. Fix a level $m$ of the recursion and a choice of $d$ in the input. We will bound the total work across all calls at level $m$ and with $d$ as input; there are $O(\log N)$ levels and $O(\log \log N)$ levels, so our final bound on time spent on convolutions will be within logarithmic factors of the bound for one choice of $m$ and $d$. 

The time spent on convolutions in any call is bounded by a constant factor times the time spent on convolutions in the second case in the recurrence relation, i.e., convolutions involving recursive calls.  We perform these convolutions on tuples in $\{0, 1, \ldots, 3d\}$ whose coordinates are multiples of $\sigma_m(d)$, i.e., have size at most $O((d/\sigma_m(d))^k) = O((\tau_{m}/\epsilon_{m+1}\tau_{m+1})^k) = O(N^{o(k)} / \epsilon^k)$. Using FFT, we can thus perform these convolutions in $O(N^{o(k)}/\epsilon^k)$ time (e.g., we could divide all entries by $\sigma_m(d)$, take the convolution, and then multiply by $\sigma_m(d)$). In each call to $k$\textsc{-CED-Alg}, by the same argument as in Section~\ref{sec:k-ed}, there are $O((\tau_m/\tau_{m-1}) \cdot (d/\sigma_m(d))^{k-1}) = O((\tau_{m}/\epsilon_{m+1}\tau_{m+1})^k) = O(N^{o(k)} / \epsilon^{O(k)})$ entries to compute, and for each entry we need to do $O((\log^2 N / \epsilon)^k)$ convolutions. So the time spent on convolutions per call to $k$\textsc{-CED-Alg} is $N^{o(k)}/\epsilon^{O(k)}$ as well. 

We now just need to bound the number of calls made to $k$\textsc{-CED-Alg}, and our final runtime will be within an $N^{o(k)} / \epsilon^{O(k)}$ factor of this. We will show for each choice of $m$ and $d$, the number of calls made is $O(n^{k/2} \cdot N^{k/2+o(k)} / \epsilon^{O(k)})$, which gives the desired runtime bound. We bound the number of calls at each level in two ways. The first way is again using the fact that for any string $X$ generated by an SLP of size $n$, the number of distinct substrings of length $\tau$ is $O(n\tau)$, and that at each level there are $O(\log^{O(k)}(N) / \epsilon^{2k})$ tuples of possible lengths for the strings in the input, each at most $\tau_m / \epsilon_m^2$. Putting these facts together, there are at most $O(n^k \tau_m^k N^{o(k)} / \epsilon^{O(k)})$ distinct calls to $k$\textsc{-CED-Alg} at level $m$ with parameter $d$. 

The second way is exactly what we did in Section 4 to bound the number of coordinates in the DP table: For every $k$-tuple of windows we call $k$\textsc{-CED-Alg} on at level $m$ with parameter $d$, the window $X_1'$ ends at an index in $X_1$ that is a multiple of $\tau_m$, and the other windows end at indices in $X_2, \ldots, X_k$ that are multiples of $\sigma_m(d)$. Furthermore, these entries are distance at most $O(D)$ from the diagonal. So the total number of possible tuples of ending indices for these windows is $O((N/\tau_m) \cdot (D/\sigma_m(d))^{k-1}) = O(N^k / \tau_m^k \epsilon^{O(k)})$. For each tuple of ending indices, there are $N^{o(k)} / \epsilon^{O(k)}$ possible tuples of windows that end at those indices. So we get a bound of $O(N^{k + o(k)}/ \tau_m^k \epsilon^{O(k)})$ different calls for each choice of $m$ and $d$. The desired bound of $O(n^{k/2} \cdot N^{k/2+o(k)} / \epsilon^{O(k)})$ calls follows by taking the geometric mean of the first and second bound, which is at least the smaller of the two.
\end{proof}

\section{Lower Bounds}
We will start with a summary and overview of the techniques. 
\subsection{Lower Bound Overview}\label{sec:techContLBs}
We will start with the definitions of our hypotheses, then we will describe the results of the lower bound sections. 

\subsubsection*{Hypotheses}
We use two hypotheses from fine-grained complexity to generate our lower bounds. We use the strong exponential time hypothesis (SETH) and the $k$-OV hypothesis. Note that SETH implies $k$-OV~\cite{ryanThesis}.

\begin{definition}
	The $k$-CNF Satisfiability ($k$-SAT) problem takes as input a formula $\phi$ with $m$ clauses and $n$ variables. The formula is in conjunctive normal form (CNF) which requires that the formula be the and of $m$ clauses. Each clause is the or of at most $k$ variables. 
	Return true if $\phi$ has a satisfying assignment and false otherwise. 
\end{definition}

%Next let us define SETH.
\begin{definition}[The strong exponential time hypothesis (SETH)~\cite{cseth}]
For all constants $\epsilon>0$ there is some constant $k$ such that $k$-SAT requires $\omega(2^{n(1-\epsilon)})$ time.
\end{definition}

We can re-frame this as $k$-SAT requiring $2^{n(1-o(1))}$ time, as long as $k$ is an arbitrarily large constant. 
Next we define the $k$-OV problem. 

\begin{definition}[$k$-OV~\cite{ryanThesis}]
Take as input a list, $L$, of $n$ zero one vectors of dimension $d = n^{o(1)}$. Return true if there are $k$ vectors $v_i \in L$ for $i \in [1,k]$ such that for all $j$  $v_1[j]\cdot v_2[j]\cdots v_k[j]=0$. 

The $k$-OV hypothesis states that for constant $k$, $k$-OV requires $n^{k-o(1)}$ time. The $k$-OV hypothesis is implied by SETH. 
\end{definition}

We use the $k$-OV hypothesis to generate our lower bounds. As the $k$-OV hypothesis is implied by SETH,  SETH also implies our lower bounds. 

\subsubsection*{$k$-LCS lower bound} 
Assuming the well-studied Strong Exponential Time Hypothesis (SETH), in \cref{sec:kLCSLB} we show a lower bound for the $k$-LCS problem in the compressed setting.
Intuitively, SETH states that CNF-satisfiability requires $2^{n-o(n)}$ time~\cite{cseth}. Even more specifically, we use the $k$-Orthogonal Vectors problem ($k$-OV)~\cite{virgiSurvey}. 
At a high level, $k$-OV takes as input a list $L$ with $n$ zero-one vectors of dimension $d$. 
We must return YES if there exist $k$ vectors that, when multiplied element-wise, form the all zeros vector.
The $k$-OV conjecture, which is implied by SETH, states that $k$-OV cannot be solved in $O(n^{k-\Omega(1)})$ time.

\begin{reminder}{Theorem~\ref{thm:compkLCSHardFromseth}}
%Let $a$, $b$, and $k$ be positive integers.
%If $(bk+a(k-1))$-OV requires $n^{bk+a(k-1)-o(1)}$ time, then grammar-compressed $k$-LCS requires $\left(M^{k-1}m\right)^{1-o(1)}$ time when $m = \Theta(M^{b/(a+b)\pm o(1)})$ and the alphabet size is $|\Sigma|=\Theta(k)$. Here,
%$M$ denotes the total length of the $k$ input strings and $m$ is their total compressed size.
If the $k'$-OV hypothesis is true for all constants $k'$, then for any constant $\epsilon \in (0,1]$ grammar-compressed $k$-LCS requires $\left(M^{k-1}m\right)^{1-o(1)}$ time when the alphabet size is $|\Sigma|=\Theta(k)$ and $m=M^{\epsilon \pm o(1)}$. Here, $M$ denotes the total length of the $k$ input strings and $m$ is their total compressed size.
\end{reminder}

Our lower bound relies on two primary tools. First, we use a very compressible representation of $a$-OV instances.
Specifically, given a list $L$ of $n$ zero-one vectors of dimension $d$, consider a new list $\Flist(L)_a$ of $n^a$ zero-one vectors of dimension $d$, with every vector in $\Flist(L)_a$ representing the element wise multiplication of $a$ vectors from $L$. Formally, $\Flist(L)_a$ is indexed by $a$-tuples of indices from $[1\dd n]$,
and each vector $\vec{v} = \Flist(L)_{a}[j_1][j_2]\cdots[j_a]$ is defined, for every coordinate $i\in [1\dd d]$, with:
\[\Flist(L)_{a}[j_1][j_2]\cdots[j_a][i] = \vec{v}[i] = L[j_1][i] \cdot L[j_2][i] \cdots L[j_a][i]\]
Notably, $\Flist(L)_{a}$ contains an all zeros vector if and only if $L$ is a YES-instance of the $a$-OV problem. 
%We can represent the vectors in $\Flist_{a}(L)$ in an interleaved way so that they compress very well ~\cite{compressedLCSSETH}. 
%$$\mathbf{String_I}_{a}(L) = \bigcirc_{i=1}^d \left( \bigcirc_{j_1\in[1,n] \ldots, j_a \in [1,n]} L[j_1][i] \cdot L[j_2][i] \cdot \ldots \cdot L[j_a][i] \right).$$
%Specifically, it gives an order on the vectors in the new list and then puts the first bit of all vectors in order, then the second bit of all vectors etc. Notably, this means positions $\Delta, \Delta+n^{a},\ldots, \Delta+(d-1)n^a$ in $\mathbf{String_I}_{a}(L)$ correspond to a single vector in the original problem. Also, $\mathbf{String_I}_{a}(L)$ has a SLP representation of size $\tilde{O}(nd)$.

In the 2-LCS lower bound of~\cite{compressedLCSSETH}, an $(a+2b)$-OV instance $L$ is first transformed into $A=\Flist(L)_{a}$, $B=\Flist(L)_{b}$, and $C=\Flist(L)_{b}$. 
Then, the following strings are defined for every $\vec{v_b} \in B$ and $\vec{v_c} \in C$:
\begin{align*}
    x_{\vec{v_b}}&=\underbrace{\vec{v_{a_1}}[1]\vec{v_{b}}[1]\vec{v_{a_2}}[1]\vec{v_{b}}[1]\cdots \vec{v_{a_{n^a}}}[1]\vec{v_{b}}[1]}_{\text{first bit}}\cdots \underbrace{\vec{v_{a_1}}[d]\vec{v_{b}}[d]\vec{v_{a_2}}[d]\vec{v_{b}}[d]\cdots\vec{v_{a_{n^a}}}[d]\vec{v_{b}}[d]}_{\text{$d$th bit}},\\
    y_{\vec{v_c}}&=\vec{v_{c}}[1]\underbrace{000000}_{n^a-1}\vec{v_{c}}[2]\cdots\underbrace{000000}_{n^a-1}\vec{v_{c}}[d].
\end{align*}
The string $x_{\vec{v_b}}$ that interleaves $\vec{v_b}$ with bits of $n^a$ vectors $\vec{v_{a_i}}\in A$, referred to as ``interleaved'' representation, is highly compressible, to an SLP of size $O(nd)$.
%It can be shown that $x_{\vec{v_b}}$ compresses to $O(nd)$ using SLP compression whereas its original length is $n^ad$. 
Moreover, if there exists a vector $\vec{v_{a_i}} \in A$ such that $(\vec{v_{a_i}},\vec{v_b},\vec{v_c})$ is orthogonal, Abboud et al.~\cite{compressedLCSSETH} show (using the structural alignment gadget of~\cite{alignmentGadget}) how to perfectly align $(\vec{v_{a_i}}[l],\vec{v_b}[l],\vec{v_c}[l])$ for all $l \in [1\dd d]$. Finally, the gadgets $x_{\vec{v_b}}$ for all $\vec{v_b} \in B$ are concatenated with extra padding to generate $X_B$, and the gadgets $y_{\vec{v_c}}$ for all $\vec{v_c} \in C$ are concatenated with extra padding to generate $Y_C$.
This leads to the $(Mm)^{1-o(1)}$ lower bound since the uncompressed and compressed lengths of $X_B$ and $Y_C$ are (roughly) $O(n^{a+b})$ and $O(n^{b})$, respectively, and we are solving an $(a+2b)$-OV instance.

We may extend the above construction to the compressed $k$-LCS setting by transforming an $(a+kb)$-OV instance $L$ into lists $A=\Flist(L)_{a}$, $B=\Flist(L)_{b}$, and $C_h=\Flist(L)_{b}$ for $h\in [1\dd k-1]$.
We then create $X_B$ and $Y_{C_h}$ for $h\in [1\dd k-1]$.
Since the strings $Y_{C_h}$ are zero-padded, we can easily adapt the same structural alignment gadget of $2$-LCS from~\cite{alignmentGadget} to ensure a perfect alignment.
However, this only leads to a lower bound of $(m^{k-1}M)^{1-o(1)}$ since the uncompressed and compressed lengths of the strings remain (roughly) $O(n^{a+b})$ and $O(n^b)$, respectively, and we are solving an $(a+kb)$-OV instance: $ Mm^{k-1} = \Oh(n^{a+kb})$. 
To get a much stronger lower bound of $(mM^{k-1})^{1-o(1)}$, we need to solve a much higher OV instance. 
In particular, we will solve an $(a(k-1)+kb)$-OV instance by taking $A_h=\Flist(L)_{a}$, $B_h=\Flist(L)_{b}$ for $h\in [1\dd k-1]$, and $C=\Flist(L)_{b}$.
We then create strings $X_{B_h}$ from $A_h$ and $B_h$ for each $h\in [1\dd k-1]$, and $Y_C$. That is, we now have $(k-1)$ interleaved strings and only one zero-padded string.
This makes generalizing the structural alignment gadget substantially more intricate since we may have to deal with $k-1$ different offsets.
In fact, without any zero-padded string, we are not able to show any perfect alignment gadget. Because we are now solving an $(a(k-1)+kb)$-OV instance, we get our desired lower bound by noting $M^{k-1}m=\Oh(n^{a(k-1)+kb})$. 

\subsubsection*{Easy $k$-Median Edit Distance lower bounds via LCS reduction}
As a first lower bound for edit distance, we can reduce from LCS to both median $k$-edit distance and center $k$-edit distance. Suppose, we are given a $k$-LCS instance with strings $S_1,\ldots,S_{k}$ all of length $M$ and let $\emptystring$ denote the empty string. It can be shown that
$$\ED(S_1,\ldots,S_{k},\underbrace{\emptystring,\ldots,\emptystring}_{(k-1)}) = Mk - \lcsf(S_1,\ldots,S_{k}).$$
This increases $k$ since we add $(k-1)$ empty strings, but it does not increase the size of the problem, or the compression size. Using the above relation, we can prove the following theorem.
%We get one lower bound from a general $k$-LCS to $k'$-Edit Distance reduction. This reduction adds $(k'-k)$ empty strings to the $k$-LCS instance. This increases $k$, but does not increase the size of the problem. This general purpose reduction may be useful in other settings as well.
%\begin{theorem}\label{thm:medianEDhardLCS}
%We are given a $k$-LCS instance with strings $S_1,\ldots,S_{k}$ all of length $M$. Let the $k$-LCS distance of these strings be $\lcsf(S_1,\ldots,S_{k})$.
%Then, let $\emptystring$ be the empty string we now relate $(2k-1)$-median edit distance to $k$-LCS. Specifically, $(2k-1)$-median edit distance on $S_1$ through $S_k$ and $k-1$ copies of the empty string and the $k$-LCS of $S_1$ through $S_k$:
%$$\ED(S_1,\ldots,S_{k},\emptystring,\ldots,\emptystring) = Mk - \lcsf(S_1,\ldots,S_{k}).$$
%\end{theorem}

\begin{reminder}{Theorem~\ref{thm:EditDistanceSethHard}}
Given an instance of $k$-median edit distance on strings of lengths $M_1 \leq M_2 \leq \cdots \leq M_{k}$ where these strings can all be compressed into a SLP of size $m=|\sum_{i} M_i|^{\delta\pm o(1)}$ for any constant $\delta \in (0,1]$. 
Then, a $k$-median edit distance algorithm that runs in 
$\left((M_2+1) \cdots (M_{k}+1) \cdot m  \right)^{1-\epsilon}$ time for constant $\epsilon>0$ violates SETH.
\end{reminder}

\noindent We can get a similar lower bound for center $k$-edit distance from $k$-LCS by adding a single empty string.

\begin{theorem}
\label{thm:centerFromLCS}
Given an instance of $k$-center edit distance on strings of lengths $M_1 \leq M_2 \leq \cdots \leq M_{k}$ where these strings can all be compressed into a SLP of size $m=|\sum_{i} M_i|^{\delta\pm o(1)}$ for any constant $\delta \in (0,1]$, then, an algorithm for $k$-center edit distance that runs in time 
$\left((M_2+1) \cdots (M_{k}+1) \cdot m  \right)^{1-\epsilon}$ time for constant $\epsilon>0$ violates SETH.
\end{theorem}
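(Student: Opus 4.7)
The plan is to reduce $(k-1)$-LCS to $k$-center edit distance by appending a single empty string $T_k=\emptystring$ to a suitably padded $(k-1)$-LCS instance, thereby inheriting the SETH lower bound from Theorem~\ref{thm:compkLCSHardFromseth}. Unlike the median case, where the identity $\ED(S_1,\ldots,S_{k-1},\emptystring,\ldots,\emptystring)=(k-1)M-\LCS$ is exact, the center distance does not decode the LCS cleanly, so the first step is a padding that aligns the LCS decision threshold $V$ with half of the new common length $M'$. Given hard strings $S_1,\ldots,S_{k-1}$ of length $M$ with YES-case $\LCS\ge V$ and NO-case $\LCS<V$, I would set $M':=2\max(M-V,V)$ and produce padded strings $S_i'$ of length $M'$ by appending either a run $\sigma^{M-2V}$ of a fresh common character (when $V\le M/2$, which raises the LCS by $M-2V$) or a run $\sigma_i^{2V-M}$ of fresh pairwise-distinct characters (when $V>M/2$, which leaves the LCS unchanged). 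In both cases the new threshold is exactly $V'=M'/2$, $M'$ is even, and the SLP grows by only $\Oh(k\log M)$ symbols.

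After forming $T_i:=S_i'$ for $i<k$ and $T_k:=\emptystring$, the crux is the equivalence
\[\CED(T_1,\ldots,T_k)\le M'/2 \iff \LCS(S_1',\ldots,S_{k-1}')\ge M'/2.\]
The direction $(\Leftarrow)$ uses a length-$M'/2$ common subsequence as the center string $X^*$, giving $\ED(S_i',X^*)\le M'/2$ and $\ED(\emptystring,X^*)=M'/2$. For $(\Rightarrow)$, fix a witness $X^*$; then $|X^*|=\ED(\emptystring,X^*)\le M'/2$. The main tool is the elementary inequality $\LCS(S,T)\ge\max(|S|,|T|)-\ED(S,T)$, which I would prove by noting that any optimal edit sequence leaves at least $\max(|S|,|T|)-\ED(S,T)$ matched positions that jointly form a common subsequence of $S$ and $T$. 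This yields $\LCS(S_i',X^*)\ge M'/2$ for each $i$; combined with the trivial upper bound $\LCS(S_i',X^*)\le|X^*|\le M'/2$, equality must hold, forcing $X^*$ to be a subsequence of every $S_i'$, hence a common subsequence of $S_1',\ldots,S_{k-1}'$, so $\LCS(S_1',\ldots,S_{k-1}')\ge|X^*|=M'/2$.

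For the complexity bookkeeping, the constructed center instance has $M_1=0$ and $M_2=\cdots=M_k=M'=\Theta(M)$, giving $\prod_{i=2}^k(M_i+1)\cdot m=\Theta(M^{k-1}m)$ while the SLP size and the compressibility regime $m=|\sum_i M_i|^{\delta\pm o(1)}$ are preserved (the padded runs encode in $\Oh(\log M)$ SLP symbols each). An algorithm for $k$-center edit distance running in $\left(\prod_{i=2}^k(M_i+1)\cdot m\right)^{1-\epsilon}$ time would therefore decide $(k-1)$-LCS in $\Oh(M^{(k-1)(1-\epsilon)}m^{1-\epsilon})$ time, contradicting the $(M^{k-2}m)^{1-o(1)}$ bound of Theorem~\ref{thm:compkLCSHardFromseth} as soon as $k$ is chosen sufficiently large compared to $1/\epsilon$ and $1/\delta$ (concretely, once $k-1+\delta>1/\epsilon$). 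The principal obstacle, and what the construction above is designed to circumvent, is the absence of an exact $\CED$-to-$\LCS$ identity: the $M'/2$-threshold padding together with the $\LCS\ge\max(|S|,|T|)-\ED$ inequality replace it with a tight threshold-style equivalence sufficient to propagate SETH-hardness.
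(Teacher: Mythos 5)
Your overall route matches the paper's: pad the LCS instance so that the decision threshold becomes half the common length (the paper's Lemma~\ref{lem:kLCStohalfLCS} does exactly the case split on $V\le M/2$ vs.\ $V>M/2$ that you describe, with $@$- and $\#_i$-runs playing the roles of your $\sigma$ and $\sigma_i$), then append a single empty string and read off the hardness from the equivalence $\CED\le M'/2 \iff \LCS\ge M'/2$ (the paper's Theorem~\ref{thm:centerEDhardLCS}). One genuine improvement: your derivation of the forward direction through the inequality $\LCS(S,T)\ge \max(|S|,|T|)-\ED(S,T)$, followed by the sandwich $M'/2\le \LCS(S_i',X^*)\le |X^*|\le M'/2$, is crisper and more self-contained than the paper's somewhat informal ``$T$ must have length $M'/2$ and $S_i$ must reach it by deletions only'' argument; it cleanly forces $X^*$ to be a subsequence of every $S_i'$.

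There is, however, a concrete gap in your final complexity accounting that prevents the argument from establishing the theorem for \emph{all} constant $\epsilon>0$. Your reduction goes from $(k-1)$-LCS to $k$-center, so the hardness you can propagate from Theorem~\ref{thm:compkLCSHardFromseth} is $(M^{k-2}m)^{1-o(1)}$, whereas you evaluate the theorem's target bound as $\prod_{i=2}^k(M_i+1)\cdot m=\Theta(M^{k-1}m)$. That is a mismatch of one factor of $M$, which is why you only get a contradiction ``once $k-1+\delta>1/\epsilon$''; for $\epsilon\le 1/(k-1+\delta)$ the claimed runtime exceeds the propagated hardness and no contradiction follows, so the theorem as stated (holding for every constant $\epsilon>0$ with $k$ fixed) is not established. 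In the paper's own proof the bookkeeping reads $M_i=|S_i'|$ for $i\le k$ and $M_{k+1}=0$, i.e., $M_1$ is one of the \emph{long} strings and the empty string sits at the last index; thus $(M_2+1)\cdots(M_{k+1}+1)=\Theta((M')^{k-1})$ counts one length-$M'$ string and the empty string as dropped/trivial, which exactly matches the $(M'^{k-1}m)^{1-o(1)}$ propagated from $k$-LCS. In other words, the paper implicitly interprets the product as dropping a \emph{long} string rather than the shortest one; you interpreted the formula literally under the ascending-sort hypothesis and dropped the empty string, which inflates the target bound by one dimension. To close the gap you should either (a) evaluate the product with the empty string retained and one long string dropped (giving $\Theta(M^{k-2})$, which then matches your $(k-1)$-LCS hardness directly, with no ``$k$ large enough'' caveat), or (b) reduce from $k$-LCS to $(k+1)$-center as the paper does and relabel. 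As written, the proof only covers the regime $\epsilon>1/(k-1+\delta)$.

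One smaller point: Theorem~\ref{thm:compkLCSHardFromseth} (and Lemma~\ref{lem:kLCStohalfLCS} as the paper invokes it) is about pinning down whether the LCS equals a specific value, and the paper runs the center-distance oracle twice (once on $S_i'$ and once on $S_i'\%_i\%_i$) to separate $=M'/2$ from $>M'/2$. Your threshold-style statement (``$\ge V$ vs.\ $<V$'') is exactly what the hard instances of the LCS lower bound demand, so a single oracle call suffices, but it would be worth stating explicitly that the hardness of Theorem~\ref{thm:compkLCSHardFromseth} is already a promise/threshold hardness to justify skipping the second call.
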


%We can get a similar generic lower bound for center $k$-edit distance from $k$-LCS.
%\begin{theorem}\label{thm:centerEDhardLCS}
%We are given a $k$-LCS instance with strings $S_1,\ldots,S_{k}$ all of length $M$. Let the $k$-LCS distance of these strings be $\lcsf(S_1,\ldots,S_{k})$.
%Then, let $\emptystring$ be the empty string we now relate $(k+1)$-center edit distance and $k$-LCS:
%$$\CED(S_1,\ldots,S_{k},\emptystring) \begin{cases}
%=M/2 & \text{  if }\lcsf(S_1,\ldots,S_{k})=M/2\\
%> M/2 &\text{  if }\lcsf(S_1,\ldots,S_{k})\ne M/2\\
%\end{cases}.$$
%\end{theorem}

These reductions are convenient for propagating results from $k$-LCS to $k$-Edit Distance generically. However, because they add empty strings, they don't prove hardness for some of the most commonly studied cases such as where all strings are of the same length and for median $k$-edit distance with even $k$. To get lower bounds for all $k$ and when all strings are of the same length, we use a reduction directly from SETH, instead of going through $k$-LCS.

\subsubsection*{Stronger $k$-Median Edit Distance Lower Bounds directly from SETH}
We get a lower bound for median $k$-edit distance and center $k$-edit distance over compressed strings from SETH. When $k=2$ this resolves the second open problem suggested by Abboud et al~\cite{compressedLCSSETH}. We also generalize the lower bound for all $k\geq 2$. There are many difficulties introduced by trying to get lower bounds for median $k$-edit distance when $k\geq 2$. We can use some of the ideas from the $k$-LCS reduction. Specifically, the notion of the compressed interleaved strings remains. 
%However, because $k$-edit distance allows not just deletions but also insertions and substitutions we need to change our perfect alignment gadget. 
Notably, we need to allow any choice of $\Delta_1,\ldots,\Delta_{k-1}$ offsets; however, if these offsets are more similar we have many characters that match on all but one string. For $k$-LCS we still need to delete these characters, but, in median $k$-edit distance we can simply insert a character in one string. 
%Matches on $k-1$ strings are cheaper than cases where a character matches on only $k/2$ strings. 
This creates an artificial pressure to make all the $\Delta_i$ values the same. To overcome this, we  can use some of the ideas from the recent paper that gives lower bounds for the uncompressed case for $k$-edit distance~\cite{editDistLBSeth}. There is still an issue, they build their alignment gadget with the crucial use of empty `fake gadgets'. However, we need to guarantee that $\Delta_i\in[0,n^a-1]$, and these fake gadgets allow for values of $\Delta_i$ outside of this range. 
To overcome this we incentivize a match up of the real gadgets, which then forces a restriction on valid $\Delta_i$ values. 

Specifically, we need to add a gadget, which we call a selector gadget. This gadget causes characters lined up inside it to have a low edit distance if they all match, and otherwise have a higher edit distance that is unchanged by exactly how well they match. The selector gadget looks like this:
$SCSG_i(c) = \%^{ix} c^y \%^{(k-i)x}.$
We have gadgets $SCSG_1(c_1),\ldots,SCSG_k(c_k)$ such that we can either try to match the characters $c_i$, or try to line up the $\%$ characters. If we line up the $\%$ characters, the edit distance is $ky$. If we line up the $c_i$ characters and they all match ($c_i=c_j~ \forall i,j$ ), the edit distance is $xk^2/4$ if $k$ is even and $x(k^2-1)/4$ if $k$ is odd. If the characters don't match the edit distance is at least $xk^2/4+y$ if $k$ is even and $x(k^2-1)/4+y$ if $k$ is odd. Consider the case of $k$ even, we can choose integer values of $x$ and $y$ such that
$xk^2/4 < yk \leq xk^2/4+y.$
By doing so, if all the characters match, then the median $k$-edit distance is $xk^2/4$, otherwise it is $yk$.
In some sense this gadget is causing characters to act like they do in $k$-LCS, where only a match across all strings gives us a benefit. Using these selector gadgets and ideas from the edit distance and LCS lower bounds, we get a lower bounds for both median $k$-edit distance and center $k$-edit distance from SETH. 

\begin{theorem}\label{thm:editDistanceLowerBoundFromSETH}
If the $k'$-OV hypothesis is true for all constants $k'$, then for all constant $\epsilon \in (0,1]$ grammar-compressed $k$-median edit distance requires $\left(M^{k-1}m\right)^{1-o(1)}$ time when the alphabet size is $|\Sigma|=\Theta(k)$ and $m=M^{\epsilon \pm o(1)}$. Here, $M$ and $m$ denote the total uncompressed and compressed length of the $k$ input strings respectively.
%Let $b$ be a positive integer $b\geq 1$.
%If for positive integers $a$, $b$, and $k$ we have that $(bk+a(k-1))$-OV requires $n^{bk+a(k-1)-o(1)}$ time then SLP compressed $k-$edit distance requires $\left(M^{k-1}m\right)^{1-o(1)}$ time when $m = M^{b/(a+b)}$ and there is an alphabet of size $|\Sigma|=\Theta(k)$. Recall that $M$ is the length of the $k$ input strings to SLP compressed $k-$edit distance  and $m$ is the size of the SLP compression.
\end{theorem}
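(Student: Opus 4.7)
The plan is to reduce from $(a(k-1)+kb)$-OV with $a,b$ chosen to match the desired exponents: $M = \Theta(n^{a+b})$ and $m = \Theta(n^b)$, so that $M^{k-1}m = \Theta(n^{a(k-1)+kb})$. Following the $k$-LCS template, I would split the OV list into $A_h = \Flist(L)_a$ and $B_h = \Flist(L)_b$ for $h \in [1\dd k-1]$ together with one extra list $C = \Flist(L)_b$, and build $k-1$ compressible interleaved strings $X_{B_h}$ (each uncompressed length $\Theta(n^{a+b})$ but grammar-compressible to size $\Oh(nd)$ by exploiting the periodicity in the $\vec{v_{b_h}}$ interleaving) plus one short string $Y_C$ of length $\Theta(n^b)$. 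The outputs of the $k-1$ interleavings and the one ``selector'' string $Y_C$ will be the $k$ inputs to the median edit distance instance.

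The first hurdle is that median edit distance has no notion of ``deletion-only'' matching as in LCS: if only some of the $k$ strings agree at a position, the median can pay a single insertion to agree with them, so partial matches give a benefit. To recover LCS-like all-or-nothing matching, I would wrap each coordinate symbol $c$ in the $i$-th string with the selector gadget $SCSG_i(c) = \%^{ix}\,c^y\,\%^{(k-i)x}$ described in the overview, where $x,y$ are integers chosen so that (for even $k$) $xk^2/4 < yk \le xk^2/4 + y$. A routine case analysis on any alignment of $SCSG_1(c_1),\ldots,SCSG_k(c_k)$ shows that the minimum median cost is $xk^2/4$ when all $c_i$ agree and at least $yk$ otherwise, so the gadget pays a uniform discount exactly on unanimous matches and is otherwise blind to \emph{which} mismatches occur. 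Summing these local costs across all $d$ coordinate slots yields a sharp threshold separating OV-YES from OV-NO instances.

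The second hurdle is the multi-offset alignment gadget. With $k-1$ interleaved strings and one padded string $Y_C$, a valid median alignment must select $k-1$ offsets $\Delta_1,\ldots,\Delta_{k-1}$ corresponding to the $a$-indices chosen from each $A_h$. I would adapt the multi-string alignment construction of~\cite{editDistLBSeth}, which uses real and fake gadgets separated by carefully sized padding so that misaligning the real gadgets is expensive. Their construction, however, admits $\Delta_h$ outside $[0\dd n^a - 1]$ via the fake gadgets. I would fix this by tuning the cost balance so that aligning each real gadget yields a strictly larger discount than aligning padding, and then setting the number of fake gadgets so that a valid alignment must consume exactly one real gadget from each $A_h$ and one real gadget from $C$ per coordinate block. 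Combined with the selector gadgets, this forces the alignment to encode a legitimate choice $(\vec{v_{a_1}},\ldots,\vec{v_{a_{k-1}}},\vec{v_{b_1}},\ldots,\vec{v_{b_{k-1}}},\vec{v_c})$ of $k-1+k-1+1$ factors whose element-wise product is all-zero iff the OV instance is satisfiable.

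The main obstacle will be the bookkeeping: the selector gadget's gap $y$, the real-versus-fake gadget penalty, the padding between coordinate blocks, and the interleaving period in the SLPs for $X_{B_h}$ must all be chosen so that (a) a single threshold $T$ separates YES and NO instances of $(a(k-1)+kb)$-OV, (b) the SLPs stay of size $\Oh(nd) = n^{b+o(1)}$, and (c) the alphabet remains of size $\Theta(k)$ (a few OV alphabet symbols plus the $\%$ padding symbol plus $\Oh(k)$ dedicated separators for the alignment gadget). Once these parameters balance, the reduction runs in time $\Oh((Mm)^{1+o(1)}) = \Oh(n^{a+(k-1)b + o(1)})$, so a $(M^{k-1}m)^{1-\eps}$-time median-$k$-edit-distance algorithm would solve $(a(k-1)+kb)$-OV in truly subquadratic time in its natural parameter, contradicting the $k'$-OV hypothesis for $k' = a(k-1)+kb$.
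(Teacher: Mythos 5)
This is essentially the same approach as the paper's proof: reduce from $(a(k-1)+bk)$-OV, build $k-1$ interleaved compressible strings plus one zero-padded string, wrap coordinate gadgets in single-character selector gadgets $SCSG_i$ to recover all-or-nothing matching, and use the fake-gadget sliding-pyramid alignment structure of~\cite{editDistLBSeth} to handle the $k-1$ independent offsets, followed by a padding step to reach every $\eps$. One small slip: the uncompressed length of $Y_C$ must be $\Theta(n^{a+b+o(1)})$, the same as the interleaved strings, so that the alignment gadget can encode the offset choices $\Delta_1,\ldots,\Delta_{k-1}$; it is only the \emph{compressed} size of $Y_C$ that is $\Theta(n^{b+o(1)})$, because the $n^a-1$ zeros padded between successive coordinate bits collapse under the grammar.
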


The lower bound for median $k$-edit distance immediately implies a lower bound for center $k$-edit distance following~\cite{editDistLBSeth}. %However, $k$-center is a harder problem, making this less tight than the median $k$-edit distance result. 

\begin{reminder}{Theorem~\ref{thm:centerEditDistanceAllSameLength} }
We are given $k$ strings of length $M$ with a SLP of size $m$. The center $k$-edit distance problem on these strings requires $\left(M^{k-1}m \right)^{1-o(1)}$ time if SETH is true. 
\end{reminder}

Given these lower bounds for the case of compressed $k$-LCS, median $k$-edit distance and center $k$-edit distance, we want to consider not just compression but also approximation.

\subsection{Lower Bound with LCS}\label{sec:kLCSLB}

In this section we will argue that if we have $k$ strings each of length $M$ and they have a SLP compression of size $m$ then the problem requires $M^{k-1-o(1)}m^{1-o(1)}$ if SETH is true. 
In the next section we use these hardness results for $k$-LCS to prove hardness for $k'$-Edit Distance. 

The core of this section is building a generalized ``perfect alignment'' gadget. This is a gadget that causes substrings to be aligned with no skips or merges. We use this generalized alignment gadget to generalize the work of~\cite{compressedLCSSETH}. The main idea for this perfect alignment gadget is that between every string we want to align, we add symbols $\$_1\$_2\ldots\$_k$. Additionally, at the end of each string $S_i$ in our gadget, we add many copies of these characters,  excepting $\$_i$. That is, we add $\$_1\ldots\$_{i-1}\$_{i+1}\ldots\$_k$. Via this construction, any valid perfect alignment will match all available copies of $\$_i$ for all $i$. Any alignment that isn't perfect (for example it skips matching some sub-string in the middle of $S_i$) will miss out on one of these $\$_i$ characters in $S_i$, thus lowering the value of a potential $k$-LCS.

Recall that $\lcsf(S_1,\ldots,S_k)$ is a function that returns the $k$-LCS of the strings $S_1,\ldots,S_k$. Recall that $\ID(S_1,\ldots,S_k) =  \sum_{i\in[1,k]}(S_i-\flcs(S_1,\ldots,S_k))$. That is, the count of all unmatched characters. 

\subsection{Representations of Many Lists at Once}
\label{subsec:representatoins}

The key idea is going to be different ways to represent many lists of OV instances at once. This representation comes from~\cite{compressedLCSSETH}.

\begin{definition}
\label{def:list}
Let $L$ be the list of vectors to a $k$-OV instance. Let $|L| =n$. 

The list representation of $\ell$ copies of $L$ is made up of $n^\ell$ vectors $\vec{v} = \Flist_{\ell}(L)[j_1][j_2]...[j_\ell]$. 
$$\Flist(L)_{\ell}[j_1][j_2]...[j_\ell][i] = \vec{v}[i] = L[j_1][i] \cdot L[j_2][i]  \cdots  L[j_\ell][i] $$
As a convenience of notation we will allow indexing with a single index into $\Flist(L)$:
$$\Flist(L)_{\ell}\left[ \sum_{i=1}^\ell j_in^{i-1} \right]  = \Flist(L)_{\ell}[j_1][j_2]...[j_\ell]$$
\end{definition}

When writing down this list of vectors into a string there are two ways to do it. The serial way of writing out each vector in order, or the interleaving way. The serial way of writing vectors is in many ways easier to use for gadgets. However, the interleaved version is easier to compress. We will describe both and use both in our gadgets.

\newcommand{\Bstring}{\mathbf{String_B}}
\newcommand{\Istring}{\mathbf{String_I}}
\begin{definition}
Let $L$ be the list of vectors to a $k$-OV instance. Let $|L| =n$.

We define the serial version as:
$$\mathbf{String_B}_{\ell}(L) = \bigcirc_{j_1\in[1,n] \ldots, j_\ell \in [1,n]}  \left(\bigcirc_{i=1}^d  L[j_1][i] \cdot L[j_2][i] \cdots L[j_\ell][i] \right).$$
Note that this is equivalent to
$$\mathbf{String_B}_{\ell}(L) =  \bigcirc_{j \in [1,n^\ell]} \bigcirc_{i \in [1,d]} \Flist(L)[j][i].$$

We define the interleaving version as:
$$\mathbf{String_I}_{\ell}(L) = \bigcirc_{i=1}^d \left( \bigcirc_{j_1\in[1,n] \ldots, j_\ell \in [1,n]} L[j_1][i] \cdot L[j_2][i] \cdots L[j_\ell][i] \right).$$
Note that this is equivalent to
$$\mathbf{String_I}_{\ell}(L) = \bigcirc_{i \in [1,d]} \bigcirc_{j \in [1,n^\ell]} \Flist(L)[j][i].$$

\end{definition}

So the difference between these versions is really just what order we represent the vectors. 
But crucially if there is a particular vector in $\Flist(L)$ that is of interest, this will appear in different places. 
In $\mathbf{String_B}_{\ell}(L)$ a vector $\vec{v}= \Flist(L)[i]$ appears as bits $[i\cdot d, (i+1)\cdot d -1]$. Where as in $\mathbf{String_I}_{\ell}(L)$ the vector  $\vec{v}= \Flist(L)[i]$ appears as bits $i, i+n^k, \ldots, i+(d-1)n^k$. 

We give one final version that merges a single vector with the interleaved representation. 

\begin{definition}
We will expand the previous definition of an interleaved string to allow a merge with a single other vector. 
Recall that
$$\mathbf{String_I}_{\ell}(L) = \bigcirc_{i \in [1,d]} \bigcirc_{j \in [1,n^\ell]} \Flist(L)[j][i].$$
Recall that for a vector $u =\Flist(L)[j]$ it is represented in bits $j, j+n^k, \ldots, j+(d-1)n^k$ in $\mathbf{String_I}_{\ell}(L)$.

We will define 
$$\mathbf{VecS_I}_{\ell}(L,v) = \bigcirc_{i \in [1,d]} \bigcirc_{j \in [1,n^\ell]} \Flist(L)[j][i]v[i].$$
Note that now if we take bits $j, j+n^{\ell}, \ldots, j+(d-1)n^{\ell}$ we give a vector $w$ such that $w[i] = u[i]v[i]$ where $u = \Flist[j]$.
\label{def:vecInterleave}
\end{definition}

\subsection{Intuition for our Reduction}
We will describe at a high level the reduction of~\cite{compressedLCSSETH} and the idea for generalizing it. In this section we will informally explain how to use the serial and interleaved representations of the vectors to build a reduction from SETH to compressed $k$-LCS. We hope to build understanding for what the different levels of alignment gadgets are doing through small examples and intuition. 

\subsubsection{Why We Care About Lining up the Strings}
Lets say we have a representation $\mathbf{String_I}_{\ell}(L)$ and we have a single vector, $v$ of length $d$. We create a new vector $\hat{v}$ where $\hat{v}[i\cdot n^\ell] = v[i]$ and otherwise $\hat{v}$ is zero.  $\hat{v}$ will have length $n^{\ell}(d-1)+1$. 
%This also comes from~\cite{compressedLCSSETH}.

Now we will note the following: the locations of the bits in $\hat{v}$ have exactly the offsets that single vectors do in $\mathbf{String_I}_{\ell}(L)$! So, if we consider sub-string $\mathbf{String_I}_{\ell}(L)[i,i+n^{\ell}(d-1)]$
then $v$ forms an orthogonal $\ell+1$ tuple with the vectors represented by $\Flist(L)_{\ell}[i]$ if $\hat{v}$ is orthogonal to $\mathbf{String_I}_{\ell}(L)[i,i+n^{\ell}(d-1)]$. 
This is why we care about offsets. The next few subsections will simply be building the gadgets necessary to get this ``perfect alignemnt'' and the gadgets needed to represent $k$-OV coordinates in the edit distance setting. 

\subsubsection{The Case of LCS With Two Strings}
How did all of this work in~\cite{compressedLCSSETH}? 
Start with $k$-OV.  Now consider a $k_1$ and $k_2$ that have this property: $k = k_1 + 2k_2$. They then create three sets: $A$ represents $k_1$ vectors at once, $B$ represents $k_2$ vectors at once and $C$ represents $k_2$ vectors at once. We will give a text explanation and then give a small example.

 For $C$ they create its string $S_C$ by taking $\Flist(L)_{k_2}$ and making $Y[i] = \Flist'_{k_2}(L)[i]$. That is they pad the vector with $n^{k_1}-1$ zeros after each entry in the original vector.
 
The $\mathbf{String_I}_{k_1}(L)$ representation of $A$ and the zeros are all very compressible with straight line programs.
For $B$ they create its string $S_B$ by basically merging each vector $b \in \Flist_{k_2}(L)$ with $A$ which is structured like $\mathbf{String_I}_{k_1}(L)$.

So, while the length of each string is $n^{k_2+k_1}$ the compressions are of size $n^{k+1}$. 
We need a gadget that forces our representation to align the two strings with no gaps in the LCS. If we do so, we can then check if an OV exists.
%\bsnote{Can we show the construction of $C$  and $B$?} ++

Now let us work through a small example. Let $k_1=2$ and $k_2=1$.
\begin{align}
v_1 =& <0,1,1,1>\\
v_2 =& <1,1,0,1>\\
v_3 =& <1,0,1,1>\\
v_4 =& <0,1,1,0>\\
L =& \{v_1, v_2, v_3, v_4\} 
\end{align}
For both $B$ and $C$ we form lists that are concatenations of vectors.
$$B=C=v_1, v_2, v_3 , v_4$$
For $A$ we first we want to generate all the vectors $v_{i,j}[p] = v_i[p] \cdot v_j[p]$.
\begin{align}
v_{1,2} =& <0,1,0,1> \text{    }&v_{1,3} =& <0,0,1,1> \text{     }&v_{1,4} =& <0,1,0,0> \\
v_{2,3} =& <1,0,0,1> \text{     }& v_{2,4} =& <0,1,0,0> \\
v_{3,4} =& <0,0,1,0>
\end{align}
Then we form $A$ by taking the first bit of each of these vectors then the second bits, etc. For this example, we put a semicolon in between the first bits and second bits. We do this here for making it easier to read.
$$A = 0,0,0,1,0,0;1,0,1,0,1,0;0,1,0,0,0,1;1,1,0,1,0,0$$
Note that if we take the bits $p,p+6,p+12,p+18$ these correspond to a single vector $v_{i,j}$.
We want the ability to merge $A$ and a single vector $v_i$. For this, if there is $v_i[p]=0$ then we replace all those bits with zeros, otherwise we leave the bits of $A$ as is. 
\begin{align}
(A\& v_1) =& 0,0,0,0,0,0;1,0,1,0,1,0;0,1,0,0,0,1;1,1,0,1,0,0\\
(A \& v_2) =& 0,0,0,1,0,0;1,0,1,0,1,0;0,0,0,0,0,0;1,1,0,1,0,0\\
(A \& v_3) =& 0,0,0,1,0,0;0,0,0,0,0,0;0,1,0,0,0,1;1,1,0,1,0,0\\
(A \& v_4) =& 0,0,0,0,0,0;1,0,1,0,1,0;0,1,0,0,0,1;0,0,0,0,0,0
\end{align}
We also want to generate the padded vectors for $S_C$. These padded vectors have `real' vector values at locations $0,6,12,18$. We want this because it means that if we line up one of these padded vectors, $(0\& v_i)$, against a vector mixed with $A$, $(A \& v_j)$, the `real' values correspond to a vector in $A$.
\begin{align}
(0 \& v_1) =& 0,0,0,0,0,0;1,0,0,0,0,0;1,0,0,0,0,0;1\\
(0 \& v_2) =& 1,0,0,0,0,0;1,0,0,0,0,0;0,0,0,0,0,0;1\\
(0 \& v_3) =& 1,0,0,0,0,0;0,0,0,0,0,0;1,0,0,0,0,0;1\\
(0 \& v_4) =& 0,0,0,0,0,0;1,0,0,0,0,0;1,0,0,0,0,0;0
\end{align}
 Specifically, if we line up $(0\& v_i)$ and $(A \& v_j)$ with an offset of $\Delta$ every lined up set of entries has a zero if there are $k_1+2k_2$ vectors that are orthogonal. Lets consider this example:
 \begin{align}
(A\& v_1) =& 0,0,0,0,0,0;1,0,1,0,1,0;0,1,0,0,0,1;1,1,0,1,0,0\\
(0 \& v_2) =&~~~~~~~~~~~~~~~~~~1,0,0,0,0,0;1,0,0,0,0,0;0,0,0,0,0,0;1\\
\text{This alignment}=&~~~~~~~~~~~~~~~~~~0;0,0,0,0,0,0;0,0,0,0,0,0;0,0,0,0,0,0
\end{align}
By picking this alignment, $\Delta=5$, we are picking the sixth vector that went into $A$ which is $v_{3,4}$. So, this alignment is checking the orthogonality of $v_1,v_2,v_3,v_4$. Lets look at a set of non-orthogonal vectors to compare. The vectors  $v_1,v_2,v_1,v_4$ are not orthogonal. The vector $v_{1,4}$ corresponds to $\Delta =2$.
 \begin{align}
(A\& v_1) =& 0,0,0,0,0,0;1,0,1,0,1,0;0,1,0,0,0,1;1,1,0,1,0,0\\
(0 \& v_2) =&~~~~~~~1,0,0,0,0,0;1,0,0,0,0,0;0,0,0,0,0,0;1\\
\text{This alignment}=&~~~~~~~0,0,0,0,0,0,1,0,0,0,0,0,0,0,0,0,0,0,0
\end{align}

This is why we want `perfect alignments'. We want to build up representations of these strings and allow for any choice of $\Delta$, but no skipping characters or merging gadgets. Previous work generates a perfect alignment gadget such that the output gadgeted strings $T_{(A\&v_i)}$ and $T_{(0\&v_j)}$ have a low LCS if there is a $\Delta$ such that $(A\&v_i)$ and $(0\&v_j)$ have an all zeros alignment with offset $\Delta$ (like  $(A\&v_1)$ and $(0\&v_2)$ with $\Delta=5$ in our example). We will now show what the perfect alignment gadget looks like. We can not use zeros and ones directly to solve OV (see Lemma~\ref{lem:kZerosAndOnes}). We want gadgets such that $0,0$ and $0,1$ have a low LCS and higher LCS for $1,1$. We add the characters $\$$ and $5$. The $5$ characters make sure we don't skip any symbols from $T_{(0\&v_i)}$. The $\$$ characters make sure we don't skip any symbols from $T_{(A\& v_j)}$.
\begin{align}
T_{(A\& v_1)} \approx&  \$05\$05\$05\$05\$05\$05\$15\$05\$15\$05\$15\$05\$05\$15\$05\$05\$05\$15\$15\$15\$05\$15\$05\$0 \$\\
T_{(0 \& v_2)} \approx& \$~~~~\$~~~~\$~~~~\$~~~~\$~~~~\$ 15\$05\$05\$05\$05\$05\$15\$05\$05\$05\$05\$05\$05\$05\$05\$05\$05\$05\$1 \$^{6}
\end{align}
The extra dollar signs at the ends of the strings allow all the dollar signs in $T_{(A\&v_1)}$ can be matched regardless of the offset $\Delta$. The $5$ symbols make skipping zero or one characters also skip at least one $5$ character. This set of characters (at a high level) form the perfect alignment gadget of~\cite{compressedLCSSETH}.

To form the string $S_B$ we want to basically concatenate $T_{(A\&v_1)}, T_{(A\&v_2)},T_{(A\&v_3)},T_{(A\&v_4)}$. 
To form the string $S_C$ we will basically concatenate $T_{(0\&v_1)}, T_{(0\&v_2)},T_{(0\&v_3)},T_{(0\&v_4)}$. These strings are not really concatenated, but instead have an alignment gadget wrapped around them. This alignment gadget guarantees a low LCS if a pair of strings $T_{(A\&v_i)}$ and $T_{(0\&v_j)}$ have a low LCS, and otherwise has a high LCS. 
In total, this means the strings $S_B$ and $S_C$ have low LCS if there exist $i,j,\Delta$ $(A\&v_i)$ and $(0\&v_j)$ have an all zeros alignment with offset $\Delta$. Such a zero alignment existing implies a $(k_1+2k_2)$-OV exists (a $4$-OV in our example).  

\subsubsection{How to Generalize This (Intuition)}
What we want generically for $k$-LCS is to have $k$ sets of lists that act like $B$ and $C$, and $\ell$ sets of lists that act like $A$. If we make an efficient reduction with these parameters, then we get a lower bound of $(M^\ell m^{k-\ell})^{1-o(1)}$.

To get the easy generalization we set $\ell=1$, and we have one ``$A$ type'' set of lists. Lets call the ``B'' and ``C'' type lists $B_1, \ldots, B_k$. 
We create strings $S_1, \ldots, S_k$ and merge $B_1$ and $A$ into $S_1$ using the method from~\cite{compressedLCSSETH}. For $S_i$ where $i>1$ we instead use the padding with zeros method. 
Now we have a situation where we want a gadget that forces the zero padded strings to line up exactly and they are both on some offset of $i$ from the strings in $S_1$. This as it turns out is easy. The strings are of the same length and just copying the construction used for $C$ in~\cite{compressedLCSSETH} will get us what we want here. 
Specifically, the string $S_1$ will be roughly\footnote[2]{Once again, it isn't really a concatenation. Instead these strings are wrapped in an alignment gadget. However, these alignment gadgets are basically concatenations of the strings but with characters in-between the strings. } a concatenation of $T_{(A \& v_i)}$ strings. The $S_i$ strings for $i>1$ are instead roughly\footnotemark[2] the concatenation of  $T_{(0 \& v_i)}$ strings. If we are given a set of $k$ strings $T_{(A \& v_1)},T_{(0 \& v_2)}, \ldots, T_{(0 \& v_k)}$ we want to allow any offset $\Delta$, but that same $\Delta$ should be shared across all the $T_{(0 \& v_i)}$ strings. As a result, we can just use the same $T_{(0 \& v_i)}$ strings from the two string case for all the strings $i>1$. The $T_{(0 \& v_i)}$ strings are the same in every location except for the $d$ representations of the bits in the vector $v_i$. The structure of the $5$ characters forces all of the $T_{(0 \& v_i)}$ strings to line up together to match all the $5$ characters. The $\$$ characters force any high LCS to not skip any of the zeros or ones in the $T_{(A \& v_i)}$ representation of $(A \& v_i)$. On a high level this reduction is easy because we still have only one offset $\Delta$ that we need to deal with.

What needs to happen if $\ell>1$? The primary hurdle is coming up with a setup where two long strings of the $B$ type from the original construction can be forced to have their optimal setup line them up exactly with no skips when they have two different offsets from the zero padded strings. To get a sense of the difficulty consider how many $\$$ characters should be at the start and end of those strings to allow all $\$$ characters to be matched regardless of offsets. As we grow the number on one string we have to grow the number on the other. So we need different symbols $\$_1,\ldots,\$_k$ for each string. 

For convenience let $0$ and $1$ be stand-ins for the strings we use in LCS reductions from $OV$ (there are longer strings that have the property we want where the LCS of $110,101,011,100,010,\ldots,000$ are all equal). Now, if we want to compare $k$ strings where $X \in \{0,1\}^m$ and $Y_1,\ldots, Y_{k-1} \in \{0,1\}^n$ where $n<m$ and we want the $Y$s to line up exactly and we want them to compare to some substring of $X$ then we can add a special character $\$$. Let $Z = \$^{c}$ where $c$ will be a constant in terms of $k$ that is larger than the full length of string representations of $0$ and $1$. 
\begin{align}
S_X =& Z X[0] Z X[1] Z \ldots Z X[m] Z\\
S_{Y_i} =& Z^{m-n} Y_i[0] Z Y_i[1] Z \ldots Z Y_i[m] Z^{m-n}
\end{align}

Now, if there is a sub-string of $X$ that is orthogonal to $Y_1,\ldots, Y_{k-1}$ then the optimal LCS will match all the $Z$s in $X$ and match each character $Y_i[j]$ with some character in $X$. If we spread out our matches of $Y_i$ and don't match to a sub-string of $X$ but instead to a subsequence, we will miss out on some $Z$ characters. So, if there isn't a match that corresponds to an OV we will lose out. And this works with many strings at once. 

This gadget is forcing not just any alignment of the underlying strings in $X_1,\ldots,X_{k-1},Y$, but a \textit{perfect} alignment. We will use this structure to build a perfect alignment gadget.

\subsection{Reduction}

We will prove lemmas building up the gadgets for this construction. We will describe the details of our gadgets and reductions here. The intuition described above of both why we care about perfect alignment and how to achieve it is used in the next subsection on our alignment gadget. 

\subsubsection{Alignment Gadget}
Now we will prove that the alignment gadget works as desired. First let us define what an alignment and perfect alignment are. 
\begin{definition}
\label{def:alignment}
We will generalize the Structured Alignment Cost definition of previous work~\cite{compressedLCSSETH}.
We are given as input $k$ lists of strings $X_1,\ldots,X_{k-1},Y$. Where $|X_i|=n$, $|Y|=m$, and $m<n$.
 
An alignment, $\Lambda$ is a list of $t$ $k$-tuples:
$$((i_{1,1},i_{1,2}, \ldots,i_{1,k}), \ldots,(i_{t,1},i_{t,2}, \ldots,i_{t,k}))$$ 
where $i_{j,p} < i_{j',p}$ if $j<j'$. 
We call an alignment perfect if 
$i_{j,p}+1 = i_{j+1,p}$ and $t=m$. 
\end{definition}

Now we will create some gadgets to maintain alignment. We will define them here and then below prove the various properties we care about. 

\begin{definition}[Perfect Alignment Gadget]
We are given as input $k$ lists of strings $X_1,\ldots, X_{k-1},Y$. Where $|X_i|=n$, $|Y|=m$, and $m<n$. 

We will add $k$ new symbols $\$_1,\ldots,\$_{k}$.  We will define $A = \$_1^{2\ell} \circ \ldots \circ \$_{k}^{2\ell}$ where $\circ$ is the concatenation operator. Let $A_{-i} = \$_1^{2\ell} \circ \ldots \circ \$_{i-1}^{2\ell} \circ \$_{i+1}^{2\ell} \circ \ldots \circ \$_{k}^{2\ell}$. Note that this is just $A$ with all the $\$_i$ symbols removed.
We also add a character $\%$ which we use to pad out our strings to give them more value. Let $B=\%^{2\ell}$ (note that the $\%$ character is serving the purpose of the $5$ character in our earlier example). We define $f=n-m$.
Then the generalized structured alignment gadget would produce strings:
\begin{align}
    AG_i(X_i)= A_{-i}^f \circ A \circ X_i[1] \circ B \circ A \circ X_i[2] \circ B \circ  \ldots \circ A \circ  X_i[n] \circ B \circ A_{-i}^f& \\
    AG_y(Y) = A_{-k}^f \circ A \circ Y[1] \circ B \circ A \circ Y[2] \circ B \circ \ldots \circ A \circ  Y_m \circ B \circ A_{-k}^f & 
\end{align}

\end{definition}

We also want gadgets to be a selector around the alignment gadget.
We add a new character $@$. We will leave $D$ unset for now.
We define our $SAG$ gadgets:
$$SAG_i(X_i) = @^{D-1} AG_i(X_i)$$
$$SAG_y(Y) = AG_y(Y) @^{D-1}.$$

We will now prove that these work as perfect alignment gadgets. This setup requires that we are trying to detect if there is a perfect alignment in which all strings match as much as they can. 

\begin{theorem}
We are given as input $k$ lists of strings $X_1,\ldots, X_{k-1},Y$. Where $|X_i|=n$, $|Y|=m$, and $m<n$. 
Furthermore all strings $S\in X_i$ and $S' \in Y$ have length $|S|=|S'|=\ell$. 

Additionally, given any set of $k$ strings $S_i \in X_i$ and $S_y \in Y$ the LCS distance is either $z$ or $z+1$ for some constant $z$. 

Let $\Lambda$ be a perfect alignment which is a list of $m$ $k$-tuples: $(i_{1,1},i_{1,2}, \ldots,i_{1,k}).$ Where $i_{j,h}+1=i_{j+1,h}$.

Then, if there is a perfect alignment $\Lambda$ in which there are exactly $m$ $k$-tuples such that  $$\flcs(X_1[i_{j,1}],\ldots,X_{k-1}[i_{j,k-1}],X_y[i_{j,k}]) = z+1,$$
then 
$$\flcs(SAG_1(X_1), SAG_2(X_2),\ldots, SAG_{k-1}(X_{k-1}), SAG_y(Y)) = D,$$
if all perfect alignments have less than $m$ $k$-tuples with a $\lcsf$ of $z+1$ then
%\agnote{Less than $m$ $k$-tuples with the above property?} ++ good call 
$$\flcs(SAG_1(X_1), SAG_2(X_2),\ldots, SAG_{k-1}(X_{k-1}), SAG_y(Y)) = D-1.$$
These strings use an additional alphabet of size $O(k)$. The total length of strings is $O(\ell n)$. The value of is $D=2\ell(2m+(k-1)n)+ (z+1)m$.
\label{thm:PerfectAlignmentWorks}
\end{theorem}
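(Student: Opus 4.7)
The plan is to establish the dichotomy between $D$ and $D-1$ by a symbol-by-symbol upper bound matched exactly when the intended perfect alignment is available, combined with a rigidity step showing any near-optimal alignment must induce a block-level perfect matching.

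The first step is to show $\lcsf \le D$. The character $@$ contributes $0$ since its copies in each $SAG_i(X_i)$ precede every non-$@$ symbol while those in $SAG_y(Y)$ follow every non-$@$ symbol, so no $@$ lies in a common subsequence of all $k$ strings. For $\$_j$ with $j<k$, the bottleneck is $SAG_j(X_j)$ itself, whose only $\$_j$-sources are the $n$ interior $A$-blocks (the boundary $A_{-j}^f$ pads omit $\$_j$), giving at most $2\ell n$ matches. For $\$_k$ and for $\%$ the bottleneck is $SAG_y(Y)$, giving $2\ell m$ each. The content characters contribute at most $(z+1)m$, because matching all $2\ell m$ copies of $\%$ forces a bijection between the $m$ interior $B$-blocks of $SAG_y$ and $m$ interior $B$-blocks of each $SAG_i$, partitioning the content matches into $m$ windows; each window is a common subsequence of a single length-$\ell$ $k$-tuple and has size at most $z+1$ by hypothesis. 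Summing gives $D$.

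For achievability in the positive case, given a witness $\Lambda$, I would align $SAG_y$ with $SAG_i$ so that the $m$ interior $B$-blocks of $SAG_y$ meet the $B$-blocks of $SAG_i$ indexed by $i_{1,i},\ldots,i_{m,i}$; the $A_{-i}^f$ pads supply exactly the right surplus of $\$_{j'}$ for $j'\ne i$ to saturate matches against the extra interior $A$-blocks in the other $SAG_{i'}$ and in $SAG_y$. Within each aligned $B$-separated window, we realize the content LCS, contributing $z+1$ by assumption, for a total of $D$. For the negative case I would establish rigidity: any alignment with LCS at least $D-1$ induces a perfect block-level pairing. Sweeping through the symbol classes, matching $2\ell n$ copies of $\$_j$ for $j<k$ forces (up to slack less than $2\ell$) a monotone bijection between the $n$ interior $A$-blocks of $SAG_j$ and $\$_j$-blocks of every other string; iterating over $j$ pins down a consistent block-level alignment of all interior $A$s, and a parallel argument for $\%$ pins down the interior $B$s. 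Any structural deviation (skipping or doubling up a block-pairing) costs at least $2\ell$ of some $\$_j$ or $\%$, while the content gain per window is capped at $\ell$, so no deviation is affordable within slack $1$ of $D$. Thus the induced matching is a perfect $\Lambda$ as in Definition~\ref{def:alignment}, its content sum is $\sum_j \lcsf(X_1[i_{j,1}],\ldots,Y[j])$, and under the hypothesis this is at most $(z+1)m-1$, giving $\lcsf \le D-1$; any perfect alignment realizes $D-1$.

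The main obstacle is making rigidity quantitative: controlling how much $\$_j$ or $\%$ matching must be sacrificed per structural deviation, uniformly over $j$, $i$, and admissible offsets. The lever is the ``all-or-nothing'' $2\ell$-long structure of the $\$_j$- and $\%$-blocks, which guarantees that perturbing a single block-pairing costs $2\ell$ while releasing only $\ell$ content characters that cannot compensate. Propagating this block-level rigidity through the sweep over $j=1,\ldots,k-1$ and then through $\%$ should force exactly the consecutive-index structure demanded by Definition~\ref{def:alignment}, after which the positive/negative dichotomy follows from the hypothesis on the tuple LCS values.
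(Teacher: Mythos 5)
Your overall structure — symbol-by-symbol counting for the upper bound, then a rigidity argument showing structural deviations cost more than they gain — mirrors the paper's proof. But there is a concrete error in your treatment of the $@$ symbols that breaks the negative case.

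You assert that "no $@$ lies in a common subsequence of all $k$ strings." This is false: every string $SAG_i(X_i)$ contains $@^{D-1}$ as a contiguous block, and so does $SAG_y(Y)$, so $@^{D-1}$ is itself a common subsequence of length $D-1$. What is true — and what the paper uses — is that a common subsequence cannot \emph{mix} $@$ with non-$@$ characters (because the $@$'s precede all $AG$-characters in each $SAG_i$ for $i<k$ but follow them in $SAG_y$). This dichotomy is essential: the $@$-only subsequence establishes an unconditional lower bound of $D-1$. You then compound the error at the end by claiming "any perfect alignment realizes $D-1$." In fact, a perfect alignment over the $AG$-characters realizes $2\ell(2m+(k-1)n)+\sum_j \lcsf(\text{tuple}_j)$, which equals $D-(m-g)$ where $g$ is the number of good tuples; when $g<m-1$ this is strictly below $D-1$. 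Without the $@$-based floor, your argument would only give $\lcsf \le D-1$ in the negative case, not the equality the theorem demands. The fix is to restore the role of $@$: the LCS is $\max(D-1, \text{best $AG$-based subsequence})$, the $AG$-based bound is $D$ exactly when $m$ good tuples exist, and otherwise it is at most $D-1$, so equality with $D-1$ follows from the $@$-block.

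A secondary, smaller issue: your upper-bound step asserts "content characters contribute at most $(z+1)m$" as if it were a counting bound, but it only holds after the rigidity argument establishes that any near-optimal alignment must be block-perfect (which you develop later). The logical order should be: bound the $\$$/$\%$ contributions by counting; then invoke rigidity (cost $2\ell$ per deviation, gain at most $\ell$) to reduce to perfect alignments; only then does the $(z+1)m$ cap on content follow from the per-tuple hypothesis. The paper intermingles these steps inside its Case 2 and Case 3; your version just needs the same reordering.
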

\begin{proof}
In the $SAG$ gadgets note that if we match any $@$ character we have a maximum $\flcs$ of $D-1$, and we can always achieve this. If we match any characters from $AG_i$, then we can match no $@$ symbols. Thus, what remains to be proven is that if there is a perfect alignment $\Lambda$ in which there are $m$ $k$-tuples such that if we have $m$ matches where $$\flcs(X_1[i_{j,1}],\ldots,X_{k-1}[i_{j,k-1}],X_y[i_{j,k}]) = z+1,$$
then 
$$\flcs(AG_1(X_1), AG_2(X_2),\ldots, AG_{k-1}(X_{k-1}), SAG_y(Y)) = D,$$
otherwise
$$\flcs(AG_1(X_1), AG_2(X_2),\ldots, AG_{k-1}(X_{k-1}), SAG_y(Y)) \leq D-1.$$

Recall $D=2\ell(2m+(k-1)n)+ (z+1)m$. Now we have three cases to argue. 

\paragraph{Case 1 [There are $m$ ``good'' $k$-tuples, lower bound]:} 
Consider aligning the strings in the perfect alignment $\Lambda$ which has $m$ $k$-tuples of strings which have a $\flcs$ of $z+1$. Now, we can match $m$ copies of $B$. What about $\$_i$ symbols? There are $2\ell n$ copies of the $\$_i$ symbol in $AG_i(X_i)$, they only appear in the copies of $A$ (they don't appear in $A_{-i}$). If we are matching up with a perfect alignment we can match all $2\ell n$ of these symbols. They either line up with copies of $A$ in other strings, or copies of $A_{-j}$. Finally, there are $2\ell m$ copies of $\$_y$  in $AG_y(Y)$. In a perfect alignment these symbols will all get matched to symbols that appear in copies of $A$ in other strings. So, in total 
$$\flcs(AG_1(X_1), AG_2(X_2),\ldots, AG_{k-1}(X_{k-1}), AG_y(Y)) \geq 2\ell m+ 2\ell(m+(k-1)n)+ (z+1)m \geq D.$$

\paragraph{Case 2 [There are $m$ ``good'' $k$-tuples, upper bound]:} Across all $\%$ and $\$_i$ symbols the maximum number of matches is $2\ell (m +m +n(k-1))$. What if we don't match the strings in a perfect alignment manner? There are two ways to do this. One is to skip matching some strings in $Y$ (e.g. merging $Y[j]$ and $Y[j+1]$ and matching that to some single string somewhere else, or simply skipping over a string in $Y$). If this happens we miss out on the characters in at least one $B$. The advantage gleaned for every skipped string in $Y$ is at most $|Y[j]|=\ell$, but skipping out on $B$ is worse, we lose 
%\agnote{loose = lose?} XD yep! 
$2\ell$ matches. The next  case is skipping strings in $X_i$. That is, matching $Y[j]$ with $X_i[j']$ but matching $Y[j]$ with $X_i[j'+1+\Delta]$ for some $\Delta\geq 1$. This looses at least $2\ell$ $\$_i$ characters. Any $k$-tuple of strings has, by assumption in the lemma, a $k$-LCS of at most $z+1$. So, this causes the $k$-LCS less than $D$. Finally, any time we match multiple strings in $X_i$ with a single string in $Y$, $Y[j]$, can increase the match  in $Y[j]$ by at most $\ell$. However, we loose at least $2\ell$ symbols $\$_i$ in $X_i$. This means the $k$-LCS of $AG_i$ strings is at most $D-\ell$. 

\paragraph{Case 3 [There are less than $m$ ``good'' $k$-tuples]:} 
Now, if there are less than $m$ matches with a $k$-LCS of $z+1$ then what is the maximum $k$-LCS? 
Similarly to case 2, if we skip as string in $Y$ or merge $Y[j]$ and $Y[j+1]$ we loose at least one $B$. Because the $B$ copies are in-between every pair of adjacent $Y$ strings. 
If we matching some symbol in each $Y[j]$, then the maximum value if we match one string from each $X_i$ to each string in $Y$ is $D-2\ell$ because we must skip some $2\ell$ characters $\$_i$, so the maximum match would be at most $D-2\ell$. Finally, we could merge multiple strings from $X_i$ and match with a single string from $Y$, in this setting we could potentially get $D-2\ell+\ell$. While we can potentially match all $\ell$ characters in $Y$, we must miss out on at least $2\ell$ $\$_i$ characters.

So, we have proven the result that if we have $m$ matches where $$\flcs(X_1[i_{j,1}],\ldots,X_{k-1}[i_{j,k-1}],X_y[i_{j,k}]) = z+1,$$
then 
$$\flcs(AG_1(X_1), AG_2(X_2),\ldots, AG_{k-1}(X_{k-1}), SAG_y(Y)) = D,$$
otherwise
$$\flcs(AG_1(X_1), AG_2(X_2),\ldots, AG_{k-1}(X_{k-1}), AG_y(Y)) \leq D-1.$$

Then, because of the $@$ symbols if there is a perfect alignment $\Lambda$ in which there are exactly $m$ $k$-tuples such that  $$\flcs(X_1[i_{j,1}],\ldots,X_{k-1}[i_{j,k-1}],X_y[i_{j,k}]) = z+1,$$
then 
$$\flcs(SAG_1(X_1), SAG_2(X_2),\ldots, SAG_{k-1}(X_{k-1}), SAG_y(Y)) = D,$$
if all perfect alignments have less than $m$ $k$-tuples then
$$\flcs(SAG_1(X_1), SAG_2(X_2),\ldots, SAG_{k-1}(X_{k-1}), SAG_y(Y)) = D-1.$$

Proving our desired result.

\end{proof}

Now, we will also want to use this gadget for regular alignment. In this case we will care about distinguishing between a single $k$-tuple with high value versus no strings having high value. 

\begin{theorem}
We are given as input $k$ lists of strings $X_1,\ldots, X_{k-1},Y$. Where $|X_i|=n$, $|Y|=m$, and $m<n$. 
Furthermore all strings $S\in X_i$ and $S' \in Y$ have length $|S|=|S'|=\ell$. 

Additionally, given any set of $k$ strings $S_i \in X_i$ and $S_y \in Y$ the LCS distance is either $z$ or $z+1$ for some constant $z$. 

Finally define $\hat{X}_i$ as a list that is simply two copies of $X_i$. That is $\hat{X}_i[j]=\hat{X}_i[n+j]=X_i[j]$.

In the first case there is exactly one $k$-tuple  where $$\flcs(X_1[i_{1,1}],\ldots,X_{k-1}[i_{1,k-1}],X_y[i_{1,k}]) = z+1,$$ and there exists a perfect alignment that can align this $k$-tuple. That is $n-i_{1,j}\geq m-i_{1,k}$ and $i_{1,j}\leq i_{1,k}$. 
In this first case we want:
$$\flcs(SAG_1(X_1), SAG_2(X_2),\ldots, SAG_{k-1}(X_{k-1}), SAG_y(Y)) = D.$$
In the second case there are zero $k$-tuples that have an LCS of $z+1$ then we want:
$$\flcs(SAG_1(X_1), SAG_2(X_2),\ldots, SAG_{k-1}(X_{k-1}), SAG_y(Y)) = D-1.$$
These strings use an additional alphabet of size $O(k)$. The total length of strings is $O(\ell n)$. The value of is $D=2\ell(2m+(k-1)n)+ zm +1$.

Additionally, let $c$ be the size of an SLP that gives a single variable for all $(k-1)n+m$ strings $X_i[j]$ and $Y[j]$. 
Then there is a SLP representation of all the strings $AG_1(X_1),\ldots, AG_{k-1}(X_{k-1}), AG_y(Y)$ of size $O(c+\lg(\ell)k+\lg(n)+kn+m)$.
\label{thm:BasicAlignment}
\end{theorem}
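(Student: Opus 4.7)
The plan is to mirror the proof of Theorem~\ref{thm:PerfectAlignmentWorks} and adapt its accounting to the new regime, where we distinguish between $0$ and $1$ ``good'' $k$-tuples instead of ``$<m$'' and $m$. As in that proof, I would first analyze the LCS of the $AG$ gadgets alone (without the $@^{D-1}$ padding), and then observe that the $@^{D-1}$ padding in $SAG$ forces the total LCS to be $\max(D-1,\;\text{LCS of the }AG\text{'s})$: any alignment either matches some $@$ (which alone gives $D-1$ and precludes matching inside the $AG$ portions) or matches only inside the $AG$ portions. Thus it suffices to show that the $AG$-LCS equals $D$ in Case~1 and is at most $D-1$ (and is achieved as $D-1$) in Case~2.

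\textbf{Case~1 lower bound.} The conditions $i_{1,j}\le i_{1,k}$ and $n-i_{1,j}\ge m-i_{1,k}$ ensure that the length-$m$ window of $X_i$ starting at position $i_{1,k}-i_{1,j}+1$ lies inside $X_i$ and places the good index $i_{1,j}$ at relative position $i_{1,k}$; aligning $Y$ against this window yields a perfect alignment in the sense of Definition~\ref{def:alignment}. Exactly as in Case~1 of Theorem~\ref{thm:PerfectAlignmentWorks}, this perfect alignment matches all $2\ell m$ copies of $\$_y$ in $AG_y(Y)$, all $2\ell n$ copies of every $\$_i$ in $AG_i(X_i)$ (pairing them either with $A$ blocks in the other $AG$'s or with the flanking $A_{-j}^f$ padding), and all $2\ell m$ copies of $\%$ in the $B$ separators. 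The $k$-tuples themselves contribute at least $zm+1$ since the single good tuple gives $z+1$ and the remaining $m-1$ give at least $z$. Summing yields $2\ell(2m+(k-1)n)+zm+1=D$.

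\textbf{Upper bound, Case~2, and SLP size.} The total supply of $\$_y$, $\$_i$, and $\%$ symbols that can ever be matched is $2\ell m+(k-1)\cdot 2\ell n+2\ell m=2\ell(2m+(k-1)n)$, and this maximum is attained only by a perfect alignment. The argument from Cases~2 and~3 of Theorem~\ref{thm:PerfectAlignmentWorks} transfers verbatim: any deviation from perfect alignment (skipping or merging $Y$ positions, or changing the offset inside some $X_i$) destroys at least one $A$, $A_{-i}$, or $B$ block in at least one string, forfeiting $\ge 2\ell$ special-character matches, whereas the most a single $Y[j]$ can contribute in string content is $\ell$. Hence the $AG$-LCS is bounded by $2\ell(2m+(k-1)n)$ plus the combined contribution of the $k$-tuples, giving $\le zm+1=D-1+1=D$ in Case~1 and $\le zm=D-1$ in Case~2, with the latter achieved by any perfect alignment. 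For the SLP size, starting from the assumed grammar of size $c$ that provides a variable for each $X_i[j]$ and each $Y[j]$, I would introduce non-terminals for $\$_i^{2\ell}$ ($i\in[1\dd k]$) and for $B=\%^{2\ell}$ via repeated squaring ($O(k\log\ell)$ productions total); one non-terminal each for $A$ and every $A_{-i}$ ($O(k)$ productions); one non-terminal for $A_{-i}^f$ with $f\le n$ via repeated squaring ($O(\log n)$ productions); and the outer ``spine'' of each $AG_i(X_i)$, which chains $n$ blocks of the form $A\circ X_i[j]\circ B$ flanked by $A_{-i}^f$ ($O(n)$ productions per $i$), plus the analogous $O(m)$ productions for $AG_y(Y)$. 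The total is $O(c+k\log\ell+\log n+kn+m)$ as claimed.

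\textbf{Main obstacle.} The delicate step is the upper bound in Case~1: the bonus is now a single $+1$ rather than the $+m$ available in Theorem~\ref{thm:PerfectAlignmentWorks}, so the margin by which non-perfect alignments lose out has shrunk dramatically. One must verify that no alignment can both collect the unique $+1$ \emph{and} recoup the $\ge 2\ell$ cost of any non-perfect configuration, and that merging several $X_i[j]$'s into a single $Y[j']$ cannot manufacture additional $+1$'s (because such a merge still forfeits $\ge 2\ell$ of the $\$_i$ symbols while gaining at most $\ell$ in string content). Both points follow from the per-deviation cost/benefit inequality already established in Theorem~\ref{thm:PerfectAlignmentWorks}, but must be applied in a regime where the available bonus is $O(1)$, so each piece of casework must be tracked exactly rather than asymptotically.
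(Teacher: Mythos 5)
Your proposal is correct and follows essentially the same strategy as the paper's proof: reduce the $SAG$ analysis to the $AG$ gadgets via the observation that the one-sided $@^{D-1}$ padding forces the LCS to be $\max(D-1,\text{LCS of the }AG\text{'s})$, then do the same three-way case analysis (Case 1 lower bound via the perfect alignment placing the good $k$-tuple, Case 1 upper bound and Case 2 via the $\ge 2\ell$ special-character penalty versus $\le \ell$ string-content gain for any skip or merge), and finally the same SLP accounting. The ``main obstacle'' you flag — that the bonus is only $+1$ and one must verify that merges cannot manufacture extra $+1$'s while recouping the $2\ell$ penalty — is precisely the point the paper handles with its ``gain at most $\ell$, lose at least $2\ell$'' inequality, and your appeal to that same inequality closes the gap.
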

\begin{proof}
As in the above theorem: in the $SAG$ gadgets note that if we match any $@$ character we have a maximum $\flcs$ of $D-1$, and we can always achieve this. If we match any characters from $AG_i$, then we can match no $@$ symbols. Thus, what remains to be proven is that if there is an alignment with exactly one $k$-tuple such that $$\flcs(X_1[i_{j,1}],\ldots,X_{k-1}[i_{j,k-1}],X_y[i_{j,k}]) = z+1,$$
then 
$$\flcs(AG_1(X_1), AG_2(X_2),\ldots, AG_{k-1}(X_{k-1}), SAG_y(Y)) = D,$$
if there are no $k$-tuples with an LCS of $z+1$ then:
$$\flcs(AG_1(X_1), AG_2(X_2),\ldots, AG_{k-1}(X_{k-1}), SAG_y(Y)) \leq D-1.$$

\paragraph{Case 1 [There is $1$ ``good'' $k$-tuple, lower bound]:} 
Consider aligning the strings where the $k$-tuples of strings which have a $\flcs$ of $z+1$ line up. Now, we can match $m$ copies of $B$. What about $\$_i$ symbols? There are $2\ell n$ copies of the $\$_i$ symbol in $AG_i(X_i)$, they only appear in the copies of $A$ (they don't appear in $A_{-i}$). If we are matching up with a perfect alignment we can match all $2\ell n$ of these symbols. They either line up with copies of $A$ in other strings, or copies of $A_{-j}$. Finally, there are $2\ell m$ copies of $\$_y$  in $AG_y(Y)$. In a perfect alignment these symbols will all get matched to symbols that appear in copies of $A$ in other strings. So, in total 
$$\flcs(AG_1(X_1), AG_2(X_2),\ldots, AG_{k-1}(X_{k-1}), AG_y(Y)) \geq 2\ell m+ 2\ell(m+(k-1)n)+ zm +1 = D.$$

\paragraph{Case 2 [There is $1$ ``good'' $k$-tuple, upper bound]:} 
If there is a singular ``good'' $k$-tuple we can not re-arrange the strings to get a larger alignment. If we skip or merge any strings in $Y$ we loose $2\ell$ symbols from $B$ at least, giving a maximum LCS of $D-2\ell$. If we skip or merge strings in $X_i$ then we loose at least $2 \ell$ symbols from $\$_i$ characters. We gain at most $\ell$ matches, giving a maximum LCS if we merge or skip of $D-\ell$.  Thus, the largest LCS possible is $D$.

\paragraph{Case 3 [There are zero ``good'' $k$-tuples, upper bound]:} In this case, if we follow a perfect alignment we achieve an LCS of $ 2\ell m+ 2\ell(m+(k-1)n)+ zm = D-1$. If we skip a string in $Y$ we miss out on $2\ell$ characters. If we match a single string in $Y$ to multiple strings in $X_i$ we loose at least $2\ell$ characters and match an additional $\ell$ characters at most for a total LCS of at most $D-\ell$. Finally, if we skip over strings in $X_i$, we miss out on $2\ell$ characters $\$_i$, for no benefit. All $k$-tuples have a value of $z$ regardless, so the maximum LCS is $D-2\ell$.

From all these cases we can say that if there is exactly one ``good'' $k$-tuple, and it is reachable in a perfect alignment then 
$$\flcs(SAG_1(X_1), SAG_2(X_2),\ldots, SAG_{k-1}(X_{k-1}), SAG_y(Y)) = D.$$
If there are no ``good'' $k$-tuples 
$$\flcs(SAG_1(X_1), SAG_2(X_2),\ldots, SAG_{k-1}(X_{k-1}), SAG_y(Y)) = D-1.$$

For the compression, $D<\ell n$. So we add an additional $\lg(n)+\lg(\ell).$ So if $c$ is the size of an SLP that gives a single variable for all $(k-1)n+m$ strings $X_i[j]$ and $Y[j]$. 
Then there is a SLP representation of all the strings $AG_1(X_1),\ldots, AG_{k-1}(X_{k-1}), AG_y(Y)$ of size $O(c+\lg(\ell)k+\lg(n)+kn+m)$. Unchanged. 
\end{proof}

\subsubsection{Zero and One Strings}
First we will use the gadgets for representing zeros and ones from~\cite{LCSisHard}.

\begin{lemma}[Zero and One Strings~\cite{LCSisHard}]
There are strings $CG_i(0), CG_i(1)$ such that:
$$k-LCS(CG_1(b_1),\ldots,CG_k(b_k)) = \begin{cases} C &\text{   if } b_1\cdots b_k=0\\
C+1 &\text{   if } \forall b_1\cdots b_k=1 \end{cases}$$
for some positive integer $C$ that is a function of $k$.
Note this corresponds to our desired relationship from $k$-OV. If we have one ``zero string'' then we get a small $k$-LCS, if there are all ``one strings'' then we get a larger $k$-LCS. 
These strings use an alphabet of size $O(1)$. If $k$ is a constant the length of these strings is $O(1)$.
\label{lem:kZerosAndOnes}
\end{lemma}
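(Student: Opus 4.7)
The plan is to exhibit the gadgets explicitly and verify the two cases by an elementary LCS argument. First I would pick a single terminal character $x$ (so the alphabet has size one) and set $CG_i(1) := xx$ and $CG_i(0) := x$ for every $i \in [1\dd k]$. Because every character ever used is $x$, any common subsequence is a power $x^t$, and its length is bounded above by $\min_i |CG_i(b_i)|$; in particular the $k$-LCS is determined entirely by the minimum ``supply'' of $x$'s across the $k$ inputs.

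Next I would verify the two cases. If $b_1\cdots b_k = 1$, every input equals $xx$, so the $k$-LCS is $xx$ itself, of length $C+1 := 2$. If $b_1\cdots b_k = 0$, at least one input equals $x$, capping any common subsequence at length $1$; and conversely $x$ is a subsequence of every gadget, so the $k$-LCS has length exactly $1 = C$. This establishes the lemma with $C = 1$, alphabet size $1$, and gadget length at most $2$ regardless of $k$.

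The only thing to double-check is that no clever mixed alignment lengthens the LCS beyond the per-string capacity, but this is ruled out by the trivial bound that any common subsequence has length at most $\min_i |CG_i(b_i)|$, so there is no real obstacle. If the stronger $C = C(k)$ stated in~\cite{LCSisHard} is wanted (there it interoperates with surrounding alignment gadgets of the same section), one can simply repeat the basic block $\Theta(k)$ times between fresh separator symbols to inflate $C$ by any desired amount; the combinatorial content of the argument is unchanged.
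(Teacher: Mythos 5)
Your construction satisfies the \emph{literal} statement of Lemma~\ref{lem:kZerosAndOnes} (for $b_1\cdots b_k=1$ the LCS is $2$, otherwise $1$), but it breaks an invariant that the rest of the section relies on and that the cited gadgets from~\cite{LCSisHard} are specifically designed to satisfy: the two outputs $CG_i(0)$ and $CG_i(1)$ must have the \emph{same} length. Your gadgets have $|CG_i(0)|=1$ and $|CG_i(1)|=2$. This matters because the coordinate gadgets are immediately fed into the perfect-alignment machinery --- in the proof of Lemma~\ref{lem:interleaveGadget} the lists $X_i[j]=CG_i(\cdot)$ and $Y[j]=CG_k(\cdot)$ are passed to $SAG_i$ and $SAG_y$, and Theorem~\ref{thm:PerfectAlignmentWorks} explicitly hypothesizes that ``all strings $S\in X_i$ and $S'\in Y$ have length $|S|=|S'|=\ell$.'' With variable-length entries, the padding $A=\$_1^{2\ell}\circ\cdots\circ\$_k^{2\ell}$, the separator $B=\%^{2\ell}$, and the cost accounting in Cases~1--3 of that proof all cease to make sense, since a ``perfect alignment'' as defined no longer corresponds to matching equal-width slots.

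Note also that the obvious equal-length patch fails: if you pad $CG_i(0)$ to $xy$ for a fresh character $y$ (keeping $CG_i(1)=xx$), then the all-zeros input $b_1=\cdots=b_k=0$ gives $k$ copies of $xy$, whose LCS is $2=C+1$ rather than the required $C=1$. Avoiding this by making the padding character index-dependent ($CG_i(0)=x\,y_i$) repairs correctness but inflates the alphabet to $\Theta(k)$, contradicting the $O(1)$ alphabet the lemma claims. So the construction really does need the more structured gadgets from~\cite{LCSisHard}, which are built to be simultaneously equal-length, constant-alphabet, and to have the $C$ vs.\ $C+1$ dichotomy; the ``repeat the block $\Theta(k)$ times between fresh separators'' suggestion at the end of your proposal does not address the length mismatch and is aimed at a non-issue (there is nothing wrong with $C=1$ per se).
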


\subsubsection{Interleave Gadget}
Now we will build the gadget that checks if our interleave representations represent a yes instance of orthogonal vectors.
For this next Lemma recall Definition~\ref{def:vecInterleave}, where we define $\mathbf{VecS_I}_{\ell}(L,v_i)$.

\begin{lemma}
Let $L$ be a list of $n$ $\{0,1\}$ vectors each of dimension $d=n^{o(1)}$. Let $v_1,\ldots, v_k$ be $\{0,1\}$ vectors each of dimension $d$. 
Given the lists, 
We produce strings: $IVG_1(L,\ell,v_1),\ldots, IVG_{k-1}(L,\ell,v_{k-1})$, and $EIVG(v_k)$ such that 
$$\flcs(IVG_1(L,\ell,v_1),\ldots, IVG_{k-1}(L,\ell,v_{k-1}), EIVG(v_k))=C$$
if there are $(k-1)\ell$ vectors in $L$ such that are orthogonal with $v_1, \ldots, v_{k}$.
If there do not exist $(k-1)\ell$ vectors in $L$ that are orthogonal with $v_1, \ldots, v_{k}$ then 
$$\flcs(IVG_1(L,\ell,v_1),\ldots, IVG_{k-1}(L,\ell,v_{k-1}), EIVG(v_k))=C-1.$$

These strings have length at most $O(n^{\ell}d)$ and an alphabet of size $O(k)$. 

Additionally we can compress $x$ strings $IVG_i(L,\ell,v_1),\ldots, IVG_i(L,\ell,v_x)$ or  $EIVG(v_1),\ldots, EIVG(v_x)$ with a total compression size of $O(xd+nd + \ell  \lg(n))$.
\label{lem:interleaveGadget}
\end{lemma}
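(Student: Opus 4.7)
The plan is to extend the Abboud et al.\ two-string construction to $k$ strings by forming $(k-1)$ interleaved strings and one zero-padded string, then wrapping everything in the structured alignment gadget of Theorem~\ref{thm:BasicAlignment}. Concretely, for each $i\in[1\dd k-1]$, I would take the raw interleaved encoding $\mathbf{VecS_I}_\ell(L,v_i)$ of length $n^\ell d$ from Definition~\ref{def:vecInterleave}, replace its $q$-th bit by the constant-size single-bit character gadget $CG_i(\cdot)$ of Lemma~\ref{lem:kZerosAndOnes}, and feed the resulting list of $n^\ell d$ equal-length strings into the alignment wrapper as the list $X_i$. I would build $EIVG(v_k)$ as the analogue of the padded string $y_{\vec{v_c}}$ from the two-string reduction: a list of the same length $n^\ell d$ whose $p$-th ``real'' entry is $CG_k(v_k[p])$ placed at list position $1+(p-1)n^\ell$, with the $n^\ell-1$ intermediate slots filled by a neutral gadget $CG_k(0)$, and fed into the same wrapper as $Y$.

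For correctness, I would argue that any alignment achieving the top value $C$ must be a perfect alignment in the sense of Theorem~\ref{thm:PerfectAlignmentWorks}, i.e., a choice of offsets $\Delta_1,\ldots,\Delta_{k-1}$ (independent across the interleaved strings) such that every real bit gadget $CG_k(v_k[p])$ of $EIVG(v_k)$ aligns with the corresponding bit gadget of each $IVG_i$ at the same logical bit position $p$. By the interleaved structure of Definition~\ref{def:vecInterleave}, the allowed offsets $\Delta_i\in[0\dd n^\ell-1]$ correspond to choices $j_i^\star:=\Delta_i+1\in[1\dd n^\ell]$, so that the aligned bit tuple at bit $p$ is exactly $(\Flist(L)_\ell[j_1^\star][p]\cdot v_1[p],\ldots,\Flist(L)_\ell[j_{k-1}^\star][p]\cdot v_{k-1}[p],v_k[p])$. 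Unpacking $\Flist(L)_\ell[j_i^\star]$ as an $\ell$-fold product of vectors of $L$, this product is $0$ at every $p$ exactly when the chosen $(k-1)\ell+k$ vectors are mutually orthogonal; and by Lemma~\ref{lem:kZerosAndOnes} the bit-gadget LCS at each $p$ takes its extremal value accordingly. Summing over bits and combining with the constant alignment bonus of Theorem~\ref{thm:BasicAlignment} (with the bit encoding flipped as needed so that LCS-maximization rewards orthogonality rather than penalizing it) yields the $C$ vs.\ $C-1$ dichotomy stated in the lemma.

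For the compression bound, the length-$n^\ell d$ tensor string $\mathbf{VecS_I}_\ell(L,v_i)$ admits an SLP of size $O(nd+\ell\log n)$ via the iterated-doubling ``tensor grammar'' used in the two-string lower bound: an $O(nd)$-symbol base grammar encoding $L$, an $O(\ell\log n)$-symbol top-level program realising the $n^\ell$-fold product and the interleaving, plus $O(d)$ symbols per string for the $v_i$-mask. Replacing bits by constant-size $CG$-gadgets preserves the grammar size up to constants, and the alignment wrapper of Theorem~\ref{thm:BasicAlignment} adds $O(\ell\log n+k)$ per string by its compression clause. Thus a batch of $x$ strings from one family shares a common $O(nd+\ell\log n)$ core and contributes $O(xd)$ symbols in the $v_i$-masks, giving the claimed total of $O(xd+nd+\ell\log n)$. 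The construction for $EIVG$ is analogous with $v_k$ in place of $v_i$ and $O(\log n)$-compressible zero-padding in place of the $\Flist$-encoding.

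The main technical obstacle I anticipate is showing that no ``drifting'' alignment (one that attempts to switch its effective offset $\Delta_i$ partway through, thereby aligning different bit positions of $EIVG(v_k)$ against different interleaved blocks of $IVG_i$) can achieve the top LCS value. This parallels Cases~2--3 in the proof of Theorem~\ref{thm:PerfectAlignmentWorks}: by setting the wrapper's repetition counts large enough, any shift pays at least $2\ell$ in lost $\$_i$ markers, whereas the best it can recover by improving a single bit-tuple match is only $1$. The novelty, compared to the two-string case, is that the wrapper must accommodate $k-1$ independent offsets simultaneously while still preventing drift within any one of them; verifying this via the generalised $\$_1,\ldots,\$_k$-based alignment gadget already developed in this section is where the bulk of the careful casework sits.
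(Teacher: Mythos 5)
Your proposal is correct and follows essentially the same route as the paper: encode each $\mathbf{VecS_I}_{\ell}(L,v_i)$ bitwise through the coordinate gadgets $CG_i(\cdot)$, encode $v_k$ as a zero-padded list with real bits spaced $n^{\ell}$ apart, wrap everything in the $SAG$ gadgets, and invoke the \emph{perfect} alignment analysis to force all $d$ aligned bit-positions to yield zeros simultaneously, with compression obtained via the shared recursive ``SubList'' grammar and an $O(d)$-symbol mask per string. One small but substantive slip: in the first and last sentences of your second paragraph you cite Theorem~\ref{thm:BasicAlignment} for the wrapper, but for this lemma the wrapper's threshold $D$ must be calibrated as in Theorem~\ref{thm:PerfectAlignmentWorks} (all $m$ aligned $k$-tuples must be ``good,'' not just one); you do invoke that theorem correctly in the middle, so be consistent and use it throughout — Theorem~\ref{thm:BasicAlignment} is what the paper reserves for the outer gadget in Lemma~\ref{lem:OVgadget}.
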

\begin{proof}
Consider the $k-1$ lists 
$\mathbf{VecS_I}_{\ell}(L,v_i)$ for $i\in [1,k-1]$.
Additionally, let 
$$Vec_E(v_k)[j] = 
\begin{cases}
v_k[h] & \text{ if } j=h\cdot n^\ell\\
0 & \text{ else.}
\end{cases}$$ 
Where the total length is $|Vec_E(v_k)|=n^\ell \cdot (d-1) +1$.

Now create the lists 
$$X_i[j] = CG_i(\mathbf{VecS_I}_{\ell}(L,v_i)[j])$$
$$Y[j] = CG_k(Vec_E(v_k)[j]).$$

Finally create the strings
$$IVG_i(L,\ell,v_1) = SAG_i(X_i)$$
$$EIVG(v_k) = SAG_y(Y).$$

Now, note that by construction the $k$-LCS of coordinate gadgets $CG_i(\cdot)$ is either some value $z$ or $z+1$. 

Now, note that $Vec_E(v_k)$ has zeros in every location except $h\cdot n^\ell$ for $h\in[1,d]$. If we perfectly align $Vec_E(v_k)$  with the $(k-1)$ vectors $\mathbf{VecS_I}_{\ell}(L,v_i)$, then these locations correspond with a vector in each! That is, as mentioned in Definition~\ref{def:vecInterleave}, the bits in locations  $j, j+n^{\ell}, \ldots, j+(d-1)n^{\ell}$ correspond to a vector $w$ where $w[h]=u[h]v_i[h]$ and $u =List[h]$.
If there are $(k-1)\ell$ vectors in $L$ that are orthogonal with $v_1, \ldots, v_{k}$, then there should be a perfect alignment of these vectors such that in every alignment location there is at least one zero. That is, an alignment where every aligned $k$-tuple has a $\flcs$ of $z+1$ as opposed to $z$. 

Let $c=|CG_i(\cdot)|$. Let $z+1$ be the value of $\delta_LCS(CG_1(b_1),\ldots,CG_k(b_k))$ if $b_1\cdots b_k=0$.\\ So 
$\delta_LCS(CG_1(b_1),\ldots,CG_k(b_k))=z$ if $b_1\cdots b_k=0$.

So, by Theorem~\ref{thm:PerfectAlignmentWorks}, if there are $(k-1)\ell$ vectors in $L$ that are orthogonal with $v_1, \ldots, v_{k}$ then 
$$\flcs(IVG_1(L,\ell,v_1),\ldots, EIVG(v_k))=C =  2c(2((d-1)n^{\ell}+1) + (k-1)dn^{\ell}) + (z+1)((d-1)n^{\ell}+1).$$
Otherwise, 
$$\flcs(IVG_1(L,\ell,v_1),\ldots, IVG_{k-1}(L,\ell,v_{k-1}), EIVG(v_k))=C-1.$$

Now we are going to argue that we can compress $x$ strings $IVG_i(L,\ell,v_1),\ldots, IVG_i(L,\ell,v_x)$ or \\ $EIVG(v_1),\ldots, EIVG(v_x)$ with a total compression size of $O(xd+nd)$.
First, note that $A_{-i}^f$ can be represented with an SLP of size $O(\lg(\ell)+\lg(n))$.
Now the rest of our string is a series of entries that look like $A \cdot CG_i(b) \cdot B$. We can create a SLP for both $S_0 = A \cdot CG_i(0) \cdot B$ and $S_1 = A \cdot CG_i(1) \cdot B$ with size $O(\lg(\ell))$. We give the names of $S_0,S_1$ to simplify the notation. 

Next, if we are compressing $x$ different $EIVG(v_i)$ gadgets, first we want to compress $S_0^{n^{\ell}-1}$ which appears repeatedly. We can do this with size $O(\ell \lg(n))$. Finally, we need to add the bits that correspond with each vector. We can do this with size $dx$. This gives a total size of $O(\ell \lg(n)+dx)$. 

Finally, consider the case of compressing $x$ different $IVG(L,\ell,v_j)$ gadgets. This part of the proof, where interleaves are very compressible, borrows very heavily from~\cite{compressedLCSSETH}. For this we note first that $\mathbf{VecS_I}_{\ell}(L,v_j)$ has $d$ sections of size $n^{\ell}$ that correspond to the $d$ dimensions of the vectors. Any section that corresponds to a $h$ where $v_j[h]=0$ has $n^{\ell}$ copies of $S_0$. This can be represented with size $O(\ell \lg(n))$. When $v_j[h]=1$ the section instead corresponds to $\bigcirc_{j\in[1,n^{\ell}]} \Flist[j][h]$. By using Definition~\ref{def:list} we can re-write this as 
$$\textbf{SubList}[L,k] = \bigcirc_{j_1 \in [1,n]}\bigcirc_{j_2 \in [1,n]}\bigcirc_{j_3 \in [1,n]} \ldots \bigcirc_{j_k \in[1,n]} L[j_1][i] \cdot L[j_2][i] \cdots L[j_k][i].$$
Now note that $\textbf{SubList}[L,1]$ is simply a string of $n$ bits and thus has a compression of size $n$. 
Now note that $\textbf{SubList}[L,g]$ is simply a string formed by appending either $\textbf{SubList}[L,g-1]$ or $n^{g-1}$ zeros. We can represent $n^{g-1}$ zeros with $O(g\lg(n))$ variables. So, if $\textbf{SubList}[L,g]$ has an SLP $f(g)$ then   $\textbf{SubList}[L,g+1]$ has an SLP $f(g+1) = n + f(g)+ g\lg(n)$. We have that $f(1)=n$, so $f(k) = kn+ k^2\lg(n) = O(n)$. Thus, the total size of compressing $x$ different $IVG(L,\ell,v_j)$ gadgets is $O(\ell \lg(n)+dx +dn)$.
\end{proof}

\subsubsection{Putting it all Together}

Now that we have an interleave gadget, we want to put many of these gadgets one after each other. However, we want to line up these gadgets. So, we put our interleave gadgets into the perfect alignment gadget. 

\begin{lemma}
\label{lem:OVgadget}
Let $a, b,$ and $k$ be positive constant integers.
Let $L$ be a list of $n$ vectors of length $d=n^{o(1)}$. 

There are $k$ strings $GOV_i(a,b,k,L)$ such that:
$$
\flcs(GOV_1(a,b,k,L),\ldots,GOV_k(a,b,k,L))=
\begin{cases}
D & \text{ if there is a }(bk+a(k-1))\text{-OV in }L \\
D-1 &\text{ else }
\end{cases}
$$
The length of each of these strings is $O(n^{a+b+o(1)})$ with an alphabet of size $O(k)$.
\end{lemma}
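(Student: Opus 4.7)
The plan is to compose the inner interleave gadget (\cref{lem:interleaveGadget}) with the outer basic alignment gadget (\cref{thm:BasicAlignment}) in order to first select and then test a single $(bk+a(k-1))$-tuple of vectors. Concretely, set $B := \Flist(L)_b$, viewed as a list of $n^b$ length-$d$ vectors, each encoding the coordinate-wise product of $b$ original vectors from $L$. For every $i \in [1\dd k-1]$, form the list $X_i := (\,IVG_i(L,a,B[j])\,)_{j\in [1\dd n^b]}$, and form $Y := (\,EIVG(B[j])\,)_{j\in [1\dd n^b]}$. Then define
\[
    GOV_i(a,b,k,L) := SAG_i(\hat{X}_i) \text{ for } i\in[1\dd k-1],\qquad GOV_k(a,b,k,L) := SAG_y(Y),
\]
where $\hat{X}_i$ is the length-$2n^b$ doubling introduced in \cref{thm:BasicAlignment}, used so that any single ``good'' index tuple can be realized inside some perfect alignment.

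Correctness reduces to chaining the two black boxes. \cref{lem:interleaveGadget} guarantees that every fixed $(\vec{v}_1,\ldots,\vec{v}_k)\in B^k$ produces inner $k$-LCS equal to $C$ if the $k$ vectors $\vec{v}_h$ (together with $a(k-1)$ vectors selected from the interleaved copies of $\Flist(L)_a$) are jointly orthogonal, and $C-1$ otherwise. Since each $\vec{v}_h$ already represents $b$ vectors of $L$, a good tuple is precisely a joint-orthogonality witness for $bk+a(k-1)$ original vectors, i.e.\ a YES-instance of the target OV problem. Taking $z=C-1$ in \cref{thm:BasicAlignment}, the outer selector/alignment layer then yields total $k$-LCS equal to $D$ exactly when at least one good index tuple exists, and $D-1$ otherwise, with $D$ determined by $z$, the inner string length $\ell=O(n^a d)$, and the list length $n^b$.

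For the size and alphabet, each $IVG_i$ or $EIVG$ string has length $O(n^a d)$, each list has $O(n^b)$ entries (even after doubling), and \cref{thm:BasicAlignment} adds only a constant number of separator and selector symbols around each entry. Hence $|GOV_i| = O(n^{a+b}\,d) = O(n^{a+b+o(1)})$ since $d = n^{o(1)}$; the alphabet comprises the $O(k)$ symbols used inside $IVG/EIVG$ plus $O(k)$ fresh outer separators, totalling $O(k)$.

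The main obstacle I expect is verifying the hypotheses of \cref{thm:BasicAlignment}. Three conditions must be checked: (i) all inner strings share a common length, which is immediate from \cref{lem:interleaveGadget}; (ii) every inner $k$-LCS lies in $\{C-1,C\}$, again immediate from \cref{lem:interleaveGadget}; and (iii) whenever a good index tuple exists it can be realized inside some perfect alignment of the outer lists. Condition (iii) is precisely why we double each $X_i$ to $\hat{X}_i$: one verifies that, for any position $j\in[1\dd n^b]$ and any $i_k\in[1\dd n^b]$, either the copy at index $j$ or the copy at index $j+n^b$ in $\hat{X}_i$ lies in the offset range compatible with a perfect alignment against $Y$, so the perfect-alignment precondition always holds. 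Finally, although \cref{thm:BasicAlignment} is phrased for ``exactly one'' good tuple, its case analysis actually establishes $\lcsf \ge D$ as soon as any good tuple exists and $\lcsf \le D-1$ when none does, which is exactly the dichotomy required by the lemma.
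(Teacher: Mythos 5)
Your construction of the lists $X_i$, $Y$, the doubling $\hat{X}_i$, and the final strings $GOV_i$ is identical to the paper's, and your choice of $z = C-1$ and the verification of conditions (i)--(iii) of Theorem~\ref{thm:BasicAlignment} are fine. The gap is in the last sentence.

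You observe that Theorem~\ref{thm:BasicAlignment} is stated for ``exactly one good $k$-tuple'' and claim that its case analysis nonetheless gives $\lcsf \ge D$ whenever any good tuple exists and $\lcsf \le D-1$ otherwise, and that this is ``exactly the dichotomy required.'' But the lemma asserts the \emph{exact} value $\lcsf = D$ in the YES case, and your argument only supplies a lower bound there. The upper bound $\lcsf \le D$ in Case~2 of Theorem~\ref{thm:BasicAlignment} genuinely uses that only one $k$-tuple is good: a single perfect alignment evaluates \emph{all} $m$ aligned $k$-tuples at once, so if several of them simultaneously have inner LCS $z+1$ at the same offset, the total can exceed $D$. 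Without ruling this out, the dichotomy $D$ vs.\ $D-1$ breaks. The paper handles this by first invoking the standard randomized reduction to \emph{unique} $(bk+a(k-1))$-OV, which guarantees at most one good outer $k$-tuple; it even explicitly remarks that ``having only two possible LCS values hinges crucially on using a unique $(bk+a(k-1))$-OV instance.'' You need to add that reduction (or weaken the lemma to a one-sided threshold, which would still suffice for the downstream theorem but is not what the lemma states).
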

\begin{proof}
Unique $(bk+a(k-1))$-OV is equivalent to the normal detection problem of $(bk+a(k-1))$-OV, via a randomized reduction [folklore]\cite{virgiSurvey}. So, we can consider the case where a single $(bk+a(k-1))$-tuple of vectors are orthogonal.

Let $L_b = \Flist(L)_{b}$ as defined in Definition~\ref{def:list}. Now we will use these to define lists of vectors. For all $i\in[1,k-1]$ let
$$X_i[j] = IVG_i(L,a,L_b[j]),$$
$$Y[j] = EIVG(L_b[j]).$$
So all $X_i$ and $Y$ have length $n^b$. The strings inside the gadgets is $n^{a+o(1)}$. All $k$-tuples of these strings have $\flcs$ values of either $C$ or $C-1$. We basically want to wrap an alignment gadget around these strings. However, we want to allow any $k$-tuple to be compared so we will double all the $X_i$ lists:
$$\hat{X}_i[j]=\hat{X}_i[j+n^b]=X_i[j].$$

Now, for any $k$ tuple $j_1,\ldots,j_k$ where $j_i\in[0,n^b-1]$ there is some offset $\Delta_i\in[0,n^b-1]$ for all $\hat{X}_i$ such that $j_k = j_1+\Delta_i \mod n^b$. So, we can align the $\hat{X}_i$ strings with $Y$ and get any $k$-tuple lined up. So, we can now build our gadgets. For all $i\in[1,k-1]$:
\begin{align}
GOV_i(a,b,k,L) &= SAG_i(\hat{X}_i)\\
GOV_k(a,b,k,L) &= SAG_y(Y)
\end{align}

Now, if there is a single $(bk+a(k-1))$-OV in our unique OV instance then there is a single $k-$tuple $X_1[j_1],\ldots, Y[j_k]$ of strings that have a $\flcs$ of $C$. Otherwise, they will all have a $\flcs$ of $C-1$.
By Theorem~\ref{thm:BasicAlignment} we have that in the first case where an $(bk+a(k-1))$-OV exists
$$\flcs(GOV_1(a,b,k,L),\ldots,GOV_k(a,b,k,L))=D$$
otherwise
$$\flcs(GOV_1(a,b,k,L),\ldots,GOV_k(a,b,k,L)) = D-1.$$
Note that having only two possible LCS values hinges crucially on using a unique $(bk+a(k-1))$-OV instance. 

The total size of the alphabet is $O(k)$ for the interleaves and another $O(k)$ for the $SAG$ gadgets, with a total alphabet size of $O(k)$. 

The total length of the $IVG$ gadgets is $n^{a+o(1)}$. We have   $n^{b}$ copies of these gadgets. Giving a total length of $n^{a+b+o(1)}$. 
\end{proof}

We will now bound the size of the compressed length of these gadgets.

\begin{lemma}
\label{lem:CompressionOVGadget}
Let $k$ be a constant integer and let $d=n^{o(1)}$.
Given the $k$ strings $GOV_i(a,b,k,L)$ defined in Lemma~\ref{lem:OVgadget} the size of the compression is $O(n^{b+o(1)}+n^{1+o(1)})$.
\end{lemma}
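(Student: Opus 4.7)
The plan is to unpack the construction of $GOV_i(a,b,k,L)$ from Lemma~\ref{lem:OVgadget} into two nested layers, and then bound the SLP size layer by layer by invoking the compression bounds we already proved for each layer.

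First, I would recall the outer structure: for $i\in[1\dd k-1]$ we have $GOV_i(a,b,k,L) = SAG_i(\hat X_i)$, where $\hat X_i$ is the length-$2n^b$ list formed by duplicating $X_i$, and the entries of $X_i$ are the inner gadgets $IVG_i(L,a,L_b[j])$; similarly $GOV_k(a,b,k,L) = SAG_y(Y)$ where $Y$ has length $n^b$ and each entry is $EIVG(L_b[j])$. The $@^{D-1}$ prefix/suffix from the $SAG$ definition trivially compresses to $O(\log D) = O(b\log n)$.

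Second, I would apply Lemma~\ref{lem:interleaveGadget} to the \emph{inner} layer: it lets us compress the $n^b$ strings of the form $IVG_i(L,a,L_b[j])$ (and independently the $n^b$ strings $EIVG(L_b[j])$) by a single SLP of size $O(n^b \cdot d + nd + a\log n)$ that provides a designated nonterminal for each of them. Since $a$ is constant and $d=n^{o(1)}$, this collapses to $O(n^{b+o(1)} + n^{1+o(1)})$. Duplicating $X_i$ into $\hat X_i$ requires no extra symbols: the $2n^b$ entries of $\hat X_i$ share nonterminals with $X_i$.

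Third, I would invoke Theorem~\ref{thm:BasicAlignment} for the outer layer. Feeding in $c = O(n^{b+o(1)}+n^{1+o(1)})$ for the inner compression, $n \leftarrow 2n^b$ and $m \leftarrow n^b$ for the list lengths, and $\ell = |IVG_i(\cdot)| = |EIVG(\cdot)| = n^{a+o(1)}$ for the per-string length, the theorem gives an SLP for all the $AG$-wrappers of total size $O(c + k\log\ell + \log n + kn + m) = O(n^{b+o(1)} + n^{1+o(1)})$, since $k,a,b$ are constants. Tacking on the $O(\log D)$ for the $@^{D-1}$ padding of $SAG$ does not change this asymptotic.

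The main obstacle is purely bookkeeping: matching the variable names of Theorem~\ref{thm:BasicAlignment} (where $n$ and $m$ denote outer list lengths and $\ell$ denotes the inner string length) to the parameters of Lemma~\ref{lem:OVgadget} (where $n$ is the OV dimension and $a,b$ are exponents), and verifying that every $n^{o(1)}$ factor, such as $d$ and the polylog terms, can be absorbed into the two summands $n^{b+o(1)}$ and $n^{1+o(1)}$. No new ideas beyond the two prior compression statements are needed.
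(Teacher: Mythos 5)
Your proof is correct and follows essentially the same route as the paper's: invoke Lemma~\ref{lem:interleaveGadget} for the inner $IVG/EIVG$ layer and Theorem~\ref{thm:BasicAlignment} for the outer alignment wrapper, then absorb the constant and $n^{o(1)}$ factors. (You even fix a small slip in the paper's write-up, which sets the parameter $x$ of Lemma~\ref{lem:interleaveGadget} to $n^a$ where it should be $n^b$, as you correctly use.)
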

\begin{proof}
We will start by describing the compression size of $CG_i(0),CG_i(1)$. These strings have length $O(1)$, thus the total size of the compression is at most $O(1)$. There are $2k$ of these strings and $k$ is a constant. So we can have $2k$ variables, one for each string and still the total size of the compression will be $O(1)$.

Next, we need to analyze the size of the compression of the $kn^b$ interleave gadgets $IVG_i(L, \ell, v_i)$. These include the zero and one bit representations, and then are wrapped in a perfect alignment gadget. By Lemma~\ref{lem:interleaveGadget} we have that the total size of the compression of these strings is $O(kd(n+n^{b}))$. Because we have $n^b$ distinct copies of $v_i$ we are generating the $IVG$ strings with, so the $x$ of the lemma is $n^a$ in our case. Note that $kd = n^{o(1)}$. So the total size of these compressions is $O(n^{1+o(1)}+n^{b+o(1)})$

Next, we need to analyze the size of the compression of the entire string. We use the compression of the interleave gadgets and the coordinate gadgets. In addition to this we are wrapping our interleave gadgets in an alignment gadget (from Lemma~\ref{thm:BasicAlignment}). This adds an additional $O(\lg(n)+n^b)$ variables to the SLP.

This gives an SLP of total size of $O(n^{b+o(1)}+n^{1+o(1)})$. 
\end{proof}

Now we will now combine the previous lemmas to give the hardness of $k$-LCS with compression.

\begin{reminder}{Theorem~\ref{thm:compkLCSHardFromseth}}
If the $k'$-OV hypothesis is true for all constants $k'$, then for any constant $\epsilon \in (0,1]$ grammar-compressed $k$-LCS requires $\left(M^{k-1}m\right)^{1-o(1)}$ time when the alphabet size is $|\Sigma|=\Theta(k)$ and $m=M^{\epsilon \pm o(1)}$. Here, $M$ denotes the total length of the $k$ input strings and $m$ is their total compressed size.
\end{reminder}
\begin{proof} 
Use the gadgets from Lemma~\ref{lem:OVgadget}. Call the strings $S_1,\ldots,S_k$ Consider positive integers $a$ and $b$.

These have an alphabet of size $O(k)$ and length $M=O(n^{a+b+o(1)})$ by Lemma~\ref{lem:OVgadget}.
These have a compression of total size $m=O(n^{b+o(1)}+n^{1+o(1)})$ by Lemma~\ref{lem:CompressionOVGadget}.

The size of the alphabet of the reduction is $O(k)$, so if the alphabet is allowed to be size $\Theta(k)$, then this lower bound applies. 

So  $M^{k-1}m = O(n^{(k-1)a+kb+o(1)})$. 
If $(bk+a(k-1))$-OV requires $n^{bk+a(k-1)-o(1)}$ time, then this corresponds to a lower bound of $\left(M^{k-1}m\right)^{1-o(1)}$ for SLP compressed $k-$LCS.

Consider a contradiction to our theorem statement. There would be an algorithm running in $(M^{k-1}m)^{1-\gamma}$ time to solve grammar compressed $k$-LCS when $m=M^{\epsilon \pm o(1)}$ and $\eps \in (0,1]$. In the easiest case we can pick an $a,b$ such that $b/(a+b) = \eps$, in this case we are done. For irrational $\eps$ we need to approximate and then pad the strings. Choose an $a$ and $b$ such that $\eps \leq b/(a+b) < \eps (1+\gamma/2)$. Such $a,b$ exist that are in $O(\frac{1}{\eps\gamma})$. We add a new character $3$.
Let $S'_i = S_i3^x$, where we will set $x\in[M,M^2]$ later. Note that $\lcsf(S'_1,\ldots,S'_k) = \lcsf(S_1,\ldots,S_k)+x$. Note that the compression of these strings is $m' = m+\lg(x) = \Theta(m)$ where as the length is $M'=M+x = \Theta(x)$. Choose $x= m^{1/\eps \pm o(1)} = M^{b/(a+b) \cdot 1/\eps}$.
So now $m' = M'^{1/\eps \pm o(1)}$. Note that $1\leq b/(a+b) \cdot 1/\eps \leq (1+\gamma/2)$. 
Now consider running the claimed fast algorithm on our new $S'_1,\ldots, S'_k$ instance.
The running time is 
$$(M'^{(k-1)}m')^{1-\gamma} = O( (M^{(1+\gamma/2)(k-1)} m)^{1-\gamma-o(1)}).$$
This running time can be simplified to $O(M^{(k-1)} m)^{(1+\gamma/2)(1-\gamma-o(1))}$. Note that $(1+\gamma/2)(1-\gamma-o(1))$ is less than $1-o(1)$. This algorithm violates the lower bound for the original $S_1,\ldots,S_k$ instance. This is a contradiction.

So any algorithm running in $(M^{k-1}m)^{1-\gamma}$ time to solve grammar compressed $k$-LCS when $m=M^{\epsilon \pm o(1)}$ and $\eps \in (0,1]$ violates  $k'$-OV for some $k'$ that depends on $\eps,\gamma$. This implies our theorem statement. 
\end{proof}

\subsection{Easy Edit Distance Lower Bounds from LCS}\label{sec:EditDistLB}
In this section we will prove that $k$-median edit distance is hard from $k'$-LCS. We take a $k'$-LCS instance and add various numbers of empty strings. This pushes the $k$-median edit distance problem towards deletions. So, we increase the number of strings, but we don't increase the total uncompressed or compressed length of the input.

Nicolas and Rivals show NP-hardness for $k$-edit distance through $k'$-LCS for large $k$ and $k'$~\cite{NicolasRivals05}. We take inspiration from their reduction to build our own, removing some of their restrictions, and making it fine-grained efficient. 
We then use the hardness results we have for $k'$-LCS to get lower bounds for $k$-edit distance. 
We will be focusing on a version of edit distance where the strings are allowed to be of very different sizes. We will give an explicit definition now. 

\begin{definition}
Given $k$ strings $S_1,\ldots,S_k$ of lengths $M_1,M_2,\ldots, M_k$ the $k$-edit distance (or $k$-median edit distance) of those strings is the minimum sum across all strings of edits needed to make all strings equal some new string $S'$.
The allowed edits are deleting a character, adding a character and replacing a character (Levenshtein distance). 

More formally: Recall that $\ED(S_i,S')$ denotes the edit distance of $S_i$ and $S'$. Recall that the $k$-median distance is:
$$\ED(S_1,S_2,\ldots, S_k) = \min_{S'\in \text{All Strings}} \left( \sum_{i=[1,k]} \ED(S_i,S') \right).$$
\end{definition}

We can use inspiration from~\cite{NicolasRivals05} to get lower bounds for the center version of this problem as well. Let us remind the definition of $k$-center edit distance problem.   
\begin{definition}
Given $k$ strings $S_1,\ldots,S_k$ of lengths $M_1,M_2,\ldots, M_k$. We define the $k$-center edit distance of those strings is the minimum of the maximum of the distances of those strings to a string $S'$.
The allowed edits are deleting a character, adding a character and replacing a character (Levenshtein distance). 

More formally: Let $\ED(S_i,S')$ be the edit distance of $S_i$ and $S'$. Now the $k$-center distance is:
$$\CED(S_1,S_2,\ldots, S_k) = \min_{S'\in \text{All Strings}} \left ( \max_{i=[1,k]} \ED(S_i,S') \right).$$
\end{definition}

%We will now explain the implications from the $k$-LCS results and the Lemmas of Nicolas and Rivals. 

\subsection{Median Edit Distance}

%We will now show that $k$-median edit distance can be used to compute the $k'$-LCS of strings. To do this we add many empty strings. By adding empty strings we make deletions preferable to other allowed edits. We don't increase the total length of strings by adding empty strings, however, this does increase the number of strings we are computing over. \\

\begin{theorem} 
\label{thm:medianEDhardLCS}
We are given a $k$-LCS instance with strings $S_1,\ldots,S_{k}$ all of length $M$. Let the $k$-LCS of these strings be denoted by $\lcsf(S_1,\ldots,S_{k})$.
The $(2k-1)$-median edit distance on $S_1$ through $S_k$ and $k-1$ copies of the empty string $\emptystring$ is related to the $k$-LCS of $S_1$ through $S_k$:
$$\ED(S_1,\ldots,S_{k},\emptystring,\ldots,\emptystring) = Mk - \lcsf(S_1,\ldots,S_{k}).$$
\end{theorem}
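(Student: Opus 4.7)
My plan is to establish the two inequalities separately, since the statement is an exact equality.

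For the $\leq$ direction, I will pick $S^*$ to be any longest common subsequence of $S_1,\ldots,S_k$, so $|S^*| = \lcsf := \lcsf(S_1,\ldots,S_k)$. Then each $S_i$ can be transformed into $S^*$ by exactly $M - \lcsf$ deletions and each empty string by exactly $\lcsf$ insertions, giving
\[
\ED(S_1,\ldots,S_k,\emptystring,\ldots,\emptystring) \le k(M-\lcsf) + (k-1)\lcsf = kM-\lcsf.
\]

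For the $\geq$ direction, I need to show that \emph{every} string $S^*$ satisfies $\sum_{i=1}^k \ED(S_i,S^*) + (k-1)|S^*| \ge kM - \lcsf$. Fix an optimal alignment of each $S_i$ with $S^*$. Writing $a_i,b_i,c_i,d_i$ for the number of matches, substitutions, insertions (into $S_i$), and deletions (from $S_i$) in this alignment, we have $M=a_i+b_i+d_i$ and $\ED(S_i,S^*)=b_i+c_i+d_i = M - a_i + c_i \ge M-a_i$. For each position $j\in[1\dd |S^*|]$, let $x_j$ be the number of indices $i$ for which the $j$-th character of $S^*$ is matched in the alignment of $S_i$ with $S^*$, so that $\sum_i a_i = \sum_j x_j$.

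The key structural observation is that the set $A = \{j : x_j = k\}$ of positions of $S^*$ that are matched by \emph{every} $S_i$ yields a common subsequence of $S_1,\ldots,S_k$: for each such $j$, pick the positions $p_{i,j}$ in $S_i$ witnessing the match; since alignments are order-preserving, for each fixed $i$ the sequence $(p_{i,j})_{j\in A}$ is strictly increasing and all $p_{i,j}$ hold the common character $S^*[j]$. Hence $|A|\le \lcsf$. Combining with the trivial bound $x_j \le k-1$ for $j\notin A$,
\[
\sum_j x_j \le k|A| + (k-1)(|S^*|-|A|) = (k-1)|S^*| + |A| \le (k-1)|S^*| + \lcsf.
\]
Therefore $\sum_i \ED(S_i,S^*) \ge kM - \sum_j x_j \ge kM - (k-1)|S^*| - \lcsf$, and adding the contribution $(k-1)|S^*|$ from the $k-1$ copies of $\emptystring$ gives the desired lower bound.

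The only nontrivial step is verifying the structural lemma that the positions matched by \emph{all} $S_i$ form a common subsequence, and this follows immediately from order-preservation of each individual alignment — no fancy combinatorics is required. Everything else is bookkeeping via the identity $\ED = M - a_i + c_i$ and a simple counting argument on $\sum_j x_j$.
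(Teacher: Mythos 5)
Your proof is correct. It follows the same high-level strategy as the paper's proof (prove $\leq$ by choosing the LCS as target, prove $\geq$ by bounding the cost of any target $S^*$), and in fact your bookkeeping via $\sum_i a_i = \sum_j x_j$ is equivalent to the paper's accounting with $\sum_j (2i_j + b_j)$: one can check that $2i_j + b_j \geq |S^*| - a_j$, so bounding $\sum_i a_i$ is the same as bounding $\sum_j (2i_j + b_j)$ from below. The meaningful difference is that you make rigorous the step the paper glosses over. The paper asserts $\sum_j (2i_j + b_j) \geq |T'| - \lcsf$ and justifies it only with the informal remark that ``any change from $T$ must involve at least one substitution or an insertion,'' which by itself only gives a bound of $1$, not $|T'| - \lcsf$. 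Your structural lemma --- that the positions of $S^*$ matched (by a zero-cost edge) in \emph{every} alignment form a common subsequence of $S_1,\ldots,S_k$, hence have size at most $\lcsf$, so $\sum_j x_j \le (k-1)|S^*| + \lcsf$ --- is exactly the missing quantitative content, and it is a clean and correct way to supply it. So: same approach, but a tighter and more self-contained argument for the lower-bound direction.
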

\begin{proof}

First, let us prove that
$\ED(S_1,\ldots,S_{k},\emptystring,\ldots,\emptystring) \leq Mk - \lcsf(S_1,\ldots,S_{k}).$ Let $T$ be the target string of $\lcsf(S_1,\ldots,S_{k})=|T|$. Then, the sum of edit distances to $T$ is $k(n-|T|)+(k-1)|T| = kn-|T|$.

Second, let us prove that 
$\ED(S_1,\ldots,S_{k},\emptystring,\ldots,\emptystring) \geq Mk - \lcsf(S_1,\ldots,S_{k}).$
Let $T'$ be a target string. Now let $d_j,i_j,b_j$ be the number of deletions, insertions and substitutions to go from $S_j$ to $T'$. Let $e_j = d_j+i_j+b_j$ be the edit distance of $S_j$ to $T'$. Now note that $e_j \geq M-|T|+2i_j+b_j$. Additionally, note that the distance from $\emptystring$ to $T'$ is $|T'|$. 
So the total distance is 
$$kM-k|T'|+ \sum_{j=1}^k 2i_j+b_j+(k-1)|T'| = kM-|T'|+\sum_{j=1}^k 2i_j+b_j.$$
So, $\ED(S_1,\ldots,S_{k},\emptystring,\ldots,\emptystring)$ can only be less if $|T'|>|T|$. Note, that $\sum_{j=1}^k 2i_j + b_j \geq  |T'|-|T|$. The target $T$ is the longest string to be achieved with only deletions. Any change from this $T$ (notably added characters) must involve at least one substitution or an insertion. So we can say that the total distance is
$$kM-|T'|+\sum_{j=1}^k 2i_j+b_j \geq kM-|T'|+|T'|-|T| = kM-|T|.$$
So, $\ED(S_1,\ldots,S_{k},\emptystring,\ldots,\emptystring) \geq Mk - \lcsf(S_1,\ldots,S_{k}).$

Thus, $\ED(S_1,\ldots,S_{k},\emptystring,\ldots,\emptystring) = Mk - \lcsf(S_1,\ldots,S_{k}).$
\end{proof}

Now that we have a tight relationship between the edit distance and \LCS, we can use this to get a lower bound from SETH through \LCS.

\begin{theorem}\label{thm:EditDistanceSethHard}
Given an instance of $k$-median edit distance on strings of lengths $M_1 \leq M_2 \leq \cdots \leq M_{k}$ where these strings can all be compressed into a SLP of size $m$. 
Then, an algorithm for $k$-median edit distance that runs in 
$\left((M_2+1) \cdots (M_{2k-1}+1) \cdot m  \right)^{1-\epsilon}$ time for constant $\epsilon>0$ violates SETH.
\end{theorem}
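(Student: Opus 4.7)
The plan is to derive the theorem from Theorem~\ref{thm:medianEDhardLCS} and Theorem~\ref{thm:compkLCSHardFromseth} by reducing grammar-compressed $k'$-LCS to grammar-compressed $(2k'-1)$-median edit distance via empty-string padding. Given a $k'$-LCS instance $(S_1,\ldots,S_{k'})$ with equal string length $L$ and total SLP size $n$, I will append $k'-1$ copies of $\emptystring$ to obtain a $(2k'-1)$-median edit distance instance. Theorem~\ref{thm:medianEDhardLCS} supplies the exact identity $\ED(S_1,\ldots,S_{k'},\emptystring,\ldots,\emptystring) = k'L - \lcsf(S_1,\ldots,S_{k'})$, so any exact algorithm for the resulting edit distance instance immediately solves the original LCS instance with the same running time, while adding the empty strings contributes only $O(k')$ additional symbols to the SLP and no characters to the uncompressed total length.

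In this reduced instance the sorted lengths are $M_1=\cdots=M_{k'-1}=0$ and $M_{k'}=\cdots=M_{2k'-1}=L$, so the product $(M_2+1)\cdots(M_{2k'-1}+1)$ collapses to $1^{k'-2}(L+1)^{k'}$. Applying Theorem~\ref{thm:compkLCSHardFromseth} to the inner LCS gadget with parameters $(a,b)$ chosen so that $L\approx n^{a+b}$, the $k'$-LCS instance requires $\bigl(L^{k'-1}\cdot n\bigr)^{1-o(1)} = n^{a(k'-1)+k'b - o(1)}$ time under the $\bigl(a(k'-1)+k'b\bigr)$-OV hypothesis, which is implied by SETH. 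Matching this exponent against the assumed runtime $\bigl((M_2+1)\cdots(M_{2k'-1}+1)\cdot m\bigr)^{1-\epsilon}$ with $m = n$ then yields the desired contradiction with SETH for appropriate constant $\epsilon > 0$.

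Finally, to realize the full range of compression ratios $m=(\sum_i M_i)^{\delta\pm o(1)}$ for any constant $\delta\in(0,1]$, I will repeat the padding argument at the end of the proof of Theorem~\ref{thm:compkLCSHardFromseth}: append $3^x$ for a fresh symbol $3\notin\Sigma$ to each non-empty $S_i$, choosing $x\in[L,L^2]$ so that $x\approx m^{1/\delta}$. This translates $\lcsf$ and $\ED$ by additive constants depending only on $x$, inflates the total length from $\Theta(L)$ to $\Theta(x)$, and adds only $O(\log x)$ symbols to the SLP. The same multiplicative manipulation $(1+\gamma/2)(1-\gamma-o(1))<1-o(1)$ used at the end of Theorem~\ref{thm:compkLCSHardFromseth} then converts the assumed $(1-\epsilon)$-exponent speedup into a violation of the $k''$-OV hypothesis for some $k''$ depending only on $\epsilon$ and $\delta$. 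The main technical obstacle is exactly this final bookkeeping, namely aligning the product $(M_2+1)\cdots(M_{2k'-1}+1)\cdot m$ against the LCS lower bound exponent $a(k'-1)+k'b$ after the length substitution $L\leftarrow L+x$ and verifying the resulting inequality uniformly for every fixed $\epsilon>0$.
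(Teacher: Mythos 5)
Your proof follows exactly the paper's route for Theorem~\ref{thm:EditDistanceSethHard}: invoke the exact identity of Theorem~\ref{thm:medianEDhardLCS} to reduce $k'$-LCS to $(2k'-1)$-median edit distance by appending $k'-1$ empty strings (costing only $O(k')$ extra SLP symbols), then import the lower bound from Theorem~\ref{thm:compkLCSHardFromseth}. Your instantiation of the sorted lengths ($M_1=\cdots=M_{k'-1}=0$, $M_{k'}=\cdots=M_{2k'-1}=L$) and the resulting product $1^{k'-2}(L+1)^{k'}$ is actually computed more carefully than the paper's own line, which asserts ``$M^{k-1}=M_2\cdots M_k$''---an identity that does not hold for these lengths (with all the zeros, $M_2\cdots M_k$ would vanish; the correct cancellation is $M_{k'+1}\cdots M_{2k'-1}=M^{k'-1}$). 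The padding-by-$3^x$ step you append to hit a target compression ratio is redundant here since the paper's proof simply relies on Theorem~\ref{thm:compkLCSHardFromseth} already covering all $\epsilon\in(0,1]$, but it is not wrong.

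The one thing to be aware of: you flag, but leave open, the ``main technical obstacle'' of aligning the product $(M_2+1)\cdots(M_{2k'-1}+1)$, which evaluates to $(L+1)^{k'}$, against the LCS lower-bound exponent $L^{k'-1}n$. This gap is real---$(L+1)^{k'}\gg L^{k'-1}$, so a runtime bound phrased with the $(L+1)^{k'}$ product is strictly stronger than what the reduction yields, and the resulting contradiction only kicks in for $\epsilon$ above a $k'$-dependent threshold rather than for \emph{all} constant $\epsilon>0$. The paper's proof glosses over this with the incorrect identity noted above, so you are not missing something the paper supplies; rather, the theorem as printed appears to have an indexing typo (the intended product is over the last $k'-1$ terms, $(M_{k'+1}+1)\cdots(M_{2k'-1}+1)$, which does match $M^{k'-1}$). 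Given that ambiguity, your approach is the right one, and your explicit acknowledgment of the mismatch is a more honest account than the paper's.
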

\begin{proof}
We will use Theorem~\ref{thm:compkLCSHardFromseth} and Theorem~\ref{thm:medianEDhardLCS}. 

Say we are given an instance of $k$-LCS with strings $S_1,\ldots, S_{k}$ of length $M$ and a SLP compression of all strings of size $m$. Then, by Theorem~\ref{thm:medianEDhardLCS} we can solve this with an instance of $(2k-1)$-median edit distance on $k$ strings $S_1,\ldots, S_{k},\emptystring,\ldots,\emptystring$. We can compress these $k$ strings with a compression of size $m' =m+O(k)$ (we need only compress the empty string in addition). 

$k$-LCS requires $(M^{k-1}m)^{1-o(1)}$ if SETH is true. Note that for our chosen strings $M^{k-1}= M_2 \cdots M_{k}$. Now note that our compression is of size $m' = O(m)$. 
The reduction takes constant time (simply append the empty string and make a call to $k$-median edit distance). So $k$-median edit distance requires $\left((M_2+1) \cdots (M_{2k-1}+1)  \cdot m  \right)^{1-o(1)}$ time if SETH is true. We can re-state this as an algorithm running time $\left((M_2+1) \cdots (M_{2k-1}+1)  \cdot m  \right)^{1-\epsilon}$ time for constant $\epsilon>0$ violates SETH.
\end{proof}

Next we will use similar ideas to show hardness for center edit distance. 

\subsection{Center Edit Distance}

Nicolas and Rivals present a very simple reduction from a specific version of $(k-1)$-LCS to $k$-Center Edit Distance. This reduction simply adds the empty string as the last string.  The same concept works here. We can distinguish between the case where a  $(k-1)$-LCS is greater than or equal to some constant $c$. Because, if all the strings in the $k$-LCS are of length $M$ adding a single empty string distinguishes between the $(k-1)$-LCS less than $M/2$ or greater than or equal to it. Why? Because, if the $k$-LCS at least $M/2$ then every string is $M/2$ deletions away from the target string \emph{and} the new empty string is as well! 
Otherwise, if the LCS is less than $M/2$, we are more than $M/2$ edits away provably. By adding characters to our $(k-1)-$LCS strings we can artificially increase the match (adding a large number of matching characters to each string), or artificially decrease it (add a large number of not-matching characters). By doing this we can go from our $(k-1)$-LCS being $c$ to our  $(k-1)$-LCS being $M'/2$, for our new length of strings.

\begin{lemma}\label{lem:kLCStohalfLCS}
Assume a $k$-LCS instance over $k$ strings of length exactly $M$.
If deciding whether the $k$-LCS distance is equal to $M/2$ over an alphabet of size $|\Sigma|$ can be done in $T(M)$ time, then we can decide whether the $k$-LCS distance is equal to $C$ over an alphabet of size $|\Sigma|+k+1$ for any constant $C$ in time $O(T(M)+kM)$. 
\end{lemma}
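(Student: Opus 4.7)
\medskip
\noindent\textbf{Proof proposal.} The plan is to pad each input string with carefully chosen new characters so that the ``is the $k$-LCS equal to $C$?'' question on the original instance becomes equivalent to an ``is the $k$-LCS equal to $M'/2$?'' question on the padded instance, where $M'$ is the new common length. The construction splits into two symmetric cases depending on whether the target $C$ is below or above half of the current length.

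\emph{Case 1: $C \le M/2$.} Introduce a fresh symbol $\$\notin\Sigma$ and set $S_i' := S_i\cdot \$^{M-2C}$ for every $i\in[1\dd k]$. All strings $S_i'$ have the common length $M' := 2(M-C)$. Because $\$$ occurs in each $S_i'$ only within the appended suffix, every common subsequence of $S_1',\ldots,S_k'$ must have the form $X\cdot \$^t$ for some common subsequence $X$ of $S_1,\ldots,S_k$ and some $t\in[0\dd M-2C]$. Hence $\lcsf(S_1',\ldots,S_k') = \lcsf(S_1,\ldots,S_k) + (M-2C)$, which equals $M'/2 = M-C$ precisely when $\lcsf(S_1,\ldots,S_k)=C$.

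\emph{Case 2: $C > M/2$.} Introduce $k$ pairwise distinct fresh symbols $c_1,\ldots,c_k\notin\Sigma$ and set $S_i' := S_i\cdot c_i^{2C-M}$. All strings $S_i'$ have the common length $M' := 2C$. Since each $c_i$ occurs only in $S_i'$, no $c_i$ can appear in a common subsequence, so $\lcsf(S_1',\ldots,S_k') = \lcsf(S_1,\ldots,S_k)$, which equals $M'/2=C$ precisely when $\lcsf(S_1,\ldots,S_k)=C$. Over both cases the alphabet grows by at most $k+1$ symbols (in fact at most $\max(1,k)$), matching the bound $|\Sigma|+k+1$ claimed in the statement.

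The reduction produces the padded strings in $O(kM)$ time and invokes the assumed $(=M'/2)$-decider once on strings of length $M'\le 2M$, yielding total time $O(T(M')+kM)=O(T(M)+kM)$ under the standard convention that $T$ is non-decreasing and polynomially bounded. The main point that requires care is verifying that the LCS of the padded strings is \emph{exactly} the claimed value rather than merely a lower bound; the structural argument in Case~1 (every common subsequence splits as $X\cdot\$^t$ because $\$$ occurs only as a suffix of each $S_i'$) is the crux, whereas Case~2 is immediate because the pad characters cannot participate in any common subsequence at all.
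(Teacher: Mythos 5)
Your proposal is correct and takes essentially the same approach as the paper. The paper pads with a matchable symbol $@$ as a \emph{prefix} (rather than your suffix $\$$) in the small-$C$ case, and with per-string unmatchable symbols $\#_i$ as a suffix in the large-$C$ case; aside from this cosmetic prefix-versus-suffix swap, the constructions and the resulting LCS arithmetic are identical, and your more explicit structural argument for why $\lcsf(S')=\lcsf(S)+(M-2C)$ in Case~1 is a welcome piece of care that the paper leaves implicit.
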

\begin{proof}
Let $S_1,\ldots,S_k$ be an instance of $k$-LCS where we want to decide if the distance is exactly $C$. Let the $k$-LCS be $\lcsf(S_1,\ldots,S_k)$.

For integers $a$ and $b$ let
\[S_i' = @^a S_i \#_i^b.\]
That is, we append $a$ $@$ symbols at the start and  $b$ $\#_i$ symbols at the end of each string. The $\#_i$ strings can not be matched. The $@$ symbols can be trivially matched. 
So we have that $|S'_i| = M' = M+a+b$ and
$$\lcsf(S'_1,\ldots,S'_k) = \lcsf(S_1,\ldots,S_k)+a = C+a.$$
We simply want to choose values of $a$ and $b$ such that $2(C+a) = n+a+b$. This simplifies to $a=M+b-2C$. 
If $2C>M$ then $b=2C-M$ and $a=0$. 
If $2C<M$ then $b=0$ and $a=M-2C$. 

The length of these strings is $M'=2C$ or $M'=2M-2C$, both are less than $2M$.  So, in $O(kM)$ time we can produce new strings of length $M'$ where determining if the $k$-LCS is exactly $M'/2$ determines if the original instance had distance exactly $C$. 
\end{proof}

Now we will show that $k$-center edit distance solves $(k-1)$-LCS.

\begin{theorem}\label{thm:centerEDhardLCS}
We are given a $k$-LCS instance with strings $S_1,\ldots,S_{k}$ all of length $M$ where $k$-LCS of these strings is denoted by $\lcsf(S_1,\ldots,S_{k})$. The $(k+1)$-center edit distance of $S_1,\ldots,S_{k}$ and emptystring $\emptystring$ and $k$-LCS are related as follows.
\[\CED(S_1,\ldots,S_{k},\emptystring) \begin{cases}
=M/2 & \text{  if }\lcsf(S_1,\ldots,S_{k})\geq M/2,\\
> M/2 &\text{  if }\lcsf(S_1,\ldots,S_{k})< M/2.\\
\end{cases}\]
\end{theorem}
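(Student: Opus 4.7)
\medskip

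The plan is to handle the two cases separately, each via a short, direct counting argument. For both cases a useful preliminary observation is the universal lower bound $\CED(S_1,\ldots,S_k,\emptystring) \geq M/2$. Indeed, for any candidate target $T'$ we have $\ED(\emptystring, T') = |T'|$ and $\ED(S_i, T') \geq |M - |T'||$ for each $i$ (since any edit changes the length by at most one), so the maximum of these is at least $\max(|T'|, M-|T'|) \geq M/2$.

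For the first case, when $\LCS(S_1,\ldots,S_k) \geq M/2$, I would exhibit a target string achieving $\CED = M/2$. Take any common subsequence $T$ of $S_1,\ldots,S_k$ with $|T| \geq M/2$ and truncate it to a common subsequence $T'$ of length exactly $M/2$ (any prefix of $T$ still appears as a subsequence of each $S_i$). Then $\ED(S_i, T') \leq M - |T'| = M/2$ by performing only deletions, and $\ED(\emptystring, T') = M/2$. Combined with the universal lower bound, $\CED = M/2$.

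For the second case, when $\LCS(S_1,\ldots,S_k) < M/2$, I would argue by contradiction: assume $\CED \leq M/2$ and derive a common subsequence of length $\geq M/2$. Let $T'$ be a witness target; then $\ED(\emptystring, T') \leq M/2$ forces $|T'| \leq M/2$. Decompose the optimal transformation of $S_i$ into $T'$ into $d_i$ deletions, $\text{ins}_i$ insertions, and $b_i$ substitutions; then $d_i - \text{ins}_i = M - |T'|$ and $d_i + \text{ins}_i + b_i \leq M/2$. Adding and subtracting gives $d_i \geq M - |T'|$ and hence $M - |T'| \leq M/2$, so $|T'| = M/2$, $d_i = M/2$, and $\text{ins}_i = b_i = 0$. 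But $\text{ins}_i = b_i = 0$ means $T'$ is obtained from $S_i$ by deletions alone, i.e., $T'$ is a subsequence of $S_i$. Since this holds for every $i$, the string $T'$ is a common subsequence of length $M/2$, contradicting $\LCS(S_1,\ldots,S_k) < M/2$.

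The argument is essentially routine once one identifies the right structural constraint (the edit budget is tight enough to forbid any insertion or substitution), so I do not anticipate a real obstacle; the only subtlety is making sure the inequalities are strict in the right direction to conclude $\text{ins}_i = b_i = 0$ rather than merely ``small.''
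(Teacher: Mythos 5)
Your proof is correct and follows the same route as the paper: the universal bound forces any witness target to have length exactly $M/2$, the edit budget then forces each $S_i$ to reach the target by deletions alone, and hence the target is a common subsequence of length $M/2$, which is achievable exactly when $\LCS(S_1,\ldots,S_k)\ge M/2$. Your write-up is in fact tighter than the paper's — the paper's statement that getting from $S_i$ to a length-$M/2$ target ``must simply have $M/2$ deletions'' glosses over the accounting you make explicit with the inequalities $d_i\ge M-|T'|$ and $d_i+\mathrm{ins}_i+b_i\le M/2$ — but the underlying argument is identical.
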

\begin{proof}
Consider first, what's the length of a target string for $\CED(S_1,\ldots,S_{k},\emptystring)=M/2$. Call this target central string $T$. 
If $|T|>M/2$ then the distance from $\emptystring$ to $T$ is greater than $M/2$. If $|T|<M/2$ then the strings $S_i$ must have more than $M/2$ deletions, giving a distance greater than $M/2$. So, to hit $M/2$ the target string must have length $M/2$ exactly.

Next note that for the empty string to reach length $M/2$ it must simply have $M/2$ insertions. For any of the $S_i$ strings to go down to $M/2$ they must simply have $M/2$ deletions. 

Note that $\lcsf(S_1,\ldots,S_{k})$ is defined as $M$ minus the number of deletions needed in each string to reach the minimal target. Thus, with this addition of an empty string  
$$\CED(S_1,\ldots,S_{k},\emptystring) \begin{cases}
=M/2 & \text{  if }\lcsf(S_1,\ldots,S_{k})\geq M/2\\
> M/2 &\text{  if }\lcsf(S_1,\ldots,S_{k})<M/2\\
\end{cases}.$$
\end{proof}

Now we will apply the above Lemma~\ref{lem:kLCStohalfLCS} and Theorem~\ref{thm:centerEDhardLCS} to get a $k$-center edit distance lower bound from SETH.

\begin{reminder}{Theorem~\ref{thm:centerFromLCS}}
Given an instance of $k$-center edit distance on strings of lengths $M_1 \leq M_2 \leq \cdots \leq M_{k}$ where these strings can all be compressed into a SLP of size $m$, then, an algorithm for $k$-center edit distance that runs in time 
$\left((M_2+1) \cdots (M_{k}+1) \cdot m  \right)^{1-\epsilon}$ time for constant $\epsilon>0$ violates SETH.
\end{reminder}
\begin{proof}

We will use Theorem~\ref{thm:compkLCSHardFromseth}, Lemma~\ref{lem:kLCStohalfLCS} and Theorem~\ref{thm:centerEDhardLCS}. 

Say we are given an instance of $k$-LCS with strings $S_1,\ldots, S_{k}$ of length $M$ and an SLP compression of all strings of size $m$. Determining if the $k-$LCS is some integer $C$ requires $(M^{k-1}m)^{1-o(1)}$ time if SETH is true by Theorem~\ref{thm:compkLCSHardFromseth}.

Then  Lemma~\ref{lem:kLCStohalfLCS} simply appends at most $M$ symbols $@$ or $\#_i$ to each string making a new problem $S'_1,\ldots, S'_{k}$ of length $M'$. Note that the size of the compression is now $m' = m+O(k\lg(M))$. Now determining if the $k-$LCS is $M'/2$ requires $\left((M')^{k-1}m'\right)^{1-o(1)}$ if SETH is true. 

Now we will apply Theorem~\ref{thm:centerEDhardLCS}. We can produce an instance of $(k+1)$-center edit distance that has strings  $S'_1,\ldots, S'_{k}, \emptystring$ that distinguishes between the $k$-LCS of $S'_1,\ldots, S'_{k}$ being $M'/2$ or not. 
Now note that $M_i = |S'_i|$ and $M_{k+1}=0$. So $(M_2+1) \cdots (M_{k+1}+1) = \Theta((M')^{k-1})$. The compression of this empty string means that the new compression has size $m'' = m' +O(1) = m+O(k\lg(M))$. 

We can run this a second time where we add two characters to each string: $S'_i = S_i \%_i\%_i$. These characters are unmatchable. Also, if the LCS used to be at least $M'/2+1$ it will still be at least half the length of the strings. So, we can distinguish the exact value. Similarly, the compression of these strings is of size at most $m''+O(k) = m''+O(1)$. 

Hence, an algorithm for $(k+1)$-median edit distance that runs in 
$\left((M_2+1) \cdots (M_{k+1}+1) \cdot m  \right)^{1-\epsilon}$ time for constant $\epsilon>0$ violates SETH.
\end{proof}

\subsection{Edit Distance Lower Bounds from SETH}\label{sec:SETHEditDistLB}
\newcommand{\EDOV}{EDOV}

In this section we show a better lower bound for $k$-edit-distance by reducing from SETH directly. A recent paper has given $M^{k-o(1)}$ lower bounds for Edit Distance from SETH where $M$ is the length of the strings~\cite{editDistLBSeth}. In this section we show a $M^{k-1-o(1)}m$ lower bound for compressed $k$-edit-distance where $m$ is the size of the SLP describing the strings. Our reduction uses the ideas from the SETH lower bound for $k$-edit-distance to achieve this. We will use the same ideas and list structures that we used in the $k$-LCS lower bound. We use many of the same notions of gadgets, however, to distinguish between them, we add $ED$ to the end of the name of the gadgets (for Edit Distance). Note that the structure of this proof mirror almost exactly the $k$-LCS lower bound. However, due to the different distance measures we need to generate different gadgets.

The main takeaway of this section is that in order to build an interleave gadget for edit distance we need to generate a selector gadget that has one value if all values match, and another if not all values match. 

The primary difficulty in generalizing this lower bound comes from the variable costs of partial matches. That is, if we have the edit distance of $\ED(a,a,a,a,b)=1$, where as $\ED(a,a,b,b,c)=3$. By contrast, the LCS of both is $\flcs(a,a,a,a,b)=\flcs(a,a,b,b,c)=0$. So, the overall structure needs to account for this in some way. We want to re-create a perfect alignment gadget, but for Edit-Distance. This will give us two results. First we generalize the 2-LCS lower bound into a 2-edit distance lower bound, answering an explicit open problem given by~\cite{compressedLCSSETH}.

We will use the pre-existing coordinate gadgets and alignment gadgets from~\cite{editDistLBSeth}. So, we have two primary tasks. We need to generate and prove the correctness of \emph{perfect} alignment gadgets. Additionally, we need to analyze the size of the compression of both our gadgets and the~\cite{editDistLBSeth} gadgets. 

\subsection{Selection Gadgets}

We want an additional gadget. A selector gadget. These allow us to say either strings $A_1,\ldots,A_k$ are compared or strings $B_1,\ldots, B_k$ are compared but not both. 
We will use a version that works for single characters. 

\begin{lemma}
There exist single character selection gadgets $SCSG_i(\cdot)$ such that the length is polynomial in $k$ and they add a single character to the alphabet. 
The $k$-edit distance of $k$ $SCSG_i(c_i)$ strings is either some constant $Q$ if all characters match or $Q+v$ (where $v$ is a positive integer) if one character does not match. 
\label{lem:singleCharSelector}
\end{lemma}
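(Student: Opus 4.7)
The plan is to instantiate the gadget as $SCSG_i(c) = \%^{ix} c^y \%^{(k-i)x}$ for a fresh alphabet symbol $\%$ and positive integer parameters $x, y$ to be tuned; each resulting string has length $kx+y = \poly(k)$, and the alphabet grows by exactly one symbol, so the structural requirements of the lemma are met for free.

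I would bound the median edit distance of a tuple $SCSG_1(c_1), \ldots, SCSG_k(c_k)$ by reducing the analysis to two canonical families of target strings $X^*$. The first is the \emph{$\%$-alignment family}: $X^*$ is a pure run of $\%$'s of length $kx+y$, for which every gadget pays exactly $y$ to overwrite its central $c_i^y$ block, giving total cost $ky$ regardless of whether the $c_i$'s match. The second is the \emph{$c$-alignment family}: $X^* = \%^a c^y \%^b$ for some character $c$, for which the contribution of gadget $i$ is $|ix-a|+|(k-i)x-b|$ when $c_i=c$, with an extra additive $y$ whenever $c_i\neq c$. Summing and minimizing over $(a,b)$ (the optimum places the $c$-run at the median offset among $\{ix\}$) gives a total cost $f(x,k) = \Theta(xk^2)$ in the all-match case, and at least $f(x,k)+y$ as soon as any $c_j$ differs from the chosen $c$.

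The key intermediate claim I would establish is that every other target is dominated by one of these two canonical families. This follows from a local exchange argument: if $X^*$ contains a contiguous block of $c$'s of length $y' \notin\{0,y\}$, either shrinking it to $0$ or growing it to $y$ weakly decreases the sum of distances, because each $\ED(SCSG_i(c_i),X^*)$ is piecewise-linear in $y'$ with slopes that cannot simultaneously force $y'$ to lie strictly inside $(0,y)$; similarly, two disjoint $c$-runs in $X^*$ can be merged without increasing the total cost. Once reduced to the two canonical families, I would choose integer $x, y$ satisfying $f(x,k) < ky \leq f(x,k)+y$, which defines a non-empty integer region for every constant $k$ (e.g., fix $x$ and choose $y$ in the resulting interval). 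With such a choice the all-match configuration achieves cost $Q := f(x,k)$ via $c$-alignment, while any not-all-match configuration forces the $c$-alignment strategy to cost at least $f(x,k)+y > ky$, so the optimum falls back to $\%$-alignment at cost $ky = Q+v$ where $v := ky - f(x,k) \geq 1$.

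The main obstacle is formalizing the exchange argument that eliminates ``hybrid'' non-canonical targets, since a priori $X^*$ may interleave arbitrary runs of $c$'s and $\%$'s and could even use characters other than $\%$ or any of the $c_i$. I would handle this by first observing that any character of $X^*$ outside $\{\%\} \cup \{c_i\}$ can be deleted without increasing any $\ED(SCSG_i(c_i),X^*)$, then by showing via a projection argument that splitting or displacing the $c$-runs away from a single central block can only raise the total cost by the triangle inequality relative to the nearest canonical target. This reduces the problem to comparing the two explicit families analyzed above and completes the proof.
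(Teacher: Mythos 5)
Your proposal takes essentially the same approach as the paper's proof: the same gadget family (a central block of $c$'s flanked by $i$-dependent runs of a single fresh symbol), the same two canonical alignment strategies ($c$-alignment versus the fresh-symbol alignment), and the same parameter-tuning inequality that makes $c$-alignment optimal exactly when all the $c_i$ agree. The one thing you add is explicitly flagging that the argument must show these two target families dominate all other choices of median $X^*$, a point the paper's proof passes over without comment; but your sketched exchange/projection argument is not carried out, and it is not immediate, because for strings of the form $\%^p c^{y'} \%^q$ the edit distance to another such string is not always a sum of absolute run-length differences (once the offset between two $c$-runs exceeds their overlap, deleting a whole run can beat substitutions), so the claimed piecewise-linearity in $y'$ and the monotone-slope conclusion require genuine justification rather than assertion. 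With that caveat noted, your parameter choice is consistent with the paper's and the overall plan is sound.
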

\begin{proof}
First let us define the gadget in terms of two free variables we will set later, $a$ and $b$:
$$SCSG_i(c) = \#^{ib}c^a\#^{(k-1-i)b} $$

Now note that if we match the characters $c$ together we must fail to match many $\#$ characters. Specifically these induce an edit distance of:
$$\begin{cases}
bk^2/4 &\text{ if } k \text{ is even}\\
b(k^2-1)/4 &\text{ if } k \text{ is odd}
\end{cases}$$

Now note that if we instead match the $\#$ characters then we have an edit distance of $ak$, as we have to delete the characters input to the gadget. 

Also note that if we match the characters $c$ and one or more symbols are off the edit distance will be \textbf{at least}:
$$\begin{cases}
bk^2/4+a &\text{ if } k \text{ is even}\\
b(k^2-1)/4+a &\text{ if } k \text{ is odd}
\end{cases}$$

So if we can choose an $a$ and $b$ such that:
$$\begin{cases}
 bk^2/4< ak \leq bk^2/4+a &\text{ if } k \text{ is even}\\
b(k^2-1)/4< ak \leq b(k^2-1)/4+a &\text{ if } k \text{ is odd}
\end{cases}$$
then, if all characters match we get an edit distance of $bk^2/4$, otherwise, we get an edit distance of $ak$. 
\end{proof}

Next we will note the existence of coordinate gadgets from previous work. Then we will combine the coordinate gadgets with our selector gadgets to make interleave gadgets. 

\subsection{Coordinate Gadgets}
\newcommand{\CGED}{CGED}
We will use the coordinate gadgets from the~\cite{editDistLBSeth} in our reduction.
\begin{lemma}[Coordinate Gadget Lemma From~\cite{editDistLBSeth}]
Let $b_1,\ldots,b_k$ be in $\{0,1\}$.
Let $C^{-} = 2(k-1)^2$ and let $C^{+}=C^-+k-1 = (2k-1)(k-1)$. Then,
$$\ED(\CGED_1(b_1), \ldots, \CGED_k(b_k)) = \begin{cases}
C^- & \text{ if } b_1\cdots b_k=0\\
C^+ & \text{ otherwise}
\end{cases}$$
\end{lemma}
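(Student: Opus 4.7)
The plan is straightforward: the statement is quoted verbatim from \cite{editDistLBSeth}, so the cleanest proof is to invoke their construction and analysis as a black box. For a self-contained argument, I would proceed as follows. First, I would recall (or reconstruct) the explicit strings $\CGED_i(b)$, which have length $\Oh(k)$ over an alphabet of size $\Oh(k)$ and in which the bit $b$ toggles a single distinguished ``feature'' (e.g.\ a specific character or short pattern), while the remaining ``skeleton'' of the string is designed so that the $k$ inputs are essentially forced into a canonical positional alignment against any candidate median $S^*$.

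Second, I would establish the upper bounds by exhibiting explicit median strings $S^*$ in both cases. When some $b_i=0$, I pick $S^*$ to omit the distinguished feature; then $\ED(\CGED_i(b_i),S^*)$ decomposes into a fixed cost from the skeleton and a small feature-dependent cost, and summing across all $i$ gives exactly $C^- = 2(k-1)^2$. When all $b_i=1$, I pick $S^*$ to contain the feature and get the sum $C^+ = C^- + (k-1)$, where the $(k-1)$ penalty arises because all $k$ inputs now contribute a nontrivial pairing cost against the feature in $S^*$ (whereas omitting it would force $k$ deletions and thus be strictly worse than $C^+$ only in this case).

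Third, and this is the technically delicate step, I would prove matching lower bounds by a case analysis over the possible structure of an optimal $S^*$. The argument shows that (i) $|S^*|$ must lie in a narrow window determined by the lengths of the $\CGED_i(b_i)$, (ii) the skeleton characters of the $k$ inputs pairwise force an irreducible cost of $C^-$ against any such $S^*$, and (iii) in the all-ones case, the contribution of the distinguished feature adds exactly $k-1$ more, since any $S^*$ either contains it (paying $k-1$ from partial alignments with the skeleton) or omits it (paying $k$, which is worse and hence not optimal).

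The main obstacle is the tightness of the lower bound: the values $C^- = 2(k-1)^2$ and $C^+ = C^- + (k-1)$ are finely calibrated, so the exchange/counting argument bounding any alignment must account exactly for how feature matches trade off against skeleton mismatches. This calibration is precisely the technical content of~\cite{editDistLBSeth} and is why we use their gadget as a black box rather than redesigning it from scratch. Once this lemma is in hand, it slots into our reduction just as the coordinate gadget from Lemma~\ref{lem:kZerosAndOnes} did for LCS, feeding directly into the edit-distance analogue of the interleave gadget built around the single-character selector of Lemma~\ref{lem:singleCharSelector}.
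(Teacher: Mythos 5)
Your proposal matches the paper exactly: the lemma is stated in the paper without proof, imported directly from~\cite{editDistLBSeth}, and you likewise invoke their construction and analysis as a black box. The additional sketch you give of what a self-contained argument would look like is reasonable reconstruction but is not needed, since the paper itself simply cites the result.
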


We will use these inside our interleave vector gadgets. 

\subsection{Interleave Vector Gadget}
\newcommand{\IED}{IED'}
\newcommand{\EIED}{EED'}
\newcommand{\IEDF}{IED}
\newcommand{\EIEDF}{EED}
We are given a $(a(k-1)+bk)$-OV with a list of $n$ vectors each of length $d$. We want to take every vector $v_j = \Flist(L)_{b}[j]$ for $j\in[0,n^{b}]$ and combine them with the interleave representation of $a$ lists. Recall that in Definition~\ref{def:vecInterleave} we define $\mathbf{VecS_I}_{a}(L,v_j)$ as the explicit distribution of the interleave representation of $a$ lists mixed with a single vector. So, we want to have $k-1$ strings that hold representations of $\mathbf{VecS_I}_{a}(L,v_j)$  for all $j\in[0,n^{b}]$. Finally, we need one string that is full of representations of vectors $v_j$ for all $j\in[0,n^{b}]$, padded with many zeros. If we do this and we can force a perfect alignment of these gadgets. We will use an altered version of the sliding pyramids from previous work~\cite{editDistLBSeth} (see Figure~\ref{fig:slidingPyramid}).

\begin{lemma}
Treat $k,\ell$ as  constants. 
We are given as input a list $L$ of $n$ vectors each of length $d$.  Where $d=n^{o(1)}$.
 Let $v_1,\ldots, v_k$ be $\{0,1\}$ vectors each of dimension $d$. Then there are gadgets $\IED_i(L,v_i)$ and $\EIED(v_k)$ such that:
$$\ED(\IED_1(L,\ell,v_1),\ldots, \IED_{k-1}(L,\ell,v_{k-1}), \EIED(v_k))=C$$
if there are $(k-1)\ell$ vectors in $L$ such that are orthogonal with $v_1, \ldots, v_{k}$.
If there do not exist $(k-1)\ell$ vectors in $L$ that are orthogonal with $v_1, \ldots, v_{k}$ then 
$$\ED(\IED_1(L,\ell,v_1),\ldots, \IED_{k-1}(L,\ell,v_{k-1}), \EIED(v_k))\geq C+1.$$

Additionally we can compress $x$ strings 
$\IED_i(L,\ell,v_1),\ldots, \IED_i(L,\ell,v_x)$\\ or  
$\EIED(v_1),\ldots, \EIED(v_x)$
with a total compression size of $O(xd+nd + \lg(n))$.
\label{lem:edinterleaveGadgethelper}
\end{lemma}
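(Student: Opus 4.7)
The plan is to mirror the construction of Lemma~\ref{lem:interleaveGadget} for LCS, but substitute the LCS coordinate gadgets $CG_i$ with the edit-distance coordinate gadgets $\CGED_i$, substitute the alignment gadget $SAG$ with the sliding pyramid structure from~\cite{editDistLBSeth}, and use the single-character selection gadgets $SCSG_i$ of Lemma~\ref{lem:singleCharSelector} to ensure that each aligned coordinate position behaves in an ``all-or-nothing'' manner (so that its contribution to the total edit distance is one of exactly two values, mimicking the $C^-/C^+$ dichotomy). The core strings will be built from the interleaved representation $\mathbf{VecS_I}_\ell(L,v_i)$ from Definition~\ref{def:vecInterleave} for the first $k-1$ strings and from the zero-padded vector $Vec_E(v_k)$ (whose only nonzero positions are at indices $h\cdot n^\ell$) for the last string, so that aligning them with a shift $\Delta_i\in[0,n^\ell)$ effectively selects, for each $i$, an $\ell$-tuple of vectors from $L$ whose elementwise product lines up against $v_k$ at every coordinate position.

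Concretely, for $i\in[1,k-1]$ I would let $X_i[j] = SCSG_i(\CGED_i(\mathbf{VecS_I}_\ell(L,v_i)[j]))$ and $Y[j] = SCSG_k(\CGED_k(Vec_E(v_k)[j]))$, and then wrap these sequences of single-bit blocks in the sliding pyramid of~\cite{editDistLBSeth}. The pyramid plays the role that $AG$ played for LCS: it attaches enough filler on the left and right of the $(k-1)$ interleaved strings to accommodate any shift $\Delta_i\in[0,n^\ell)$ of the $Y$-string, and it penalizes any alignment that merges, splits, or skips gadgets enough to guarantee that only ``perfect'' block-to-block alignments are optimal. The correctness argument then reads off from the coordinate-gadget dichotomy: if the tuple $(v_{a_1^{(i)}},\dots,v_{a_\ell^{(i)}},v_i)_{i=1}^{k-1}\cup\{v_k\}$ picked out by the shifts is orthogonal, every aligned coordinate has at least one $0$ and contributes $C^-$ via $\CGED$, summing to $C$; otherwise at least one aligned coordinate is the all-ones input, yielding a $C^+=C^-+(k-1)$ contribution there and raising the total by at least $1$. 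The single-character selection gadget is crucial here because without it an unaligned-but-close block could ``steal'' some partial match and change the arithmetic by less than $1$, destroying the integer gap between yes and no instances.

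For compression I would reuse the $\textbf{SubList}[L,g]$ induction from the proof of Lemma~\ref{lem:interleaveGadget} verbatim, replacing each $0$ and $1$ character with the constant-size SLPs for $SCSG_i(\CGED_i(0))$ and $SCSG_i(\CGED_i(1))$. This leaves the bound on $\mathbf{VecS_I}_\ell$ unchanged at $O(nd)$ per string family, contributes $O(xd)$ for encoding the $x$ choices of $v_i$ (which differ only in their $d$ bits), and adds $O(\log n)$ for the repetition operators needed to pack $n^\ell$ copies of the zero-block. The pyramid wrapper contributes only $O(\log n)$ additional SLP symbols because it is a fixed pattern of repetitions parametrized by the string length, giving the stated total of $O(xd+nd+\log n)$.

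The main obstacle will be showing that the sliding pyramid really does enforce perfect block alignment under edit distance, even though we must permit $k-1$ \emph{independent} shifts $\Delta_1,\dots,\Delta_{k-1}$. Unlike in the LCS setting of Theorem~\ref{thm:BasicAlignment}, one cannot simply count unmatched separator characters, because in edit distance every deletion or insertion must be charged explicitly and, as highlighted in Section~\ref{sec:SETHEditDistLB}, differing shifts create the ``artificial pressure'' to equalize offsets. The argument I anticipate proceeds in two steps: (i) a local exchange argument showing that any optimal alignment must enter and exit each $SCSG_i(\CGED_i(\cdot))$ block in sync across all $k$ strings, invoking the dichotomy from Lemma~\ref{lem:singleCharSelector} (so that the selector gadgets penalize any misalignment by a strictly larger amount than can be recovered elsewhere), and (ii) a global accounting argument over the pyramid that converts any deviation into a surplus of at least one unmatched separator, as in the uncompressed SETH lower bound. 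This is the inherited technical difficulty from~\cite{editDistLBSeth}, and the $SCSG$ gadget is precisely what lets us adapt their argument to our interleaved compressed setting.
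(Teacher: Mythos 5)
Your overall architecture is right---replace $CG_i$ by $\CGED_i$, wrap with the sliding pyramid from~\cite{editDistLBSeth}, and use the selector gadget of Lemma~\ref{lem:singleCharSelector}---and the compression accounting via the $\textbf{SubList}$ induction matches what the paper does. However, the specific way you deploy the selector gadget is wrong, and the reason you give for needing it misses its actual role in the construction.

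First, $SCSG_i(\CGED_i(\cdot))$ does not typecheck: $SCSG_i(c)=\#^{ib}c^a\#^{(k-1-i)b}$ takes a single character, not a coordinate-gadget string. The paper instead \emph{concatenates} the selector with the coordinate gadget, defining a modified coordinate block $\CGED'_i(b_i)=3^y\circ(SCSG_i(2))^x\circ\CGED_i(b_i)$ where $2$ is a fixed token, and a fake block $F_i=3^y\circ(SCSG_i(\%_i))^x\circ\CGED_i(0)$ where $\%_i$ is a private unmatchable symbol. The $3^y$ prefix forces block-to-block alignment, and the selector prefix is evaluated on a token that has nothing to do with the underlying bit $b_i$.

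Second, and more substantively, you claim the selector exists ``so that its contribution to the total edit distance is one of exactly two values, mimicking the $C^-/C^+$ dichotomy.'' But $\CGED_i$ already has that dichotomy built in. The selector is instead what distinguishes \emph{real} blocks from \emph{fake} blocks. Recall that the pyramid's filler is not passive padding: the fake gadgets carry $\CGED_i(0)$, so if the $k$th string's real blocks slid over and aligned against a fake block, the coordinate gadget alone would happily report $C^-$---the same reward as a genuine orthogonal match---because the fake bit is $0$. This would destroy soundness. The selector fixes it: real blocks carry $SCSG_i(2)$ (all matchable) so real-real lineups pay $xQ$, while fake blocks carry $SCSG_i(\%_i)$ (unmatchable to everything) so real-fake lineups pay $x(Q+v)$. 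That surcharge is what makes the optimal alignment place all $k$ real sections on top of each other. Without this, your anticipated ``local exchange argument'' in step (i) of your main-obstacle paragraph has no mechanism to charge deviations: a shift onto fakes is cost-neutral at the coordinate-gadget level. You should also handle the fake-fake boundary tuples explicitly (the paper does this with the $p_i=\ED(F_1,\ldots,F_i)$ bookkeeping) rather than folding them into a generic ``surplus of one unmatched separator'' claim, since under edit distance those contributions are genuinely variable and must be made independent of the chosen offsets.
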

\begin{proof}
As in the $k$-LCS Lemma~\ref{lem:interleaveGadget} consider the $k-1$ lists 
$\mathbf{VecS_I}_{\ell}(L,v_i)$ for $i\in [1,k-1]$.
Additionally, let 
$$Vec_E(v_k)[j] = 
\begin{cases}
v_k[h] & \text{ if } j=h\cdot n^\ell\\
0 & \text{ else.}
\end{cases}$$ 
We will build our gadgets \IED~and \EIED~from these lists. Every entry in these lists is either a zero or a one. We want to force a prefect alignment and check orthogonality of the perfect alignment. That is, we want to hit one value if there exists a $\Delta_1,\ldots,\Delta_{k-1}$ such that 
$$\sum_{j=0}^{n^{\ell}(d-1) +1} Vec_E(v_k)[j]\cdot \mathbf{VecS_I}_{\ell}(L,v_1)[j+\Delta_1]\cdots \mathbf{VecS_I}_{\ell}(L,v_{k-1})[j+\Delta_{k-1}] =0.$$

To do this we will use the very convenient coordinate gadgets, but alter them with a selector. We want the selector gadget to force an alignment of the true correct values. We add a new character $2$, this character is just there to be matched in these gadgets. We add another new character $3$, which encourages lining up coordinate gadgets. We will set $x = 100|\CGED_i(b_i)|$. We want to have enough copies of the $SCSG$ gadgets that lining up real gadgets with each-other is optimal. We set $y=100x|SCSG_i(2)|$, we want enough copies of $3$ to force coordinate alignments to be optimal. Finally, our updated coordinate gadgets are below 
$$\CGED'_i(b_i) = 3^{y} \circ (SCSG_i(2))^x \circ \CGED_i(b_i).$$

Next we need to generate ``fake'' coordinate gadgets to fill out space, so that any offset will be valid. We add the characters $\%_i$ for $i\in [1,k]$. The character $\%_i$ will only appear in the $i^{th}$ string. This will guarantee these characters are unmatched. A fake gadget will have a selector gadget wrapped around one of these unmatchable characters and a coordinate gadget of a zero:
$$F_i = 3^{y} \circ (SCSG_i(\%_i))^x \circ \CGED_i(0).$$

Let $f=|\mathbf{VecS_I}_{\ell}(L,v_i)| -|Vec_E(v_k)|=n^{\ell}-1$. 
Now, we will define the three parts of the strings. The section of real gadgets, the section of fake gadgets, and the section of unmatchable characters. We add $k$ new characters $\#_i$, with the intention of making them unmatchable. Let $z= |CGED'_i(b_i)| = |F_i|$. See Figure~\ref{fig:slidingPyramid} for a visual depiction.
\begin{align*}
    REAL_i &= \bigcirc_{j\in[1,d n^\ell]} \left(\CGED'_i(\mathbf{VecS_I}_{\ell}(L,v_i)[j]) \right)\text{~~~~~~~when } i\in[1,k-1]\\
    REAL_k &= \bigcirc_{j\in[1,n^\ell(d-1)+1]} \left(\CGED'_k(Vec_E(v_k)[j]) \right)\\
    FAKE_i &= F_i^{f(k+1-i)} \\
    UNMA_i &= \#_i^{(k+i+1)zf}.
\end{align*}

Now that we have defined these useful parts we can define the overall gadgets:
\begin{align*}
    \IED_i(L,\ell,v_i) &= UNMA_i \circ FAKE_i \circ REAL_i \circ FAKE_i \circ UNMA_i \\
    \EIED(v_k) &= UNMA_k \circ FAKE_k \circ REAL_k \circ FAKE_k \circ UNMA_k
\end{align*}

Now let us consider what happens if there are $(k-1)\ell$ vectors that are orthogonal to $v_1,\ldots,v_k$ then we want to compute the edit distance. Note that all characters in $UNMA_i$ will either be deleted or substituted which has an edit distance of one per character. This gives a total edit distance cost of $zfk(k+5)/2$.
Let $p_i = \ED(F_1,\ldots, F_i)$. Now, on any valid alignment we have $2f$ fake gadgets completely unmatched, $2f$ fake gadgets matched with one other fake gadget, $2f$ fake gadgets matched with two other fake gadgets, etc. Until we get to $(k-1)$ fake gadgets matched together at which point we have $3f$ of these. When the fake gadgets are matched with each other they are also ``matched'' with the unmatchable characters. Those characters will simply substitute/delete to equal the output string. We have enough unmatchable characters that their length is longer than the overhanging fake characters. So we can be assured no insertions will need to happen. So the edit distance contribution of these is
$$f p_{k-1}+\sum_{i=1}^{k-1}2f p_i.$$
Now, we have $2f$ fake gadgets which line up with a mix of fake and real gadgets. Each of these $k$ tuples of lined up gadgets have a contribution of  $x(Q+v)$ from their $SCSG$ gadgets, and the $\CGED$ gadgets contribute $C^-$ edit distance (the $3$ symbols match perfectly and thus have no contribution to the edit distance). So these give a contribution of $2f(x(Q+v)+C^-)$. 
Finally, we have $M= (d-1)n^{\ell}+1$ real gadgets which line up with other real gadgets and these contribute $xQ +C^-$  edit distance each. So our total edit distance is
$$C =zfk(k+5)/2 +f p_{k-1}+\sum_{i=1}^{k-1}2f p_i+2f(x(Q+v)+C^-)+M(xQ+C^-).$$

What happens if we don't have a set of $(k-1)\ell$ vectors which are orthogonal to $v_1,\ldots,v_k$? If we similarly line up gadgets in a valid way, as above, then we have at least one of the real $k$-tuples of $\CGED'$ gadgets where the internal $\CGED$ gadgets contribute $C^+$, increasing the above cost by $C^+-C^-$. What if we instead don't do a valid alignment? If we skip aligning a coordinate gadget we skip some $3^y$ symbols, these then cost an additional $y$ in the edit distance. If we instead align some of the $M$ real gadgets of string $k$ to fake gadgets we miss out on matches of the $SCSG$ gadgets, costing $xv$ in the edit distance. So if there is no set of $(k-1)\ell$ vectors which are orthogonal to $v_1,\ldots,v_k$ then the edit distance is higher than $C$. 

First, we can generate the SLP for $FAKE_i$ and $UNMA_i$. The size of $F_i$ and $\%_i$ are both $O(1)$ (assuming $k$ is a constant). So, we simply need to handle many repetitions. This requires an SLP of size $\lg(f(k+1-i))$ and one of size $\lg((i+1)zf)$ for each $i\in[1,k]$. Luckily for us, in total this SLP will have size $O(\ell \lg(n))$. We additionally need to represent the various values for $REAL_i$. First note that $CGED'_i$ gadgets can be represented with size $O(1)$ SLPs (when $k$ is constant). For $REAL_k$ this is easy from this point on. There are long strings of zero gadgets with only $d$ instances of non zeros. The total representation is $O(d+\ell\lg(n))$. So we just need $REAL_i$ for $i\in[1,k-1]$. Now note that we can use the same structure we used in the $k$-LCS SLP for these interleave gadgets. We can build the structures for different values of $\ell$. 

For convenience let $REAL_i^\ell$ be the real gadget for $\IED_i(L,\ell,v_i)$. Now note that $REAL_i^1$ has an SLP of size $O(dn)$ trivially, it only has length $O(dn)$ in the first place. Now consider separating out the parts that correspond to each of the $d$ dimensions of the vector. Next note that we can form these $d$ parts of $REAL_i^{j+1}$ by concatenating $n$ instances of the parts of either $REAL_i^{j}$ or $n^{j}$ zero coordinate gadgets. So, given an SLP for $REAL_i^{j}$ we can create an SLP for $REAL_i^{j+1}$ with an additional size of $n + j\lg(n)$. This gives a total size of $O(\ell^2\lg(n)+\ell dn)$, as $\ell$ is a constant we have have that the size is $O(dn)$.

So the total size of the SLP will be $O(dn+dx+lg(n))$.
\end{proof}

Notice that we can get away without having the same type of $\$_i$ interleaved symbols that we used in $k$-LCS. This is due to the cost of edits varying even when only some subset of the $k$ strings match on a symbol. We can guarantee we don't skip characters because it will cost us in edits, even if those characters could only be matched up to one other string. However, we aren't quite done. We want to wrap this so that the value is either a match or one higher than a match. We don't want to have the final interleave gadgets give variable outputs depending on how orthogonal vectors are. We want the same value no matter what. 

\begin{lemma}
We are given as input a list $L$ of $n$ vectors each of length $d$, where $d=n^{o(1)}$. Let $\ell$ be a constant. 
 Let $v_1,\ldots, v_k$ be $\{0,1\}$ vectors each of dimension $d$. Then there are gadgets $\IEDF_i(L,v_i)$ and $\EIEDF(v_k)$ such that for some constants $D$ and $w$:
$$\ED(\IEDF_1(L,\ell,v_1),\ldots, \IEDF_{k-1}(L,\ell,v_{k-1}), \EIEDF(v_k))=D$$
if there are $(k-1)\ell$ vectors in $L$ such that are orthogonal with $v_1, \ldots, v_{k}$.
If there do not exist $(k-1)\ell$ vectors in $L$ that are orthogonal with $v_1, \ldots, v_{k}$ then 
$$\ED(\IEDF_1(L,\ell,v_1),\ldots, \IEDF_{k-1}(L,\ell,v_{k-1}), \EIEDF(v_k))= D+w.$$

Additionally we can compress $x$ strings 
$\IEDF_i(L,\ell,v_1),\ldots, \IEDF_i(L,\ell,v_x)$\\ or  
$\EIEDF(v_1),\ldots, \EIEDF(v_x)$
with a total compression size of $O(xd+nd+ x \lg(n))$.

The strings $\IEDF$ and $\EIEDF$ have length $n^{\ell+o(1)}$.
\label{lem:edinterleaveGadget}
\end{lemma}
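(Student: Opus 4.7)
My plan is to wrap the gadgets from Lemma~\ref{lem:edinterleaveGadgethelper} so that the NO-case edit distance is \emph{capped} at a fixed value rather than being a loose lower bound ``$\geq C+1$''. First I would sharpen the proof of Lemma~\ref{lem:edinterleaveGadgethelper}: by choosing the parameters $y$ and $xv$ in that lemma's construction to be at least $k-1$, every NO-case deviation from the canonical alignment (skipping a coordinate gadget, matching a real gadget to a fake one, or letting one real $k$-tuple be non-orthogonal) costs at least $C^+ - C^- = k-1$. Hence the NO-case distance is at least $C + (k-1)$ while the YES-case distance is exactly $C$.

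Next I would construct $\IEDF_i$ (respectively $\EIEDF$) by attaching a small, highly compressible ``escape'' gadget that offers a single alternative alignment of cost exactly $C + (k-1)$, independent of the underlying OV instance. Concretely, I flank $\IED_i$ and $\EIED$ by prefix/suffix blocks that can be aligned in two distinguishable modes: one mode forces the interior $\IED$ to be compared (giving total cost $C + c_0$ in YES and $\geq C + c_0 + (k-1)$ in NO, for a fixed constant $c_0$), while the other mode ``bypasses'' the interior and pays a fixed cost of exactly $C + c_0 + (k-1)$. The binary choice is enforced by a selector built from the $SCSG$ gadget style of Lemma~\ref{lem:singleCharSelector}, and the constant-cost bypass is realized by $k-1$ forced substitutions over a fresh per-string alphabet $\{\gamma_i\}$. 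Setting $D = C + c_0$ and $w = k-1$, the final edit distance is $\min(\text{real path}, \text{escape path})$, which equals $D$ in YES and exactly $D + w$ in NO.

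The main obstacle will be the case analysis verifying that no \emph{cross-mode} alignment (e.g.\ matching part of the escape in one string with the $\IED$ of another, or partial selector matches) can beat the two intended paths by even a single edit. I expect to handle this the same way as in Lemma~\ref{lem:edinterleaveGadgethelper}: surround each escape block with long runs of per-string-unique characters (analogous to $\#_i$ and $\%_i$), so any cross-alignment forces unmatched unique characters whose deletion cost dwarfs any potential savings; the selector parameters $x,y$ in Lemma~\ref{lem:singleCharSelector} can then be tuned to make the binary choice strictly dominant. For the compression bound, the escape and selector consist of $O(\lg n)$ symbols per string over a constant-size alphabet extension, so assembling them for $x$ strings contributes $O(x \lg n)$ to the SLP size, yielding the claimed $O(xd + nd + x\lg n)$. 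Finally, the length bound $n^{\ell + o(1)}$ is preserved since the added material is of size only $O(\lg n)$, negligible next to the $n^{\ell+o(1)}$ length of $\IED$.
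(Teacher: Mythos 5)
Your high-level strategy — cap the NO-case cost by installing a fixed-cost alternative — does match the spirit of the paper's construction, and your observation that $C^+ - C^- = k-1$ is the right natural choice for the gap $w$ is also correct. However, there is a genuine gap in the mechanism you propose for realizing the escape.

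You describe an escape mode that ``bypasses the interior $\IED$ and pays a fixed cost of exactly $C + c_0 + (k-1)$,'' realized by a selector and $k-1$ forced substitutions over fresh per-string characters $\{\gamma_i\}$, with the escape machinery occupying only $O(\lg n)$ symbols per string. This cannot work as stated: the $\IED_i$ interiors have length $n^{\ell+o(1)}$, so any alignment that genuinely bypasses them (i.e.\ does not compare them against a matching-length target) must pay $\Omega(n^{\ell+o(1)})$ insertions/deletions/substitutions, not a constant. An $O(\lg n)$-length escape block cannot serve as a match target for the interiors, and per-string-unique characters $\gamma_i$ are unmatched by definition, so they can only add cost, never absorb the interiors. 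Thus the ``escape path'' you describe would have cost wildly exceeding $C + c_0 + (k-1)$, and the minimum of real-path and escape-path would just be the real path — leaving the NO-case at ``$\geq C+1$'' rather than pinned to $D+w$.

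The paper's construction instead gives the interiors a \emph{second valid match target} rather than bypassing them. Concretely, it appends a dummy coordinate $d+1$ to the vectors, sets $v'_i[d+1]=1$ for the $k-1$ strings and $v^*_k[d+1]=0$, and forms
$\EIEDF(v_k) = 5^q\,\EIED(v^*_k)\,4^p\,\EIED(u')\,5^q$,
i.e.\ it places \emph{two} $\EIED$ gadgets (of the same length as each $\IED_i$) inside the one string $\EIEDF$. Aligning the $\IED_i(L,\ell,v'_i)$ interiors to $\EIED(v^*_k)$ checks the original OV question (coordinate $d+1$ is automatically orthogonal there). Aligning them to $\EIED(u')$ instead always has exactly one non-orthogonal coordinate (the extra coordinate $d+1$), so it always costs exactly $C + (C^+-C^-)$, independent of the OV instance. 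The minimum of the two alignments then equals $D$ in the YES case and $D+w$ in the NO case. The $4^p$ and $5^q$ flanks serve to force a clean choice between the two $\EIED$ blocks; they play the ``selector'' role you were reaching for, but at the $\EIED$-gadget scale rather than as a constant-size selector. Without this duplicate-$\EIED$-with-a-poisoned-coordinate idea, the fixed-cost escape you need cannot be realized.

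One minor point you got right that is worth retaining: the argument does need that the NO-case ``real'' alignment costs at least $C + (C^+-C^-)$, not merely $C+1$; the paper's Lemma~\ref{lem:edinterleaveGadgethelper} only asserts the weaker $\geq C+1$, so a sharpening of the kind you sketch is implicitly used when taking the minimum against the escape cost $C + (C^+-C^-)$.
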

\begin{proof}
Let $u$ be an all zero vector of length $d$. 
Let $v'_i$ be the vector $v_i$ but with an added last index $v'_i[d+1]=1$ if $i\in[1,k]$. 
Let $u'$ be the all zeros vector except for an added last index $u[d+1]=0$. 
Let $v^*_k$ be the vector $v_k$ but an added last index $v^*_k[d+1]=0$.
Now we add additional characters $5$ and $4$. We add copies where $p = 100|\IED_i(L,\ell,v'_i)|$ and $q = 10p$. Now we generate the following:
\begin{align*}
\IEDF_i(L,\ell,v_i) &= 5^q 4^p \IED_i(L,\ell,v'_i) 4^p 5^q\\
\EIEDF(v_k) &= 5^q \EIED(v^*_k) 4^p \EIED(u') 5^q.
\end{align*}
Here we match up the $5$ characters. Finally, we must match the $4^q$ symbols. There will be $q$ unmatched $4$ symbols. Finally, we will have $|\EIED(v^*_k)|=|\EIED(u')|$ unmatched symbols no matter what. Now, how much comes from matching the $\IED$ and $\EIED$ gadgets? 
If there are $(k-1)\ell$ vectors are orthogonal to $v_1,\ldots,v_k$ then the cost is $C$. If there aren't then the cost of matching the symbols to  $\EIED(u')$ instead is $C+C^+-C^-$. 
So, $D= C+q+|\EIED(v^*_k)|$, and $w = C^+-C^-$. 

For the size of the SLP we are doubling the number of $\EIED$ gadgets, and we are adding in the $5$ and $4$ symbols. So the total size should be
$O(2xd+nd+\ell \lg(n) + 7x(\lg(p)+\lg(q))).$ Given the size of $p$ and $q$ this gives:
$O(xd+nd+ x \lg(n)).$

For the length of the strings we have at most $O(dn^\ell)$ coordinate gadgets and the number of unmatached symbols is $O(dn^{\ell})$. Note that the size of coordinate gadgets is constant when $k$ is constant and $d=n^{o(1)}$. So the total length of our generated strings $\IEDF$  and $\EIEDF$  is $O(n^{\ell+o(1)})$.
\end{proof}

Now that we have generated interleave vector gadgets we will put multiple copies of them and align them. We want to set this up using the same `sliding pyramid' set up as we used for the interleave gadgets. 

\subsection{Aligning Interleave Vector Gadgets}
For aligning our gadgets we generalize the idea from~\cite{editDistLBSeth} for aligning gadgets. First, we create a fake list of vectors $F$ that is $n$ vectors of dimension $d$ where every entry of the vectors is $1$. Then we create ``fake'' versions of the alignment and empty vector gadgets, build from $F$ instead of $L$. We concatenate the ``real'' $\IED$ and $\EIED$ gadgets with many matchable symbols in between. We surround these real gadgets with our ``fake'' gadgets. We also build a ``pyramid'' that allows the strings to have any valid alignment of the real gadgets while having the same number of matches of the fake gadgets. Around these we put an additional number of unmatchable characters (characters that are unique to each set).  See Figure~\ref{fig:slidingPyramid} for a depiction. These fake gadgets allow for any choice of alignment of the real gadgets to have the same value of matches outside of the $k$ gadgets we are matching with the alignment. 

\begin{figure}[h]
\centering
\includegraphics[scale=0.75]{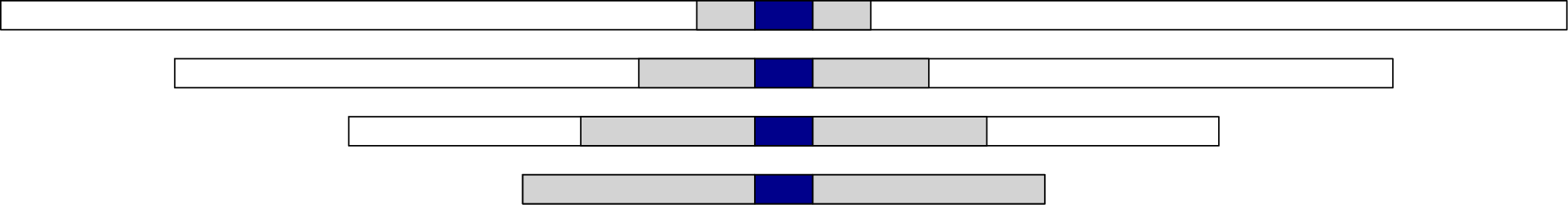}
\caption{A visual depiction of the structure of our alignment. This is using the ideas from~\cite{editDistLBSeth}. The dark blue section is a depiction are the real gadgets. The light-gray section are the ``fake'' gadgets. The white sections are unmatchable characters (a distinct character in each string).}
\label{fig:slidingPyramid}
\end{figure}

We start by proving we can create strings such that $k$-edit distance can be used to detect $(a(k-1)+bk)$-$OV$s. 

\begin{lemma}
\label{lem:EDOVgadget}
Let $a, b,$ and $k$ be positive constant integers.
Let $L$ be a list of $n$ length $d$ vectors, where $d=n^{o(1)}$.

There are $k$ strings $\EDOV_i(a,b,k,L)$ such that:
$$
\ED(\EDOV_1(a,b,k,L),\ldots,\EDOV_k(a,b,k,L))
\begin{cases}
\geq E & \text{ if there is a }(a(k-1)+bk)\text{-OV in }L \\
\leq E-1 &\text{ else }
\end{cases}
$$
The length of each of these strings is $O(n^{a+b+o(1)})$ with an alphabet of size $O(k)$.
\end{lemma}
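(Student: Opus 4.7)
The plan is to construct the strings $\EDOV_1, \ldots, \EDOV_k$ by lifting the LCS reduction of Lemma~\ref{lem:OVgadget} to edit distance, replacing at each layer the LCS-specific component with its edit-distance counterpart developed in Section~\ref{sec:SETHEditDistLB}. Let $L_b := \Flist(L)_b$; for each $h \in [1, k-1]$ and each $j \in [1, n^b]$ set $X_h[j] := \IEDF_h(L, a, L_b[j])$ and $Y[j] := \EIEDF(L_b[j])$; double each sequence by $\hat{X}_h[j] = \hat{X}_h[j + n^b] = X_h[j]$ to permit every cyclic offset $\Delta_h \in [0, n^b - 1]$; and wrap the concatenated sequences in the sliding-pyramid alignment scaffolding from Lemma~\ref{lem:edinterleaveGadgethelper} (fake coordinate gadgets $F_h$ on both sides plus unmatchable $\#_h$-padding) to obtain $\EDOV_h(a, b, k, L)$.

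The analysis has two steps. First, the sliding-pyramid argument from Lemma~\ref{lem:edinterleaveGadgethelper} transfers verbatim to show that every edit-distance-optimal alignment matches exactly $n^b$ real interleave $k$-tuples at one global offset choice $(\Delta_1, \ldots, \Delta_{k-1})$, so
\[
\ED(\EDOV_1, \ldots, \EDOV_k) = P + \sum_{j=1}^{n^b} \ED(\IEDF_1(L, a, L_b[j + \Delta_1]), \ldots, \IEDF_{k-1}(L, a, L_b[j + \Delta_{k-1}]), \EIEDF(L_b[j]))
\]
for a fixed overhead $P$, with any deviation incurring a selector-gadget deficit by the $SCSG$/$3^y$-layer argument of Lemma~\ref{lem:edinterleaveGadgethelper}. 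Second, by Lemma~\ref{lem:edinterleaveGadget}, each inner summand equals $D$ when the corresponding $a(k-1) + bk$ vectors admit an orthogonal completion through the $a$-interleave and $D + w$ otherwise; the standard unique-OV randomized reduction from Lemma~\ref{lem:OVgadget} further lets us assume at most one orthogonal $(a(k-1) + bk)$-tuple over all $(j, \Delta_1, \ldots, \Delta_{k-1})$.

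To realize the stated dichotomy $\ED \geq E$ when an OV exists and $\ED \leq E - 1$ otherwise---the direction opposite to what Lemma~\ref{lem:edinterleaveGadget} provides directly---we replace the interleave gadget with a polarity-reversed variant $\IEDF^*, \EIEDF^*$. This variant is obtained by dualizing the inner coordinate gadget: we substitute Lemma~\ref{lem:edinterleaveGadgethelper}'s $\CGED_i$ with the NAND-style anchor gadget $\CGED_i^*(1) := x^{L_0}$, $\CGED_i^*(0) := y_i^{L_0}$ on a fresh sub-alphabet and a suitable $L_0 = \poly(k)$, whose $k$-median edit distance is $0$ precisely when all inputs are $1$ (i.e., the coordinate fails to witness orthogonality) and $\Omega(L_0)$ otherwise; combined with a re-padded auxiliary $(d+1)$-th coordinate that swaps the roles of the primary $\EIED(v^*_k)$ and backup $\EIED(u')$ branches inside $\EIEDF$, the resulting $\IEDF^*, \EIEDF^*$ satisfies $\ED = D + w$ on orthogonal inputs and $\ED = D$ on non-orthogonal inputs. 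Plugging this into the outer sum, the unique-OV assumption combined with the rigidity of the sliding-pyramid match multiset (modulo the common cyclic shift of all $k$ strings) forces the (single) orthogonal tuple into every feasible alignment's match set iff an OV exists, giving total $P + n^b D + w$ in the OV case and $P + n^b D$ otherwise; setting $E := P + n^b D + 1$ yields the claimed dichotomy with separation at least $w \geq 1$.

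The main obstacle will be rigorously constructing $\IEDF^*, \EIEDF^*$ and proving the rigidity claim that the unique orthogonal tuple cannot be shifted out of the match multiset by re-choosing $\Delta_1, \ldots, \Delta_{k-1}$---the step that converts the natural ``orthogonal $\Rightarrow$ lower inner ED'' polarity into the required outer ``OV $\Rightarrow$ higher total ED.'' This reduces to a combinatorial lemma about cyclic matchings of $n^b$-periodic sequences in the doubled $\hat X_h$ layout, analogous to the diagonal-argument of Lemma~\ref{lem:OVgadget}'s LCS proof but instantiated under the dual coordinate gadget's cost profile. Re-tuning the selector-gadget parameters $x, y$ from Lemma~\ref{lem:edinterleaveGadgethelper} so that the forced-alignment threshold dominates the new gap $L_0$ (rather than the original $C^+ - C^-$) is routine. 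Once these verifications go through, the length bound $O(n^{a + b + o(1)})$, the alphabet bound $O(k)$, and the compressed SLP-size bound carry over unchanged from the LCS case of Lemma~\ref{lem:OVgadget}.
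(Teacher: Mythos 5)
Your outer construction (real interleave gadgets concatenated per $j\in[1,n^b]$, surrounded by fake gadgets and unmatchable padding in a sliding-pyramid layout) matches the paper's, which builds $\EDOV_i = UNMA_i \circ FAKE_i \circ REAL_i \circ FAKE_i \circ UNMA_i$ from the $\IEDF/\EIEDF$ gadgets of Lemma~\ref{lem:edinterleaveGadget}. The divergence — and the gap — is your ``polarity-reversed'' variant $\IEDF^*,\EIEDF^*$. You introduced it because you read the lemma statement literally ($\ED\geq E$ when an OV exists, $\leq E-1$ otherwise). But the paper's own proof establishes the opposite, natural direction: an OV yields a $k$-tuple of interleave gadgets with the \emph{smaller} cost $D$ rather than $D+w$, so the optimal alignment achieves total cost at most $E$ (the paper's $E$ explicitly includes the $-w$ savings), while in the no-OV case every admissible alignment costs at least $E+w$. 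The inequality signs in the statement are simply swapped relative to the proof; the downstream application (Theorem~\ref{thm:editDistanceLowerBoundFromSETH}) only needs the two cases to be distinguishable, so no polarity reversal is called for.

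More importantly, the reversal cannot be made to work. Edit distance is a \emph{minimum} over alignments, so the construction detects ``there exists an offset choice under which some aligned tuple is cheap.'' If you dualize the coordinate gadget so that the unique orthogonal tuple becomes the \emph{expensive} one, the minimizing alignment will simply select offsets $(\Delta_1,\ldots,\Delta_{k-1})$ that never route through it: there are $n^{b(k-1)}$ admissible offset vectors permitted by the fake-gadget pyramid (that is the entire point of the pyramid), and only one of them aligns the orthogonal tuple. Your ``rigidity claim'' — that the unique orthogonal tuple is forced into every feasible alignment's match multiset — is therefore false, and with it the whole second half of your argument collapses: in the reversed scheme the optimal cost is $P + n^b D$ whether or not an OV exists. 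The correct proof keeps the natural polarity of Lemma~\ref{lem:edinterleaveGadget}, shows that any alignment deviating from the pyramid structure (skipping $5^q$ blocks, misaligning coordinate gadgets, or changing the overhang pattern) pays more than it can save via the selector and unmatchable-character accounting, and concludes that the optimum is exactly the pyramid cost minus $w$ precisely when an orthogonal tuple exists.
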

\begin{proof}
We are given as input a $(a(k-1)+bk)-$OV instance. Say the list is $L$. It contains $n$ zero one vectors of length $d$. 

We define some new symbols. We add a new symbol $\$$, we will use this to encourage matching in lined up sections. We will also add $\%_i$ symbols for $i\in[1,k]$. A symbol $\%_i$ appears only in string $i$, thus it can not be matched, it must be deleted or substituted. 

Let us now define the real sections of these strings (the blue section at the center of the strings in Figure~\ref{fig:slidingPyramid}). First we will define this for $i\in[1,k-1]$:
$$REAL_i = \bigcirc_{j\in[1,n^b]} \IED(L,\Flist(L)_{b}[j],a)$$
and for the last string
$$REAL_k = \bigcirc_{j\in[1,n^b]} \EIED(\Flist(L)_{b}[j]). $$

Next let us define the fake gadget part of these strings. First a single fake gadget is generated by plugging in $\hat{L}_F$, a list of $n$ all ones vectors all of length $d$. And the vector $\hat{u}_F$, a length $d$ vector of all ones. For $i\in[1,k-1]$
$$F_i = \IED(\hat{L}_F,\hat{u}_F,a)$$
and then for $i=k$
$$F_k = \EIED(\hat{u}_F).$$
We can now define our fake gadgets, the gray parts of Figure~\ref{fig:slidingPyramid}:
$$FAKE_i = F_i^{in^b}$$

Now we will define the unmatched symbol sections. We will add new symbols $\&_i$ for $i\in[1,k]$. Note that $\&_i$ appears only in string $i$.
$$UNMA_i = \&_i^{|F_i|n^b (2k-i)}$$

Now let us define our gadgets for $i\in[1,k-1]$:
$$\EDOV_i(a,b,k,L) = UNMA_i \circ FAKE_i \circ REAL_i \circ FAKE_i \circ UNMA_i. $$

Now note that if there is a $(a(k-1)+bk)$-OV this corresponds to a $k$-tuple of $\IED_1,\ldots,\IED_{k-1},\EIED$ gadgets in this construction having an edit distance of $C$ (smaller than $C+w$). Additionally, note that if there is no $(a(k-1)+bk)$-OV then all $k$-tuple of $\IED_1,\ldots,\IED_{k-1},\EIED$ gadgets have an edit distance of $C+w$.  

Now, if there is a $(a(k-1)+bk)$-OV  then, we can align the gadgets and give an upper bound on total cost. First, all unmatched characters will cost $1$ so they have total cost of $|F_i|n^b(2k^2-k(k+1)/2).$ Next, consider the fake gadgets that overhang and interact with fewer than $k$ other gadgets. Let $p_i = \ED(F_k,\ldots,F_i)$. There are $2n$ fake gadgets that are aligned with $i$ total gadgets and otherwise aligned with the unmatchable characters. The cost for these is $\sum_{i\in[1,k]} 2np_i$. Finally, there are $3n$ gadgets which line up with a full $k$ other gadgets. If $m$ of these represent underlying orthogonal vectors then all $m$ tuples will have a cost of $D$, the rest will have a cost of $D+w$ by Lemma~\ref{lem:edinterleaveGadget}. This means if there is at least one match then the cost is at most:
$$E = |F_i|n^b(2k^2-k(k+1)/2) + \sum_{i\in[1,k]} 2np_i + 3n(D+w) -w.$$

What if there are no $(a(k-1)+bk)$-OVs? Well, any valid alignment (where we skip no characters and have the same size of overhangs) will cost at least $E+w$. If we don't align gadgets then at some point we are skipping $5^q$ symbols, this could potentially allow us to improve our edit distance by $2|\IED_i|+p$, however we set $q= 50p + 500|\IED_i|$ in Lemma~\ref{lem:edinterleaveGadget}. These skipped symbols increase the cost due to the unmatchable characters. When we foreshorten our string by skipping these $5^q$ symbols we sill need to pay the cost in the unmatchable characters as deletions (instead of substitutions) but we also need to pay for the deletion of the $5^q$ characters. So, if there is no valid alignment our cost is at least $E+w$, in fact it is exactly $E+w$.

The length of the generated strings is $O(n^b |\IED_i|) = O(n^bn^ad)$. Because $d=n^{o(1)}$ we can simplify this to $O(n^{b+a+o(1)})$.
\end{proof}

Next, we show that the strings we produced compress well. 

\begin{lemma}
\label{lem:EDCompressionOVGadget}
Let $k,a,b$ be constant integers. Let $d= n^{o(1)}$.
Given the $k$ strings $\EDOV_i(a,b,k,L)$ defined in Lemma~\ref{lem:OVgadget} the size of the compression is $O(n^{b+o(1)}+n^{1+o(1)})$.
\end{lemma}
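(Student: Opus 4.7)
The plan is to bound the SLP size of each of the five components of $\EDOV_i(a,b,k,L) = UNMA_i \circ FAKE_i \circ REAL_i \circ FAKE_i \circ UNMA_i$ separately, using \cref{lem:edinterleaveGadget} as the one nontrivial ingredient and standard SLP tricks for the rest.

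First, I would handle $REAL_i$, which is a concatenation of $n^b$ gadgets of the form $\IED_i(L, \Flist(L)_b[j], a)$ (or $\EIED(\Flist(L)_b[j])$ when $i=k$). Applying \cref{lem:edinterleaveGadget} with $x = n^b$ yields SLPs for all these $n^b$ gadgets simultaneously, with total size $O(n^b d + nd + n^b \log n)$. Since $d = n^{o(1)}$, this is $O(n^{b+o(1)} + n^{1+o(1)})$. Stitching the resulting $n^b$ top-level non-terminals into one non-terminal for $REAL_i$ needs at most $O(n^b)$ additional productions, still inside the budget.

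Next, I would handle $FAKE_i = F_i^{i n^b}$. The single gadget $F_i$ is $\IED$ (resp.\ $\EIED$) applied to the all-ones list $\hat{L}_F$ and vector $\hat{u}_F$, so by \cref{lem:edinterleaveGadget} with $x=1$ it admits an SLP of size $O(d + nd + \log n) = O(n^{1+o(1)})$. Repeated squaring then represents $F_i^{i n^b}$ using $O(\log(i n^b)) = O(\log n)$ extra non-terminals. For $UNMA_i$, a single character is repeated $|F_i| n^b (2k - i) = n^{a+b+o(1)}$ times; this compresses to $O(\log n)$ productions via repeated squaring as well.

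Summing across the $k$ (constant) strings, the total SLP size is dominated by the $REAL_i$ terms and gives $O(n^{b+o(1)} + n^{1+o(1)})$ as claimed. The only substantive step is invoking \cref{lem:edinterleaveGadget} with the right value of $x$; the remaining bookkeeping (concatenations, powers of a single symbol, powers of a single gadget) is routine. The main obstacle was already overcome inside \cref{lem:edinterleaveGadget}, which shares the interleaved structure of $\mathbf{VecS_I}_{a}(L, \cdot)$ across the $n^b$ distinct real gadgets; once that sharing is available, the outer construction adds only polylogarithmic and $O(n^b)$-size overhead.
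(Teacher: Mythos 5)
Your proof is correct and follows essentially the same approach as the paper: invoke \cref{lem:edinterleaveGadget} with $x=n^b$ to compress all the $REAL_i$ gadgets at once for a bound of $O(n^b d + nd + n^b\log n) = O(n^{b+o(1)}+n^{1+o(1)})$, then use logarithmic-size SLPs for the repeated unmatchable characters. You are actually slightly more careful than the paper's own proof, which does not explicitly address $FAKE_i$; your observation that $F_i$ is itself an interleave gadget on the all-ones list (so compresses to $O(n^{1+o(1)})$) and that $F_i^{in^b}$ then costs only $O(\log n)$ more by repeated squaring fills that gap cleanly.
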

\begin{proof}
We want to represent $O(n^b)$ instances of $\EIED$ and $\IED$ gadgets. By Lemma~\ref{lem:edinterleaveGadget} we have there is an SLP to represent these of size $O(n^bd+nd+ n^b a \lg(n))$. This can be simplified to $n^{b+o(1)}n^{1+o(1)}$. 

We additionally want to represent the unmatchable characters. These have a total length of $n^{b+a+o(1)}$, so there is an SLP to represent these of size $(b+a+o(1))\lg(n) = n^{o(1)}$. 

So the total size of the SLP is $n^{b+o(1)}n^{1+o(1)}$. 
\end{proof}

\subsection{Putting it all Together}
Now that we have proven the above lemmas, we can prove our desired result.\\

\begin{reminder}{Theorem~\ref{thm:editDistanceLowerBoundFromSETH}}
If the $k'$-OV hypothesis is true for all constants $k'$, then for all constant $\epsilon \in (0,1]$ grammar-compressed $k$-median edit distance requires $\left(M^{k-1}m\right)^{1-o(1)}$ time when the alphabet size is $|\Sigma|=\Theta(k)$ and $m=M^{\epsilon \pm o(1)}$. Here, $M$ and $m$ denote the total uncompressed and compressed length of the $k$ input strings respectively.
\end{reminder}
\begin{proof}
We will use Lemma~\ref{lem:EDOVgadget} and Lemma~\ref{lem:EDCompressionOVGadget}. Given an instance of $(bk+a(k-1))$-OV we can produce strings $\EDOV_1(a,b,k,L),\ldots,\EDOV_k(a,b,k,L)$ such that they have length $M=n^{a+b+o(1)}$ and $m=n^{b+o(1)}+n^{1+o(1)} = n^{b+o(1)}$ when $b\geq 1$.

Our alphabet is of size $|\Sigma|=O(k)$, so this lower bound applies as long as the size of the alphabet is $\Theta(k)$.

Now note that $M^{k-1}m = n^{(k-1)a+(k-1)b+b} = n^{(k-1)a+kb}$. So, an algorithm that runs in  $(M^{k-1}m)^{1-\eps}$ time $\eps>0$ implies an algorithm for $(bk+a(k-1))$-OV that violates our assumption. 
Thus, $k$-edit distance requires $\left(M^{k-1}m\right)^{1-o(1)}$ time given the assumption on $(bk+a(k-1))$-OV.

Now we consider a contradiction to our theorem statement. There would be an algorithm running in $(M^{k-1}m)^{1-\gamma}$ time to solve grammar compressed $k$-median edit distance when $m=M^{\epsilon \pm o(1)}$ and $\eps \in (0,1]$. In the easiest case we can pick an $a,b$ such that $b/(a+b) = \eps$, in this case we are done. For irrational $\eps$ we need to approximate and then pad the strings. Choose an $a$ and $b$ such that $\eps \leq b/(a+b) < \eps (1+\gamma/2)$. Such $a,b$ exist that are in $O(\frac{1}{\eps\gamma})$. We add a new character $3$ to our alphabet. 
Let $S'_i = S_i3^x$, where we will set $x\in[M,M^2]$ later. Note that $\ED(S_1,\ldots,S_k) =\ED(S'_1,\ldots,S'_k)$. Note that the compression of these strings is $m' = m+\lg(x) = \Theta(m)$ where as the length is $M'=M+x = \Theta(x)$. Choose $x= m^{1/\eps \pm o(1)} = M^{b/(a+b) \cdot 1/\eps}$.
So now $m' = M'^{1/\eps \pm o(1)}$. Note that $1\leq b/(a+b) \cdot 1/\eps \leq (1+\gamma/2)$. 
Now consider running the claimed fast algorithm on our new $S'_1,\ldots, S'_k$ instance.
The running time is 
$$(M'^{(k-1)}m')^{1-\gamma} = O( (M^{(1+\gamma/2)(k-1)} m)^{1-\gamma-o(1)}).$$
This running time can be simplified to $O(M^{(k-1)} m)^{(1+\gamma/2)(1-\gamma-o(1))}$. Note that $(1+\gamma/2)(1-\gamma-o(1))$ is less than $1-o(1)$. This algorithm violates the lower bound for the original $S_1,\ldots,S_k$ instance. This is a contradiction.

So any algorithm running in $(M^{k-1}m)^{1-\gamma}$ time to solve grammar compressed $k$-median edit distance when $m=M^{\epsilon \pm o(1)}$ and $\eps \in (0,1]$ violates $k'$-OV for some $k'$ that depends on $\eps,\gamma$. This implies our theorem statement. 
\end{proof}

Finally we will apply this same lower bound to $k$-center edit distance using a reduction from~\cite{editDistLBSeth}.

\subsection{k-Center Edit Distance}

In Section 3 of~\cite{editDistLBSeth}  they present a reduction from median $k$-edit distance to $k$-center distance~\cite{editDistLBSeth}. We will restate their reduction here. 

Say we are given a $k$-median edit distance instance with $k$ strings $X_1,\ldots,X_k$ where $|X_i|=N$. Then, as~\cite{editDistLBSeth} suggest, construct the following strings:
\begin{align*}
    Y_1 &= X_1~\$^N~X_2~\$^N~\cdots~\$^N~X_{k-1}~\$^N~X_k \\
    Y_2 &= X_2~\$^N~X_3~\$^N~\cdots~\$^N~X_k~~~~~\$^N~X_1 \\
    &\vdots\\
    Y_k &=X_{k}~\$^N~X_1~\$^N~\cdots~\$^N~X_{k-2}~\$^N~X_{k-1}
\end{align*}

\textbf{Claim of Section 3 in~\cite{editDistLBSeth}}: $\CED(Y_1,Y_2,\ldots,Y_k) = \ED(X_1,X_2,\ldots,X_k)$.

As a quick intuition for this claim, we have to match the $\$^N$ sections. First note that we can achieve this bound by taking a string $T$ which is one of the median edit distance minimizing strings of the $X_i$ and creating a center string for the $Y_i$ strings 
$C= T~\$^N~T~\$^N~\ldots~\$^N~T.$
Now note that the distance to this string from all $Y_i$ is $\ED(X_1,X_2,\ldots,X_k)$. Thus,  $\CED(Y_1,Y_2,\ldots,Y_k) \leq \ED(X_1,X_2,\ldots,X_k)$. For the other side of the inequality note that the center string $C$ should have shape $C= T_1~\$^N~T_2~\$^N~\ldots~\$^N~T_k$ for some  strings $T_1,\ldots,T_k$. Now note that for all $j$ 
$$\sum_{i \in [1,k]} \ED(T_j, X_i)\geq \ED(X_1,X_2,\ldots,X_k).$$
Because, the $k$-median edit distance minimizes this sum over all possible strings, and $T_j$ is simply an instantiation of a string. 
Now note that 
$$\sum_{j\in[1,k]}\sum_{i \in [1,k]} \ED(T_j, X_i)\geq k\ED(X_1,X_2,\ldots,X_k),$$
which implies
$$\sum_{j\in[1,k]} \ED(Y_j, C)\geq k\ED(X_1,X_2,\ldots,X_k).$$
So, the max over all $j$ of $\ED(Y_j, C) \geq \ED(X_1,X_2,\ldots,X_k)$. This is the definition of the center edit distance, so we have shown both sides of the inequality. \\

\begin{theorem} \label{thm:centerEditDistanceAllSameLength}
We are given $k$ strings of length $M$ with a SLP of size $m$. The $k$-center-edit-distance problem on these strings requires $\left(M^{k-1}m \right)^{1-o(1)}$ time if SETH is true. 
\end{theorem}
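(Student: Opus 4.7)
The plan is to derive Theorem~\ref{thm:centerEditDistanceAllSameLength} directly from Theorem~\ref{thm:editDistanceLowerBoundFromSETH} via the median-to-center reduction recalled just above the theorem statement (attributed to~\cite{editDistLBSeth}). First I would start with a hard median $k$-edit distance instance $X_1,\ldots,X_k$ with $|X_i|=N$ and a joint SLP of size $n=N^{\eps\pm o(1)}$, which, by Theorem~\ref{thm:editDistanceLowerBoundFromSETH}, requires $(N^{k-1}n)^{1-o(1)}$ time to solve under SETH. Then I would apply the cyclic-concatenation reduction to build $Y_1,\ldots,Y_k$ where $Y_j = X_j\,\$^N\,X_{j+1}\,\$^N\,\cdots\,\$^N\,X_{j-1}$ (indices taken cyclically). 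The excerpt already establishes $\CED(Y_1,\ldots,Y_k)=\ED(X_1,\ldots,X_k)$, so an oracle for $k$-center edit distance on $(Y_1,\ldots,Y_k)$ answers the median instance.

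The next step is to verify that the new instance satisfies the hypothesis of the target theorem and has matching parameters. Each $Y_j$ has length exactly $M=(2k-1)N=\Theta(N)$, and all $k$ strings share this common length, as required. For the compressed encoding, I would extend the joint SLP of the $X_i$'s by one non-terminal expanding to $\$^N$ (of size $O(\log N)$) together with $k$ new start non-terminals, one per $Y_j$, each with a production of length $2k-1$. Since $k=O(1)$, the augmented SLP has size $m = O(n+\log N)=\Theta(n)$ in the regime $n=N^{\eps\pm o(1)}$ with $\eps\in(0,1]$, hence $M^{k-1}m=\Theta(N^{k-1}n)$.

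Finally, suppose for contradiction that some algorithm solves $k$-center edit distance in time $(M^{k-1}m)^{1-\gamma}$ for a constant $\gamma>0$ in this regime. Building the SLP for $Y_1,\ldots,Y_k$ from the SLP for $X_1,\ldots,X_k$ takes $O(n+\log N)$ time, so composing the construction with the hypothetical algorithm would compute $\ED(X_1,\ldots,X_k)$ in $(N^{k-1}n)^{1-\gamma-o(1)}$ time, contradicting Theorem~\ref{thm:editDistanceLowerBoundFromSETH}. The main nuisance, which I do not expect to be hard, is handling irrational target compression exponents $\eps$: one pads each $Y_j$ with a fresh common symbol repeated polynomially many times (as already done in the proof of Theorem~\ref{thm:editDistanceLowerBoundFromSETH}), which preserves $\CED$ up to an additive constant, leaves the joint SLP size within a constant factor, and lets the lower bound go through for every $\eps\in(0,1]$.
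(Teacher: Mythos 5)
Your proposal is correct and takes essentially the same approach as the paper's proof: both apply the cyclic-concatenation median-to-center reduction of~\cite{editDistLBSeth} to the hard median instance of Theorem~\ref{thm:editDistanceLowerBoundFromSETH}, observe that the resulting strings have length $\Theta(N)$ and a joint SLP of size $O(n+\log N)$, and transfer the lower bound. Your version is a bit more careful on two points the paper glosses over — the correct string length is $(2k-1)N$ rather than the paper's stated $kN$ (harmless since both are $\Theta(N)$), and the padding needed to cover arbitrary compression exponents $\eps$ — but the underlying argument is identical.
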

\begin{proof}
By Theorem~\ref{thm:editDistanceLowerBoundFromSETH}, given $k$ strings, $X_1,\ldots,X_k$, of length $N$ which all compress to length $n$, $k$-edit distance requires  $\left(N^{k-1}n\right)^{1-o(1)}$ time if SETH is true.

We use the transformation of~\cite{editDistLBSeth} and produce strings $Y_1,\ldots,Y_k$. These strings have length $M=kN$ and an SLP of size $m = kn+k+\lg(N)$. As a result $M^{k-1}m= O\left(N^{k-1}(n+\lg(N))\right)$. Thus, an algorithm that runs in $\left(M^{k-1}m\right)^{1-\eps}$ time for $k$-center edit distance for some constant $\eps>0$ implies a violation of SETH. Thus, $k$-center edit distance requires $\left(M^{k-1}m\right)^{1-o(1)}$ time. 
\end{proof}

\section{Hamming Distance and Beyond}\label{sec:Hamming}

Given $k$ equal-length strings $X_1,\ldots,X_k$ with $X_i\in \Sigma_i^N$,
we define a string $X = \bigotimes_{i=1}^k X_i \in (\bigtimes_{i=1}^k \Sigma_i)^N$
with $X[j] = (X_1[j],\ldots,X_k[j])$ for $j\in [1\dd N]$.
In this section, we show that if each string $X_i$ can be represented using a straight-line program of size $n_i$,
then $X$ can be represented using a straight-line program of size $\Oh((\prod_{i=1}^k n_i)^{1/k}N^{1-1/k})$.
Next, we apply this construction for computing Hamming distance of two grammar-compressed strings
and propose several generalizations for $k=\Oh(1)$ grammar-compressed strings.

\begin{proposition}\label{prp:prodgram}
Given $k=\Oh(1)$ straight-line programs $\G_{i}$ of sizes $n_i$ representing strings $X_i$ of the same length $N>0$, 
a straight-line program $\G$ of size $\Oh((\prod_{i=1}^k n_i)^{1/k}N^{1-1/k})$ representing $X=\bigotimes_{i=1}^k X_i$ can be constructed in time $\Oh((\prod_{i=1}^k n_i)^{1/k}N^{1-1/k})$.
\end{proposition}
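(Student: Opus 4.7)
The plan is to apply \cref{lem:part} to each grammar $\G_i$ with a common granularity $\tau$ (to be optimised), producing grammars $\G_i^+$ and $\G_i^P$, both of size $\Oh(n_i)$. The grammar $\G_i^P$ represents $X_i$ as a concatenation of at most $\lceil 3N/\tau\rceil$ short phrases, each being the expansion of a ``small'' symbol $A$ of $\G_i^+$ with $|A|\le\tau$. Taking the common refinement of the $k$ sequences of phrase boundaries partitions $[0\dd N]$ into $p=\Oh(N/\tau)$ segments of length at most $\tau$, each contained in a single phrase of every $X_i$; this is the multi-string analogue of the box decomposition used in \cref{cor:scheme}.

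I would then build $\G$ in two layers. The top layer is an SLP of size $\Oh(p)$ that lists the $p$ segment-symbols in left-to-right order; it can be assembled by threading the common refinement through $\G_1^P,\ldots,\G_k^P$ with standard substring extraction on SLPs (which adds only $\Oh(\log N)$ symbols per extraction). The bottom layer introduces, for every $k$-tuple $(A_1,\ldots,A_k)$ of small symbols that actually arises as the covering tuple of some common-refinement segment, a single SLP fragment whose expansion is the tensor product of aligned portions of $\exp_{\G_i^+}(A_i)$. Each segment-symbol in the top layer then references the appropriate sub-range of its covering tuple's fragment via another $\Oh(\log N)$-symbol substring extraction.

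The accounting is that the number of distinct covering tuples is at most $\prod_i n_i$ (there are $\le n_i$ small symbols per $\G_i^+$) and each tensor-product fragment has total length at most $\tau$, so the bottom layer contributes $\Ohtilde(\tau\prod_i n_i)$; combined with the top layer's $\Ohtilde(N/\tau)$, this gives $|\G|=\Ohtilde(N/\tau+\tau\prod_i n_i)$. Setting $\tau=\lceil(N/\prod_i n_i)^{1/k}\rceil$ (which is at least $1$ in the non-trivial regime $\prod_i n_i\le N$; otherwise the naive $\Oh(N)$-size explicit SLP listing $X$ already meets the claim) balances the two terms at $\Ohtilde((\prod_i n_i)^{1/k}N^{1-1/k})$. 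The construction time matches this size up to polylogarithmic factors.

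The main obstacle is the length mismatch among the small symbols of different $\G_i^+$'s: since the phrase decompositions are not a priori aligned, a covering tuple $(A_1,\ldots,A_k)$ need not satisfy $|A_1|=\cdots=|A_k|$, so the tensor product of the full expansions is undefined. I would circumvent this by refining each small symbol at the boundaries induced by the common refinement, producing aligned ``sub-phrases'' on which the tensor product becomes well-defined, and then building the bottom-layer fragments over these sub-phrases. Verifying that this refinement does not blow up the total fragment length beyond $\Oh(\tau\prod_i n_i)$ --- essentially via a pigeonhole argument on which sub-phrases share a covering tuple --- is the crux of the argument.
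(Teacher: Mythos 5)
Your high-level plan matches the paper's: apply \cref{lem:part} to each $\G_i$, take the common refinement of the $k$ phrase decompositions, and build a shared sublibrary of tensor-product fragments indexed by tuples of small symbols. But the accounting in the bottom layer is wrong, and this is precisely the crux you flag but do not resolve.

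The claim that the bottom layer has size $\Ohtilde(\tau\prod_i n_i)$ rests on ``one fragment per distinct covering tuple, each of length at most $\tau$.'' Two things break here. First, a fixed covering tuple $(A_1,\ldots,A_k)$ can arise in many different segments with different \emph{relative offsets} of the $A_i$'s, and different offsets give genuinely different tensor products, so they cannot share a fragment. The number of (tuple, alignment) pairs you must distinguish is on the order of $\tau^{k-1}\prod_i n_i$, not $\prod_i n_i$: once you fix $k$ small symbols and the offset of one of them, the other $k-1$ offsets each have roughly $\tau$ possibilities. (This is exactly the ``relevant tuple'' count in the paper's proof, where a relevant tuple carries aligned same-length fragments of the $\exp(A_i)$ with one forced to be a prefix and one a suffix.) For $k=2$ this coincides with your $\tau\prod_i n_i$, but for $k\ge 3$ your estimate is too optimistic and no pigeonhole argument recovers it. Second, even with the correct count, you cannot afford to write each fragment out explicitly as a string of length up to $\tau$ --- that would cost $\Oh(\tau^{k}\prod_i n_i)$. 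The paper avoids this by making each relevant tuple a nonterminal with an $\Oh(1)$-size production that recursively refers to other relevant tuples (splitting along one symbol's rule $A_j\to A_j'A_j''$), so that the grammar contributed by relevant tuples is $\Oh(\tau^{k-1}\prod_i n_i)$ rather than $\tau$ times that. With the corrected tradeoff $\Oh(N/\tau + \tau^{k-1}\prod_i n_i)$ the same choice $\tau = \lceil (N/\prod_i n_i)^{1/k}\rceil$ balances the two terms at $\Oh(N^{1-1/k}(\prod_i n_i)^{1/k})$, which is the claimed bound. So your outline is recoverable, but it needs (i) the correct count of aligned tuples including the $\tau^{k-1}$ factor of offsets, and (ii) a recursive SLP construction for the tuple symbols rather than explicit expansion.
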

\begin{proof}
We proceed based on a threshold $\tau\in [1\dd N]$ to be fixed later.
For each grammar $\G_i$, we first use \cref{lem:part} to derive grammars $\G^+_i$ and $\G^P_i$ of size $\Oh(n_i)$.

Next, we consider \emph{relevant tuples} $\F=(F_1,\ldots,F_k)$ such that:
\begin{itemize}
    \item each $F_i$ is a fragment of $\exp(A_i)$ for a symbol $A_i$ of $\G^+_i$ satisfying $|A_i|\le \tau$,
    \item $|F_1|=\cdots = |F_k|$,
    \item there exists $i_p\in [1\dd k]$ such that $F_{i_{p}}$ is a prefix of $\exp(A_{i_p})$,
    \item there exists $i_s\in [1\dd k]$ such that $F_{i_{s}}$ is a suffix of $\exp(A_{i_s})$.
\end{itemize}
The number of relevant tuples does not exceed $\tau^{k-1}\cdot k \cdot \prod_{i=1}^k n_i$,
because each $\F$ is uniquely determined by the choices of symbols $A_i$,
the choice of $i_p$, and the starting positions of $F_i$ in $\exp(A_i)$ for $i\ne i_p$.
(The common length $|F_1|=\cdots =|F_k|$ is uniquely determined due to the constraint that at least one $F_i$ is a suffix of $\exp(A_i)$.)

For each relevant tuple $\F$, we add to $\G$ a symbol $A_\F$ aiming at $\exp(A_\F) = \bigotimes_{i=1}^k F_i$.
The symbols $A_\F$ are ordered consistently with the lexicographic order of tuples $(A_1,\ldots,A_k)$
based on the order of symbols $A_i$ within each grammar $\G_i$.

If each $A_i$ is a terminal of $\G_i$, then $F_i = \exp(A_i)=A_i$ holds for each $i$,
and we set $A_{\F} = (A_1,\ldots,A_k)$ to be a terminal of $\G$.
Otherwise, we set $A_\F$ to be a non-terminal, and we need to specify $\rhs(A_\F)$.
For this, let us fix an arbitrary index $j$ such that $A_j$ is a non-terminal of $\G_j$ and $A_j\to A'_j A''_j$.
We then consider three cases:
\begin{enumerate}
    \item $F_j$ is contained within the prefix $\exp(A'_j)$ of $\exp(A_j)$.
    In this case, we set $A_\F\to A_{\F'}$, where $F'_j$ is the fragment of $\exp(A'_j)$ corresponding to $F_j$
    and $F'_i=F_i$ for $i\ne j$. 
    Note that $F_j$ cannot be a suffix of $\exp(A_j)$ and, if $F_j$ is a prefix of $\exp(A_j)$,
    then $F'_j$ is a prefix of $\exp(A'_j)$. Thus, $\F'$ is a relevant tuple.
    
    \item $F_j$ is contained within the suffix $\exp(A''_j)$ of $\exp(A_j)$.
    In this case, we set $A_\F\to A_{\F''}$, where $F''_j$ is the fragment of $\exp(A''_j)$ corresponding to $F_j$
    and $F''_i=F_i$ for $i\ne j$. 
    Note that $F_j$ cannot be a prefix of $\exp(A_j)$ and, if $F_j$ is a suffix of $\exp(A_j)$,
    then $F''_j$ is a suffix of $\exp(A''_j)$. Thus, $\F''$ is a relevant tuple.
    
    \item $F_j$ overlaps both the prefix $\exp(A'_j)$ and the suffix $\exp(A''_j)$ of $\exp(A_j)$.
    In this case, we set $A_\F\to A_{\F'}A_{\F''}$, where $F'_j$ is the suffix of $\exp(A'_j)$ overlapping $F_j$,
    $F''_j$ is the prefix of $\exp(A''_j)$ overlapping $F_j$,
    and for every $j\ne i$ we have $F_i = F'_i F''_i$ with $|F'_i|=|F'_j|$ and $|F''_i|=|F''_j|$.

    Note that $F'_j$ is a suffix of $\exp(A'_j)$ and $F''_j$ is a prefix of $\exp(A''_j)$.
    Moreover, if $F_j$ is a prefix of $\exp(A_j)$, then $F'_j$ is a prefix of $\exp(A'_j)$,
    and if $F_j$ is a suffix of $\exp(A_j)$, then $F''_j$ is a suffix of $\exp(A''_j)$.
    Finally, for $i\ne j$, if $F_i$ is a prefix of $\exp(A_i)$, then $F'_i$ is a prefix of $\exp(A_i)$,
    and if $F_i$ is a suffix of $\exp(A_i)$, then $F''_i$ is a suffix of $\exp(A_i)$.
    Thus, both $\F'$ and $\F''$ are relevant tuples. 
\end{enumerate}

Next, for each $i$, we interpret the string $P_i$ generated by $\G^P_i$
as a decomposition of $X_i$ into $|P_i| =\Oh(\frac{N}{\tau})$ phrases.
Each phrase is of the form $\exp(A)$ for a symbol $A$ of $\G^+_i$ satisfying $|A|\le \tau$.
Let $B_i$ be the set of phrase boundaries of this decomposition of $X_i$
(i.e., $B_i = \{|\exp(P_i[1\dd j])| : i\in [0\dd |P_i|]\}$),
and let $B = \bigcup_{i=1}^k B_i$.

For each string $X_i$, let us construct another partition $X_i=X_{i,1}\circ \cdots\circ X_{i,r}$ with phrase boundaries $B$ (if $0=b_0<\cdots < b_r = N$ are the elements of $B$, then $X_{i,j}=X_i(b_{j-1}\dd b_j]$).
Since $B_i \sub B$, each phrase $X_{i,j}$ can be represented as a fragment of $\exp(A_{i,j})$ for a symbol $A_{i,j}$
of $\G^+_i$ satisfying $|A_{i,j}|\le \tau$. 
Moreover, for each $j$, there exists $i_p\in [1\dd k]$ such that $X_{i_p,j}$ is a prefix of $\exp(A_{i_p,j})$
and $i_s\in [1\dd k]$ such that  $X_{i_s,j}$ is a suffix of $\exp(A_{i_s,j})$.
(This is because $b_{j-1}\in B_{i_p}$ and $b_j\in B_{i_s}$ holds for some $i_p$ and $i_s$.)
Hence, for each $j\in [1\dd r]$, there exists a relevant tuple $\F_j = (X_{i,j})_{i=1}^k$. 
Thus, it suffices to add to $\G$ a starting symbol $S \to \bigcirc_{j=1}^r A_{\F_j}$.

Due to the assumption that $k=\Oh(1)$, the total running time and the size $|\G|$ are both
$\Oh(\frac{N}{\tau} + \tau^{k-1}\prod_{i=1}^k n_i)$.
Optimizing for $\tau\in [1\dd N]$, this becomes
$\Oh(\prod_{i=1}^k n_i+(\prod_{i=1}^k n_i)^{1/k}N^{1-1/k})$.
If the first term dominates, then $\prod_{i=1}^k n_i > N$.
However, a trivial $\Oh(N)$-size straight-line program representing $X$ can be constructed in $\Oh(N)$ time by decompressing each string $X_i$. Thus, we can always construct a straight-line program representing $X$
in time $\Oh((\prod_{i=1}^k n_i)^{1/k}N^{1-1/k})$.
\end{proof}

\begin{corollary}\label{cor:additive}
Given $k=\Oh(1)$ straight-line programs $\G_i$ of size $n_i$ representing strings $X_i\in \Sigma_i^N$
of the same length $N>0$
and a function $\delta: \bigtimes_{i=1}^k \Sigma_i \to \R$ that can be evaluated in $\Oh(1)$ time,
the value $\delta(X_1,\ldots, X_k) := \sum_{j=1}^N \delta(X_1[j],\ldots, X_k[j])$ can
be computed in  $\Oh((\prod_{i=1}^k n_i)^{1/k}N^{1-1/k})$ time.
\end{corollary}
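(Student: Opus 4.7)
The plan is to reduce the problem to evaluating a sum over an SLP-compressed string in the product alphabet, and then to carry out that evaluation by a straightforward bottom-up traversal of the grammar.

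First I would invoke \cref{prp:prodgram} to construct, in $\Oh((\prod_{i=1}^k n_i)^{1/k}N^{1-1/k})$ time, a straight-line program $\G$ of the same size generating the string $X=\bigotimes_{i=1}^k X_i$ over the product alphabet $\bigtimes_{i=1}^k \Sigma_i$. By construction, $X[j]=(X_1[j],\ldots,X_k[j])$, so $\delta(X_1,\ldots,X_k) = \sum_{j=1}^{N} \delta(X[j])$ is exactly the additive sum of $\delta$ over the characters of $\exp(S)$, where $S$ is the start symbol of $\G$.

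Next I would compute, for every symbol $A$ of $\G$, the value $\Delta(A) := \sum_{j=1}^{|A|} \delta(\exp(A)[j])$ in a bottom-up pass (processing symbols in the order guaranteed by the SLP definition, so that children are processed before parents). For a terminal $A=(a_1,\ldots,a_k)\in \bigtimes_{i=1}^k \Sigma_i$ we have $\Delta(A) = \delta(a_1,\ldots,a_k)$, obtained in $\Oh(1)$ time by the assumed evaluation bound on $\delta$. For a non-terminal with production $A\to B_1\cdots B_m$, additivity of the sum yields $\Delta(A) = \sum_{\ell=1}^m \Delta(B_\ell)$, again in $\Oh(|\rhs(A)|)$ time assuming $\Ohtilde(1)$-time arithmetic in the RAM model (the values fit in $\Oh(\log N)$ words given that $\delta$ has $\Oh(1)$-size output per character). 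Summing over all symbols, the traversal takes $\Oh(|\G|) = \Oh((\prod_{i=1}^k n_i)^{1/k}N^{1-1/k})$ time, and the answer is $\Delta(S)$.

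There is essentially no conceptual obstacle here: the real work was already done in \cref{prp:prodgram}, which produced a compact SLP for the interleaving of the $k$ strings. The only minor issue to be careful about is the word-RAM accounting for the accumulated sums (if $\delta$ returns integers of bounded magnitude, the sums fit in $\Oh(\log N)$-bit words; for real-valued $\delta$, the standard convention of $\Oh(1)$-time arithmetic on the output values suffices). Both the SLP construction and the traversal thus run within the claimed time bound, completing the proof.
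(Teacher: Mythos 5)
Your proposal is correct and follows essentially the same approach as the paper: invoke \cref{prp:prodgram} to build the product-alphabet SLP and then propagate the additive sums bottom-up through the grammar, reading off the answer at the start symbol. The only difference is a cosmetic notational one ($\Delta(A)$ versus the paper's $\delta(A)$) and an extra sentence about word-RAM accounting, which the paper leaves implicit.
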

\begin{proof}
Let $X = \bigotimes_{i=1}^k X_i$ and let $\G$ be a straight-line program representing $X$ 
constructed using \cref{prp:prodgram}.
For each symbol $A$ of $\G$, we compute a value $\delta(A)$ defined as 
$\sum_{j=1}^{|A|}\delta(\exp(A)[j])$.
Note that if $A=(A_1,\ldots,A_k)$ is a non-terminal, then $\delta(A)=\delta(A_1,\ldots,A_k)$ can be evaluated in $\Oh(1)$ time.
Otherwise, if $A\to \bigcirc_{\ell=1}^r B_\ell$,
then $\delta(A)=\sum_{\ell=1}^r \delta(B_\ell)$, so $\delta(A)$ can be computed in $\Oh(|\rhs(A)|)$ time.
Consequently, constructing $\G$ and computing $\delta(A)$ for every symbol $A$ of $\G$
costs $\Oh((\prod_{i=1}^k n_i)^{1/k}N^{1-1/k})$ time in total.
This allows retrieving $\delta(X_1,\ldots,X_k)$ as the value $\delta(S)$ for the starting symbol $S$ of $\G$.
\end{proof}

In particular, we can set $\delta=\HD$ for $k=2$ (defined for characters $x,y$ with $\HD(x,y)=0$ if $x= y$ and $\HD(x,y)=1$ if $x\ne y$).
Possible generalizations to an arbitrary number of strings include the following two definitions of $\delta(x_1,\ldots,x_k)$
for characters $x_1,\ldots,x_k$:
\begin{itemize}
\item $\delta(x_1,\ldots,x_k)=0$ if $x_1=\cdots =x_k$ and $\delta(x_1,\ldots,x_k)=1$ otherwise (the straightforward generalization).
\item $\delta(x_1,\ldots,x_k)=\min_{i=1}^k \sum_{j=1}^k \HD(x_i,x_j)$ (the generalization corresponding to the median string problem).
\end{itemize}
In either case, \cref{cor:additive} allows computing $\delta(X_1,\ldots,X_k)$ in  $\Oh((\prod_{i=1}^k n_i)^{1/k}N^{1-1/k})$ time. %This proves \cref{thm:Hamming}.

\section{Shift Distance: Lower Bound \& Upper Bound}\label{sec:BestAlign}%%%%%%%%%%
%%%

In this section we will explore a problem where we can get tight upper and lower bounds, but there is no efficiency to be gained by having compressible strings in your input. The problem is \ComBkAlig. When $k=2$ this problem is (basically) equivalent to the Hamming distance substring problem mentioned in~\cite{compressedLCSSETH}. This problem is a natural extension of pattern matching. This problem asks, given a set of $k$ strings, how can we best line them up to maximize the number of matched characters? So, the alignment that minimizes the Hamming Distance between all the strings. This problem was studied in the average-case for $k=2$ by~\cite{AndoniIKH13}. They called the problem ``shift finding''. We give the natural generalization of this problem to $k$ strings and present upper bounds and lower bounds. We also present an approximation algorithm. We present these results in part because they give an example of a $k$-string comparison problem where there is no efficiency to be gained from having a compressible input. 

The core of this section is showing tight upper and lower bounds for this problem of finding the ideal alignment of strings that minimizes Hamming distance. We show that in cyclic shift there is no advantage to be gained from compression. The upper and lower bounds are tight and unchanged even with compression. We are also able to use FFT to get a fast algorithm for the problem of finding the best alignment with respect to Hamming distance. 

We will now re-state the definition of \ComBkAlig, with more commentary. 

\begin{reminder}{Definition \ComBkAlig~(\CBKA)}
We are given $k$ strings as input: $X_1, \ldots, X_k$. These strings have characters from the alphabet $\Sigma$. Each string has length $N$ and compresses via SLP to a length of $n$. For convenience of notation let $X_j[i]$ when $i \notin [0,n-1]$ refer to $X_j[i']$ where $i' \in [0,n-1]$ and $i' \equiv i \mod n$ (so we let indices ``wrap around''). 

We must return the best alignment of the $k$ strings. The alignment where in the maximum number of locations all strings have the same symbol. We will give a precise definition below. Let $\hat{\Delta} = \max(\Delta_1, \ldots, \Delta_{k-1})$. And let $\llbracket\cdot\rrbracket$ be an operator that turns $True$ to a $1$ and $False$ to a $0$.
\[\CBKA(X_1,\ldots, X_k) = \max_{\Delta_1, \ldots, \Delta_{k-1} \in [0,N-1]}  \sum_{i=1}^{N} {\biggl\llbracket}X_1[i+\Delta_1]=X_2[i+\Delta_2]=\cdots = X_{k-1}[i+\Delta_{k-1}]=X_k[i] {\biggr \rrbracket}\]
So, we want the offsets such that the maximum number of characters are all shared between all $k$ strings. 

We will also define the term of the offset score. Given strings $X_1,\ldots, X_k$ and a particular set of deltas $\Delta_1, \ldots, \Delta_{k-1}$ we will call the value: 
\[\sum_{i=1}^{N} \left[X_1[i+\Delta_1]=X_2[i+\Delta_2]=\cdots = X_{k-1}[i+\Delta_{k-1}]=X_k[i] \right] \]
the offset score of the strings $X_1,\ldots, X_k$ and the deltas $\Delta_1, \ldots, \Delta_{k-1}$.
\end{reminder}

\subsection{The Algorithm}

We will use FFT to get a fast algorithm here. We start by showing how to do this when $k=2$.

\begin{lemma}[From~\cite{compressedLCSSETH}]
There is an $O\left(|\Sigma|N\lg\left(N|\Sigma|\right)\right)$ algorithm for \CBtA~with an alphabet $\Sigma$ (\CBKA~when $k=2$).
\end{lemma}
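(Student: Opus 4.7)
The plan is to reduce the problem to computing cyclic cross-correlations and then apply the Fast Fourier Transform. For each character $\sigma \in \Sigma$, define binary indicator vectors $I_1^\sigma, I_2^\sigma \in \{0,1\}^N$ by $I_j^\sigma[i] = 1$ if $X_j[i] = \sigma$ and $0$ otherwise. Observe that for any shift $\Delta \in [0\dd N-1]$,
\[
\sum_{i=0}^{N-1} \llbracket X_1[i+\Delta] = X_2[i] = \sigma \rrbracket \;=\; \sum_{i=0}^{N-1} I_1^\sigma[(i+\Delta)\bmod N]\cdot I_2^\sigma[i],
\]
which is precisely the $\Delta$-th entry $C^\sigma[\Delta]$ of the cyclic cross-correlation of $I_1^\sigma$ and $I_2^\sigma$. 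Summing over $\sigma$, the quantity $C[\Delta] := \sum_{\sigma \in \Sigma} C^\sigma[\Delta]$ is exactly the offset score of $X_1,X_2$ at shift $\Delta$, so the algorithm should output $\max_{\Delta \in [0\dd N-1]} C[\Delta]$.

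Concretely, I would iterate over $\sigma\in\Sigma$, build $I_1^\sigma$ and $I_2^\sigma$ in $O(N)$ time, and compute $C^\sigma$ via the standard FFT-based cyclic convolution: take the length-$N$ DFT of each indicator vector, multiply one pointwise with the conjugate of the other, and take the inverse DFT. Each of these three steps costs $O(N\log N)$ arithmetic operations. Accumulating the $C^\sigma$ into a single array $C$ takes $O(N)$ time per character. After processing all characters, one $O(N)$ pass through $C$ yields the maximum.

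Summing over $|\Sigma|$ characters gives $O(|\Sigma|N\log N)$ arithmetic operations, and since entries of $C^\sigma$ are bounded by $N$ so the accumulated entries of $C$ are bounded by $N|\Sigma|$, the word-size for exact integer arithmetic (or the precision needed for numerical FFT) is $O(\log(N|\Sigma|))$. Accounting for the cost of the elementary operations, the total time is $O(|\Sigma|N\log(N|\Sigma|))$ as claimed.

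There is essentially no obstacle here, as this is a textbook FFT template; the only mild subtlety is the cyclic (rather than linear) nature of the shift, which is handled naturally by using a length-$N$ DFT (or by padding to length $2N$ if one prefers to avoid wrap-around and then reinterpreting indices). Note that this algorithm makes no use of the SLP representation of $X_1, X_2$: it runs in time essentially linear in $N$ per symbol, which is why the later hardness result can show that compression provides no savings for this problem.
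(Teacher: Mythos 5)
Your proof is correct and is exactly the standard FFT-based approach that the cited reference~\cite{compressedLCSSETH} uses (the paper itself only cites the lemma and does not re-prove it): per-character indicator vectors, cyclic cross-correlation via length-$N$ FFT, sum over $\sigma$, take the maximum. The only slightly loose point is the last step of your running-time accounting: the $\lg(N|\Sigma|)$ factor is most cleanly read as simply upper-bounding the per-FFT $\lg N$ cost (since $\lg(N|\Sigma|) \ge \lg N$), rather than derived from a word-size argument, but this does not affect correctness.
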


We can then generalize to $k$ by making calls to \CBtA.

\begin{theorem}\label{thm:cycShiftUB}
There is an $O(|\Sigma|N^{k-1}\lg(|\Sigma|N))$ algorithm for \ComBkAlig.
\end{theorem}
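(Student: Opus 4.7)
The plan is to reduce the $k$-string problem to $N^{k-2}$ independent instances of the two-string problem \CBtA, relying on the already-stated $O(|\Sigma|N\log(|\Sigma|N))$-time algorithm as a black box. The key idea is that in the objective
\[\sum_{i=1}^{N} \bigl\llbracket X_1[i+\Delta_1]=X_2[i+\Delta_2]=\cdots=X_{k-1}[i+\Delta_{k-1}]=X_k[i]\bigr\rrbracket,\]
the offsets $\Delta_2,\ldots,\Delta_{k-1}$ only interact with $\Delta_1$ through the requirement that all $k$ symbols agree at each position $i$. So I would fix $\Delta_2,\ldots,\Delta_{k-1}$, collapse $X_2,\ldots,X_k$ into a single ``reference'' string, and then invoke \CBtA~to optimize $\Delta_1$.

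Concretely, for each of the $N^{k-2}$ tuples $(\Delta_2,\ldots,\Delta_{k-1})\in[0,N-1]^{k-2}$, I would build a string $Y$ of length $N$ over the alphabet $\Sigma\cup\{\bot\}$ (with $\bot\notin\Sigma$ a fresh ``mismatch'' symbol) defined by
\[
Y[i] = \begin{cases} X_k[i] & \text{if } X_2[i+\Delta_2]=X_3[i+\Delta_3]=\cdots=X_{k-1}[i+\Delta_{k-1}]=X_k[i],\\ \bot & \text{otherwise.}\end{cases}
\]
This $Y$ can be constructed in $O(kN)=O(N)$ time by reading off the cyclically shifted strings. By construction, for every shift $\Delta_1$, the number of positions at which $X_1[i+\Delta_1]=Y[i]$ (and $Y[i]\ne\bot$) equals the offset score of $X_1,\ldots,X_k$ at $(\Delta_1,\Delta_2,\ldots,\Delta_{k-1})$, because $\bot$ is fresh and thus never matches a character of $X_1$. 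I would then call the \CBtA~algorithm on $(X_1,Y)$ to obtain, in $O(|\Sigma|N\log(|\Sigma|N))$ time, the best $\Delta_1$ for the currently fixed suffix of offsets, and update the running maximum.

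The overall running time is $N^{k-2}\cdot(O(N)+O(|\Sigma|N\log(|\Sigma|N))) = O(|\Sigma|N^{k-1}\log(|\Sigma|N))$, matching the claimed bound. The main thing to be careful about is that the mismatch symbol $\bot$ really behaves as an ``unmatchable'' character inside the \CBtA~black box: this is automatic for any FFT-based implementation that sums correlations $\sum_{c\in\Sigma}(\mathbf{1}_{X_1=c}\star\mathbf{1}_{Y=c})$ over characters $c\in\Sigma$, since no term for $c=\bot$ is included and consequently the effective alphabet size stays $|\Sigma|$ rather than $|\Sigma|+1$. No genuinely new ideas beyond this reduction are needed; the only mildly delicate step is confirming that enlarging the alphabet by the fresh symbol does not inflate the factor $|\Sigma|$ in the inner call, which the construction of $Y$ and the structure of the $k=2$ algorithm handle cleanly.
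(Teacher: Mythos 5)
Your proposal is essentially identical to the paper's proof: both iterate over all $N^{k-2}$ tuples $(\Delta_2,\ldots,\Delta_{k-1})$, collapse $X_2,\ldots,X_k$ into a single string $Y$ (using a fresh unmatchable symbol, which the paper calls $@$, at mismatch positions), and invoke the \CBtA{} black box on $(X_1,Y)$ once per tuple. The only cosmetic difference is that the paper accounts for the enlarged alphabet as $|\Sigma|+1$ and absorbs it into the $O(\cdot)$, whereas you observe that the inner FFT need not sum over the mismatch symbol at all — both yield the same bound.
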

\begin{proof}
Let our $k$ input strings be: $X_1, \ldots, X_k$, each of length $N$. 

We are going to reduce \CBKA~with alphabet $\Sigma$ and strings of length $N$ to \CBtA~with alphabet $\Sigma \cup \{@\}$ and strings of length $N$.
First, we will try all $N^{k-2}$ possible offsets $\Delta_2,\ldots, \Delta_{k-1}$. Now, for each of these we will Create a new string $Y$ which will be a ``merge'' of the strings $X_2,\ldots,X_k$. 
The string $Y$ will have length $N$. The $i^{th}$ bit of $Y$ is:
\[Y[i] =\begin{cases} X_k[i]
& \text{ if } \left[X_2[i+\Delta_2]=\cdots = X_{k-1}[i+\Delta_{k-1}]=X_k[i] \right]\\
@ & \text{ else}
\end{cases}.
 \]
 If all the strings agree given our choice of offset we set it to the agreed character. Otherwise, we use the new special character $@$ which does not appear in $X_1$ (as $@$ is not in $\Sigma$). Note that we can produce  $Y$ in $kN$ time, so  over all offsets we take $N^{k-1}$ time to produce the inputs $X_1, Y$. 
 
 Now, we will make $N^{k-2}$ calls to \CBtA. Each call takes time $(|\Sigma|+1)N\lg(N(|\Sigma|+1))$ time. Thus, we take time $O(|\Sigma|N^{k-1}\lg(N|\Sigma|))$. 
 
 This gives a total time of $O(|\Sigma|N^{k-1}\lg(N|\Sigma|))$.
\end{proof}

\subsection{The Lower Bound}
We will now show that we can produce $k$ strings of length $N=n^{a}$ that compress to length $an$ such that an algorithm that runs faster than $n^{(k-1)a-o(1)}=N^{k-1-o(1)}$ violates SETH and the $(k-1)a-$OV hypothesis. 
This will give a tight lower bound. Additionally, it says that strings that compress from length $N$ to length $N^{\epsilon}$ do not have faster algorithms than those that don't compress. 

To do this we will use the interleaving representation defined previously. Recall that we defined the interleaving version as:
\[\mathbf{String_I}_{\ell}(L) = \bigcirc_{i=1}^d \left( \bigcirc_{j_1\in[1,n] \ldots, j_\ell \in [1,n]} L[j_1][i] \cdot L[j_2][i] \cdot \cdots \cdot L[j_\ell][i] \right).\]
Recall that this is equivalent to
\[\mathbf{String_I}_{\ell}(L) = \bigcirc_{i \in [1,d]} \bigcirc_{j \in [1,n^\ell]} \Flist(L)[j][i].\]
Finally, recall that in $\mathbf{String_I}_{\ell}(L)$ the vector  $\vec{v}= \Flist(L)[i]$ appears as bits $i, i+n^k, \ldots, i+(d-1)n^k$. 

\begin{theorem}\label{thm:cycShiftLB}
Let $a$ and $k$ be constants. Let $N$ be the input string length for \CBKA. 

If the $(a(k-1))-$OV hypothesis holds then
\CBKA~requires $N^{k-1-o(1)}$ time even when the strings compress down to length $m=N^{1/a + o(1)}$ with an alphabet of size $O(1)$.
\end{theorem}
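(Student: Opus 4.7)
The plan is to reduce from the $(a(k-1))$-OV hypothesis: starting from an OV instance $L$ of $n$ vectors in $\{0,1\}^d$ with $d = n^{o(1)}$, I will build $k$ strings $X_1,\ldots,X_k$ of length $N = n^{a+o(1)}$ over a constant-size alphabet that admit SLPs of total size $m = n^{1+o(1)} = N^{1/a+o(1)}$, such that the optimal $\CBKA$ score equals a fixed target iff the OV instance is a YES-instance. A sub-$N^{k-1}$ algorithm for $\CBKA$ would then decide $(a(k-1))$-OV in time $n^{a(k-1)-\Omega(1)}$, contradicting the hypothesis.

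The construction will be built on the interleaved representation $\mathbf{String_I}_a(L)$ of Section~\ref{subsec:representatoins}, which has length $dn^a$ and, by the nested composition argument from the proof of Lemma~\ref{lem:interleaveGadget}, admits an SLP of size $O(dn) = n^{1+o(1)}$. The key combinatorial property is that a cyclic shift of $\mathbf{String_I}_a(L)$ by $\Delta \in [0,n^a)$, when read at the coordinate-boundary positions $(c-1)n^a+1$ for $c \in [1,d]$, returns exactly the $d$ coordinates of the derived product vector $\Flist(L)_a[\Delta+1]$, i.e., the bitwise AND of the $a$ vectors from $L$ whose indices decompose $\Delta+1$ in base $n$. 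Consequently, a shift tuple $(\Delta_1,\ldots,\Delta_{k-1})$ bijectively encodes a choice of $a(k-1)$ vectors of $L$, and we aim to arrange things so that the $\CBKA$ score certifies their orthogonality.

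I plan to set $X_1,\ldots,X_{k-1}$ to be bit-encoded versions of $\mathbf{String_I}_a(L)$, augmented with role-markers, and to design $X_k$ as a sparse, easily compressible pattern that has a matching character precisely at the coordinate-boundary positions of each string's role-block. The main obstacle is the mismatch between the \emph{conjunction} semantics of $\CBKA$ matches (all $k$ strings must agree at a position) and the \emph{disjunction} semantics of orthogonality (at each coordinate, at least one of the $a(k-1)$ vectors must be $0$). The plan to resolve this is to partition each coordinate block into $k-1$ role sub-blocks: string $X_h$ carries genuine orthogonality information only at its own role $h$ positions and trivially-matching filler elsewhere, so that the contribution at each sub-block isolates exactly one of the $k-1$ derived vectors' bits. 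A further coordination layer (e.g., a shared "budget" character count that is tight only when each coordinate is covered) will be needed to convert the per-role additive score into the joint OV condition and to prevent each shift $\Delta_h$ from being maximized independently.

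After verifying the length/SLP-size bookkeeping ($|\Sigma|=O(1)$, $N = n^{a+o(1)}$, $m = n^{1+o(1)}$), the argument concludes with the same padding trick as in Theorem~\ref{thm:compkLCSHardFromseth}: for target exponents $\epsilon \in (0,1]$ in $m = N^{\epsilon \pm o(1)}$, pick integers $a,b$ with $\epsilon \le b/(a+b) < \epsilon(1+\gamma/2)$ and append $n^?$ copies of a fresh matching character to each string, blowing up $N$ without essentially increasing $m$, so that any hypothetical $\CBKA$ algorithm of running time $N^{(k-1)(1-\gamma)}$ would beat $n^{a(k-1)-\Omega(1)}$ on the original OV instance. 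The largest unknown is whether the role-partitioned construction admits a single $X_k$ compressible to $n^{1+o(1)}$ while still forcing the disjunction behavior; if the direct construction does not cooperate, a fallback is to route through a "unique $(a(k-1))$-OV" variant so that only one shift tuple can attain the target score, which simplifies the coordination gadget.
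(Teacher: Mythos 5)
Your high-level plan---reduce from $(a(k-1))$-OV, split $L$ into $k-1$ groups of $a$ lists, encode each group via $\mathbf{String_I}_a(\cdot)$ so that a cyclic shift selects one derived product vector per string, use a sparse compressible $X_k$ to mark the $d$ coordinate positions, and finish with the exponent/padding bookkeeping---is exactly the paper's route, and you correctly flag the crux: $\CBKA$ counts positions where all $k$ strings agree (a conjunction), whereas OV needs at each coordinate at least one of the $k-1$ selected bits to be $0$ (a disjunction), so a direct bit-alignment of the interleaved strings does not compute the right predicate.

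However, your proposed way around this---partition each coordinate's encoding into $k-1$ ``role'' sub-blocks, with $X_h$ carrying $v_h[c]$ only in its own role and trivially-matching filler elsewhere, plus a ``shared budget'' coordination layer---does not recover the OV condition. With the role layout, coordinate $c$ contributes $|\{h : v_h[c]=0\}|$ to the $\CBKA$ score, so the total is $d(k-1)-\sum_{c,h}v_h[c]$, and this sum does not decide orthogonality: for $k-1=2$ and $d=3$, the non-orthogonal pair $v_1=v_2=(0,0,1)$ scores strictly higher than the orthogonal pair $v_1=(0,1,0)$, $v_2=(1,0,1)$. Since the $\CBKA$ objective is a plain sum of per-position indicators and the shifts $\Delta_1,\ldots,\Delta_{k-1}$ are free, there is no mechanism to cap a coordinate's contribution at one; the unique-OV fallback does not help either, because the over-counting happens within a single shift tuple, not across tuples. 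The paper resolves the conjunction/disjunction mismatch with a compact per-bit truth-table gadget $h_{b,i}$ of length $2^k$: enumerate all $2^k$ boolean strings $H[1],\ldots,H[2^k]$ with the all-ones string last, and set $h_{b,i}[j]=1$ iff $H[j][i]=b$ and $j<2^k$, with a never-matching placeholder $\%$ used in the $k$-th string's remaining entries. Aligning $h_{b_1,1},\ldots,h_{b_k,k}$ then yields exactly one matching position iff $b_1\cdots b_k$ is not the all-ones string, so the OR is evaluated directly and additively at each coordinate, and the $*_j$ and $@$ wrappers ($T_{b,i}$ and $S_{b,i}$ in the paper) force gadget-boundary alignment. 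This truth-table trick is the ingredient your construction is missing; it costs $|\Sigma|=O(2^k)=O(1)$ characters for constant $k$, matching the theorem's alphabet bound.
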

\begin{proof}
We take as input a $(a(k-1))$-OV instance with $(a(k-1))$ lists of $n$ vectors each. Each vector has length $d$ and $d =n^{o(1)}$.  
Recall that the $(a(k-1))$-OV hypothesis states that $(a(k-1))$-OV requires $n^{a(k-1)-o(1)}$ time.

We will use four characters $0,1, \%,@,*_1,\ldots, *_{2^k}$. The zero and ones will be used to signify the zeros and ones of the OV instance. The $@$ and $*_i$ symbols will be used to force alignment in a way that is easy to prove. We note that one can almost certainly prove the same result with a smaller alphabet. However, allowing this larger alphabet makes our proof much easier.

Given a $(a(k-1))$-OV instance split the lists of vectors up into $k-1$ groups each with $a$ lists of vectors. Call these groups of $a$ lists $L_1,\ldots, L_{k-1}$. We are going to form strings $X_1,\ldots,X_{k-1}$ by slight alterations to $\mathbf{String_I}_{a}(L_1), \ldots, \mathbf{String_I}_{a}(L_1)$. 
The final string $X_k$ will remain constant regardless of the input instance of OV. 

Let $\hat{X_i} = \mathbf{String_I}_{a}(L_i)$ then $i\in[1,k-1]$. 
Let $\hat{X_k}$ be a string of all zeros except in positions $0, n^a, \ldots, (d-1)n^a$. Like the other strings we give a total length of $dn^a$ for $\hat{X_k}$.
Note that by choosing an offset for each string from $\hat{X_k}$ we are effectively choosing one vector from each list $L_1, \ldots, L_{k-1}$ to align with the ones in $\hat{X_k}$. We want to design ways to right out the zeros and ones that simultaneously: (1) force alignment and (2) have the same value if there is at least one zero and a lower match value if they are all ones. If we can do this, then the best alignment will be picking the ``most orthogonal'' set of $k-1$ vectors, which will let us find if any vectors are fully orthogonal. 

We will now design $h_{1,i}$ and $h_{0,i}$ which will have the property that the offset score of $h_{b_1,1},h_{b_2,2}, \ldots, h_{b_k,k}$ with all deltas zero is $0$ if all $b_i=1$ and is $1$ otherwise. 
We will consider all strings in $\{0,1\}^k$. Let $H$ be all $2^k$ of those strings in sorted order, with the all ones string last. Let $H[j]$ be the $j^{th}$ string in $H$ note that $H[2^k]$ is the all ones string (we will one index this list).  Now
\[h_{b,i}[j] = \begin{cases}
1 & \text{ if } H[j][i]=b \text{ and } j\ne 2^k \\
0 & \text{ if } i<k \text{ and the above does not apply} \\
\% & \text{ else }
\end{cases}.\]
So we get strings of length $2^k$. Note that $h_{b,i}$ for $i\in[1,k-1]$ uses only $0,1$ symbols, however, $h_{b,k}$ uses only $0,\%$ symbols.  If we are aligning $h_{b_1,1},h_{b_2,2}, \ldots, h_{b_k,k}$ we are simply counting locations where they are all $1$. This only occurs in the location that is associated with the string in $H$ $b_1 b_2 \cdots b_k$, if it is not the all ones string. So, the offset score of $h_{b_1,1},h_{b_2,2}, \ldots, h_{b_k,k}$ with all deltas zero is $0$ if $b_i=1$ for all $i$ and is $1$ otherwise, as desired.

We will now design $T_{1,i}$ and $T_{0,i}$ that will force alignment. Let $\bigcirc$ represent concatenation:
\[T_{b,i} = \bigcirc_{j=0}^{2^k} *_j h_{b,i}[j].\]
Note that this wrapper is just adding special characters that force alignment of the bits in $h_{b,i}$ by making the only way to match the $*_j$ characters also force an alignment of the  $h_{b,i}[j]$ characters. Note that $|T_{b,i}| = 2\cdot 2^k  = \ell$.
Note that the offset score of $T_{b_1,1},T_{b_2,2}, \ldots, T_{b_k,k}$ with all deltas zero is $2^k$ if $b_i=1$ for all $i$ and is $1+2^k$.

Let $S_{1,i}$ be the representation of a $1$ in string $X_i$. Let $S_{0,i}$ be the representation of a $0$ in string $X_i$. We will set
\[S_{1,i} = @^{\ell} T_{1,i}\quad\text{and}\quad S_{0,i} = @^{\ell} T_{0,i}.\]
Note that this wrapper adds these $@$ characters which further enforce alignment. Note that the offset score of $S_{b_1,1},S_{b_2,2}, \ldots, S_{b_k,k}$ with all deltas zero is $2^k+\ell$ if $b_i=1$ for all $i$ and is $1+2^k+\ell$ otherwise.

\paragraph{Correctness}
Now, we want to claim that one of the best alignments of $X_1,\ldots, X_k$ will have deltas that are multiples of $|S_{b,i}| =  2\ell$. That is, the best alignment will align these representations of single bits. 
Consider if $\Delta_i \mod 2\ell =f$. If $f \ne 0 \mod 2\ell$ then the $*_j$ symbols can't be aligned with those in $X_k$. Additionally, at most $\ell-j$ of the $@$ characters will be matched. Giving a maximum match of:
$\ell-j+2^k$ (even if every $0$, $1$, and $\%$ characters were matched, which is of course unrealistic, we can't match $0$ characters as none appear in the $X_k$ string). This is worse than the worst alignments when $\Delta_i$s are multiples of $2\ell$.

So, the best alignment has all $\Delta_i$ as multiples of $2\ell$. Thus, the alignment of $X_1,\ldots, X_k$ is an alignment of $|\hat{X}_i|$ $S_{b,i}$ gadgets. Each gadget promises to return $2^k+\ell$ if $b_i=1$ for all $i$ and is $1+2^k+\ell$ otherwise.

Now, note that given our construction of  $\hat{X}_1,\ldots, \hat{X}_k$, if we choose a set of deltas $\Delta_i = 2\ell \delta_i$ we are effectively picking $k-1$ vectors and comparing them because of how we structured $\hat{X}_k$. So, if there are an orthogonal $k-1$ vectors which are orthogonal in our list representation (which corresponds to $a(k-1)$ vectors in the original OV instance) then we get a score of:
$|\hat{X}_1|(1+2^k+\ell)$. Otherwise, we get a score at least one less than that. This shows our reduction will give the correct answer. 

\paragraph{Time}
So with $k$ strings of length $n^a$ and a constant sized alphabet ($|\Sigma| = O(2^k)$) we can solve $(a(k+1))-$OV.\@
Notably $N= n^{a+o(1)}$. So an algorithm running in faster than $N^{k-1-o(1)}$ time will violate the $(a(k+1))-$OV hypothesis. This fulfills the statement in the theorem. 

\paragraph{Compression}
Now we will argue that these strings are compress-able with SLP.\@ We will mostly be using the same structure as~\cite{compressedLCSSETH}.
First we can build variables in our SLP for all of our base characters with $O(2^k)$ variables. Next we can build $@^{\ell}$ with $\lg(\ell) = O(k)$ variables. Next we can build all $S_{b,i}$ for all $i$ and $b$ with at most $O(k2^k)$ variables. Next, we want to build our longer strings.

Now we will use the recursive structure of $\mathbf{String_I}_{a}(L)$. Let
\[\mathbf{String_I}_{\ell}(L)^{[i]} =  \bigcirc_{j_1\in[1,n] \ldots, j_\ell \in [1,n]} L[j_1][i] \cdot L[j_2][i] \cdot \cdots \cdot L[j_\ell][i] .\]
Note that 
\[\mathbf{String_I}_{a}(L) = \bigcirc_{i=1}^d \left(\mathbf{String_I}_{a}(L)^{[i]} \right).\]
We are just pulling out the part related to the $i^{th}$ bit of every vector. 
Now note that 
\[\mathbf{String_I}_{a}(L)^{[i]} =  \bigcirc_{j\in[1,n]} \begin{cases}
\mathbf{String_I}_{a-1}(L)^{[i]} & \text{ if } L[j][i]=1\\
0^{(n^{a-1})} & \text{ if } L[j][i]=0
\end{cases}.\]
Where $0^{(n^{a-1})}$ is $n^{a-1}$ zeros in a row. 

Note that we can make SLP variables for all $0^{(n^{i})}$ strings for $i \in [1,a]$ with $a\lg(n^a) = a^2\lg(n)$ variables. Next note that given an SLP variable for $\mathbf{String_I}_{a-1}(L)^{[i]}$ we can add $n$ variables and form $\mathbf{String_I}_{a}(L)^{[i]}$. 
It takes $n$ variables to form $\mathbf{String_I}_{1}(L)^{[i]}$. So, with an SLP with $an+a^2\lg(n)$ variables we can represent $\mathbf{String_I}_{a}(L)^{[i]}$. 
So, with an SLP with $d(an+a^2\lg(n))$ variables we can represent $\mathbf{String_I}_{a}(L)$. To replace all zeros with $S_{0,i}$ and all ones with $S_{1,i}$ requires an additional $O(2^k)$ variables. 

So, we can compress all of our strings with $O(d(n+\lg(n)))$ variables. Given our restrictions on $d$ we can write this as $n^{1+o(1)}$. So our compression has length $m=n^{1+o(1)}$. Our input to our \CBKA~instance is $N=n^{a+o(1)}$. So $N^{1/a+o(1)} = n^{1+ao(1)} = n^{1+o(1)}$. Fulfilling the statement of the theorem.
\end{proof}

\subsection{Approximation Algorithm}

Let the \CBKA~distance be $k(N-\CBKA(X_1,\ldots,X_k))$. In other words, the \CBKA~distance is the total number of unmatched characters.

\begin{theorem}\label{thm:cycShiftApprox}
There is an $O\left(|\Sigma|N^{\lceil (k-1)/\ell  \rceil}\lg(|\Sigma|N)\right)$ time algorithm to get an $\ell$ approximation  of the \ComBkAlig~distance for any integer $\ell\geq 2$.
\end{theorem}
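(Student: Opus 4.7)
The plan is to reduce the $\ell$-approximation problem to $\ell$ independent calls of the exact algorithm from Theorem~\ref{thm:cycShiftUB}, each on a smaller number of strings. After fixing $\Delta_k = 0$, I would partition the remaining free shift variables $\{\Delta_1,\ldots,\Delta_{k-1}\}$ into $\ell$ groups $S_1,\ldots,S_\ell$ with $|S_j|\le s := \lceil (k-1)/\ell \rceil$. For each group $S_j$, I would invoke the exact algorithm on the $|S_j|+1$ strings $\{X_i : i \in S_j\} \cup \{X_k\}$ to obtain group-optimal shifts $\{\Delta_i^{(j)}\}_{i\in S_j}$. Since Theorem~\ref{thm:cycShiftUB} runs in $O(|\Sigma|N^{k'-1}\lg(|\Sigma|N))$ time on any $k'$ input strings, each subproblem costs $O(|\Sigma|N^{s}\lg(|\Sigma|N))$, and, treating $\ell \le k = O(1)$, the $\ell$ calls together cost $O(|\Sigma|N^{\lceil(k-1)/\ell\rceil}\lg(|\Sigma|N))$, matching the claimed bound.

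For the approximation analysis, I would recast the distance using $X_k$ as an anchor. Define
\[\mathcal{B}_i(\Delta) := \{p \in [1,N] : X_i[p+\Delta] \neq X_k[p]\}\]
for each $i\in[1,k-1]$ and shift $\Delta$ (with indices wrapping modulo $N$). Under shifts $\Delta_1,\ldots,\Delta_{k-1}$, a position $p$ fails to be a common match precisely when some $X_i$ disagrees with $X_k$ at $p$, i.e.\ when $p \in \bigcup_{i=1}^{k-1}\mathcal{B}_i(\Delta_i)$; hence the distance of these shifts equals $k \cdot \bigl|\bigcup_{i=1}^{k-1}\mathcal{B}_i(\Delta_i)\bigr|$. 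Let $\Delta^*$ denote optimal shifts and set $D_{\mathrm{full}}^* := \bigl|\bigcup_{i=1}^{k-1}\mathcal{B}_i(\Delta_i^*)\bigr|$, so $\mathrm{OPT} = k\cdot D_{\mathrm{full}}^*$. The per-group optimum $D_j^* := \min_{\Delta_i : i\in S_j}\bigl|\bigcup_{i\in S_j}\mathcal{B}_i(\Delta_i)\bigr|$---which is exactly what the exact algorithm computes on $S_j\cup\{k\}$---satisfies $D_j^* \le \bigl|\bigcup_{i\in S_j}\mathcal{B}_i(\Delta_i^*)\bigr| \le D_{\mathrm{full}}^*$, since restricting the union to $i\in S_j$ only shrinks it. Concatenating the group-optimal shifts into a single vector $\Delta'$ and applying a union bound over the $\ell$ groups yields
\[\Bigl|\bigcup_{i=1}^{k-1}\mathcal{B}_i(\Delta_i')\Bigr| \;\le\; \sum_{j=1}^{\ell}\Bigl|\bigcup_{i\in S_j}\mathcal{B}_i(\Delta_i^{(j)})\Bigr| \;=\; \sum_{j=1}^{\ell} D_j^* \;\le\; \ell \cdot D_{\mathrm{full}}^*,\]
so the combined alignment has distance at most $\ell\cdot\mathrm{OPT}$. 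An $O(kN)$-time pass evaluating this distance using $\Delta'$ produces the output; since it is realized by an explicit alignment it also upper-bounds $\mathrm{OPT}$, giving the claimed $\ell$-approximation.

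The main subtlety is the choice of $X_k$ as a common anchor in every subproblem. Without a shared reference string, the local predicates ``all strings in $S_j$ agree at $p$'' across different groups would not compose into the global predicate ``all $k$ strings agree at $p$,'' and the per-group mismatch sets would not union-bound to the full mismatch set. Anchoring on $X_k$ is also what keeps the per-subproblem exponent at $|S_j|$---exactly the number of free shift variables in the group, rather than $|S_j|+1$---matching the target $\lceil(k-1)/\ell\rceil$ precisely.
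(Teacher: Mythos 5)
Your proof is correct and takes essentially the same approach as the paper: partition the $k-1$ free shifts into $\ell$ groups, include $X_k$ as a common anchor in each group, solve each subproblem exactly via Theorem~\ref{thm:cycShiftUB}, and combine the group-optimal shifts. Your accounting in terms of mismatch-position sets $\mathcal{B}_i(\Delta)$ and returning the explicitly evaluated distance of the combined shift vector $\Delta'$ is actually cleaner than the paper's, which informally mixes per-group distances (carrying different $|G_j|$ scaling factors) with the global distance; your union-bound argument avoids that pitfall and directly yields the $\ell$-factor guarantee.
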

\begin{proof}
Partitions the $k-1$ of the strings into $\ell$ groups $G'_1, \ldots, G'_\ell$ which each contain as close to $(k-1)/\ell$ strings as possible, the maximum number of strings in each group is $\lceil (k-1)/ell \rceil$.
Now, take the final string, $S_k$ and add it to all the sets to make new sets $G_1, \ldots, G_\ell$, now the maximum number of strings in each group is $\lceil(k-1)/ell +1\rceil $.

On each of these partitions run the algorithm for \ComBkAlig. The time for this is\\
$O \left(\ell |\Sigma|N^{\lceil (k-1)/\ell +1 \rceil -1}\lg(|\Sigma|N) \right)$ and $\ell$ is a constant. Now, using the value of \ComBkAlig~we can compute the \ComBkAlig~distance. Let the distances of the sets of strings in $G_1, \ldots, G_\ell$ be $\Delta_1,\ldots, \Delta_\ell$. Now, note that these call be framed as distances to the last string $X_k$. So, the distance of all these strings together is at most $\Delta_1+ \cdots + \Delta_\ell$ and is at least $\max(\Delta_1,\ldots, \Delta_\ell)$. Finally, note that
\[1 \leq \frac{\Delta_1+ \cdots + \Delta_\ell}{\max(\Delta_1,\ldots, \Delta_\ell)} \leq \ell.\]
As a result there is an approximation factor of $\ell$ and a running time
$O\left(|\Sigma|N^{\lceil (k-1)/\ell \rceil}\lg(|\Sigma|N)\right)$.
\end{proof}

\section{On High-Dimensional Generalizations of DIST Matrices}\label{sec:disttensor}

Many of the crucial properties of DIST matrices derived in, e.g.,~\cite{Tiskin15} used for two-string algorithms rely on the Monge property. For LCS, the Monge property is that given two strings $X, Y$ and the alignment graph $\G_{X, y}$ then letting $d(u, v)$ be the longest path from $u$ to $v$, we have $d(v_{0, i}, v_{|X|, j}) + d(v_{0, i-1}, v_{|X|, j+1}) \leq d(v_{0, i-1}, v_{|X|, j}) + d(v_{0, i}, v_{|X|, j+1})$. For example, in this paper we used the ability to take min-plus products of unit Monge matrices efficiently, and our use of the SMAWK algorithm was enabled by the Monge property. 

However, it appears no analogous property holds for even DIST ``3-tensors'', the three-string generalization of DIST matrices. Intuitively, this is because it is not possible to enforce that any path from $v_1$ to $v_2$ intersects any path from $v_3$ to $v_4$ for four distinct vertices $v_1, v_2, v_3, v_4$, unlike in the two-dimensional alignment graph. We will use LCS as the metric for our examples here, but one can find similar examples for edit distance.

For example, let $A[i_1, i_2, j_1, j_2]$ be the longest path length from $v_{0, i_1, i_2}$ to $v_{n_1, j_1, j_2}$ in the three-dimensional alignment graph of three strings. An analog of the Monge property in three dimensions might be:

\[A(i_1, i_2, j_1, j_2) + A(i_1 - 1, i_2, j_1 + 1, j_2) \leq A(i_1 - 1, i_2, j_1, j_2) + A(i_1, i_2, j_1 + 1, j_2)\]

However, this does not seem true in general. Consider the following example, where there are two sets of length 1 edges. The first (in blue) has $\ell$ such edges, and is contained entirely between ``layer'' $i_1$ and $j_1$ of the DAG.\@ The second (in red) has $\ell+1$ edges, however two of these edges are outside the part of the DAG between $(0, i_1, i_2)$ and $(m, j_1, j_2)$. 

\begin{center}
    \begin{tikzpicture}[scale=3]
        \draw[thick] (0.65,0.65) rectangle (1.65,1.65);
        \draw[thick] (0,1) -- (0.65, 1.65) (0,0) -- (0.65,0.65) (1,0) -- (1.65, 0.65);

        \draw[very thick, red,-latex] (1.345, 0.6825) -- node[pos=0.6,left]{$1$} (1.45, 0.525);
        \filldraw[fill opacity = 0.85, black!30] (0, 0.15) -- (1, 0.15) -- (1.65, 0.8) -- (0.65, 0.8) -- cycle;
        \draw[very thick, blue,-latex] (0.3, 1.05) -- node[pos=0.39,left]{$\ell$}(0.7, 0.25);
        \draw[very thick, red,-latex] (0.855, 1.4175) -- node[pos=0.65,right]{$\ell-1$} (1.345, 0.6825);
        \filldraw[fill opacity = 0.85, black!30] (0, 0.85) -- (1, 0.85) -- (1.65, 1.5) -- (0.65, 1.5) -- cycle;
        \draw[very thick, red,-latex] (0.75, 1.575) -- node[pos=0.15,right]{$1$} (0.855, 1.4175);

        \draw[thick] (0,0) rectangle (1,1) (1,1) -- (1.65, 1.65);

        \draw[latex-] (1,0) -- (1.3,0) node[right] {$(m,j_1+1,j_2)$};
        \draw[latex-] (1,0.15) -- (1.3,0.15) node[right] {$(m,j_1,j_2)$};

        \draw[latex-] (0.65,1.65) -- (0.35, 1.65) node[left] {$(0,i_1-1,i_2)$};
        \draw[latex-] (0.65,1.5) -- (0.35, 1.5) node[left] {$(0,i_1,i_2)$};

    \end{tikzpicture}
\end{center}

The two sets of length 1 edges are positioned such that one cannot use a ``blue'' and ``red'' edge in the same path. Now, we have that $A(i_1-1, i_2, j_1+1, j_2) = \ell+1$ and all other terms in the above inequality are $\ell$. So the above inequality would say $2\ell+1 \leq 2\ell$, which is false. This can be generated by, e.g., the strings $X_1 = aabbb, X_2 = bbbaa, X_3 = baabb$; the LCS of the first two strings with $X_3[2\dd4], X_3[1\dd4], X_3[2\dd5]$ is $aa$, but the LCS of the first two strings with $X_3[1\dd5]$ is $bbb$.

While one can find other generalizations and even weakened versions of the Monge property which this example satisfies, for all the ones that we have considered there are three-string counterexamples that show they do not hold in general. 

For example, the unit Monge property also says that given a DIST matrix, if we subtract every row from the next row and every column from the next column, we get a permutation matrix. In other words, each row and column only differs in behavior from the previous row/column by 1 entry. However, for DIST 3-tensors, consider the two-dimensional ``slice'' $A$ for which $A[i,j]$ gives the path length between e.g. $(0, 0, i)$ and $(|X_1|, |X_2|, j)$. By looking at the DIST 3-tensors of even just three random strings of length roughly 100, we found that, e.g., for some sampled strings, $A$ had a row that could be expressed as a linear function, but the next row of $A$ was a piecewise linear function with six different pieces. 

As another example, consider the following weaker ``monotone'' property: $A$ is monotone if for any vector $b$, letting $m(i) = \argmin_j A[i, j] + b[i]$ and choosing the lowest value of $j$ to break ties, $m(i)$ is a monotonic function of $i$. This admits a divide and conquer algorithm for computing $\min_j A[i, j] + b[i]$ for all $i$ in accesses to $A$ near-linear in the number of $i$ (as opposed to the SMAWK algorithm using linear accesses), a primitive that is useful in dynamic programming algorithms for two-string similarity. Informally, knowing $\argmin_j A[i, j]$ lets us rule out a constant fraction of the possibilities for $\argmin_j A[i', j]$ for $i' \neq i$. The 3-dimensional generalization of this primitive would be to compute $\argmin_{i_1, i_2} A[i_1, i_2, j_1, j_2] + B[i_1, i_2]$ given access to entries of the DIST 3-tensor $A[i_1, i_2, j_1, j_2] = d(v_{0, i_1, i_2}, v_{|X_1|, j_1, j_2})$, and a matrix $B$. Put more simply, the rows of this slice have far less structural similarity to each other than the rows of a DIST matrix.

A weak generalization of the monotone property that would admit a similar divide and conquer algorithm for this problem is: knowing $i* = \argmin_{i_1, i_2} A[i_1, i_2, j_1, j_2] + B[i_1, i_2]$ lets us eliminate possibilities for $\argmin_{i_1, i_2} A[i_1, i_2, j_1', j_2'] + B[i_1, i_2]$ for $(j_1', j_2')$ that are in a given ``direction'' from $i^*$ if $(j_1', j_2')$ is in a given ``direction'' from $(j_1, j_2)$. Here, by in a given direction, we mean e.g. $j_1' \leq j_1$ and $j_2' \leq j_2$, or any of the four possibilities given by reversing neither, one, or both of these inequalities. Unfortunately, even considering random strings of length 10, we found counterexamples to each of the variants of this property given by choosing any pair of directions to slot in to the definition.

\section{Open Questions}
We find many novel lower bounds and upper bounds in this paper. However, some of these are not tight. We give some open problems below whose resolution we think would be particularly interesting. 

\begin{itemize}
    \item For solving $k$-edit distance or $k$-LCS on strings where $k\geq 3$,  we have a lower bound of $N^{k-1}n$ where $N$ is the length of the strings and $n$ is the size of the SLP. However, the best exact algorithms require $O(N^k)$ time. Can this gap be closed for any $k \geq 3$? Can this gap be closed for all constant~$k$?
    \item There are no tight lower bounds for approximating $k$-LCS and $k$-edit distance. Can we give a tight lower bound?
    \item The lower bounds for $k$-\emph{center} edit distance and the upper bounds do not match. Our lower bounds for $k$-\emph{center} edit distance are the same as those for $k$-median edit distance. However, $k$-center edit distance has slower algorithms. For example in the uncompressed and exact case the $k$-center edit distance lower bounds are $\Omega(N^k)$~\cite{editDistLBSeth}, but the best algorithm requires $\Tilde{O}(N^{2k})$ time~\cite{NicolasRivals05}. 
    %\item LZ78 - DP algorithms that don't rely on DIST matrix implementations?
    %\item Is computational hardness for LZ78 possible?
    %\item Can the LCS lower bound be extended to gap hardness? [Also take a look at the closest bichromatic pair problem from Abboud and Rubinstein's Distributed PCP paper].
    %\item Shift finding average case analysis? For the shift finding algorithm over two strings,
    %how the previous paper gets rid off the extra "X" factor that we have?
   % \item Approximation instead of exact computation even with simple LZ78 compression. Can we shed more dimensions?
\end{itemize}

In general, the space of multiple string comparison seems under-explored. We hope more work will happen in the space of algorithms and lower bounds for multiple string comparison. Specifically if there are efficient algorithms for the problem of comparing multiple strings with approximation for example, it will have significant impacts for multiple sequence alignment in biology.

\bibliographystyle{alphaurl}
\bibliography{references}

\end{document}